\numberwithin{equation}{section}%
\newcommand{\Z}{\mathbb{Z}}
\newcommand{\C}{\mathbb{C}}
\newcommand{\R}{\mathbb{R}}
\DeclareMathOperator{\E}{\mathbb{E}}
\renewcommand{\i}{\mathbf{i}}
\DeclareMathOperator{\Prob}{\mathsf{Prob}}
\newcommand{\al}{\alpha}
\newcommand{\la}{\lambda}
\newcommand{\La}{\Lambda}
\newcommand{\be}{\beta}
\newcommand{\ga}{\gamma}
\newcommand{\ka}{\kappa}
\DeclareMathOperator*{\Res}{\mathrm{Res}}
\newcommand{\de}{\delta}
\newcommand{\HT}{\mathfrak{h}}
\renewcommand{\a}{\widetilde a}
\renewcommand{\b}{\widetilde b}
\renewcommand{\c}{\widetilde c}
\renewcommand{\d}{\widetilde d}
\newcommand{\h}{\widetilde h}
\newcommand{\Dnorm}{D^{\mathrm{norm}}}
\newcommand{\Bnorm}{B^{\mathrm{norm}}}
\newcommand{\Bstoch}{B^{\mathrm{stoch}}}
\newcommand{\astoch}{a^{\mathrm{stoch}}}
\newcommand{\bstoch}{b^{\mathrm{stoch}}}
\newcommand{\cstoch}{c^{\mathrm{stoch}}}
\newcommand{\dstoch}{d^{\mathrm{stoch}}}
\newcommand{\Wstoch}{W^{\mathrm{stoch}}}
\newcommand{\tWstoch}{\widetilde{W}^{\mathrm{stoch}}}
\newcommand{\tW}{\widetilde W}
\newcommand{\w}{\ensuremath{\mathrm{weight}}}
\newcommand{\rw}{\ensuremath{\mathrm{r.weight}}}
\newcommand{\all}{\mathbf{+}}
\newcommand{\nul}{\zugzwang}
\newcommand{\rr}{\mathbf{--}}
\newcommand{\uu}{\mathbf{|}}
\newcommand{\ur}{\mathbf{ \lefthalfcap }}
\newcommand{\ru}{\mathbf{ \righthalfcup }}
\renewcommand{\o}{\mathcal{O}}
\newcommand{\inv}{\mathsf{inv}}
\newcommand{\erf}{\mathrm{erf}}
\newcommand{\erfc}{\mathrm{erfc}}
\newtheorem{proposition}{Proposition}[section]
\newtheorem{lemma}[proposition]{Lemma}
\newtheorem{corollary}[proposition]{Corollary}
\newtheorem{theorem}[proposition]{Theorem}
\theoremstyle{definition}
\newtheorem{definition}[proposition]{Definition}
\newtheorem{remark}[proposition]{Remark}
\newtheorem{example}[proposition]{Example}
\begin{document}

\title{Symmetric elliptic functions, IRF models, and dynamic exclusion processes}

\author[A. Borodin]{Alexei Borodin}
\address{Department of Mathematics,
Massachusetts Institute of Technology,
77 Massachusetts ave.,
Cambridge, MA 02139, USA\newline
Institute for Information Transmission Problems, Bolshoy Karetny per. 19, Moscow, 127994, Russia}
\email{borodin@math.mit.edu}

\date{\today}


\begin{abstract} We introduce stochastic Interaction-Round-a-Face (IRF) models that are related to representations of the elliptic quantum group $E_{\tau,\eta}(sl_2)$. For stochasic IRF models in a quadrant, we evaluate averages for a broad family of observables that can be viewed as higher analogs of $q$-moments of the height function for the stochastic (higher spin) six vertex models. 
	
In a certain limit, the stochastic IRF models degenerate to (1+1)d interacting particle systems that we call dynamic ASEP and SSEP; their jump rates depend on local values of the height function. For the step initial condition, we evaluate averages of observables for them as well, and use those to investigate one-point asymptotics of the dynamic SSEP. 

The construction and proofs are based on remarkable properties (branching and Pieri rules, Cauchy identities) of a (seemingly new) family of symmetric elliptic functions that arise as matrix elements in an infinite volume limit of the algebraic Bethe ansatz for $E_{\tau,\eta}(sl_2)$. 
\end{abstract}

\maketitle

\setcounter{tocdepth}{2}
\tableofcontents
\setcounter{tocdepth}{2}

\section{Introduction} 

\noindent\textbf{Preface.}
Yang-Baxter integrability has been a central theme in mathematical, statistical, and quantum physics for more than half a century, see, e.~g., the volume \cite{YB-volume} for a collection of foundational works. This article is concerned with two of its very recent and closely related applications to: 

(a) theory of symmetric functions, with new families of symmetric functions being introduced and new summation identities for older ones being proved, see \cite{Bor, WZ, M, MS, BP, BP-lect, GGW};

(b) deriving new exact formulas for averages of observables in two-dimensional integrable lattice models and (1+1)-dimensional random growth models, and using those to study large scale and time asymptotics \cite{CP, BP, BP-lect, AB, A, B-Schur, BO, BBW, D}. 

All the above cited papers deal with the $R$-matrix for the (higher spin) six 
vertex model and its degenerations (in other words, with representations of the 
affine quantum group $U_q(\widehat{sl}_2)$ and their limits). Certain progress 
has been achieved for the $U_q(\widehat{sl}_n)$ case as well, with new Markov 
chains introduced via the corresponding $R$-matrices \cite{KMMO, BM} and a 
duality functional for them provided in \cite{Kuan}.

The goal of this work is to climb higher in the hierarchy and to extend some of the recent progress from vertex models to the so-called \emph{Interaction-Round-a-Face (IRF) models}, also known as \emph{face} and \emph{solid-on-solid (SOS) models}. The corresponding $R$-matrices satisfy a face version of the star-triangle relation also known as the dynamical Yang-Baxter equation. 

The IRF models were originally introduced by Baxter \cite{Bax-IRF} as a tool to analyze the eight vertex model, but they quickly became a subject on their own, see, e.~g., \cite{Bax, YB-volume} and references therein. We will only be concerned with the $sl_2$ case, the basic instance of which is due to the original work \cite{Bax-IRF} and is often called the \emph{eight vertex SOS model}, and whose fused versions were introduced and extensively studied in \cite{DJMO, DJKMO1, DJKMO2}. 

The IRF models were framed in a representation theoretic language by Felder \cite{Felder}, who introduced the concept of an \emph{elliptic quantum group}. The simplest instance is the quantum elliptic group $E_{\tau,\eta}(sl_2)$; the corresponding representation theory and the algebraic Bethe ansatz were developed by Felder-Varchenko in \cite{FV-reps, FV-ABA}. The latter works are a convenient starting point for us; we extensively use their notations and results.

As the first step, we take the wavefunctions constructed in the algebraic Bethe ansatz framework of $E_{\tau,\eta}(sl_2)$ and consider their infinite volume limit. Such a limit makes Bethe equations on spectral parameters unnecessary, and the resulting objects are symmetric elliptic functions in their spectral parameters. In a suitable limit, they degenerate to the symmetric rational functions that were studied in \cite{Bor, BP}, which are, in their turn, generalizations of the classical (symmetric multivariate) Hall-Littlewood polynomials.  

Using the Yang-Baxter integrability (equivalently, the IRF star-triangle relation, or the dynamic Yang-Baxter equation, or the commutation relations of the elliptic quantum group), we show that these symmetric elliptic functions satisfy versions of the Pieri, Cauchy, and skew-Cauchy identities from the theory of symmetric functions. This requires an introduction of a family of dual symmetric elliptic functions, as well as skew variants of both families (the term `skew' is used by a direct analogy with the theory of symmetric functions, where skew Schur, Hall-Littlewood, etc. polynomials are broadly used). We also prove a kind of orthogonality relations for our functions. 
All the results are derived from an infinite volume limit of the $E_{\tau,\eta}(sl_2)$ algebraic Bethe ansatz; this is largely parallel to what was done in \cite{Bor, BP} for $U_q(\widehat{sl}_2)$. 

In order to proceed to a construction of stochastic models, we need to find a simplification of the Cauchy identities mentioned above. We achieve it by taking a trigonometric limit of our symmetric elliptic functions and finding a specialization of the dual family that simplifies it dramatically. Even though this brings us closer to the objects studied in \cite{BP}, we still have one extra parameter in the game. We call it the \emph{dynamic} parameter; it is responsible for the word `dynamic' in the dynamic Yang-Baxter equation that is behind our model. 

It is this simplification that tells us how to renormalize the weights of the associated IRF model to make it \emph{stochastic}, which basically means that the new $R$-matrix is stochastic, and the random states of the model can be constructed by a Markovian procedure involving only sampling of independent Bernoulli random variables. The existence of such a renormalization that also preserves the integrability of the model does not seem \emph{a priori} evident, and we view the construction of the stochastic IRF model as the first main result of the present paper.  

We show how our stochastic IRF model can be degenerated to certain interacting particle systems in (1+1)d that we call \emph{dynamic exclusion processes}; they are one-parameter generalizations of the usual exclusion processes with rates of jumps depending on the local value of the height function (which is closely related to the dynamic parameter). These dynamic exclusion processes appear to be new, and we expect them to enjoy the same degree of integrabiliy as the usual exclusion processes. 

Our second main result is an explicit evaluation of averages of certain observables for stochastic IRF models in a quadrant. The observables are quite simple yet rather nontrivial, and we discover them by analyzing the Cauchy identities and orthogonality relations for the appropriate trigonometric limits of the symmetric elliptic functions discussed above. 

Our method is similar to that of \cite{BP}, although things get more complicated because of the presence of the dynamic parameter. Also, the fact that this approach should bring a result is not \emph{a priori} obvious, and our first naive attempts to apply the philosophy of \cite{BP} were actually unsuccessful. Surprisingly, the averages we compute turn out to be independent of the dynamic parameter, which suggests that there may be a smarter way of evaluating them. 

In the non-dynamic case of $U_q(\widehat{sl}_2)$, the similar observables were always the $q$-moments of a suitably defined height function. The corresponding formulas have already been extensively used to obtain asymptotics in a large variety of probabilistic systems, see the references in part (b) of the first paragraph above and references therein. We hope that the new observables will allow to replicate at least some of that success, but at the level of the IRF models or dynamic interacting particle systems. 

As a step towards that goal, we take one of the simplest new models --- the dynamic Symmetric Simple Exclusion Process (SSEP) --- and use our observables to analyze its one-point asymptotics at large times for the step initial condition. We discover different growth exponents than for the usual SSEP and, surprisingly, the lack of the deterministic limiting height profile at large times. 

Let us now describe some of our results in more detail. 

\smallskip

\noindent \textbf{The stochastic IRF model.} The model has a number of (generally speaking, complex) parameters that we denote as: $\eta$  -- this is a Planck constant type parameter associated with the quantization parameter $q=e^{-4\pi\i\eta}$; $\la_0$ -- the overall shift of the dynamic parameters, $\{z_x,\Lambda_x\}$ -- the inhomogeneity and spin (or highest weight) parameters associated to columns marked by the $x$-coordinate; $\{w_y\}$ -- the spectral parameters associated to rows marked by the $y$-coordinate that can also be thought of as inhomogeneity parameters. We do not have the spin parameters that vary row-by-row (or rather we have the spin parameter 1 for all rows); adding those is possible but leads to substantial increase in complexity that we want avoid in this work, cf. \cite[Section 5]{BP} in the vertex model case. 

We will only consider stochastic IRF models in a quadrant with specific boundary conditions, although other domains and boundary conditions are certainly possible.

The states of our stochastic IRF model in the quadrant $\R_{\ge 0}^2$ are functions $\mathcal{F}:\Z_{\ge 0}^2\to \C$; we will call the value $\mathcal F(x,y)$  the \emph{filling} or the dynamic parameter of the unit square $[x,x+1]\times[y,y+1]$ in the quadrant. 

The fillings of two squares that share a horizontal edge must differ by an $\pm 2\eta$, while the fillings of two squares with shared vertical edge with coordinate $x$ must differ by an element of $2\eta(\La_x-2 \Z_{\ge 0})$, unless $\La_x=I+m/(2\eta)$ for integral $I\ge 0$ and $m$.\footnote{The `$m$' in this formula plays no role as the weights of our model do not depend on it, cf. \eqref{eq:intro1} below; in what follows we will simply take $m=0$.}  If $\La_x$ is of that form, the difference must be an element of the small set $2\eta\cdot\{-I,-I+2,\dots,I-2,I\}$, which shrinks to $\pm 2\eta$ in the spin $\frac 12$ case $I=1$.

The probability measure on the $\mathcal{F}$'s that we are interested in, is a limit of its projections -- finite probability distributions on the cylindric sets defined by prescribing all the fillings $\mathcal{F}(x,y)$ inside finite growing squares $[0,L]^2$. The weight of such a finite filling is defined as the product of Boltzmann factors over all $2\times 2$ squares (called \emph{plaquettes}) of the form $[x,x+2]\times[y,y+2]$ with $x,y\in \{0,1,\dots,M-2\}$. The above conditions on $\mathcal{F}$'s imply that each plaquette must be of one of the types pictured in Figure \ref{fig:intro1} with varying $\la\in\C$ and $k\in\Z_{\ge 0}$. The spin $\frac 12$ plaquettes are pictured in Figure \ref{fig:intro2}. 

\begin{figure}
	\includegraphics[scale=0.75]{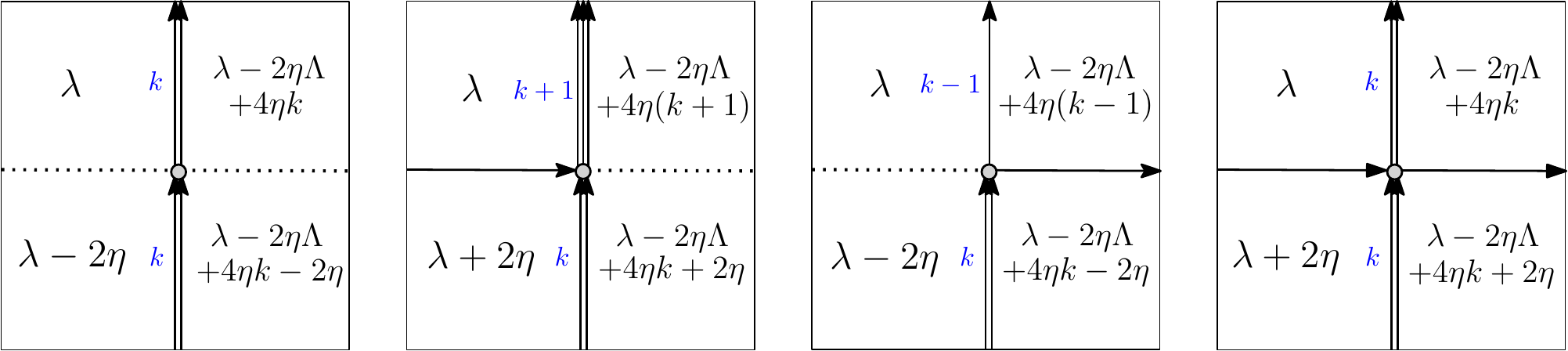}
	\caption{The four types of the IRF plaquettes.}
	\label{fig:intro1}
\end{figure}

\begin{figure}
	\includegraphics[scale=1]{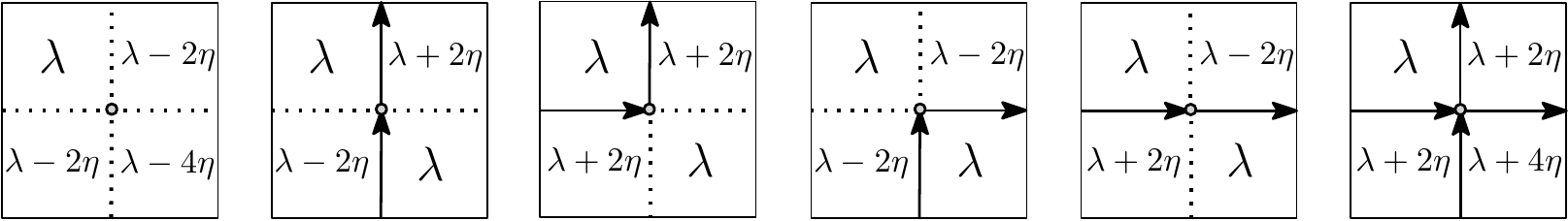}
	\caption{The spin $\frac 12$ IRF vertices.}
	\label{fig:intro2}
\end{figure}

If one wants to encode only the information about the differences between fillings rather than their actual values, there is another convenient graphical way of doing that, which is also included in Figures \ref{fig:intro1} and \ref{fig:intro2}. Namely, one draws a suitable number of upward and rightward arrows along the inner lattice edges of the plaquette as illustrated there. When such arrows are combined for all the plaquettes, they form directed lattice paths that move in the up-right direction. The procedure of passing from fillings to up-right paths projects the IRF model states to those for the higher spin six vertex model, and our IRF models actually converge to those vertex models as $\la_0\to -\i\infty$, see Section \ref{ss:hs6v} below. 

We will impose boundary conditions on $\mathcal{F}$ along the sides of the quadrant by requiring that 
$$
\mathcal{F}(x,0)=\la_0-2\eta(\La_1+\ldots+\La_x),\qquad \mathcal{F}(0,y)=\la_0-2\eta y,\qquad x,y\ge 0.
$$
This is equivalent to saying that $\mathcal{F}(0,0)=\la_0$, and the up-right paths from the previous paragraph enter the quadrant as rightward arrows joining $(0,y)$ and $(1,y)$ for all $y\ge 1$. 

Finally, to complete the definition of the stochastic IRF model, let us give the the Boltzmann weights of the four plaquette types of Figure \ref{fig:intro1}. They are, in the same order as in Figure \ref{fig:intro1},
\begin{equation}\label{eq:intro1}
\begin{aligned}
\astoch_k(\la;w)&=\frac{f(z-w+(\La+1-2k)\eta)}{f(z-w+(\La+1)\eta)}\frac{f(\la-2(\La+1-k)\eta)}{f(\la-2(\La+1-2k)\eta)}\,,\\
\bstoch_k(\la;w)&=\frac{f(-\la+z-w+(\La-1-2k)\eta)}{f(z-w+(\La+1)\eta)}\frac{f(2(k-\La)\eta)}{f(\la-2(\La-1-2k)\eta)}\,,\\
\cstoch_k(\la;w)&=\frac{f(\la+z-w-(\La+1-2k)\eta)}{f(z-w+(\La+1)\eta)}\frac{f(2k\eta)}{f(\la-2(\La+1-2k)\eta)}\,,\\
\dstoch_k(\la;w)&=\frac{f(z-w+(-\La+1+2k)\eta)}{f(z-w+(\La+1)\eta)}\frac{f(\la+2(k+1)\eta)}{f(\la-2(\La-1-2k)\eta)}\,,
\end{aligned}
\end{equation}
where $f(\zeta)\equiv \sin \pi \zeta$, $\la$ is the filling of the top left unit square of the considered plaquette, and $(z,\La,w)$ need to be specialized to $(z_x,\La_x,w_y)$ with $(x,y)\in\Z_{\ge 1}^2$ being the coordinates of the center of the plaquette. There are several ways to choose the parameters of the model so that these weights are actually nonnegative, and the probabilistic terminology we used above makes sense. However, our principal results are purely algebraic, they do not depend on this nonnegativity, thus we will not focus on it at the moment. 

The above plaquette weights satisfy the identities
\begin{equation}\label{eq:intro2}
\astoch_k(\la;w)+\cstoch_k(\la;w)\equiv 1,\qquad \bstoch_k(\la;w)+\dstoch_k(\la;w)\equiv 1,
\end{equation}
which can be interpreted as follows: If we fix the fillings of the three unit squares along the left and bottom sides of a plaquette, then there are two possibilities for the filling of its top right square, and the sum of weights for these two possibilities is always equal to 1. This makes it possible to construct random states inductively -- one starts from the left and bottom borders of the quadrant where the values of $\mathcal{F}$ are prescribed by the boundary conditions, and gradually moves inside the quadrant by filling unit squares whose left and bottom neighbors have already been filled, with each step requiring a Bernoulli random variable with biases given by one of the identities \eqref{eq:intro2}. We also see that the measures on fillings of $[0,L]$ defined by taking products of such plaquette weights are consistent for different $L$'s. 

\smallskip

\noindent \textbf{The observables.} For any $(x,y)\in \Z_{\ge 1}$ and an IRF configuration in the quadrant, define $\HT(x,y)$ as the number of up-right paths that pass through or below the vertex with coordinates $(x,y)$. Our boundary conditions imply $0\le\HT(x,y)\le y$, and it suffices to know the fillings of the finitely many unit squares in the rectangle $[0,x]\times[0,y+1]$ to know what $\HT(x,y)$ is. We call $\HT(x,N)$ the \emph{height function} for our IRF model.

Let us introduce another observable at $(x,y)\in\Z_{\ge 1}^2$ that is closely related to the height function and is defined by 
\begin{equation}\label{eq:intro3}
\o(x,y)=e^{2\pi\i\la_0}q^{\HT(x,y)}+q^{N-\HT(x,N)-\La_{[1,x)}},
\end{equation}
where we use the notations $q=e^{-4\pi\i\eta}$ and $\La_{[1,x)}=\La_1+\ldots+\La_{x-1}$. A known $\o(x,y)$ yields two possible values of $q^{\HT(x,y)}$ by solving a quadratic equation. In many cases this allows to reconstruct $\HT(x,y)$ uniquely as one of the roots would not satisfy natural inequalities imposed by the model. 
\begin{theorem}\label{th:intro1} 
	For any $n,y\ge 1$ and $x_1\ge x_2\ge\dots\ge x_n\ge 1$, we have
	\begin{multline}\label{eq:intro4}
	\frac{1}{(e^{2\pi\i\la_0};q)_n}\E \left[\prod_{k=1}^{n} 
	\left(q^{y-\La_{[1,x_{k})}}+e^{2\pi\i\la_0}q^{2(k-1)} -q^{k-1}\cdot\o(x_{k},y)\right)\right]=e^{-2\pi\i\eta\,\bigl(\frac{n(n-1)}2+ny-\sum_{i=1}^n\La_{[1,x_i)})\bigr)}\\ \times\oint\dots\oint \prod_{1\le i<j\le n} \frac{f(v_i-v_j)}{f(v_i-v_j+2\eta)}\prod_{i=1}^n\bigg(\prod_{j=1}^{x_i-1}\frac{f(v_i-p_j)}{f(v_i-q_j)}\prod_{k=1}^y \frac{f(v_i-w_k-2\eta)}{f(v_i-w_k)}\bigg)\,{dv_i},
	\end{multline}
	where the integration contours are positively oriented loops around $\{w_k\}_{1\le k\le y}$ (alternatively, the integral can be replaced by the  sum of possible residues at $v_i=w_k$ for all $i$ and $k$). 
\end{theorem}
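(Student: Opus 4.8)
The overall strategy is to follow the route of \cite{BP}, resumming over configurations by means of the Cauchy and orthogonality identities for the symmetric elliptic functions established earlier in the paper, with the dynamic parameter tracked throughout. The first step is purely algebraic: writing $\zeta=e^{2\pi\i\la_0}$ and substituting the definition \eqref{eq:intro3} of $\o$ into the product on the left of \eqref{eq:intro4}, a short computation factors each term as
\begin{equation*}
q^{y-\La_{[1,x_k)}}+\zeta q^{2(k-1)}-q^{k-1}\o(x_k,y)=\bigl(1-q^{\,k-1-\HT(x_k,y)}\bigr)\bigl(q^{\,y-\La_{[1,x_k)}}-\zeta\,q^{\,k-1+\HT(x_k,y)}\bigr).
\end{equation*}
This separates the two effects in play: the first bracket is a genuine $q$-moment factor (it vanishes exactly when $\HT(x_k,y)=k-1$), while the second bracket absorbs the dynamic parameter. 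The product over $k$ is then to be recognized as the eigenvalue produced by applying a diagonal operator $n$ times along the horizontal slice at height $y$.

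Next I would realize the joint law of the height function along that slice through the symmetric elliptic functions. The probability of a fixed path configuration $\mu$ crossing the line at height $y$ is a product of an $F$-type function in the row spectral parameters $w_1,\dots,w_y$ and a dual $G$-type function in the specialization carrying the column data $(z_x,\La_x)$, and the total mass is exactly the corresponding Cauchy identity. Because the diagonal entries of the infinite-volume monodromy matrix act on the $F$-functions with a known eigenvalue --- this is the content of the Pieri-type rules --- the factorized product above is, up to the normalization $(\zeta;q)_n$ divided out on the left of \eqref{eq:intro4}, precisely the product of $n$ such eigenvalues evaluated on $F_\mu$. Thus $\E[\cdots]$ becomes a weighted sum over $\mu$ of this eigenvalue against the measure.

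The resummation is then carried out by inserting the contour-integral (algebraic Bethe ansatz) representation of this $n$-fold diagonal action and applying the orthogonality relations for the symmetric elliptic functions. This collapses the sum over $\mu$ and introduces the auxiliary spectral variables $v_1,\dots,v_n$ together with the interaction kernel $\prod_{i<j}f(v_i-v_j)/f(v_i-v_j+2\eta)$; the single-variable factors $\prod_{j=1}^{x_i-1}f(v_i-p_j)/f(v_i-q_j)$, truncated at the column $x_i$ probed by the $i$-th variable (with $p_j,q_j$ the parameters built from $(z_j,\La_j)$), and $\prod_{k=1}^{y}f(v_i-w_k-2\eta)/f(v_i-w_k)$ are the matrix elements attached to the columns and to the rows, respectively. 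The loops around $\{w_k\}$, equivalently the residue expansion at $v_i=w_k$, reconstitute the discrete sum, which is how the multi-point dependence on the $x_i$ reappears.

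The main obstacle, exactly as flagged in the introduction, is the dynamic parameter. In the non-dynamic $U_q(\widehat{sl}_2)$ setting of \cite{BP} the analogous argument closes up directly, but here the functions and the plaquette weights \eqref{eq:intro1} carry theta-function shifts by $\la$, so a naive transcription fails. The delicate point is to prove that, once the prefactor $(\zeta;q)_n^{-1}$ is removed, all dynamic shifts telescope and the right-hand side of \eqref{eq:intro4} is genuinely independent of $\la_0$. I expect this cancellation to be the crux of the proof: it is what dictates the precise shape of the second bracket in the factorization above and the specific power $q^{2(k-1)}$, and establishing it will require careful bookkeeping with the theta-function identities behind \eqref{eq:intro2} rather than any single clean manipulation.
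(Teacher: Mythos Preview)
Your overall framing --- expressing the measure through $B_\nu$'s and invoking the Cauchy and orthogonality relations --- is the right one, and your factorization of each bracket matches the paper's \eqref{eq:obs-factor}. But the central mechanism you propose is not what the paper does, and the step you describe as ``precisely the product of $n$ such eigenvalues evaluated on $F_\mu$'' is where the proposal has a real gap. The Pieri rules \eqref{eq:pieri2}--\eqref{eq:pieri} give eigenvalues of the form $\prod_j f(v_j-u_i-2\eta)/f(v_j-u_i)$ depending on spectral parameters, not expressions depending on $\HT_\nu(x_k)$; there is no diagonal operator in the paper whose eigenvalue on $B_\nu$ is your factored observable. Identifying such an operator is not straightforward (and indeed the paper remarks that the $\la_0$-independence of the answer is surprising and ``suggests that there may be a smarter way'').

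What the paper actually does is go in the reverse direction. It starts from the integral on the right of \eqref{eq:intro4}, argues (via a limit of the Cauchy identity \eqref{eq:cauchy} into the specialization $\rho$) that it admits an expansion in $\Bstoch_\nu$, and then \emph{computes} the coefficient of each $\Bstoch_\nu$ using the orthogonality relation \eqref{eq:orth}. This coefficient is an $(N+n)$-fold contour integral \eqref{eq:coefficient}; the $v$-contours are shrunk first, forcing residues at $v_i=w_{t_i}$ with pairwise distinct $t_i$ satisfying $\nu_{t_i}\ge x_i$, and then the $w$-contours are expanded to infinity as in Proposition~\ref{pr:D-rho-expl}. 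The output is a sum \eqref{eq:coeff=sum} over distinct $(t_1,\dots,t_n)$, and the crux --- the step you are missing --- is a purely combinatorial identity (Lemma~\ref{lm:sum}) that collapses this sum into the product $\prod_k f(2\eta(\HT_\nu(x_{k+1})-k))\,f(-\la+2\eta(\HT_\nu(x_{k+1})+k-N+\La_{[1,x_{k+1})}))$. This product is exactly your factored observable rewritten trigonometrically, so the expansion coefficients match and the identity follows. The $\la_0$-independence is thus not established by a telescoping of theta shifts as you anticipate, but is simply read off from the fact that the starting integral contains no $\la_0$; the work is in matching the coefficients, and that hinges on Lemma~\ref{lm:sum} together with the residue bookkeeping, neither of which appears in your outline.
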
 

Observe that the right-hand side of \eqref{eq:intro4} is independent of $\la_0$, which is quite surprising. Taking the limit $\la_0\to -\i\infty$ turns the left-hand side into $\E_{\mathrm{h.s.6\,v.m.}} \left[\prod_{k=0}^{n-1} \left(q^{\HT(x_{k+1},y)}-q^{k}\right)\right]$, where the expectation is with respect to a stochastic higher spin six vertex in the same quadrant. In this limit the equality \eqref{eq:intro4} was actually proved as \cite[Lemma 9.11]{BP}, and since then two other proofs have appeared in \cite{OP, BBW}.

If all the plaquette weights are nonnegative and the observables \eqref{eq:intro3} are real, the averages \eqref{eq:intro4} uniquely determine the joint distribution of the observables $\o(x_1,y)$, $\dots$, $\o(x_n,y)$ for any $x_1,\dots,x_n,y\ge 1$. 

\smallskip

\noindent \textbf{Dynamic exclusion processes.} In the spin $\frac 12$ case (equivalently, $\La_j\equiv 1$) there is a way of specializing the parameters of the stochastic IRF model in such a way that the weights of the 2nd and 5th plaquettes in Figure \ref{fig:intro2} become infinitesimally small. Looking at a finite neighborhood of the diagonal of the quadrant infinitely far from the origin reveals a novel interacting system that we call the \emph{dynamic Asymmetric Simple Exclusion Process} (ASEP, for short).\footnote{A similar limit of the stochastic six vertex model leads to the usual ASEP, see \cite[Section 6.5]{BP}, \cite{Agg-ASEP}, and reference therein.}

The dynamic ASEP is a continuous time Markov chain on the state space of integer-valued sequences $\{s_x\}_{x\in \Z}$ subject to the condition $s_{x+1}-s_x\in\{-1,1\}$ for any $x\in\Z$. Alternatively, the increments $\{s_{x+1}-s_{x}\}_{x\in\Z}$ can be encoded by a particle configuration in $\Z+\frac 12$, where we say that there is a particle that resides at $x+\frac 12$ if and only if 
$s_{x+1}-s_{x}=-1$. We will only consider the initial condition $\{s_x=|x|\}_{x\in\Z}$ at time $t=0$, which corresponds to particles filling up the negative semi-axis $\{-\frac 12,-\frac 32,-\frac 52, \dots\}$ -- the so-called step initial condition. 

The dynamical ASEP depends on two parameters $q,\al\in\R$ (here $q=e^{-4\pi\i\eta}$ is the same as before, and $\al=-e^{-2\pi\i\la_0}$ in terms of the earlier notation). Its elementary jumps are independent and have exponential waiting times with variable rates. These jumps can be of two kinds, and their form and the corresponding rates are

\smallskip

$s_x\mapsto (s_x-2)$\quad with rate\quad $\dfrac{q(1+\alpha q^{-s_x})}{1+\alpha q^{-s_x+1}}$
\qquad and \qquad
$s_x\mapsto (s_x+2)$ \quad with rate\quad $\dfrac{1+\al q^{-s_x}}{1+\al  q^{-s_x-1}}\,$,
\smallskip

\noindent where $x\in\Z$ is arbitrary. The parameters $q$ and $\al$ are assumed to be such that the above rates are always nonnegative, which is the case, for example, for $q,\al > 0$. Note that $\al=0$ yields the usual ASEP with constant jump rates $q$ and 1. 

Taking a further limit of $q\to 1$ leads to the \emph{dynamic Symmetric Simple Exclusion Process} (SSEP) defined as follows. It is again a continuous time Markov chain on the state space of integer-valued sequences $\{s_x\}_{x\in \Z}$ subject to the condition $s_{x+1}-s_x\in\{-1,1\}$ for any $x\in\Z$, and it depends on a single parameter $\la\in\R$. Its elementary jumps are independent and have exponential waiting times with variable rates. These jumps can be of two kinds, and their form and the corresponding rates are

\smallskip

$s_x\mapsto (s_x-2)$\quad with rate\quad $\dfrac{s_x-\la}{s_x-1-\la}$
\qquad and \qquad
$s_x\mapsto (s_x+2)$ \quad with rate\quad $\dfrac{s_x-\la}{s_x+1-\la}$
\smallskip

\noindent for arbitrary $x\in\Z$. These rates are positive, for example, if we take $\la<0$ and the initial condition $s_x(t=0)=|x|$. 

The observable \eqref{eq:intro3} for the dynamical ASEP and SSEP takes the following form:
$$
\o_{ASEP}(x,t)=-\alpha^{-1} q^{\frac{s_x(t)-x}2}+q^{\frac{-s_x(t)-x}2},\qquad
\o_{SSEP}(x,t)={\frac{s_x(t)-x}2}\left(\frac{s_x(t)+x}2-\la\right),
$$
and Theorem \ref{th:intro1} leads to the following

\begin{corollary}
	Consider the dynamic ASEP and SSEP as defined above, with the initial condition $s_x(0)\equiv |x|$. Then, for any $t\ge 0$, $n\ge 1$, and $x_1\ge x_2\ge\dots\ge x_n$, we have
	\begin{multline*}
	\frac{1}{(-\al^{-1};q)_n}\E_{\mathrm{dynamic\, ASEP\,at\,time\,}t} \left[\prod_{k=1}^{n} 
	\left(q^{-x_{k}}-\al^{-1} q^{2(k-1)} -q^{k-1}\cdot\o_{ASEP}(x_{k},t)\right)\right]=	\frac{q^{\frac{n(n-1)}2}}{(2\pi\i)^n}\\ \times
	\oint
	\ldots
	\oint
	\prod_{1\le i<j\le n}\frac{y_i-y_j}{y_i-qy_j}
	\prod_{i=1}^{n}\bigg(
	\bigg(\frac{1-y_i}{1-q y_i}\bigg)^{x_i}
	\exp\bigg\{\frac{(1-q)^{2}y_i}{(1-y_i)(1-q y_i)}\,t\bigg\}
	\bigg)\frac{d y_i}{y_i}\,,
	\end{multline*}
\end{corollary}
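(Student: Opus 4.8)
The plan is to obtain the corollary as a degeneration of Theorem \ref{th:intro1}, following the limit transition from the stochastic IRF model to the dynamic ASEP described above. First I would specialize Theorem \ref{th:intro1} to the spin $\tfrac12$ case $\La_j\equiv 1$ with homogeneous columns ($z_x\equiv z$) and homogeneous rows ($w_y\equiv w$), and place the ASEP observation point $x_k$ as a local coordinate near the diagonal, i.e.\ take the IRF column index to be $y+1+x_k$ so that $\La_{[1,x_k)}=y+x_k$ and hence $q^{y-\La_{[1,x_k)}}=q^{-x_k}$. Under the identification of the height function with the particle positions, $\HT(\,\cdot\,,y)=\tfrac12\bigl(s_{x_k}(t)-x_k\bigr)$, and using $e^{2\pi\i\la_0}=-\al^{-1}$ (which follows from $\al=-e^{-2\pi\i\la_0}$), the observable \eqref{eq:intro3} turns into exactly $\o_{ASEP}(x_k,t)$ and the Pochhammer prefactor $(e^{2\pi\i\la_0};q)_n$ into $(-\al^{-1};q)_n$. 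Thus the left-hand side of \eqref{eq:intro4} matches the left-hand side of the corollary factor by factor; passing the limit inside the expectation is justified because the height function, and hence the observable, is uniformly bounded, while the finite-dimensional distributions of the IRF model converge to those of the dynamic ASEP by construction.

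It remains to take the limit of the contour integral on the right-hand side of \eqref{eq:intro4}. I would use the multiplicative change of variables $y_i=e^{2\pi\i v_i}$ (so that $q=e^{-4\pi\i\eta}$), under which each cross factor $\tfrac{f(v_i-v_j)}{f(v_i-v_j+2\eta)}$ becomes $\tfrac{y_i-y_j}{y_i-qy_j}$ up to a power of $q$, and $dv_i$ becomes $\tfrac{1}{2\pi\i}\tfrac{dy_i}{y_i}$, producing the $(2\pi\i)^{-n}$; assembling the $q$-powers generated by all factors yields the prefactor $q^{n(n-1)/2}$. The positively oriented loops around the (now coinciding) points $\{w_k=w\}$ transform into loops around the image point, which after tuning $z$ and $w$ is $y_i=1$.

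The heart of the argument is the emergence of the time-dependent exponential. With $\La_{[1,x_k)}=y+x_k$ the column product now runs over $y+x_i$ factors $\tfrac{f(v_i-p_j)}{f(v_i-q_j)}$ (constant in $j$ by homogeneity) and the row product over $y$ factors $\tfrac{f(v_i-w-2\eta)}{f(v_i-w)}$; after the change of variables and with the correct choice of $(z,w)$ the former tends to $\bigl(\tfrac{1-y_i}{1-qy_i}\bigr)^{y+x_i}$. I would split off the target power $\bigl(\tfrac{1-y_i}{1-qy_i}\bigr)^{x_i}$ and group the remaining $y$ factors of the column product with the $y$ factors of the row product into the $y$-th power of a single bracket that tends to $1$ as $y\to\infty$. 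Scaling so that the effect of each ``horizontal'' plaquette is of order $1/y$ and the accumulated effect over $y$ rows is the time $t$, the Poisson-type limit $(1+c/y)^y\to e^c$ produces the factor $\exp\{\tfrac{(1-q)^2 y_i}{(1-y_i)(1-qy_i)}\,t\}$. Assembling the pieces gives precisely the right-hand side of the corollary.

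The main obstacle I anticipate is making this last step rigorous: controlling the $y$-th power uniformly on the integration contour so that pointwise convergence of the integrand upgrades to convergence of the integral (dominated convergence on a fixed small loop around $y_i=1$), and verifying that no pole is crossed while the contour is deformed to its limiting position. The bookkeeping of the exact scalings of $z,w$ and of the number of rows against the ASEP rates is delicate but routine. Finally, the dynamic SSEP formula follows from the ASEP one by the further degeneration $q\to1$ (equivalently $\eta\to0$) with the dynamic parameter rescaled as $\la_0\mapsto\la$ and time rescaled by $(1-q)^{-2}$; expanding $\o_{ASEP}$ to second order in $\eta$ recovers the quadratic observable $\o_{SSEP}$, and the contour integral degenerates accordingly.
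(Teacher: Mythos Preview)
Your approach is correct and essentially matches the paper's, with one difference worth noting. The paper does not carry out the Poisson-type limit of the contour integral directly; instead, having established that the right-hand side of Theorem~\ref{th:intro1} is $\la$-independent, it sends $\al\to 0$ on the left to reduce to the \emph{usual} ASEP expectation and then quotes the integral formula for that from the existing literature (\cite[Theorem~4.20]{BCS14} for equal $x_i$'s and \cite[Corollary~10.2]{BP} in general). Your route---taking the limit of the integral itself via the $(1+c/y)^y\to e^c$ mechanism---is exactly what those cited results amount to, so you are effectively reproving a lemma the paper imports. This buys self-containment at the cost of the uniform-on-contour estimates you flagged; the paper's route avoids that analysis but depends on external references. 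The SSEP degeneration is handled analogously in both cases (Corollary~\ref{cr:obs-SSEP} in the paper).
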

where the integration contours are small positively oriented loops around $1$, and
\begin{multline*}
\frac{1}{(-\la)_n}\E_{\mathrm{dynamic\, SSEP\,at\,time\,}t} \left[\prod_{k=1}^{n} 
\left((k-1)(k-1-\la+x_{k})-\o_{SSEP}(x_{k},t)\right)\right]\\=\oint
\ldots
\oint
\prod_{1\le i<j\le n}\frac{v_i-v_j}{v_i-v_j+1}
\prod_{i=1}^{n}\bigg(
\bigg(\frac{v_i}{v_i-1}\bigg)^{x_i}
\exp\bigg\{\frac{t}{v_i(v_i-1)}\bigg\}
\bigg)\frac{d v_i}{2\pi\i}\,,
\end{multline*}
where the integration contours are small positively oriented loops around $0$.
\smallskip

These formulas can be used for large time asymptotic analysis, and we show that the last formula implies, for example, the following claim: 
Consider the dynamic SSEP with initial condition $s_x(0)\equiv |x|$ and parameter $\la<0$, and fix $\tau\ge 0$ and $\chi\in\R$. Then 
$$
\lim_{L\to \infty} L^{-\frac 14}\cdot s_{L^{-\frac 14}\chi}(L\tau)=\sqrt{Z+\chi^2},
$$ 
where $Z$ is a $\Gamma(-\la,4\sqrt{\tau/\pi})$-distributed random variable.\footnote{The {gamma distributions} form a two parameter family $\Gamma(a,b)$ of absolutely continuous probability measures on $\R_{> 0}$ with densities
	$p_{a,b}(x)={b^a x^{a-1} e^{-bx}}/{\Gamma(a)}.$} 

\smallskip

\noindent \textbf{Symmetric elliptic functions.} Let us return to the setting of the stochastic IRF model in the quadrant. One fact that shows its integrability (which is much simpler than Theorem \ref{th:intro1}) is the following: One can explicitly compute the probability that the $M\ge 1$ up-right paths started from the left boundary of the quadrant in rows $1,\dots,M$ intersect the line $y=M+\frac 12$ at prescribed locations $(\mu_1,M+\frac 12),(\mu_2,M+\frac 12),\dots (\mu_M,M+\frac 12)$ for some fixed $\mu_1\ge \dots\ge \mu_M\ge 1$. Up to simple elementary factors (that can be obtained from \eqref{eq:prob=B-stoch}, \eqref{eq:B-stoch}, \eqref{eq:D-rho-expl} below), this probability is equal to 
\begin{multline*} 
B_\mu(\la;w_1,\dots,w_M):=\frac{(-1)^M(f(2\eta))^{M}}{
	\prod_{i=0}^{M-1} f(\la+2\eta i)}\cdot \prod_{i\ge 
	0}\prod_{j=1}^{m_i}\frac{f(2\eta)}{f(2\eta j)}
\\
\times \sum_{\sigma\in S_M}\sigma\left[\prod_{1\le i<j\le M} 
\frac{f(w_i-w_j-2\eta)}{f(w_i-w_j)}\cdot \prod_{i=1}^M 
\phi_{\mu_i}(w_i)f\bigl(\la+w_i-q_{\mu_i}+2\eta(2(M-i)+1-\La_{[0,\mu_i-1)})\bigr)
\right],
\end{multline*}
where $\la=\la_0-2\eta(M-\La_0)$, $\La_{[0,x)}=\La_0+\dots+\La_{x-1}$, 
$$
p_j=z_j+(1-\La_j)\eta,\quad q_j=z_j+(1+\La_j)\eta,\qquad \phi_k(w)=\frac 1{f(w-q_k)} \prod_{i=0}^{k-1}\frac{f(w-p_i)}{f(w-q_j)}
$$
for all suitable index values, and $S_M$ is the symmetric group on $M$ symbols with its elements $\sigma$ permuting the variables $w_1,\dots,w_M$ inside the brackets. 

These $B_\mu$'s form a remarkable family of symmetric functions. Let us lift them to an elliptic setting by replacing $f(\zeta)=\sin \pi\zeta$ by $f(\zeta)=\theta(\zeta,\tau)$ with some $\tau\in \C$, $\Im\tau>0$, and
$$
\theta(\zeta,\tau)=-\sum_{j\in\Z} e^{\pi \i(j+\frac 12)^2\tau +2\pi \i (j+\frac 12)(\zeta+\frac 12)},
$$
see the beginning of Section \ref{sc:prelim} for basic properties of this theta-function. The limit $\tau\to +\i\infty$ gives back the trigonometric case $f(\zeta)=\sin\pi\zeta$.

The resulting symmetric elliptic functions first appeared in \cite{FV-ABA} as an expression for the wavefunctions of the transfer matrices in the algebraic Bethe ansatz (ABA, for short) for the elliptic quantum group $E_{\tau,\eta}(sl_2)$. In \cite{FV-ABA}, the spectral parameter $w_1,\dots,w_M$ had to satisfy certain Bethe equations because the corresponding IRF model had periodic boundary conditions; for our purposes, $w_1,\dots,w_M$ should be considered as free indeterminates. 

The functions $B_\mu(\la;w_1,\dots,w_M)$ are also matrix elements of the traditional ABA $B$-operators acting on the highest weight vector (thus the notation). They are naturally included into a larger family of skew functions $B_{\mu/\nu}$ defined as matrix elements of the same operators acting on generic vectors. Similarly, one defines a `dual' family $D_{\mu/\nu}$ of matrix elements of the ABA $D$-operators, properly renormalized in an infinite volume limit. 

These families of symmetric elliptic functions satisfy a host of identities (of Pieri and Cauchy types) that are direct consequences of the Yang-Baxter integrability (or the commutation relations of $E_{\tau,\eta}(sl_2)$), and also certain orthogonality relations. This approach was explained in great amount of detail in \cite{BP, BP-lect} in the case of the higher spin six vertex model (which is a degeneration of what is being done here), and in this work we basically follow the same path. An interested reader is referred to \cite{BP, BP-lect} for a more detailed discussion. 
In the end, it is the wonderful properties of these symmetric functions that 
make the proofs of the results stated earlier possible.  

\smallskip

\noindent\textbf{Perspectives.}\ A few follow-ups of the present paper are 
currectly in progress. The fusion for the IRF models discussed above, as well 
as fused stochastic models and a dynamic $q$-TASEP will appear in 
\cite{Agg-in-progress}. A duality approach to the dynamic ASEP/SSEP and an 
alternative proof of Corollary 1.2 will appear in \cite{BC-in-progress}. 
Yet another upcoming work will focus on equilibrium measures for the dynamic
ASEP and their properties. 

\smallskip

\noindent\textbf{Acknowledgments.}\  Over the course of this work I have 
greatly benefited from multiple conversations with Amol Aggarwal, Ivan Corwin, 
Vadim Gorin, Leonid Petrov, Eric Rains, and Nicolai Reshetikhin; I am very 
grateful to them. This work was partially supported by the NSF grant DMS-1607901 
and by Fellowships of the Radcliffe Institute for Advanced Study and the Simons 
Foundation. 
\section{Preliminaries}\label{sc:prelim} We will substantially rely on two works by Felder and Varchenko \cite{FV-reps, FV-ABA}, and will also try to keep our notation consistent with theirs. The goal of this section is to introduce notations and to recall some results from \cite{FV-reps, FV-ABA} that we will need later on. 

There are two basic parameters in the game, $\eta,\tau\in\C\setminus \{0\}$, $\Im \tau>0$. The following theta-function plays a key role: 
\begin{equation}\label{eq:theta}
\theta(z,\tau)=-\sum_{j\in\Z} e^{\pi \i(j+\frac 12)^2\tau +2\pi \i (j+\frac 12)(z+\frac 12)}. 
\end{equation}

If $\tau=\i T\to+\i\infty$ then the main contribution to the series comes from $j=-1$ and $j=0$, and $\theta(z,\i T)\sim - e^{-\pi T/4}(e^{-\i\pi (z+\frac 12)}+e^{\i\pi (z+\frac 12)})=2e^{-\pi T/4}\sin(\pi z)$. All the key expressions below are invariant with respect to multiplying $\theta(z,\tau)$ by a constant, and as a consequence we can replace $\theta(z,\tau)$ in the above limit by $\sin(\pi z)$. We will use the notation $f(z)$ to denote both $\theta(z,\tau)$ and $\sin(\pi z)$, and will refer to the latter case as `trigonometric'. 

We have the periodicity relations \cite[p.~496 on top]{FV-ABA}
\begin{equation}\label{eq:period}
\theta(z+1,\tau)=-\theta(z,\tau)=\theta(-z,\tau), \qquad \theta(z+\tau,\tau)=-e^{-\i \pi (\tau+2z)} \theta(z,\tau). 
\end{equation}
Also, $C(\tau)\theta(z,\tau)=H(2Kz)\Theta(2Kz)$ with a constant $C(\tau)$, in terms of the classical Jacobi notation used in Baxter's book \cite{Bax}, cf. \cite[p.~494-495]{FV-ABA}.
The infinite product representations of the theta-functions, cf., e.g., \cite[Chapter 15]{Bax}, implies that $\theta(z,\tau)$ has a single simple zero in each fundamental parallelogram of $\C/(\Z+\tau \Z)$ (note that $\theta(0,\tau)=0$). 

Fix $\Lambda,z\in\C$. Following \cite[Section 4]{FV-reps}, consider operators $a(\lambda,w),b(\la,w),c(\la,w),d(\la,w)$ acting in the \emph{evaluation Verma module} $V_\La(z)=\mathrm{Span}(\{e_k\}_{k\ge 0})$ by 
\begin{equation}\label{eq:abcd}
\begin{gathered}
a(\la,w) e_k= \frac{f(z-w+(\La+1-2k)\eta)}{f(z-w+(\La+1)\eta)}\frac{f(\la+2k\eta)}{f(\la)}\,e_k,\\
b(\la,w) e_k=-\frac{f(-\la+z-w+(\La-1-2k)\eta)}{f(z-w+(\La+1)\eta)}\frac{f(2\eta)}{f(\la)}\,e_{k+1},\\
c(\la,w)e_k=-\frac{f(-\la-z+w+(\La+1-2k)\eta)}{f(z-w+(\La+1)\eta)}\frac{f(2(\La+1-k)\eta)}{f(\la)}\frac{f(2k\eta)}{f(2\eta)}\,e_{k-1},\\
d(\la,w)e_k=\frac{f(z-w+(-\La+1+2k)\eta)}{f(z-w+(\La+1)\eta)}\frac{f(\la-2(\La-k)\eta)}{f(\la)}\, e_k.
\end{gathered}
\end{equation}

This is a highest weight module of the elliptic quantum group $E_{\tau,\eta}(sl_2)$ with the highest weight vector $e_0$ that is killed by $c(\la,w)$  and is an eigenvector of $a(\la,w)$ (with eigenvalue 1) and $d(\la,w)$. If $\La=n+(m+l\tau)/2\eta$ for $n\ge 0,m,l$ all integers, then $c(\la,w)e_{n+1}=0$, and $\mathrm{Span}(\{e_k\}_{k\ge n+1})$ is a submodule; the corresponding quotient is a module of dimension $(n+1)$, see \cite[Theorem 3]{FV-reps}. 

The operators \eqref{eq:abcd} satisfy a host of (quadratic) commutation relations, see \cite[Section 2]{FV-reps}. We just need a few of them that are more convenient to state in terms of modified operators $\a(w),\b(w),\c(w),\d(w)$ that act in the space of functions $F:\C\to V_\La(z)$ that are 1-periodic: $F(\la+1)=F(\la)$. The definition of this action is, see \cite[Section 3]{FV-reps},
\begin{equation}\label{eq:tilde-abcd}
\begin{gathered}
(\a(w)(w)F)(\la)=a(\la,w)F(\la-2\eta), \qquad (\c(w)F)(\la)=c(\la,w)F(\la-2\eta),\\
(\b(w)(w)F)(\la)=b(\la,w)F(\la+2\eta), \qquad (\d(w)F)(\la)=d(\la,w)F(\la+2\eta).
\end{gathered}
\end{equation}

First, we will need the fact that operators $\a(w)$ with different values of $w$ commute between themselves: $\a(w_1)\a(w_2)=\a(w_2)\a(w_1)$, and similarly for $\b,\c,\d$.

Second, we will need the following less trivial commutation relations, see \cite[Section 3]{FV-reps}:
\begin{gather}
\b(w_1)\a(w_2)=\be(\la)\b(w_2)\a(w_1)+\de(\la)\a(w_2)\b(w_1),\label{eq:ba-comm}\\
\b(w_2)\d(w_1)=\de(\la-2\eta\h)\d(w_1)\b(w_2)+\ga(\la-2\eta\h)\b(w_1)\d(w_2),\label{eq:db-comm}\\
\a(w_2)\c(w_1)=\de(\la-2\eta\h)\c(w_1)\a(w_2)+\ga(\la-2\eta\h)\a(w_1)\c(w_2),\label{eq:ac-comm}
\end{gather}
where $\be,\ga,\de$ are functions of $w=w_1-w_2,\la,\eta,\tau$ (only the dependence on $\la$ is indicated) defined by
\begin{equation*} 
\be(\la)=\frac{f(-w-\la)f(2\eta)}{f(w-2\eta)f(\la)},\quad
\ga(\la)=\frac{f(w-\la)f(2\eta)}{f(w-2\eta)f(\la)},\quad 
\de(\la)=\frac{f(w)f(\la-2\eta)}{f(w-2\eta)f(\la)},
\end{equation*}
and $\h$ returns the \emph{weight} of the vector in $V_\La(z)$ that the expression with $\h$ is being applied to, according to $F(\h)e_k=F(\La-2k)e_k$ for any function $F:\C\to\C$ and $k\ge 0$ (that is, the weight of $e_k$ is defined to be $(\La-2k)$).  

There is a way of tensor multiplying modules (=spaces with an action of $a,b,c,d$) that preserves commutation relations, see \cite[bottom of p.~745]{FV-reps}. Namely, for $V_1\otimes V_2$ the action of $a,b,c,d$ can be read off as matrix elements of the following $2\times 2$ matrix product
\begin{equation}\label{eq:tensor}
\begin{bmatrix} a(\la,w)&b(\la,w)\\ c(\la,w)&d(\la,w)\end{bmatrix}=
\begin{bmatrix} a_2(\la-2\eta h^{(1)},w)&b_2(\la-2\la h^{(1)},w)\\ c_2(\la-2\eta h^{(1)},w)&d_2(\la-2\eta h^{(1)},w)\end{bmatrix}\begin{bmatrix} a_1(\la,w)&b_1(\la,w)\\ c_1(\la,w)&d_1(\la,w)\end{bmatrix},
\end{equation}
e.~g., $a(\la,w)=a_2(\la-2\eta h^{(1)},w)a_1(\la,w)+b_2(\la-2\la h^{(1)},w)c_1(\la,w)$, where the subscript in $a,b,c,d$ refers to the tensor factor they are acting in, and $h^{(1)}$ is the weight of the $V_1$-component \emph{after} the action of the right-most factor in $V_1$. 

The weights for vectors in $V_1\otimes V_2$ are defined by the usual rule --- the weight of the tensor product of two weight vectors is the sum of the weights of the factors. 

\section{Infinite volume limit of $b$ and $d$ operators}\label{sc:inf-volume} We would like to define an action of the operators $b$ and $d$ on \emph{finitary} vectors in an infinite tensor product $V_0\otimes V_1\otimes V_2\otimes \cdots$ of evaluation Verma modules, where `finitary' means that the vectors are different from $E_\varnothing:=e_0\otimes e_0\otimes \cdots\otimes e_0\otimes \cdots$ in finitely many positions. For such vectors we can use the notation 
$$
E_\mu = e_{m_0}\otimes e_{m_1}\otimes\cdots\otimes e_{m_k}\otimes \cdots\otimes e_0\otimes e_0\otimes\cdots
$$
for signatures\footnote{By a \emph{signature} we mean a weakly decreasing finite sequence of integers $\nu=(\nu_1\ge \nu_2\ge\dots\ge \nu_k)$; the term originates from the representation theory of classical Lie groups. We will only need signatures with nonnegative parts so we will tacitly make this assumption. We will also use the multiplicative notation of the form $\nu=0^{n_0}1^{n_1}2^{n_2}\cdots$, which means that $\nu$ has $n_0$ zero parts, $n_1$ parts equal to 1, etc.} $\mu=0^{m_0}1^{m_1}2^{m_2}\cdots$ (we will also use this notation for finite tensor products when the number of tensor factors is larger than the largest part of $\mu$). The procedure is very similar to what is described in \cite[Sections 4.2-4.3]{BP, BP-lect} (see also \cite[Section 4]{Bor}), and thus we will be brief.

Defining the action of $b(\la,w)$ causes no difficulties, we set
\begin{equation*}
b(\la,w)E_\mu = \sum_{\nu} B_{\nu/\mu}(\la;w) E_\nu,
\end{equation*}
with each coefficient $B_{\nu/\mu}(\la;w)$ evaluated in a finite tensor product $V_0\otimes V_1\otimes \cdots \otimes V_N$, where $N$ is larger than $\max(\mu_1,\nu_1)$. This definition is independent of $N$ because the action of $a,b,c,d$ is normalized so that $a(\la;w)e_0\equiv e_0$. 
By standard reasoning with lattice paths, $B_{\nu/\mu}(\la;w)=0$ unless $\nu$ and $\mu$ interlace:
$$
\nu_1\ge\mu_1\ge\nu_2\ge\mu_2\dots, \qquad \mathrm{notation}\  \nu\succ\mu\ \mathrm{or}\ \mu\prec\nu,
$$
and $\ell(\nu)$, which is the \emph{length} or the number of parts of $\nu$, is by one greater than $\ell(\mu)$.

We can also introduce several variables $w$ by 
\begin{multline}\label{eq:B}
\b(w_1)\b(w_2)\cdots \b(w_n) E_\mu= b(\la,w_1)b(\la+2\eta,w_2)\cdots b(\la+2\eta(n-1),w_n)E_\mu\\
=
\sum_\nu B_{\nu/\mu}(\la;w_1,\dots,w_n)E_\nu.
\end{multline}

The fact that $\b(w_j)$'s with different $j$ commute implies that the coefficients $B_{\nu/\mu}(\la;w_1,\dots,w_n)$ are symmetric in the $w_j$'s. This definition also readily implies the following \emph{branching rule} (note the shift of $\la$ in the second factor on the right):
\begin{equation}\label{eq:B-branching}
B_{\nu/\mu}(\la;u_1,\dots,u_k,v_1,\dots,v_l)=\sum_{\ka} B_{\nu/\ka}(\la;u_1,\dots,u_k)B_{\ka/\mu}(\la+2\eta k; v_1,\dots,v_l). 
\end{equation}

Defining the action of $d(\la,w)$ in the tensor product $V_{\La_0}(z_0)\otimes V_{\La_1}(z_1)\otimes\cdots$ requires more efforts. For example, naively acting on $E_\varnothing$ we obtain (using the fourth line of \eqref{eq:abcd})
\begin{multline}\label{eq:d-norm-empty}
d(\la,w) E_\varnothing = d_0(\la,w) e_0\otimes d_1(\la-2\eta \La_0)e_0\otimes d_2(\la-2\eta(\La_0+\La_1))e_0\otimes \cdots \\=
\frac{f(z_0-w+(-\La_0+1)\eta)}{f(z_0-w+(\La_0+1)\eta)}\frac{f(\la-2\La_0\eta)}{f(\la)}\cdot \frac{f(z_1-w+(-\La_1+1)\eta)}{f(z_1-w+(\La_1+1)\eta)}\frac{f(\la-2(\La_0+\La_1)\eta)}{f(\la-2\La_0)}\cdot\ldots\cdot E_\varnothing. 
\end{multline}

One sees factors of two types: (1) ratios of the form $f(z_i-\dots)/f(z_i-\dots)$, and, up to finitely many factors, exactly the same will occur in a similar application of $d(\la,w)$ to any $E_\nu$; (2) ratios involving $f(\la-\dots)$ that cancel telescopically, and for the first $(m+1)$ tensor factors yield $f(\la-2\La_{[0,m]})/f(\la)$,  where $\La_{[0,m]}$ is the shorthand for $\La_0+\ldots+\La_{m}$. 

In a similar evaluation of $d(\la,w)E_\nu$ with a general signature $\nu$, far enough out in the tensor product so that only $e_0$'s are participating, the factors of the first type will be just the same, while the factors of the second type will look like (starting from the contribution of $V_{\La_i}(z_i)$)
$$
\prod_{k\ge i}\frac{f(\la-2\eta(\La_{[0,k-1]}-2\ell(\nu))-2\eta\La_k)}{f(\la-2\eta(\La_{[0,k-1]}-2\ell(\nu)))}, 
$$
where $(\Lambda_{[0,k-1]}-2\ell(\nu))$ is the weight of the output in the first $k$ factors $V_{\La_0}(z_0)\otimes \cdots\otimes V_{\La_{k-1}}(z_{k-1})$. 
This product also telescopes, and its value with $k$ ranging over $i,i+1,\dots,m$ equals 
$$
\frac{f(\la-2\eta(\La_{[0,m]}-2\ell(\nu)))}{f(\la-2\eta(\La_{[0,i-1]}-2\ell(\nu)))}\,.
$$ 
When sending $m$ to infinity we only need to worry about the $m$-dependent numerator.  

The above discussion shows that we can define the action of $d(\la,w)$ in the infinite tensor product with the help of the following normalization: For $\nu=0^{n_0}1^{n_1}2^{n_2}\cdots$ we set 
\begin{multline}\label{eq:d-norm}
\overline{d} (\la,w) E_\nu\\ :=\lim_{m\to\infty} \prod_{i=0}^m \frac{f(z_i-w+(\La_i+1)\eta)}{f(z_i-w+(-\La_i+1)\eta)}\frac{1}{f(\la-2\eta(\La_{[0,m]}-2\ell(\nu)))}\, d(\la,w)(e_{n_0}\otimes e_{n_1}\otimes\cdots\otimes e_{n_m})\\
=:\sum_\mu D_{\nu/\mu}(\la;w) E_\mu. 
\end{multline}
As for $B_{\nu/\mu}$'s, the standard reasoning shows that $D_{\nu/\mu}(\la;w)=0$ unless $\nu\succ\mu$ and $\ell(\nu)=\ell(\mu)$. 

Similarly to \eqref{eq:B-branching}, we can define multivariate $D$'s via the following branching relation
\begin{equation}\label{eq:D-branching}
D_{\nu/\mu}(\la;u_1,\dots,u_k,v_1\dots,v_l)=\sum_\ka D_{\ka/\mu}(\la;u_1,\dots,u_k) D_{\nu/\ka}(\la+2\eta k;v_1,\dots,v_l).
\end{equation}
This is equivalent to applying 
$$
\d(w_1)\cdots \d(w_n)=d(\la,w_1)d(\la+2\eta,w_2)\cdots d(\la+2\eta(n-1),w_n)
$$ 
to $E_\nu$, normalizing each $d(\,\cdot\,,\,\cdot\,)$ as above, and denoting the coefficients of $E_\mu$ in the result as $D_{\nu/\mu}(\la;w_1,\dots,w_n)$. The commutativity of $\d$'s implies that $D_{\nu/\mu}(\la;w_1,\dots,w_n)$ is symmetric in the $w_j$'s for any $\la\in \C$ and signatures $\mu,\nu$. 

\section{Cauchy identities}\label{sc:cauchy} The material of this section is similar to \cite[Section 4.3]{BP, BP-lect}, see also \cite[Section 4]{Bor}. 
 
We start by taking the infinite volume limit (in the sense of the previous section) of the commutation relation \eqref{eq:db-comm}. The result takes the following form. 

\begin{proposition}[an elementary skew-Cauchy identity]\label{prop:skew-cauchy} Assume that for  complex parameters $\eta,\tau,\la,u,v,\{z_i\}_{i\ge 0},\{\La_i\}_{i\ge 0}$, and $n\in \Z_{\ge 1}$ we have
\begin{equation}\label{eq:convergence}
\lim_{m\to\infty} \frac{f(u-v+\la-2\eta (\La_{[0,m]}-2n))}{f(\la-2\eta(\La_{[0,m]}-2n))}\\  \prod_{i=0}^m \frac{f(z_i-u+(-\La_i+1)\eta)}{f(z_i-u+(\La_i+1)\eta)}\frac{f(z_i-v+(\La_i+1)\eta)}{f(z_i-v+(-\La_i+1)\eta)}=0.
\end{equation}
Then for any signatures $\mu$ and $\nu$ with $\ell(\nu)=n-1$ we have
\begin{equation}\label{eq:skew-cauchy}
\sum_\ka D_{\ka/\mu}(\la;v)B_{\ka/\nu}(\la+2\eta;u)=\frac{f(v-u-2\eta)}{f(v-u)}\sum_\rho B_{\mu/\rho}(\la;u)D_{\nu/\rho}(\la+2\eta;v),
\end{equation}
which, in particular, means that the series in the left-hand side converges (the sum in the right-hand side always has finitely many nonzero terms).  
\end{proposition}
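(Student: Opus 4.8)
The plan is to obtain \eqref{eq:skew-cauchy} as the infinite-volume limit of the commutation relation \eqref{eq:db-comm}, specialized to $w_1=v$, $w_2=u$ (so that $\be,\ga,\de$ are evaluated at $w=v-u$), applied to $E_\nu$ and read off in the coefficient of $E_\mu$. First I would identify the operator-theoretic meaning of the two sides. Unwinding \eqref{eq:B}, \eqref{eq:d-norm}, \eqref{eq:tilde-abcd} and tracking the $\la$-shifts produced by the tilde operators (the $\b(u)$ to the left of $\d(v)$ evaluates $\d(v)$ at $\la+2\eta$, etc.), the left-hand side $\sum_\ka D_{\ka/\mu}(\la;v)B_{\ka/\nu}(\la+2\eta;u)$ is exactly the coefficient of $E_\mu$ in $\d(v)\b(u)E_\nu$, while $\sum_\rho B_{\mu/\rho}(\la;u)D_{\nu/\rho}(\la+2\eta;v)$ is the coefficient of $E_\mu$ in $\b(u)\d(v)E_\nu$. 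So the assertion is that, in the $E_\mu$-component and in the infinite-volume limit, $\d(v)\b(u)=\frac{f(v-u-2\eta)}{f(v-u)}\,\b(u)\d(v)$.

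Next I would work in the finite tensor product $V_{\La_0}(z_0)\otimes\cdots\otimes V_{\La_m}(z_m)$, where \eqref{eq:db-comm} holds as an exact operator identity and every matrix element is finite. Extracting the coefficient of $E_\mu$ from $E_\nu$, the operator $\h$ returns the weight of $E_\mu$, namely $\La_{[0,m]}-2n$ (using $\ell(\mu)=\ell(\nu)+1=n$), so that $\de(\la-2\eta\h)$ and $\ga(\la-2\eta\h)$ become the explicit scalars $\de(\la'),\ga(\la')$ with $\la':=\la-2\eta(\La_{[0,m]}-2n)$. Solving for the $\d(v)\b(u)$ term gives, understood in the $E_\mu$-component,
\[
\d(v)\b(u)=\de(\la')^{-1}\bigl(\b(u)\d(v)-\ga(\la')\,\b(v)\d(u)\bigr).
\]

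Then I would normalize and pass to the limit. Multiplying through by the \eqref{eq:d-norm} normalization attached to $\d(v)$ on a length-$n$ source, $\frac{1}{f(\la')}\prod_{i=0}^m\frac{f(z_i-v+(\La_i+1)\eta)}{f(z_i-v+(-\La_i+1)\eta)}$, turns the left-hand coefficient into the genuine normalized $\sum_\ka D_{\ka/\mu}(\la;v)B_{\ka/\nu}(\la+2\eta;u)$. On the right, $\frac{1}{f(\la')}\de(\la')^{-1}$ collapses to $\frac{f(v-u-2\eta)}{f(v-u)}\cdot\frac{1}{f(\la'-2\eta)}$, and the residual $\frac{1}{f(\la'-2\eta)}\prod_i(\cdots)$ is precisely the normalization converting the $\b(u)\d(v)$ coefficient (whose $\d(v)$ sits on a length-$(n-1)$ source at argument $\la+2\eta$) into $\frac{f(v-u-2\eta)}{f(v-u)}\sum_\rho B_{\mu/\rho}(\la;u)D_{\nu/\rho}(\la+2\eta;v)$, the stated right-hand side. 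The leftover telescoping factor $f(\la')/f(\la'-2\eta)$ hidden in $\de(\la')^{-1}$ is exactly absorbed by the mismatch $\tfrac{1}{f(\la')}$ versus $\tfrac{1}{f(\la'-2\eta)}$ between these two normalizations, yielding the clean prefactor. Finally, rewriting the unwanted $\ga(\la')\,\b(v)\d(u)$ term through its own ($u$-)normalization leaves $\ga(\la')\prod_{i=0}^m\frac{f(z_i-v+(\La_i+1)\eta)}{f(z_i-v+(-\La_i+1)\eta)}\frac{f(z_i-u+(-\La_i+1)\eta)}{f(z_i-u+(\La_i+1)\eta)}$ multiplying the finite, convergent normalized coefficient of $\b(v)\d(u)$; since $f$ is odd, $\ga(\la')$ is an $m$-independent constant times $\frac{f(u-v+\la')}{f(\la')}$, so this factor is a constant multiple of the expression in \eqref{eq:convergence} and vanishes as $m\to\infty$ by hypothesis.

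I expect the main obstacle to be the bookkeeping in the limit: the retained operator $\d(v)$ occurs on sources of length $n$ and $n-1$ with mutually shifted $\la$-arguments, whereas the vanishing term carries a $u$-normalization, and one must check that these three prescriptions assemble into exactly the ratio in \eqref{eq:convergence}. A secondary point is the convergence of the left-hand series over $\ka$, which is genuinely infinite because interlacing bounds $\ka$ only from below; this is not evident a priori, but follows because each finite-$m$ identity is exact while the right-hand side is a manifestly finite sum, so the partial sums converge to it.
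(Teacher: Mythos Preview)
Your proposal is correct and follows essentially the same route as the paper: both start from the commutation relation \eqref{eq:db-comm} with $w_1=v$, $w_2=u$, solve for $\d(v)\b(u)$ in the finite tensor product $V_{\La_0}(z_0)\otimes\cdots\otimes V_{\La_m}(z_m)$, apply the normalization \eqref{eq:d-norm}, and identify the prefactor of the unwanted $\b(v)\d(u)$ term with (a constant multiple of) the expression in \eqref{eq:convergence}. Your bookkeeping of the $\la$-shifts and the weight $\h=\La_{[0,m]}-2n$ matches the paper's, and the only cosmetic slip is that in the unwanted term you write $\ga(\la')$ where the full scalar is $\de(\la')^{-1}\ga(\la')$ times the normalization ratio---but the extra pieces are $m$-independent, so your conclusion is unaffected.
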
 
\begin{proof} We first use \eqref{eq:db-comm} to write
\begin{equation}\label{eq:db-comm-again}
d(\la,v)b(\la+2\eta,u)E_\nu=\frac{b(\la,u)d(\la+2\eta,v)}{{\de(\la-2\eta h)}}\, E_\nu-\frac{\ga(\la-2\eta h)b(\la,v)d(\la+2\eta,u)}{\de(\la-2\eta h)}\,E_\nu
\end{equation}
in $V_{\La_0}(z_0)\otimes\cdots\otimes V_{\La_m}(z_m)$. The value of $h$ is $h_m:=\La_{[0,m]}-2(\ell(\nu)+1)$ for $m$ large enough (because $d$'s do not change the weight while $b$'s decrease it by 2). 

We now want to take the limit as $m\to\infty$, and to that end we need to compare the normalization of the $d$-factors in the above relation. We shall see that our proposition corresponds to the limit of the first two terms of \eqref{eq:db-comm-again}, while the assumption \eqref{eq:convergence} corresponds to the normalized third term of that relation becoming negligible in the limit. 

Let us look at the $\la$-dependent factors first. The $\la$-dependent part of $1/(\de(\la-2\eta h_m))$ is $f(\la-2\eta h_m)/f(\la-2\eta-2\eta h_m)$, and the $\la$-dependent part of $\ga(\la-2\eta h_m)/({\de(\la-2\eta h)})$ is $f(u-v+\la-2\eta h_m)/f(\la-2\eta-2\eta h_m)$. Further, the $\la$-dependent parts of the normalization \eqref{eq:d-norm} for the $d$'s in \eqref{eq:db-comm-again} are
$$
d(\la,v)\leadsto f(\la-2\eta h_m),\quad d(\la+2\eta,u),d(\la+2\eta,v)\leadsto f(\la+2\eta-2\eta (h_m+2)).
$$
We see that the required normalization (by $f(\la-2\eta h_m)$) for the first two terms of \eqref{eq:db-comm-again} is the same, and the remaining factor in the third term is 
${f(u-v+\la-2\eta h_m)}/(f(\la-2\eta h_m))$.

Turning to $\la$-independent normalization factors, we again observe that they are the same for the first two terms of \eqref{eq:db-comm-again}. Collecting all the normalization factors for the third term and requiring that their product converges to zero as $m\to\infty$, we arrive at the desired relation. 
\end{proof} 

One can now iterate \eqref{eq:skew-cauchy} to produce a more general skew-Cauchy identity. 
\begin{corollary}[a general skew-Cauchy identity]\label{cr:skew-cauchy} Assume that  complex parameters $\eta,\tau,\la$, $\{u_i\}_{i=1}^k$, $\{v_j\}_{j=1}^l$, $\{z_i\}_{i\ge 0},\{\La_i\}_{i\ge 0}$ are such that \eqref{eq:convergence} holds for $u=u_i$, $v=v_j$ with any $i,j$, and also for any $n\in \Z_{\ge 1}$. Then for any signatures $\mu$ and $\nu$ we have
\begin{multline}\label{eq:general-skew-cauchy}
\sum_\ka D_{\ka/\mu}(\la;v_1,\dots,v_l)B_{\ka/\nu}(\la+2\eta l;u_1,\dots,u_k)\\ =\prod_{i=1}^k\prod_{j=1}^l\frac{f(v_j-u_i-2\eta)}{f(v_j-u_i)}\sum_\rho B_{\mu/\rho}(\la;u_1,\dots,u_k)D_{\nu/\rho}(\la+2\eta k;v_1,\dots,v_l).
\end{multline}
\end{corollary}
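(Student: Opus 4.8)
The identity \eqref{eq:general-skew-cauchy} is a $kl$-fold iteration of the elementary identity \eqref{eq:skew-cauchy}, and the cleanest way to organize the iteration is to read both identities as operator relations on finitary vectors and then move the (renormalized) $\d$'s past the $\b$'s one spectral variable at a time. The plan is: (i) recast Proposition \ref{prop:skew-cauchy} as a single operator relation; (ii) recognize each side of \eqref{eq:general-skew-cauchy} as a matrix element of a product of operators, using the branching rules \eqref{eq:B-branching} and \eqref{eq:D-branching} to absorb the several spectral parameters and to fix the $\la$-shifts; (iii) commute the two blocks of operators past one another, collecting one scalar factor per swap.

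\textbf{Step (i): the elementary identity as an operator relation.} Writing the renormalized modified operator as $(\overline{\d}(w)F)(\la)=\overline d(\la,w)F(\la+2\eta)$ in analogy with \eqref{eq:tilde-abcd} and \eqref{eq:d-norm}, I would identify the two sides of \eqref{eq:skew-cauchy} as the coefficient of $E_\mu$ in $\overline{\d}(v)\b(u)E_\nu$ and, up to the scalar $\tfrac{f(v-u-2\eta)}{f(v-u)}$, in $\b(u)\overline{\d}(v)E_\nu$, both evaluated at base parameter $\la$. Indeed, expanding $\b(u)E_\nu$ at base $\la+2\eta$ as $\sum_\ka B_{\ka/\nu}(\la+2\eta;u)E_\ka$ and then applying $\overline{\d}(v)$ reproduces the left-hand side of \eqref{eq:skew-cauchy}, the shift $\la+2\eta$ being exactly the one built into the modified-operator formalism; expanding $\overline{\d}(v)E_\nu$ first yields the right-hand side. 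Since the scalar does not depend on $\mu,\nu$ and Proposition \ref{prop:skew-cauchy} holds for all signatures (its hypothesis \eqref{eq:convergence} is assumed here for every $n$), this upgrades to the operator identity $\overline{\d}(v)\b(u)=\tfrac{f(v-u-2\eta)}{f(v-u)}\,\b(u)\overline{\d}(v)$ on finitary vectors.

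\textbf{Steps (ii)--(iii): iteration.} By the same bookkeeping, the left-hand side of \eqref{eq:general-skew-cauchy} is the coefficient of $E_\mu$ in $\overline{\d}(v_1)\cdots\overline{\d}(v_l)\,\b(u_1)\cdots\b(u_k)\,E_\nu$, and the right-hand side (without the scalar prefactor) is the coefficient of $E_\mu$ in $\b(u_1)\cdots\b(u_k)\,\overline{\d}(v_1)\cdots\overline{\d}(v_l)\,E_\nu$; here the branching rules \eqref{eq:B-branching} and \eqref{eq:D-branching} are precisely what identify the multivariate coefficients $B_{\cdot/\cdot}(\cdot;u_1,\dots,u_k)$ and $D_{\cdot/\cdot}(\cdot;v_1,\dots,v_l)$ with the corresponding products of single-variable operators, and the base shifts $\la+2\eta l$ (for $B$) and $\la+2\eta k$ (for $D$) in \eqref{eq:general-skew-cauchy} are exactly those produced by sliding a block of $l$, resp. $k$, modified operators to the left. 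The two operator products differ only in the order of the two blocks. Using that the $\b(u_i)$ commute among themselves and the $\overline{\d}(v_j)$ commute among themselves (equivalently, $D_{\nu/\mu}$ is symmetric in its spectral variables, as recorded after \eqref{eq:D-branching}), I would move $\overline{\d}(v_l)$ rightward past $\b(u_1),\dots,\b(u_k)$, then $\overline{\d}(v_{l-1})$, and so on; each elementary swap of $\overline{\d}(v_j)$ past $\b(u_i)$ contributes $\tfrac{f(v_j-u_i-2\eta)}{f(v_j-u_i)}$ by Step (i). The $kl$ swaps produce exactly $\prod_{i=1}^k\prod_{j=1}^l \tfrac{f(v_j-u_i-2\eta)}{f(v_j-u_i)}$, and taking the $E_\nu\to E_\mu$ matrix element yields the claim.

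\textbf{Main obstacle.} The real work is not the algebra of the swaps but the accounting that makes them legitimate. First, one must verify the $\la$-shift dictionary of Step (ii) carefully, so that the base parameters emerge as $\la$ and $\la+2\eta k$ (resp. $\la+2\eta l$) rather than something off by a multiple of $2\eta$. Second, and more importantly, each of the $kl$ swaps invokes Proposition \ref{prop:skew-cauchy} for a particular pair $(u_i,v_j)$ applied to an intermediate vector whose weight---hence the index $n$ appearing in \eqref{eq:convergence}---varies along the iteration; one must check that the blanket hypothesis of the corollary (namely \eqref{eq:convergence} for all pairs $(u_i,v_j)$ and all $n\ge 1$) is precisely what guarantees convergence of every intermediate left-hand series, so that no swap is merely formal. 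Granting this, the identity follows.
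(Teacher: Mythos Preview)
Your proposal is correct and follows essentially the same approach as the paper, which simply says the result is ``a straightforward application of the branching rules \eqref{eq:B-branching}, \eqref{eq:D-branching} and the elementary skew-Cauchy identity \eqref{eq:skew-cauchy}'' (or alternatively, direct commutation of $\d$'s past $\b$'s). You have supplied the details the paper omits: the operator reformulation of Proposition~\ref{prop:skew-cauchy}, the identification of both sides of \eqref{eq:general-skew-cauchy} as matrix elements via the branching rules, and the observation that the blanket hypothesis on \eqref{eq:convergence} is exactly what legitimizes each of the $kl$ intermediate swaps.
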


\begin{proof} A straightforward application of the branching rules \eqref{eq:B-branching}, \eqref{eq:D-branching} and the elementary skew-Cauchy identity \eqref{eq:skew-cauchy}. Alternatively, one could directly commute a few $\d$ operators through a few $\b$ operators 
using \eqref{eq:db-comm} and keep the surviving terms after the infinite volume limit. 
\end{proof}

Let us proceed to the non-skew functions and the corresponding Cauchy identities. 

Set
$$
B_\mu(\la;u_1,\dots,u_k)=B_{\mu/\varnothing}(\la;u_1,\dots;u_k),\qquad D_{\nu}(\la;v_1,\dots,v_l)=D_{\nu/0^N}(\la;v_1,\dots,v_l),
$$
for any signatures $\mu,\nu$ with $N=\ell(\nu)$. It is also convenient to introduce slightly renormalized versions of our symmetric functions defined by
\begin{equation}\label{eq:BD-norm}
\begin{gathered}
\Bnorm_{\nu/\mu}(\la;u_1,\dots,u_k)=\prod_{i=0}^{k-1} f(\la+2\eta i)\cdot B_{\nu/\mu}(\la;u_1,\dots,u_k),\\
\Dnorm_{\nu/\mu}(\la;v_1,\dots,v_l)=\prod_{j=0}^{l-1} f(\la+2\eta j)\cdot D_{\nu/\mu}(\la;v_1,\dots,v_l),
\end{gathered}
\end{equation}
and similarly for the non-skew functions. 

\begin{corollary}[Pieri rules and a Cauchy identity] Under the assumptions of Corollary \ref{cr:skew-cauchy}, we have
\begin{multline}\label{eq:pieri2}
\sum_\ka D_{\ka}(\la;v_1,\dots,v_l)B_{\ka/\nu}(\la+2\eta l;u)\\ =\frac{f(\la+u+(\La_0-1-2\ell(\nu))\eta)}{f(z-u+(\La_0+1)\eta)}\frac{f(2\eta)}{f(\la)}\,\prod_{j=1}^l\frac{f(v_j-u-2\eta)}{f(v_j-u)} D_{\nu}(\la+2\eta ;v_1,\dots,v_l),
\end{multline}
\begin{equation}
\sum_\ka \Dnorm _{\ka/\mu}(\la;v)\Bnorm _\ka(\la+2\eta;u_1,\dots,u_k)=\prod_{i=1}^k\frac{f(v-u_i-2\eta)}{f(v-u_i)}\Bnorm_\mu(\la;u_1,\dots,u_k),\label{eq:pieri}
\end{equation}
\begin{multline}
\sum_\ka \Dnorm _{\ka}(\la;v_1,\dots,v_l)\Bnorm _\ka(\la+2\eta l;u_1,\dots,u_k)\\=\prod_{i=1}^k\frac{f(2\eta)f(\la-z_0+u_i+\eta(-\La_0+2k-1))}{f(z_0-u_i+\eta(\La_0+1))}\cdot\prod_{i=1}^k\prod_{j=1}^l\frac{f(v_j-u_i-2\eta)}{f(v_j-u_i)}.
\label{eq:cauchy}
\end{multline}
\end{corollary}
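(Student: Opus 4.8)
The plan is to obtain all three identities as specializations of the general skew-Cauchy identity \eqref{eq:general-skew-cauchy} of Corollary \ref{cr:skew-cauchy}, in each case taking one of the two signatures to be a ``vacuum'' signature ($\varnothing$ or $0^N$) and restricting to a single $u$ or a single $v$. The key structural observation is that when $\mu$ or $\nu$ is a vacuum signature, the interlacing support of the functions (recall $B_{\nu/\mu}=0$ unless $\nu\succ\mu$ with $\ell(\nu)=\ell(\mu)+1$, and $D_{\nu/\mu}=0$ unless $\nu\succ\mu$ with $\ell(\nu)=\ell(\mu)$) forces the intermediate summation signature $\rho$ on the right-hand side of \eqref{eq:general-skew-cauchy} to collapse to a single vacuum term. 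What is left is then an explicit scalar prefactor that I read off directly from the operator formulas \eqref{eq:abcd} and the normalization \eqref{eq:d-norm}, followed by bookkeeping of the factors $\prod f(\la+2\eta\,\cdot\,)$ that relate $B,D$ to the normalized $\Bnorm,\Dnorm$ through \eqref{eq:BD-norm}.

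\textbf{The two Pieri rules.} For \eqref{eq:pieri2} I would set $k=1$ (a single variable $u$) and $\mu=0^N$ with $N=\ell(\nu)+1$ in \eqref{eq:general-skew-cauchy}. Then $D_{\ka/0^N}=D_\ka$, the $\ka$-sum is automatically restricted to $\ell(\ka)=N$, and the left-hand side is exactly the sum in \eqref{eq:pieri2}. On the right, $B_{0^N/\rho}(\la;u)$ vanishes unless $\rho\prec 0^N$ with $\ell(\rho)=N-1$, and interlacing with the all-zero signature forces $\rho=0^{N-1}$; the $\rho$-sum therefore has the single term $\rho=0^{\ell(\nu)}$, leaving the prefactor $B_{0^N/0^{N-1}}(\la;u)$ times $D_\nu(\la+2\eta;v_1,\dots,v_l)$. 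That prefactor is the coefficient of $e_N$ in $b(\la,u)e_{N-1}$ acting in the first tensor factor $V_{\La_0}(z_0)$ (the action on the remaining $e_0$'s being trivial since $a(\cdot,\cdot)e_0=e_0$), which I compute from the second line of \eqref{eq:abcd} and bring to the stated form using the parity of $f$ from \eqref{eq:period}. For \eqref{eq:pieri} I instead set $l=1$ and $\nu=\varnothing$; now $D_{\varnothing/\rho}$ forces $\rho=\varnothing$, the surviving prefactor is $B_{\mu/\varnothing}=B_\mu$, and the only nontrivial scalar is the vacuum value $D_{\varnothing/\varnothing}(\la;v)=1/f(\la)$, which follows from the telescoping already carried out in \eqref{eq:d-norm-empty}--\eqref{eq:d-norm}. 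Passing to $\Bnorm,\Dnorm$ via \eqref{eq:BD-norm} and using that $\prod_{i=0}^{k-1}f(\la+2\eta i)\cdot f(\la+2\eta k)$ combines into $\prod_{i=0}^k f(\la+2\eta i)$ then yields \eqref{eq:pieri} after clearing scalars.

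\textbf{The Cauchy identity.} For \eqref{eq:cauchy} I would take $\mu=0^k$ and $\nu=\varnothing$ in \eqref{eq:general-skew-cauchy}, which ties $N=\ell(\ka)=k$ and again collapses the $\rho$-sum to $\rho=\varnothing$. The surviving factors are $B_{0^k/\varnothing}(\la;u_1,\dots,u_k)$ and $D_{\varnothing/\varnothing}(\la+2\eta k;v_1,\dots,v_l)$. The first is a telescoping product of single-particle $b$-coefficients in the first tensor factor (all $k$ particles entering column $0$): by the branching rule \eqref{eq:B-branching} it is a product of the $b_0$-coefficients of \eqref{eq:abcd} whose $\eta$-arguments, after the $\la$-shift bookkeeping prescribed by \eqref{eq:B}, conspire so that the dependence on the running index cancels, producing $\prod_{i=1}^k\frac{f(2\eta)f(\la-z_0+u_i+(2k-\La_0-1)\eta)}{f(z_0-u_i+(\La_0+1)\eta)}$ up to the scalar $1/\prod_{i=0}^{k-1}f(\la+2\eta i)$ after invoking the parity of $f$. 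The second is the multivariable vacuum value $D_{\varnothing/\varnothing}(\la+2\eta k;v_1,\dots,v_l)=\prod_{j=1}^l 1/f(\la+2\eta(k+j-1))$, obtained by iterating the single-variable vacuum value through \eqref{eq:D-branching}. Converting both non-skew functions to $\Bnorm,\Dnorm$ via \eqref{eq:BD-norm} is then the decisive step: the prefactors $\prod_{j=0}^{l-1}f(\la+2\eta j)$ and $\prod_{i=0}^{k-1}f(\la+2\eta(l+i))$ merge on the left into $\prod_{r=0}^{k+l-1}f(\la+2\eta r)$, which exactly cancels the $\la$-denominators of $B_{0^k/\varnothing}$ and $D_{\varnothing/\varnothing}$, leaving the clean products over $i$ and over $(i,j)$ in \eqref{eq:cauchy}. (Alternatively, \eqref{eq:cauchy} can be built by iterating the single-$v$ rule \eqref{eq:pieri} $l$ times through \eqref{eq:D-branching}, with $B_{0^k/\varnothing}$ as the base case.)

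\textbf{Main obstacle.} I expect the principal difficulty to be the explicit evaluation and simplification of $B_{0^k/\varnothing}$ in the Cauchy case: one must carefully track the $\la$-shifts dictated by \eqref{eq:B} together with the changing weight index in the second line of \eqref{eq:abcd}, verify that the running-index dependence in the $\eta$-argument telescopes to the index-free form $f(\la-z_0+u_i+(2k-\La_0-1)\eta)$, and simultaneously confirm that every scalar $f(\la+2\eta\,\cdot\,)$ factor lands in precisely the right slot so that the normalization \eqref{eq:BD-norm} absorbs all of them. The parity and quasi-periodicity relations \eqref{eq:period} are essential throughout and must be applied consistently; in particular, in the elliptic case one should check that no stray quasi-period exponential factors survive. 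By contrast, the collapse of the $\rho$-summations and the reductions giving \eqref{eq:pieri2} and \eqref{eq:pieri} are routine consequences of the interlacing support of $B$ and $D$.
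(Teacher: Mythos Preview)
Your proposal is correct and follows essentially the same approach as the paper: specialize the general skew-Cauchy identity \eqref{eq:general-skew-cauchy} to vacuum signatures so that the $\rho$-sum collapses, then evaluate the surviving scalars $D_{\varnothing/\varnothing}$, $B_{0^{N+1}/0^N}$, and $B_{0^k/\varnothing}$ directly from \eqref{eq:abcd} and \eqref{eq:d-norm}, and finally pass to the normalized functions via \eqref{eq:BD-norm}. The only cosmetic difference is ordering: the paper first proves \eqref{eq:pieri} (implicitly for general $l$), then obtains \eqref{eq:cauchy} from it by setting $\mu=0^k$, whereas you specialize both $\mu$ and $\nu$ simultaneously for \eqref{eq:cauchy}; you already note this alternative in your parenthetical remark, and the computations of $B_{0^k/\varnothing}$ and $D_{\varnothing/\varnothing}$ are identical in either route.
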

\begin{remark} Equation \eqref{eq:pieri} resembles an eigenrelation with the vector $[\Bnorm_\ka]$ and the matrix $[\Dnorm_{\ka/\mu}]$. Note, however, the shift of the $\lambda$-parameter in $\Bnorm$'s between the left and the right-hand sides. The relation \eqref{eq:pieri2} is also of a similar form, but the lengths of the signatures $\ka$ and $\nu$ there are actually different by 1 (if the sum has at least one nonzero term). 
\end{remark}
\begin{proof} Start with \eqref{eq:general-skew-cauchy} and set $\nu=\varnothing$. This forces $\rho=\varnothing$, and it remains to evaluate $D_{\varnothing/\varnothing}(\la+2\eta k;v_1,\dots,v_l)$. This quantity is the coefficient if $E_\varnothing$ in $\overline{d}(\la+2\eta k,v_1)\overline{d}(\la+2\eta (k+1),v_2)\cdots\overline{d}(\la+2\eta (k+l-1),v_l) E_\varnothing$, which is easily computed via the definitions \eqref{eq:abcd} and \eqref{eq:d-norm}:
\begin{equation}\label{eq:D-empty}
D_{\varnothing/\varnothing}(\la+2\eta k;v_1,\dots,v_l)=\left(\prod_{i=k}^{k+l-1} f(\la+2\eta i)\right)^{-1}.
\end{equation}
Changing the notations to $\Bnorm$ and $\Dnorm$ yields \eqref{eq:pieri}. 

Further, set $\mu=0^k$ in \eqref{eq:pieri}. This will lead to \eqref{eq:cauchy}, we just need to evaluate 
$\Bnorm_{0^k}(\la;u_1,\dots,u_k)$. Without the superscript `norm' this is the product of the coefficients in the second line of \eqref{eq:abcd} evaluated at $\la$ taking values $\la,\la+2\eta,\dots,\la+2\eta(k-1)$ and $w$ taking values $u_1,\dots,u_k$ (the order of the $u_j$'s is irrelevant because of the commutativity of $\b$'s). Collecting the factors and taking into account \eqref{eq:BD-norm} gives the result. 

Finally, for proving \eqref{eq:pieri2} one sets $k=1$ and $\mu=0^{N+1}$ in \eqref{eq:general-skew-cauchy}, where $N=\ell(\nu)$. This leaves only one nonzero term in the right-hand side that corresponds to $\rho=0^N$. Evaluating $B_{0^{N+1}/0^N}(\la;u)$ using the second line of \eqref{eq:abcd} yields \eqref{eq:pieri2}.
\end{proof}

\section{Symmetrization formulas}\label{sc:symm}
The goal of this section is to give explicit formulas for the symmetric functions $B_\mu$ 
and $D_\mu$ defined in Section \ref{sc:inf-volume}. Our formula for $B_\mu$ and its proof essentially coincide with 
\cite[Theorem 5]{FV-ABA} (although we give a more detailed computation); the 
formula for $D_\mu$ appears to be new. Similar formulas in simpler setups can 
be found in \cite[Section 5]{Bor}, \cite[Section 4.5]{BP, BP-lect}.

We will continue using the shorthand 
$$
\La_{[0,i]}=\La_0+\dots+\La_i, \qquad \La_{[0,i)}=\La_0+\dots+\La_{i-1}. 
$$
For a signature $\nu=0^{n_0}1^{n_1}2^{n_2}\cdots$, we will use $n_{<k}$ to denote $\sum_{i<k} n_i$, and similarly with the $\le,>,\ge$ signs. For a set of variables $\{u_i\}$, we will denote by $u_I$ the subset of them with indices ranging over $I$: $u_I=\{u_i\mid i\in I\}$. 

We also set 
\begin{equation}\label{eq:pq}
p_j=z_j+(1-\La_j)\eta,\quad q_j=z_j+(1+\La_j)\eta\qquad \Leftrightarrow\qquad q_j-p_j=2\eta\La_j,\quad p_j+q_j=2\eta+2z_j, 
\end{equation}
and introduce functions
\begin{equation}\label{eq:phi-psi}
\phi_k(u)=\frac 1{f(u-q_k)} \prod_{i=0}^{k-1}\frac{f(u-p_i)}{f(u-q_j)}\,,\qquad
\psi_l(v)=\frac 1{f(v-p_l)} \prod_{j=0}^{l-1}\frac{f(v-q_j)}{f(v-p_j)}\,.
\end{equation}
\begin{theorem}\label{th:symm} {(i)} For any signature $\mu=(\mu_1\ge\mu_2\ge 
\dots\ge \mu_M\ge 0)=0^{m_0}1^{m_1}2^{m_2}\cdots$, we have
\begin{equation}\label{eq:B-symm}
\begin{gathered} 
B_\mu(\la;u_1,\dots,u_M)=\frac{(-1)^M(f(2\eta))^{M}}{
\prod_{i=0}^{M-1} f(\la+2\eta i)}\cdot \prod_{i\ge 
0}\prod_{j=1}^{m_i}\frac{f(2\eta)}{f(2\eta j)}
\\
\times \sum_{\sigma\in S_M}\sigma\left(\prod_{1\le i<j\le M} 
\frac{f(u_i-u_j-2\eta)}{f(u_i-u_j)}\cdot \prod_{i=1}^M 
\phi_{\mu_i}(u_i)f(\la+u_i-q_{\mu_i}+2\eta+4\eta(M-i)-2\eta\La_{[0,\mu_i)})
\right),
\end{gathered}
\end{equation}
where $S_M$ is the symmetric group on $M$ symbols, and its elements $\sigma$ permute the variables $u_1,\dots,u_M$ in the expression inside the parentheses. 

(ii) For any signature $\nu=(\nu_1\ge \nu_2\ge \dots\ge \nu_N\ge 0)=0^{n_0}1^{n_1}2^{n_2}\cdots$ and $n\ge 1$, we have, with the notation $\widetilde \la=\la+2\eta n$,
\begin{equation}\label{eq:D-symm}
\begin{gathered}
D_\nu(\la;v_1,\dots,v_n)=\frac{(f(2\eta))^{N-n_0}}{\prod_{i=0}^{n-1}f(\la+2\eta i)}\cdot
\prod_{i=N}^{n+n_0-1}\frac{f(\la+2\eta(i-\La_0))}{f(\la+2\eta(i+n_0-\La_0))}\\
\times \prod_{i\ge 1} 
\prod_{j=0}^{n_i-1}\frac{f(2\eta(\La_i-j))}{f(\widetilde\la 
+2\eta(2n_{<i}+n_i+j-\La_{[0,i]}))f(\widetilde\la 
+2\eta(2n_{<i}+1+j-\La_{[0,i)}))}\\
\times \sum_{\substack{I\subset\{1,\dots,n\}\\ |I|=N-n_0}} \prod_{i\in I}
\frac{f(\la+v_i-q_0+2\eta N)}{f(v_i-q_0)}\prod_{j\notin 
I}\frac{f(v_j-p_0-2\eta n_0)}{f(v_j-p_0)}\prod_{\substack{i\in I\\j\notin 
I}}\frac{f(v_j-v_i-2\eta)}{f(v_j-v_i)}\\
\times \sum_{\substack{\sigma:\{1,\dots,N-n_0\}\to I\\ \sigma\ \text{is a 
bijection}}}\sigma\left(\prod_{i<j}\frac{f(v_i-v_j+2\eta)}{f(v_i-v_j)}
\prod_{i=1}^{N-n_0}\psi_{\nu_i}(v_i)f(\widetilde\la 
-v_i+p_{\nu_i}+2\eta+4\eta(N-i)-2\eta\La_{[0,\nu_i)})\right).
\end{gathered}
\end{equation}
\end{theorem}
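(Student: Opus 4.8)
The plan is to view both identities as explicit evaluations of matrix elements and to prove them by induction on the number of spectral variables, using the branching rules \eqref{eq:B-branching} and \eqref{eq:D-branching} to drive the induction and the explicit one-row action \eqref{eq:abcd}, \eqref{eq:tensor}, \eqref{eq:d-norm} to supply the single-operator building blocks. Recall that $B_\mu(\la;u_1,\dots,u_M)$ is the coefficient of $E_\mu$ in $b(\la,u_1)b(\la+2\eta,u_2)\cdots b(\la+2\eta(M-1),u_M)E_\varnothing$, and that $D_\nu(\la;v_1,\dots,v_n)$ is the coefficient of $E_{0^N}$ in $\overline d(\la,v_1)\overline d(\la+2\eta,v_2)\cdots \overline d(\la+2\eta(n-1),v_n)E_\nu$. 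For part (i) I would proceed exactly as in \cite[Theorem 5]{FV-ABA}; the only new point is that our $b$'s act in an infinite tensor product, but since a single $b$ inserts one up-step and acts as the identity on the all-$e_0$ tail to its right (because $a(\la,w)e_0\equiv e_0$), the coefficients stabilize in finite volume and the infinite-volume limit is automatic --- in sharp contrast to the $D$ case below.

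Concretely, for \eqref{eq:B-symm} I would peel off the variable $u_1$ using \eqref{eq:B-branching}, writing $B_\mu(\la;u_1,\dots,u_M)=\sum_{\ka\prec\mu}B_{\mu/\ka}(\la;u_1)\,B_\ka(\la+2\eta;u_2,\dots,u_M)$. The single-variable skew factor $B_{\mu/\ka}(\la;u_1)$ is evaluated directly from the second line of \eqref{eq:abcd} together with the tensor rule \eqref{eq:tensor}: it is a product of the diagonal pass-through factors (the $a$- and $d$-coefficients on the columns the new up-step traverses) times one creation factor at the column where it is deposited, with the $\la$-shifts read off from \eqref{eq:tensor}. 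Substituting the inductive hypothesis for the $(M-1)$-variable factor and summing over the interlacing signatures $\ka$ then reduces the claim to an elliptic symmetrization identity: the sum over $\ka$ must reproduce the $M$-fold symmetrization of \eqref{eq:B-symm}, including the generation of the prefactor $\prod_{i<j}f(u_i-u_j-2\eta)/f(u_i-u_j)$ as $u_1$ is merged into the symmetric group action on $u_2,\dots,u_M$. This is the elliptic analogue of the Hall--Littlewood symmetrization step and is the technical core of part (i).

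For the dual family \eqref{eq:D-symm} I would run the same induction, now peeling off $v_n$ via \eqref{eq:D-branching} as $D_\nu(\la;v_1,\dots,v_n)=\sum_{\nu\succ\ka}D_\ka(\la;v_1,\dots,v_{n-1})\,D_{\nu/\ka}(\la+2\eta(n-1);v_n)$, with $D_{\nu/\ka}(\cdot;v_n)$ computed directly from the last three lines of \eqref{eq:abcd}. Two features make this substantially harder than part (i). First, unlike $b$, the operator $d$ acts through its off-diagonal part as well (the $c_2b_1$ term in \eqref{eq:tensor}), so the migration of the $N-n_0$ excitations initially sitting in columns $\ge 1$ down into column $0$ is genuinely two-dimensional; the outer sum over subsets $I\subset\{1,\dots,n\}$ of size $N-n_0$ and the inner sum over bijections $\sigma:\{1,\dots,N-n_0\}\to I$ in \eqref{eq:D-symm} are what the symmetrization must ultimately produce, with the cross factors $\prod_{i\in I,\,j\notin I}f(v_j-v_i-2\eta)/f(v_j-v_i)$ separating active and passive variables. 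Second, the $d$-normalization \eqref{eq:d-norm} now contributes a genuine $m\to\infty$ limit: I would track the two telescoping products there --- the $z_i$-dependent ratio and the $f(\la-2\eta(\La_{[0,m]}-2\ell(\nu)))$ denominator --- and verify that, after the limit, they assemble into the scalar prefactors on the first two lines of \eqref{eq:D-symm}, the shift to $\widetilde\la=\la+2\eta n$ included.

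I expect the main obstacle to be the elliptic summation identity underlying the inductive step for \eqref{eq:D-symm}: because the subset structure and the two families of cross factors are intertwined with the bijective symmetrization, the identity to be verified is considerably more intricate than its $B$-counterpart. I would attack it by the standard theta-function device --- both sides, regarded as functions of a single freed variable, are quasi-periodic with controlled poles by \eqref{eq:period}, and since $\theta(\cdot,\tau)$ has a single simple zero per fundamental domain, matching residues at the poles together with the transformation law pins the identity down up to a constant, which is then fixed at one convenient specialization. A secondary but error-prone point is the bookkeeping of the interlocking $\la$-shifts (by $2\eta$ at each operator and the final passage to $\widetilde\la$), which must be arranged so that the recursion closes on the nose rather than up to a spurious shift in $\la$.
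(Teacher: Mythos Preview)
Your approach is genuinely different from the paper's. You propose induction on the number of spectral variables, peeling them off one at a time via the branching rules \eqref{eq:B-branching}, \eqref{eq:D-branching}; the paper instead inducts on the number of \emph{tensor factors} (columns). Concretely, the paper expands $\b(u_1)\cdots\b(u_M)$ on $V_0\otimes V_1$ as a sum over subsets $K\subset\{1,\dots,M\}$ using the commutation relation \eqref{eq:ba-comm}, computes the coefficients $C_K$ by pushing $\a$'s through $\b$'s, and then iterates over further splittings $V_0\otimes V_1\otimes V_2\otimes\cdots$. This produces a closed formula summed over ordered set partitions $K_0\sqcup K_1\sqcup\cdots$ of $\{1,\dots,M\}$; the final passage to the symmetric form \eqref{eq:B-symm} is a cluster-by-cluster application of a single clean identity (Lemma~\ref{lm:symm}: an elliptic sum over $S_m$ equals $\prod_j f(j\beta)/f(\beta)^m$), proved in two lines by ellipticity. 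Part (ii) follows the same template with $\d$'s, using \eqref{eq:db-comm} and a companion Lemma~\ref{lm:C-symm} for products of $\c$'s.

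Your route could in principle work, but note two things. First, your reference to \cite[Theorem~5]{FV-ABA} is misleading: that proof is precisely the column-iteration argument the paper reproduces, not a spectral-variable induction. Second, and more substantively, the ``elliptic symmetrization identity'' you would need at each inductive step --- matching a sum over interlacing $\ka\prec\mu$ against the insertion of $u_1$ into an $S_{M-1}$-symmetrized expression --- is not obviously reducible to a single-variable residue-and-quasiperiodicity check: the sum over $\ka$ couples all variables simultaneously through the cross factors, and for (ii) the subset/bijection structure compounds this. The paper's approach avoids this by first reaching an \emph{unsymmetrized} closed form and only then symmetrizing within clusters, so the hard identity collapses to the short Lemma~\ref{lm:symm}. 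If you pursue your route, you should expect the inductive identity to be the entire content of the theorem, not a routine verification.
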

\begin{proof} Our argument for (i) is essentially the proof of 
\cite[Theorem 5]{FV-ABA}, but we provide more computational details. Similar 
arguments for both (i) and (ii) were given in the proofs of (simpler) 
\cite[Theorem 4.14]{BP} and \cite[Theorem 4.12]{BP-lect}, thus we will 
skip over the parts of the proof that are very similar to what was described 
there. 

Let us start with (i). In the tensor product of two highest weight modules 
$V_0\otimes V_1$ with
highest weight vectors denoted by $e_0$, we obtain (cf., e.g., 
\cite[(4.27)]{BP-lect})
\begin{multline}\label{eq:bbb}
\b(u_1)\cdots \b(u_M) e_0\otimes e_0 \\=\sum_{K\subset\{1,\dots,M\}} C_K
\left(\prod_{k\in K}\b_0(u_k)\prod_{l\notin K}\d_0(u_l)\, e_0\right)\otimes
\left(\Gamma(-2\eta h^{(0)})\left[\prod_{l\notin K}\b_1(u_l)\prod_{k\in 
K}\a_1(u_l)\right]e_0\right),
\end{multline}
where the notation $\Gamma(-2\eta h^{(0)})[\cdots]$ means that the parameter 
$\la$ in the expression inside the brackets needs to be shifted by $(-2\eta)\cdot
(\text{weight of the vector in the 0th tensor 
component})$.\footnote{As before, the weight is taken \emph{after} the 
application of the operators in the 0th component, thus in this case it is equal to $\La_0-2|K|$.} The coefficients 
$C_K=C_K(u_1,\dots,u_M)$ must satisfy $C_{\sigma(K)}(u_1,\dots,u_M)=C_K(u_{\sigma(1)},\dots,u_{\sigma(M)})$ for any $\sigma\in S_M$ (essentially because
$\b(u_j)$'s commute, see \cite{FV-ABA, BP, BP-lect} for more details), thus it
suffices to consider $K=\{1,\dots,s\}$, $1\le s\le M$. To do that, one needs to look at the
expression
$$
\b_0(u_1)\cdots \b_0(u_s) \d_0(u_{s+1})\cdots \d_0(u_M)e_0\otimes
\Gamma(-2\eta h^{(0)})\left[\a_1(u_1)\cdots\a_1(u_s)\b_1(u_{s+1})\cdots \b_1(u_M)\right] e_0
$$
and commute all $\a_1$'s to the right of all $\b_1$'s while keeping their spectral parameters $u_j$ intact 
(it is easy to see that this is the only
way to obtain the needed term by composing $\b$'s in $V_0\otimes V_1$ defined via \eqref{eq:tensor}
and subsequently applying \eqref{eq:ba-comm}, \eqref{eq:db-comm} to reach the right-hand side of 
\eqref{eq:bbb}).

To that end, the commutation relation \eqref{eq:ba-comm} can be rewritten as
$$
\a(t_1)\b(t_2)=\frac{f(t_2-t_1-2\eta)}{f(t_2-t_1)}\frac{f(\la)}{f(\la-2\eta)}\cdot \b(t_2)\a(t_1) + (*)\cdot \b(t_1)\a(t_2),
$$
where the value of $(*)$ is irrelevant to us. Moving $\a_1(u_s)$ all the way to the right and removing the terms with $(*)$'s gives
\begin{multline*}
\a_1(u_1)\cdots\a_1(u_s)b_1(u_{s+1})\cdots \b_1(u_M)\sim\prod_{i=s+1}^M \frac{f(u_i-u_s-2\eta)}{f(u_i-u_s)}\, \a(u_1)\cdots \a(u_{s-1}) \\
\times \frac{f(\la)}{f(\la-2\eta)}\b_1(u_{s+1})\cdot\frac{f(\la)}{f(\la-2\eta)}\b_2(u_{s+2})\cdot\ldots\cdot \frac{f(\la)}{f(\la-2\eta)}\b_1(u_{M})\cdot \a(u_s).
\end{multline*}
Using the fact that $\b(w)F(\la)=F(\la+2\eta)\b(w)$, $\a(w)F(\la)=F(\la-2\eta)\a(w)$ for an arbitrary scalar function $F$, we obtain, after observing telescoping cancellations,  
$$
=\prod_{i=s+1}^M \frac{f(u_i-u_s-2\eta)}{f(u_i-u_s)}\cdot \frac{f(\la+2\eta(M-2s))}{f(\la-2\eta s)}\cdot 
\a_1(u_1)\cdots\a_1(u_{s-1})b_1(u_{s+1})\cdots \b_1(u_M)\a(s).
$$
Doing the same with the other $(s-1)$ factors $\a(u_k)$, $1\le k\le s-1$, we reach
$$
\sim\prod_{i=s+1}^M\prod_{j=1}^s \frac{f(u_i-u_j-2\eta)}{f(u_i-u_j)}\cdot \prod_{k=1}^s\frac{f(\la+2\eta(M-s-k))}{f(\la-2\eta k)}\cdot \b_1(u_{s+1})\cdots \b_1(u_M)\a_1(u_1)\cdots\a_1(u_s).
$$
If we now observe that in our normalization we always have $\a(w_1)\cdots\a(w_k)e_0=e_0$, introduce the notation
\begin{equation}\label{eq:d-eigenvalue}
\d(w_1)\cdots \d(w_k)=d(w_1,\dots,w_k)e_0,
\end{equation}
with a scalar function $d$ in the right-hand side (=the eigenvalue of the highest weight vector), 
and go back to \eqref{eq:bbb}, we can conclude that the term with $K=\{1,\dots,s\}$ has the form
\begin{gather*}
\prod_{i=s+1}^M\prod_{j=1}^s \frac{f(u_i-u_j-2\eta)}{f(u_i-u_j)}\cdot\Gamma(2\eta s)[d_0(u_{s+1},\dots,u_M)] \cdot\Gamma(-2\eta h^{(0)})\left[\prod_{k=1}^s\frac{f(\la+2\eta(M-s-k))}{f(\la-2\eta k)}\right]\\
\times \left(\b_0(u_1)\cdots \b_0(u_s)e_0\right)\otimes\left(\Gamma(-2\eta h^{(0)})\left[\b_1(u_{s+1})\cdots\b_1(u_M)\right]e_0\right).
\end{gather*}
The whole right-hand side of \eqref{eq:bbb} is obtained by replacing in the above expression $\{1,\dots,s\}$ by $K$, $\{s+1,\dots,M\}$ by $\overline{K}=\{1,\dots,M\}\setminus K$, $s$ by $|K|$, and summing over all $K\subset\{1,\dots,M\}$. 

This expression allows for an induction on the number of tensor factors. Replacing the second tensor factor in $V_0\otimes V_1$ by a product of two and repeating the computation, then replacing the third one by a product of two etc., we eventually reach the following result in the infinite tensor product $V_0\otimes V_1\otimes V_2\otimes\cdots$:
\begin{multline}\label{eq:bbb-intermediate}
\b(u_1)\cdots\b(u_M)E_0=\sum_{\substack{\{1,\dots,M\}=K_0\sqcup K_1\sqcup\dots \\ m_j:=|K_j|}}
\prod_{\substack{i>j\\ \alpha\in K_i,\be\in K_j}} \frac{f(u_\al-u_\be-2\eta)}{f(u_\al-u_\be)}\\
\times\prod_{i\ge 0} \left(\Gamma\left(-2\eta h^{([0,i))}+2\eta m_i\right)\left[d_i(u_{K_{>i}})\right]\cdot\prod_{k=1}^{m_i}\Gamma\left(-2\eta h^{([0,i])}\right)\left[\frac{f(\la+2\eta(m_{>i}-k))}{f(\la-2\eta k)}\right]\right)\\
\times \left(\b_0(u_{K_0})e_0\right)\otimes \left(\Gamma(-2\eta h^{(0)})\left[\b_1(u_{K_1})\right]e_0\right)\otimes\cdots\otimes \left(\Gamma(-2\eta h^{([0,i))})\left[\b_i(u_{K_i})\right]e_0\right)\otimes\cdots.
\end{multline}
Using \eqref{eq:abcd} one easily computes that in $V_\La(z)$
\begin{gather*}
d(w_1,\dots,w_k)=\prod_{i=1}^k \frac{f(z-w_i+(-\Lambda+1)\eta)}{f(z-w_i+(\Lambda+1)\eta)}\frac{f(\la+2\eta(i-1-\La))}{f(\la+2\eta(i-1))}\,,\\
\b(u_1)\cdots \b(u_s)e_0=\prod_{i=0}^{s-1} \frac{f(2\eta)}{f(\la+2\eta i)}\frac{f(\la-z+u_i+(-\La+1)\eta+2\eta (s-1))}{f(z-u_i+(\La+1)\eta)}\cdot e_s.
\end{gather*}

It remains to substitute these expressions into \eqref{eq:bbb-intermediate} and cancel out coinciding factors. 

Collecting the terms without the spectral parameters $u_j$ and with $\la$ not shifted by any $h^{(*)}$, we obtain $(f(2\eta))^M/\prod_{i=0}^{M-1}f(\la+2\eta i)$. Taking the terms without spectral parameters that involve $\la_i:=\la-2\eta h^{([0,i])}$ we observe
$$
\prod_{k=1}^{m_i}\frac{f(\la_i+2\eta(m_{>i}-k))}{f(\la_i-2\eta k)} \frac{\prod_{j=0}^{m_{>i}-1}f(\la_i+2\eta(j- m_i))}{\prod_{l=m_{i+1}}^{m_{>i}-1}f(\la_i+2\eta l)}
\prod_{j=0}^{m_{i+1}-1} \frac 1{f(\la_i+2\eta j)}=1,
$$
where the first ratio came from standalone $f$'s, the second ratio came from the numerator in $d_i$ and denominator in $d_{i+1}$, and the third ratio came from $\b(u_{K_{i+1}})$. Hence, we obtain
\begin{multline*}
\b(u_1)\cdots\b(u_M)E_0=\frac{(f(2\eta))^{M}}{
\prod_{i=0}^{M-1} f(\la+2\eta i)}\sum_{\substack{\{1,\dots,M\}=K_0\sqcup K_1\sqcup\dots \\ m_j:=|K_j|}}
\prod_{\substack{i>j\\ \alpha\in K_i,\be\in K_j}} \frac{f(u_\al-u_\be-2\eta)}{f(u_\al-u_\be)}\\
\times \prod_{i\ge 0}\prod_{k\in K_i} 
\phi_{i}(u_k)f(\la+u_k-q_{i}+2\eta m_i-2\eta h^{([0,i))}).
\end{multline*}
The last piece we need is the following symmetrization identity (this final step was not included in \cite[Theorem 5]{FV-ABA}). 

\begin{lemma}\label{lm:symm} For any $m\ge 1$, $v_1,\dots,v_m,\beta\in\C$, we have
\begin{equation}\label{eq:symm}
\sum_{\sigma\in S_m} \sigma\left(\prod_{1\le i<j\le m} \frac{f(v_i-v_j-\beta)}{f(v_i-v_j)}
\prod_{k=1}^m \frac{f(v_k+(m-2k+1)\beta)}{f(v_k)}\right)=\frac{f(\be)f(2\be)\cdots f(m\be)}{(f(\be))^m}\,.
\end{equation}
\end{lemma}
We will give its proof a little later; let us first use it to finish the proof of \eqref{eq:B-symm}. 

We apply Lemma \ref{lm:symm} in each cluster $K_i$, $i\ge 0$, with $\be=2\eta$, $m=|K_i|$, and 
$$
\{v_1,\dots,v_m\}=\{\la+u_{k}-q_i+2\eta m_i-2\eta h^{([0,i))} \}_{k\in K_i}. 
$$
This will add an additional factor with cross-differences of variables in the same clusters, and also turn the product of $f$'s in variables on the right into
$\prod_{j=0}^{m_i-1} f(\la-q_i+2\eta-2\eta h^{([0,i))} +4\eta j)$.
Noting that $h^{([0,i))}=\La_{[0,i)}-2m_{<i}$ we reach \eqref{eq:B-symm}. 
\begin{proof}[Proof of Lemma \ref{lm:symm}]\footnote{I am very grateful to E.~Rains for communicating this argument to me.} Using the fact that $f(-z)=-f(z)$ and the periodicity relations \eqref{eq:period}, one easily shows that the left-hand side of \eqref{eq:symm} viewed as a function of $v_1$ is elliptic, and it can only have one simple pole $v_1=0$ in the fundamental domain. Since non-constant elliptic functions must have at least two poles in their fundamental domain, this implies that the expression is independent of $v_1$. The same argument shows that it is independent of the other $v_j$'s as well, thus it is a constant. To compute the constant, one can set $v_j=j\be$, $1\le j\le m$, in which case only one term of the sum gives a nonzero contribution, which is exactly the required constant. 
\end{proof}

Let us proceed to the part (ii) of Theorem \ref{th:symm}. The proof is conceptually similar to that of (i), but it is more involved computationally. Its realization in simpler situations can be found in the proofs of \cite[Theorem 4.14]{BP}, \cite[Theorem 4.12]{BP-lect}, we will rely on those for the parts that are the same. 

We start with the action of $\d(v_1)\dots\d(v_n)$ on some vector $e^{(0)}\otimes e^{(1)}$ in the tensor product of two modules $V_0\otimes V_1$.\footnote{Note that for the proof of (i) we acted on the product of two highest weight vectors $e_0\otimes e_0$; the difference corresponds to the fact that in order to get $D_\nu=D_{\nu/0^{\ell(\nu)}}$ we need to act on $E_\nu$ (and collect the coefficient of $e_{\ell(\nu)}\otimes e_0\otimes e_0\otimes \cdots$), while for $B_\mu=B_{\mu/\varnothing}$ we had to act on $E_\varnothing$, see Section \ref{sc:inf-volume}.} This gives (cf., e.g., \cite[(4.39)]{BP-lect})
\begin{multline}\label{eq:ddd}
\d(v_1)\cdots \d(v_n) (e^{(0)}\otimes e^{(1)}) \\=\sum_{K\subset\{1,\dots,M\}} C_K
\left(\prod_{l\notin K}\b_0(v_l)\prod_{k\in K}\d_0(v_k)\, e^{(0)}\right)\otimes
\left(\Gamma(-2\eta h^{(0)})\left[\prod_{k\in K}\d_1(v_k)\prod_{l\notin 
K}\c_1(v_l)\right]e^{(1)}\right).
\end{multline}
As before, the commutativity of $\d(v_j)$'s implies that the coefficients 
$C_K=C_K(v_1,\dots,v_n)$ satisfy $C_{\sigma(K)}(v_1,\dots,v_n)=C_K(v_{\sigma(1)},\dots,v_{\sigma(n)})$, thus it suffices to compute $C_{\{1,\dots,s\}}$. One easily sees that the only way to obtain the corresponding term in \eqref{eq:ddd} is to look at
$$
\d_0(v_1)\cdots \d_0(v_s)\b_0(v_{s+1})\cdots \b_0(v_n) e^{(0)}\otimes \Gamma(-2\eta h^{(0)})\left[\d_1(v_1)\cdots \d_1(v_s)\c_1(v_{s+1})\cdots \c_1(v_n)\right]e^{(1)}
$$
and commute all the $\d_0$'s to the right of all the $\b_0$'s in the first factor. The needed commutation relation is \eqref{eq:db-comm} rewritten in the form
$$
\d(t_1)\b(t_2)=\frac{f(t_1-t_2-2\eta)}{f(t_1-t_2)}\frac{f(\la-2\eta\h)}{f(\la-2\eta \h-2\eta)}\cdot \b(t_2)\d(t_1)+(*)\cdot\b(t_1)\d(t_2),
$$
where we only need to use the first term in the right-hand side, as using the 
second term (with the $(*)$) does not lead to the needed form of the resulting 
expression. 

Moving $\d_0(v_s)$ all the way to the right gives (throwing out the terms with 
$(*)$'s)
\begin{multline*}
\d_0(v_1)\cdots \d_0(v_s)\b_0(v_{s+1})\cdots \b_0(v_n)\sim 
\d_0(v_1)\cdots 
\d_0(v_{s-1})\\
\times \frac{f(v_s-v_{s+1}-2\eta)}{f(v_s-v_{s+1})}\frac{f(\la-2\eta\h)}{
f(\la-2\eta \h-2\eta)}\,\b_0(v_{s+1})\cdots 
\frac{f(v_s-v_{n}-2\eta)}{f(v_s-v_{n})}\frac{f(\la-2\eta\h)}{
f(\la-2\eta \h-2\eta)}\,\b_0(v_{n}) \d_0(v_s).
\end{multline*}
We now use the relations (see \cite[p.~746]{FV-reps}) 
\begin{equation}\label{eq:shifts}
\d(w) F(\la,\h)=F(\la+2\eta,\h)\d(w), \qquad \b(w) 
F(\la,\h)=F(\la+2\eta,\h+2)\b(w)
\end{equation}
for an arbitrary scalar function $F(\la,\h)$, to get
$$
=\prod_{i=s+1}^n \frac{f(v_s-v_i-2\eta)}{f(v_s-v_i)} 
\frac{f(\la+2\eta(s-1)-2\eta \h)}{f(\la-2\eta(n-2s+1)-2\eta\h)}\,
\d_0(v_1)\cdots \d_0(v_{s-1})\b_0(v_{s+1})\cdots \b_0(v_n)\d_0(v_s). 
$$
Repeating the procedure for the other $\d_0(v_j)$, $1\le j\le s-1$, we obtain
$$
=\prod_{j=1}^s \frac{f(\la+2\eta(j-1-\h))}{f(\la+2\eta(j-1-n+s-\h))}\prod_{i=s+1}^n \frac{f(v_j-v_i-2\eta)}{f(v_j-v_i)} \cdot\b_0(v_{s+1})\cdots \b_0(v_n)\d_0(v_1)\cdots \d_0(v_{s}).
$$
The prefactor gives the value of $C_{\{1,\dots,s\}}(v_1,\dots,v_n)$. 

Let us now assume that the module $V^{(0)}$ is of the form $V_{\La_0}(z_0)$, the module $V^{(1)}$ is a highest weight module with the highest weight vector $e_0^{(1)}$, and require that the left-hand side of \eqref{eq:ddd} is proportional to $e_N\otimes e_0^{(1)}$ for some $N\ge 0$, as is required for the computation of $D_\nu$, cf. the footnote above. 

Introducing the scalar function $d^{(q)}$, cf. \eqref{eq:d-eigenvalue}, via the following eigenrelation in $V_\La(z)$: 
$$
\d(w_1)\cdots \d(w_k) e_q=d^{(q)}(w_1,\dots, w_k) e_q,
$$
we can rewrite \eqref{eq:ddd} as
\begin{multline}\label{eq:ddd-again}
\d(v_1)\cdots \d(v_n) (e_{n_0}\otimes e^{(1)})=
\sum_{\substack{K\subset\{1,\dots,n\}\\ n_{0}+n-|K|=N}}
\prod_{j=1}^{|K|}\frac{f(\la+2\eta(j-1-h^{(0)}))}{f(\la+2\eta(j-1-n+|K|-h^{(0)}))}
\\ \times \prod_{\substack{j\in K\\ i\notin K}}\frac{f(v_j-v_i-2\eta)}{f(v_j-v_i)} 
\cdot \Gamma(2\eta (n-|K|))\left[ d_0^{(n_0)}(v_K)\right]\cdot\Gamma(-2\eta h^{(0)})\left[ d_1(v_K)\right]\\ \times \left(
\prod_{i\notin K}\b_0(v_i)\, e_{n_0}\right)\otimes \left(\Gamma(2\eta(|K|-h^{(0)})\left[\prod_{i\notin K} \c_1(v_{i})\right]\,e^{(1)}\right).
\end{multline}

In order to proceed, we need to learn now how to apply $\prod \c(v_i)$ with the result being proportional to the highest weight vector. This is done in the following lemma, whose result is reminiscent of \eqref{eq:B-symm}, \eqref{eq:D-symm}. 

\begin{lemma}\label{lm:C-symm} Consider the tensor product of $m\ge 1$ evaluation Verma modules $V_{\La_1}(z_1)\otimes V_{\La_2}(z_2)\otimes\cdots\otimes V_{\La_m}(z_m)$, and let the pure tensors of the form $e_{k_1}\otimes \cdots \otimes e_{k_m}$ be its orthonormal basis. Then, with the notation $\widehat \la = \la-2\eta p$, 
\begin{equation}
\begin{gathered}\label{eq:C-symm}
\bigl\langle \c(w_1)\c(w_2)\cdots \c(w_p) \left(e_{k_1}\otimes e_{k_2} \cdots\otimes e_{k_m}\right),e_0\otimes e_0\otimes\cdots,\otimes e_0\bigr\rangle =
\prod_{i=0}^{p-1}{f(\la-2\eta\La_{[1,m]}+2\eta i)}\\ \times
\prod_{i=1}^m 
\prod_{j=0}^{k_i-1}\frac{f(2\eta(\La_i-j))}{f(\widehat\la 
+2\eta(2k_{<i}+k_i+j-\La_{[1,i]}))f(\widehat\la 
+2\eta(2k_{<i}+1+j-\La_{[1,i)}))}\\
\times (-1)^p\sum_{{\sigma\in S_p}}\sigma\left(\prod_{i<j}\frac{f(w_i-w_j+2\eta)}{f(w_i-w_j)}
\prod_{i=1}^{p}\widetilde\phi_{\ka_i}(w_i)f(\widehat\la 
-w_i+p_{\ka_i}+2\eta+4\eta(p-i)-2\eta\La_{[1,\ka_i)})\right),
\end{gathered}
\end{equation}
where $\ka=(\ka_1\ge\ka_2\ge \dots\ge \ka_p)=1^{k_1}2^{k_2}\cdots m^{k_m}$, we must have $k_1+\dots+k_m=p$ (otherwise the left-hand side vanishes), and
$$
\widetilde \phi_k(w)=\frac{1}{f(w-q_k)}\prod_{j={k+1}}^m\frac{f(w-p_j)}{f(w-q_j)}\,,\qquad k=1,\dots,m. 
$$ 
\end{lemma}
\begin{proof}
We follow the same basic route as in the computations for products of $\b$'s and $\d$'s above, and our explanations will be shorter here. First, we work in the tensor product of two highest weight modules $V^{(1)}\otimes V^{(2)}$, and
represent $\c(w_1)\cdots \c(w_p)$ in the form
\begin{equation*}
\sum_{K\subset\{1,\dots,p\}} C_K \left(\prod_{l\notin K}\a_1 (w_l)\prod_{k\in K}\c_1(w_k)\right)\otimes\left(\Gamma(-2\eta \h^{(1)})\left[\prod_{k\in K}\d_2(w_k)\prod_{l\notin K}\c_2(w_l)\right]\right)
\end{equation*}
with coefficients $C_K=C_K(w_1,\dots,w_p)$ satisfying the same symmetry as before. The coefficient $C_{\{1,\dots,s\}}$ is obtained from the expression
$$
\c_1(w_1)\cdots \c_1(w_s)\a_1(w_{s+1})\cdots \a_1(w_p)\otimes \Gamma(-2\eta h^{(1)}) \left[\d_2(w_1)\cdots \d_2(w_s)\c_2(w_{s+1})\cdots \c_2(w_p)\right]
$$
by commuting all the $\c$'s to the right of all the $\a$'s in the first factor without any exchanges of the spectral parameters in the process. For that we need the commutation relation
\eqref{eq:ac-comm} written in the form
$$
\c(t_1)\a(t_2)=\frac{f(t_1-t_2-2\eta)}{f(t_1-t_2)}\frac{f(\la-2\eta\h)}{f(\la-2\eta\h-2\eta)}+(*)\cdot \a(t_1)\c(t_2),
$$
where we will never use the second term with the $(*)$. With the help of the relations, see \cite[p.~746]{FV-reps} and cf. \eqref{eq:shifts},
$$
\a F(\la,\h)=F(\la-2h,\h)\,\a,\qquad \c F(\la,\h)=F(\la-2\eta,\h-2)\,\c,
$$
we eventually obtain, assuming that $V^{(1)}$ is of the form $V_{\La_1}(z_1)$ and the end result is proportional to the tensor product of the highest weight vectors,
\begin{multline*}
\c(w_1)\cdots \c(w_p)\left(e_{k_1}\otimes e^{(2)}\right)\\ \sim \sum_{\substack{K\subset\{1,\dots,p\}\\
|K|=k_1}} \prod_{j=1}^{|K|} \frac{f(\la+2\eta(j-1- h^{(1)}))}{f(\la+2\eta(j-1-p+|K|-h^{(1)})}
\prod_{\substack{j\in K\\ i\notin K}}\frac{f(w_j-w_i-2\eta)}{f(w_j-w_i)} \cdot \Gamma(-2\eta h^{(1)})\left[ d_2(w_K)\right]\\ \times \left(
\Gamma(-2\eta(p-|K|))\left[\prod_{j\in K}\c_1(w_j)\right]\, e_{n_1}\right)\otimes \left(\Gamma(2\eta(|K|-h^{(1)}))\left[\prod_{i\notin K} \c_2(w_{i})\right]\,e^{(2)}\right).
\end{multline*}
Replacing $V^{(2)}$ by $V_{\La_2}(z_2)\otimes V^{(3)}$, repeating the computation, and iterating yields (we keep assuming that the result is proportional to $e_0\otimes\cdots\otimes e_0$)
\begin{multline}\label{eq:ccc}
\c(w_1)\c(w_2)\cdots \c(w_p) \left(e_{k_1}\otimes e_{k_2} \cdots\otimes e_{k_m}\right)\sim \sum_{\substack{\{1,\dots,p\}=K_1\sqcup K_2\sqcup\dots\sqcup K_m\\
|K_1|=k_1,\,\dots,\,|K_m|=k_m}} \prod_{\substack{ i<j\\ \al\in K_i,\be\in K_j}} \frac{f(w_\al-w_\be-2\eta)}{f(w_\al-w_\be)}\\ \times
\prod_{i=1}^m\prod_{j=1}^{k_i} \frac{f(\la+2\eta(j-1+k_{<i}- h^{([1,i])}))}{f(\la+2\eta(j-1+k_{<i}-k_{>i}-h^{([1,i])})}\,\prod_{i=1}^m \Gamma(2\eta(k_{<i}-h^{([1,i])}))\left[d_{>i}(w_{K_i})\right]\\ 
\times\left(
\Gamma(-2\eta k_{>1})\left[\prod_{j\in K_1}\c_1(w_j)\right]\, e_{k_1}\right)\otimes \cdots \otimes \left(\Gamma(2\eta(-k_{>i}+k_{<i}-h^{([1,i))}))\left[\prod_{j\in K_2}\c_2(w_j)\right]\, e_{k_2}\right)\otimes\cdots,
\end{multline}
where $d_{>i}$ stands for the highest weight vector's eigenvalue in $V_{\La_{i+1}}(z_{i+1})\otimes\cdots\otimes V_{\La_m}(z_m)$ under the application of (products of) the  $\d$-operators. 

Observe that $h^{(k)}\equiv \La_k$ because our result is $e_0\otimes \cdots\otimes e_0$. 

In order to evaluate $\Gamma(2\eta(k_{<i}-h^{([1,i])}))\left[d_{>i}(w_{K_i})\right]$ we go back to \eqref{eq:d-norm-empty} and obtain
\begin{multline*}
\Gamma(2\eta(k_{<i}-h^{([1,i])}))\left[d_{>i}(w_{K_i})\right]=\prod_{\substack{j>i\\ k\in K_i}}\frac{f(z_j-w_k+(-\La_j+1)\eta)}{f(z_j-w_k+(\La_j+1)\eta)} \prod_{l=0}^{k_i-1}\frac{f(\la+2\eta(k_{<i}-\La_{[1,m]}+l))}{f(\la+2\eta(k_{<i}-\La_{[1,i]}+l))}\,.
\end{multline*}

To compute the $\c$-terms, we use \eqref{eq:abcd} to write (in $V_{\La}(z)$)
\begin{multline*}
\c(w_1)\cdots \c(w_k) e_k=c(\la,w_1)\cdots c(\la-2\eta(k-1),w_k)\,e_k\\=
\prod_{j=1}^k \frac{(-1)f(-\la-z+w_j+(\La-1)\eta)}{f(z-w_j+(\La+1)\eta)}\frac{f(2(\La+1-j)\eta)}{f(\la-2(j-1)\eta)}\frac{f(2j\eta)}{f(2\eta)}\, e_0. 
\end{multline*}
Hence, the last line of \eqref{eq:ccc} gives
\begin{multline*}
\prod_{i=1}^m \Biggl(\prod_{j=1}^{k_i}\frac{f(2j\eta)f(2(\La_i+1-j)\eta)}{f(\la+2\eta(k_{<i}-k_{>i}-\La_{[1,i)}-j+1))f(2\eta)}\\ \times
\prod_{k\in K_i} \frac{f(\la+2\eta(k_{<i}-k_{>i}-\La_{[1,i)})+z_i-w_k+(-\La_i+1)\eta)}{f(z-w_k+(\La_i+1)\eta)}\Biggr)\cdot e_0\otimes\cdots\otimes e_0 .
\end{multline*}
Collecting all the factors and recalling the notations $\widehat \la$ and $\widetilde \phi_k$ from the statement of the lemma, we obtain that the left-hand side of \eqref{eq:C-symm} equals
\begin{multline*} 
\prod_{i=1}^m 
\prod_{j=0}^{k_i-1}\frac{f(2(j+1)\eta)f(2(\La_i-j)\eta)}{f(\widehat\la 
+2\eta(2k_{<i}+k_i+j-\La_{[1,i]}))f(\widehat\la 
+2\eta(2k_{<i}+1+j-\La_{[1,i)}))f(2\eta)}\\ \times 
\prod_{i=0}^{p-1}{f(\la+2\eta(i-\La_{[0,m]}))}\cdot
\sum_{\substack{\{1,\dots,p\}=K_1\sqcup K_2\sqcup\dots\sqcup K_m\\
|K_1|=k_1,\,\dots,\,|K_m|=k_m}} \prod_{\substack{ i<j\\ \al\in K_i,\be\in K_j}} \frac{f(w_\al-w_\be-2\eta)}{f(w_\al-w_\be)}\\ \times
\prod_{i=1}^m \prod_{k\in K_i}(-1)\widetilde\phi_i(w_k)f(\widehat \la +2\eta(2k_{<i}+k_i-\La_{[1,i)})+z_i-w_k+(-\La_i+1)\eta),
\end{multline*}
where the appearance of $(-1)$ is due to $f(z-w_k+(\La_i+1)\eta)=-f(w_k-q_i)$. 

The claim of the lemma now follows by applying Lemma \ref{lm:symm} to additionally symmetrize the above expression over the variables $\{w_k\}$ within each cluster $k\in K_i$, $1\le i\le m$. 
\end{proof}

We return to the proof of \eqref{eq:D-symm}, and we pick up where we left off, i.e., at \eqref{eq:ddd-again}. We have proved Lemma \ref{lm:C-symm} to handle the very last factor in the right-hand side of \eqref{eq:ddd-again}, but we need to have explicit expressions for the other factors as well; we will give them now. In what follows we assume that $V^{(1)}$ participating in \eqref{eq:ddd-again} is $V_{\La_1}(z_1)\otimes\cdots\otimes V_{\La_m}(z_m)$ with some $m\ge 1$ that we will later send to infinity, and $e^{(1)}=e_{n_1}\otimes \cdots\otimes e_{n_m}$. 

Straightforward computations using \eqref{eq:abcd} give (it is also useful to remember that $|K|=n_0+n-N$ and $h^{(0)}=\La_0-2N$)
\begin{gather*}
\Gamma(2\eta(n-|K|))\left[d_0^{(n_0)}(v_K)\right]=\prod_{k\in K} \frac{f(z_0-v_k+(-\La_0+1+2n_0)\eta)}{f(z_0-v_k+(\La_0+1)\eta)}\cdot \prod_{i=N}^{n+n_0-1}\frac{f(\la+2\eta(i-\La_0))}{f(\la+2\eta(i-n_0))}\,,\\
\Gamma(-2\eta h^{(0)})\left[ d_1(v_K)\right]=\prod_{j=1}^m \prod_{k\in K} \frac{f(z_j-v_k+(-\La_j+1)\eta)}{f(z_j-v_k+(\La_j+1)\eta)}\cdot \prod_{i=N}^{n+n_0-1}\frac{f(\la+2\eta(i+N-\La_{[0,m]}))}{f(\la+2\eta(i+N-\La_0))}\,,\\
\prod_{l\notin K}\b_0(v_l)\, e_{n_0}=\prod_{i=0}^{N-n_0-1}\frac{(-1)f(2\eta)}{f(\la+2\eta i)}\cdot\prod_{l\notin K} \frac{f(-\la+z_0-v_l+(\La_0-1)\eta+2\eta-2\eta N)}{f(z_0-v_l+(\La_0+1)\eta)}\,e_N,\\
\prod_{j=1}^{|K|}\frac{f(\la+2\eta(j-1-h^{(0)}))}{f(\la+2\eta(j-1-n+|K|-h^{(0)}))}=\prod_{i=N}^{n+n_0-1} \frac{f(\la+2\eta(i+N-\La_0))}{f(\la+2\eta(i+n_0-\La_0))}\,.
\end{gather*}

Finally, we use Lemma \ref{lm:C-symm} to write $\Gamma(2\eta(|K|-h^{(0)})\left[\prod_{i\notin K} \c_1(v_{i})\right]\,e^{(1)}$. The shift of $\la$ we represent in the form $2\eta(|K|-h^{(0)})=2\eta(n+n_0+N)-\La_0$, and the match of notation with Lemma \ref{lm:C-symm} is as follows:
\begin{gather*}
k_j=n_j,\quad 1\le j\le m, \qquad p=n-|K|=N-n_0,\\
2\eta(|K|-h^{(0)})+\widehat \la=\la+2\eta(n+2n_0-\La_0). 
\end{gather*}
We obtain
\begin{multline*}
\left\langle\Gamma(2\eta(|K|-h^{(0)})\biggl[\prod_{i\notin K} \c_1(v_{i})\biggr]\,e^{(1)},e_0\otimes\cdots\otimes e_0\right\rangle=\prod_{i=0}^{N-n_0-1}{f(\la+2\eta(i+n+n_0+N-\La_{[0,m]}))}\\
\times \prod_{i=1}^m 
\prod_{j=0}^{n_i-1}\frac{f(2\eta(\La_i-j))}{f(\la+2\eta(n+2n_{<i}+n_i+j-\La_{[0,i]}))f(\la+2\eta(n+2n_{<i}+1+j-\La_{[0,i)}))}\cdot  (-1)^{N-n_0}\\
\times\sum_{\substack{\sigma:\{1,\dots,N-n_0\}\\ \sigma \text{ is a bijection}}}\sigma\left(\prod_{i<j}\frac{f(v_i-v_j+2\eta)}{f(v_i-v_j)}
\prod_{i=1}^{N-n_0}\widetilde\phi_{\nu_i}(v_i)f(\la+2\eta n 
-v_i+p_{\nu_i}+2\eta+4\eta(N-i)-2\eta\La_{[0,\nu_i)})\right).
\end{multline*}

We need to substitute all these expressions into \eqref{eq:ddd-again}, multiply by (see \eqref{eq:d-norm}) 
$$
\prod_{i=0}^m \prod_{j=1}^n\frac{f(z_i-v_j+(\La_i+1)\eta)}{f(z_i-v_j+(-\La_i+1)\eta)}
\cdot \prod_{l=0}^{n-1}\frac{1}{f(\la+2\eta(l+2N-\La_{[0,m]}))},
$$
and send $m\to\infty$ (this limit transition amounts to a stabilization at large enough $m$). 
It is now straightforward to see that we arrive at \eqref{eq:D-symm}. 
\end{proof}

\section{Orthogonality relations} \label{sc:ortho}

We would like to think of the functions $\{B_\mu(\la;u_1,\dots,u_M)\mid\mu\text{ is a signature of length }M\}$, explicitly given by \eqref{eq:B-symm}, as of a $\lambda$-dependent, Fourier-like basis in a suitable space of functions in $u_1,\dots,u_M$. This should entail two types of statements - completeness (any function from the space can be written as a linear combination of $B_\mu$'s) and orthogonality with respect to some inner product. Cauchy identities similar to our \eqref{eq:cauchy} are useful in proving both. Let us illustrate how this might work on a simple example.

Consider the simplest one-variable Cauchy identity of the form
	\begin{align*}
		\sum_{n=0}^{\infty}\frac{z^{n}}{w^{n+1}}=\frac{1}{w-z},\qquad \left|\frac{z}{w}\right|<1.
	\end{align*}
	By shifting the summation index towards $-\infty$, we can write
	\begin{align*}
		\sum_{n=-M}^{\infty}\frac{z^{n}}{w^{n+1}}=\frac{w^{M}}{z^{M}}\frac{1}{w-z},\qquad \left|\frac{z}{w}\right|<1.
	\end{align*}
	Now take contour integrals in $z$ and $w$ (over positively oriented
	circles with $|z|<|w|$) of both sides of this relation multiplied by 
	$P(z)Q(w)$, where $P$ and $Q$ are Laurent polynomials. Then in the left-hand side 
	we obtain the same result for any $M\gg 1$, and in the right-hand side the 
	$w$ contour can be shrunk to zero, thus picking the residue at $w=z$. Hence,	
	\begin{align*}
		\sum_{n=-\infty}^{\infty}\oint\oint \frac{z^{n}}{w^{n+1}}\,P(z)Q(w)\frac{dz}{2\pi\i}\frac{dw}{2\pi\i}
		=
		\oint P(z)Q(z)\frac{dz}{2\pi\i}.
	\end{align*}
	The convergence condition $|z|<|w|$ is irrelevant for the left-hand side because the 
	sum over $n$ now contains only finitely many terms.
	This implies
	$$
	\sum_{n=-\infty}^{\infty}z^{n}\oint\frac{Q(w)}{w^{n+1}}\,\frac{dw}{2\pi\i}
		=Q(z),
	$$
	which is a (in this case, obvious) completeness statement for the basis $\{z^n\}_{n\in\Z}$, and can also be rewritten as a biorthogonality relation
	\begin{align*}
		\sum_{n=-\infty}^{\infty}\frac{z^{n}}{w^{n+1}}=\delta(w-z).
	\end{align*} 
	To get another (bi)orthogonality relation, integrate both sides of the above identity against 
	$w^{m}\frac{dw}{2\pi\i}$,
	$m\in\Z$. Since $\{z^{n}\}_{n\in\Z}$ are linearly independent, 
	we obtain
	\begin{align}\label{eq:orth-single-var}
		\oint_{|w|=\textnormal{const}}w^{m}{(w^{-1})^{n}}\,\frac{w^{-1}dw}{2\pi\i}=\mathbf{1}_{m=n}.
	\end{align}
 This identity, of course, also readily follows from the Cauchy's integral formula.
 
 The above program was fully realized for simpler versions of the $B_{\mu}$'s arising from the (inhomogeneous higher spin) six vertex model in \cite[Section 7]{BP}. It might be possible to do the same for the $B_{\mu}$'s as well, but we will not pursue that here. Instead, we will focus on an analog of \eqref{eq:orth-single-var} that we originally derived from \eqref{eq:cauchy} using the arguments that are very similar to the ones above. 
 
To formulate a statement we will need integration contours that only exist under certain restrictions on the parameters; let us state them.\footnote{Our restrictions are sufficient but not necessary; we are not pursuing the most general statement of this form.} 

\begin{definition}\label{df:adm} For any $M\ge 1$, we say that the parameters $\tau,\eta, \{z_i\}_{i\ge 0}, \{\La_i\}_{i\ge 0}$ are admissible if there exists a fundamental parallelogram $\mathcal D$ of $\C/(\Z+\tau \Z)$ (or a fundamental strip of $\C/\Z$ in the trigonometric case) such that $\{p_i\}_{i\ge 0}$ and $\{q_i\}_{i\ge 0}$ of \eqref{eq:pq} form collections of points in $\mathcal{D}$ with points in each collection being close enough to each other, and $\eta$ is small enough, so that there exist (positively oriented) contours $\ga_1,\dots\ga_M\subset D$ such that $\ga_M$ includes all the $p_i$'s, $\ga_{M-1}$ includes $\ga_M+2\eta$ (which is the image of $\ga_M$ under $z\mapsto z+2\eta$), \dots, $\ga_1$ includes $\ga_2+2\eta$, and none of the $\ga_i$'s includes any of the $q_j$'s.  
See Figure \ref{fig:contours}. 
\end{definition} 

\begin{figure}
	\includegraphics[scale=0.7]{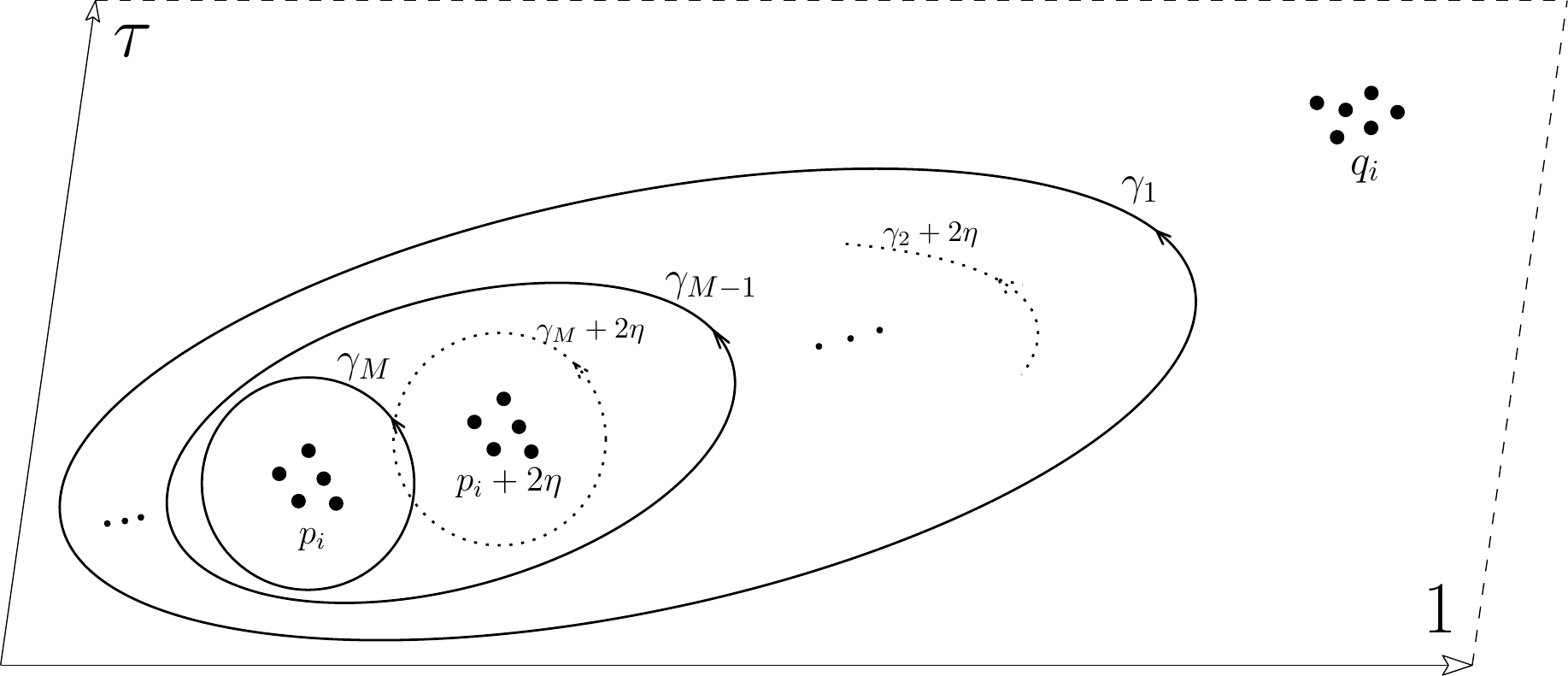}
	\caption{Contours $\ga_1,\dots,\ga_M$ of Definition \ref{df:adm}.}\label{fig:contours}
\end{figure}

\begin{theorem}\label{th:orth} Fix $M\ge 1$ and assume our parameters are admissible in the sense of Definition \ref{df:adm}. Then for any $\mu=(\mu_1\ge\dots\ge\mu_M\ge 0)=0^{m_0}1^{m_1}\cdots$, $\nu=(\nu_1\ge\dots\ge\nu_M\ge 0)$ we have
\begin{multline}\label{eq:orth} \oint_{\ga_1}\cdots\oint_{\ga_M}  B_{\mu}(\la;u_1,\dots,u_M) \\ \times\prod_{i<j} \frac{f(u_i-u_j)}{f(u_i-u_j-2\eta)}\prod_{i=1}^M \psi_{\nu_i}(u_i) f(\la-u_i+p_{\nu_i}+2\eta+4\eta(M-i)-2\eta\La_{[0,\nu_i)})\frac{du_i}{2\pi \i}=c_\mu\cdot  \mathbf{1}_{\la=\mu},
\end{multline}
where the integration contours $\{\ga_i\}_{i=1}^M$ are as in Definition \ref{df:adm}, $\psi_l(v)$ is defined in \eqref{eq:phi-psi}, and 
\begin{equation}
\begin{gathered}\label{eq:squared-norm}
c_{\mu}=c_\mu(\la)=\frac{(f(2\eta))^{M}}{(f'(0))^M\prod_{i=0}^{M-1}f(\la+2\eta i)}
\\
\times \prod_{i\ge 0} 
\prod_{j=0}^{m_i-1}\frac{f(\la 
+2\eta(2m_{<i}+m_i+j-\La_{[0,i]}))f(\la 
+2\eta(2m_{<i}+1+j-\La_{[0,i)}))}{f(2\eta(\La_i-j))}\,.
\end{gathered}
\end{equation}
\end{theorem}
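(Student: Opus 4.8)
The plan is to prove \eqref{eq:orth} by a direct residue computation: substitute the explicit symmetrization formula \eqref{eq:B-symm} for $B_\mu$ and exploit the way the nested contours of Definition \ref{df:adm} interact with the Hall--Littlewood cross-factors. The point is that the test function $\prod_i \psi_{\nu_i}(u_i)\,f(\la-u_i+p_{\nu_i}+2\eta+4\eta(M-i)-2\eta\La_{[0,\nu_i)})$ together with the prefactor $\prod_{i<j}\frac{f(u_i-u_j)}{f(u_i-u_j-2\eta)}$ is tailored to be dual to a single term of \eqref{eq:B-symm}: the $\psi_l$ of \eqref{eq:phi-psi} are built to pair with the $\phi_k$ so that the integral localizes on residues. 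I will show the right-hand side is nonzero only when $\mu=\nu$, i.e.\ that the indicator is $\mathbf{1}_{\mu=\nu}$, and then evaluate the constant \eqref{eq:squared-norm}.

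First I substitute \eqref{eq:B-symm} and pull out its $u$-independent prefactor. Each $\sigma\in S_M$ then contributes a term carrying $\sigma\bigl[\prod_{i<j}\frac{f(u_i-u_j-2\eta)}{f(u_i-u_j)}\bigr]$; multiplied by the test-function prefactor $\prod_{i<j}\frac{f(u_i-u_j)}{f(u_i-u_j-2\eta)}$, the identity permutation yields exactly $1$, while the others yield genuinely elliptic ratios of cross-factors. Since $B_\mu$ is a \emph{symmetric} function of the $u_i$, I intend to use a symmetrization lemma for the nested measure, in the spirit of \cite{BP}: integrating variable by variable starting from the innermost contour $\ga_M$, the poles at $u_i=u_j+2\eta$ created by $\prod_{i<j}\frac1{f(u_i-u_j-2\eta)}$ are arranged by the $+2\eta$ offsets in Definition \ref{df:adm} so that the cascade of residues collapses the $S_M$-sum to the single diagonal term.

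The second step is the localization proper. Inside the contours the only admissible poles come from $\psi_{\nu_i}(u_i)$ at $u_i=p_j$ (the $q_j$ being excluded, and $B_\mu$ being holomorphic there). Computing residues one variable at a time, I expect the $\phi$--$\psi$ biorthogonality --- that $\phi_k\psi_k$ has a single simple pole at $u=p_k$ while for $k<l$ the residues of $\phi_k\psi_l$ are annihilated by the affine factors $f(\la-u_i+p_{\nu_i}+2\eta+4\eta(M-i)-\dots)$ --- to force the surviving configuration to be $u_i=p_{\mu_i}$ with $\mu=\nu$. The constant $c_\mu$ of \eqref{eq:squared-norm} is then assembled from the product of these residues, the prefactor of \eqref{eq:B-symm}, and a final application of the symmetrization identity Lemma \ref{lm:symm} (with $\beta=2\eta$) to evaluate the leftover $S_M$-sum over the clusters of equal parts of $\mu$; the factor $(f'(0))^{-M}$ records $\Res_{u=p}\frac1{f(u-p)}=\frac1{f'(0)}$.

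The main obstacle is the off-diagonal vanishing together with the residue bookkeeping: one must show that for $\mu\neq\nu$ every residue configuration cancels, and that for $\mu=\nu$ the nested offsets leave exactly one surviving cascade with the correct multiplicity. This is where the interplay between the $2\eta$-shifted contours and the poles of $\prod_{i<j}\frac1{f(u_i-u_j-2\eta)}$ is delicate, and where ellipticity (the periodicity relations \eqref{eq:period}) is needed, exactly as in the proof of Lemma \ref{lm:symm}, to argue that the non-identity permutations integrate to zero. As a cross-check and alternative route, \eqref{eq:orth} can instead be derived from the Cauchy identity \eqref{eq:cauchy} by the contour-shrinking argument illustrated in the single-variable warm-up leading to \eqref{eq:orth-single-var}, which was in fact the original derivation.
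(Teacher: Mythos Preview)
Your outline has the right ingredients --- substitute \eqref{eq:B-symm}, use the nested contours of Definition~\ref{df:adm}, invoke Lemma~\ref{lm:symm} on the clusters --- and it overlaps with the paper's proof, but two of the mechanisms you rely on do not actually work.

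\textbf{Off-diagonal vanishing.} The assertion that the non-identity $\sigma$-terms ``integrate to zero'' by an ellipticity argument in the style of Lemma~\ref{lm:symm}, and that for $k<l$ the residues of $\phi_k\psi_l$ are ``annihilated by the affine factors'', is not correct. For $k<l$ the product $\phi_k\psi_l$ has $l-k+1$ simple poles at $p_k,\dots,p_l$ inside the contours, and a single factor $f(\la-u_i+p_{\nu_i}+\cdots)$ cannot kill them all. More to the point, the individual $\sigma$-terms are \emph{not} separately zero: already for $\mu=\nu=(x,x)$ the transposition $\sigma=(12)$ leaves a cross-factor $f(u_1-u_2+2\eta)/f(u_1-u_2-2\eta)$ whose pole at $u_1=u_2+2\eta$ produces a nonvanishing residue at $(u_2,u_1)=(p_x,p_x+2\eta)$. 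The paper's route is different. One first verifies that the full integrand is doubly periodic in each $u_i$ (the $4\eta(M-i)$ shifts in \eqref{eq:B-symm} are precisely what makes the shift-counting close for every $\sigma$); this periodicity allows any $u_i$-contour to be swung freely between surrounding the $p_j$'s and surrounding the $q_j$'s. With that freedom in hand, the combinatorial residue argument of \cite[Lemma~3.5]{BCPS14} (cf.\ \cite[Lemma~7.1]{BP}) applies verbatim and forces $\mu=\nu$. That lemma is the genuine input here; it does not reduce to a one-variable $\phi$--$\psi$ pairing.

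\textbf{The norm.} For $\mu=\nu$ with a cluster of size $m$ at $x$, the surviving residues are \emph{not} all at $u_\bullet=p_x$, and the role of Lemma~\ref{lm:symm} is not to clean up a ``leftover'' sum after taking residues. The paper's order is the reverse: (i) inside the cluster, apply Lemma~\ref{lm:symm} to the $B_\mu$-side first, collapsing the $S_m$-sum to a single product with no cross-factor; (ii) the test-function prefactor $\prod_{i<j}\tfrac{f(u_i-u_j)}{f(u_i-u_j-2\eta)}$ then remains intact, and its poles at $u_i=u_j+2\eta$ drive a sequential cascade $u_M=p_x,\ u_{M-1}=p_x+2\eta,\ \dots,\ u_1=p_x+2\eta(m-1)$. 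The factors $f(2\eta(\La_i-j))$ in the denominator of \eqref{eq:squared-norm} arise from $f(u-q_x)$ evaluated along this $2\eta$-string, which your configuration $u_i=p_{\mu_i}$ would not see.
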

\smallskip
\noindent\emph{Comments.} 1. The statement of the theorem can be extended to $\mu$ and $\nu$ being arbitrary signatures, not necessarily nonnegative. For that one needs to modify the formula \eqref{eq:B-symm} and the rest of the integrand in \eqref{eq:orth} using
$$
\prod_{i=0}^{k-1} \frac{f(u-p_i)}{f(u-q_i)}=\frac{\prod_{i=-\infty}^{k-1} \frac{f(u-p_i)}{f(u-q_i)}}{\prod_{i=-\infty}^{-1} \frac{f(u-p_i)}{f(u-q_i)}},\qquad \La_{[0,k)}=\sum_{i=-\infty}^{k-1}\La_i-\sum_{i=-\infty}^{-1}\La_i.
$$
While the infinite products and sums above may not make sense, canceling infinitely many terms leads to meaningful interpretation of the right-hand sides that one can also use for $k<0$. The proof for not necessarily nonnegative $\mu$ and $\nu$ remains the same.  

\smallskip
\noindent 2. While \eqref{eq:orth} does not immediately look like an orthogonality relation, it does allow to extract the coefficients of $B_\mu$'s in a series, as we will see after the proof of this theorem. To see the connection to orthogonality, assume for a second that the integration contours could be deformed to the same contour without crossing any poles of the integrand. (It is not clear how to do this in general, but in some cases, e.g., in the trigonometric limit, this is actually doable.) Then in the integrand one could replace 
$$
\prod_{i<j} \frac{f(u_i-u_j)}{f(u_i-u_j-2\eta)}\quad \text{by} \quad \prod_{i\ne j} \frac{f(u_i-u_j)}{f(u_i-u_j-2\eta)}\cdot \prod_{i>j} \frac{f(u_i-u_j-2\eta)}{f(u_i-u_j)}
$$
and then symmetrize the integrand in the $u$-variables. As the result, comparing to \eqref{eq:B-symm}, one would see  
$B_{\mu}(\la;u_1,\dots,u_M)$ integrated against another such $B_\nu$ subject to substitutions 
$$
p_i\mapsto 2\eta -q_i,\qquad q_j\mapsto 2\eta -p_j,\qquad u_k\mapsto 2\eta -u_k\quad \text{for all}\quad i,j,k, 
$$
times the `orthogonality weight' $\prod_{i\ne j} \frac{f(u_i-u_j)}{f(u_i-u_j-2\eta)}$; this could be viewed as an analog of \eqref{eq:orth-single-var}. For a similar re-interpretation in a simpler situation, see \cite[Theorem 7.4 and Corollary 7.5]{BP}.

\begin{proof}[Proof of Theorem \ref{th:orth}] First, let us check that the integrand of \eqref{eq:orth} is doubly periodic (periods 1 and $\tau$), viewed as a meromorphic function in each of the variables $u_i$, $1\le i\le M$. 

We employ \eqref{eq:B-symm} to write $B_\mu$ as a sum over permutations $\sigma\in S_M$ and then in each term rewrite all $u_i$-dependent factors in the form $f(u_i+*)$ with various $*$'s using $f(-x)=-f(x)$. Since there are always equally many factors of this type in the numerator and denominator, the first relation of \eqref{eq:period} implies the periodicity with period 1. The second relation of \eqref{eq:period} will imply the periodicity with period $\tau$ if we show that the total sum of $*$'s in $f(u_i+*)$'s in the numerators is equal to the similar sum in the denominator. 

First consider the term corresponding to $\sigma=\mathrm{id}$. Then all cross-terms involving two $u_j$'s cancel out, and the needed equality is immediately visible by a direct inspection (relations \eqref{eq:pq} need to be used). Further, as we modify the permutation by multiplying it by an elementary transposition, we swap two neighboring variables. In the cross-terms, this leads to a single change of the form
$$
\frac{f(u_k-u_l-2\eta)}{f(u_k-u_l)}\leadsto \frac{f(u_l-u_k-2\eta)}{f(u_l-u_k)}=\frac{f(u_k-u_l+2\eta)}{f(u_k-u_l)}\,.
$$
Hence, $u_k$ gains an extra $4\eta$ in the sum of the numerator $*$'s, and $u_l$ loses the same $4\eta$. The only other contribution that changes is $4\eta(M-i)$ in the right-hand side of \eqref{eq:B-symm}. It is readily seen to produce a compensatory effect, offsetting the sum of $*$'s by exactly the opposite amount. For example, when we pass from $\sigma=\mathrm{id}$ to $\sigma=(12)$ we observe the change
\begin{multline}
\frac{f(u_1-u_2-2\eta)}{f(u_1-u_2)}f(u_1+\dots+4\eta(M-1))f(u_2+\dots+4\eta(M-2))\\ \leadsto
\frac{f(u_1-u_2+2\eta)}{f(u_1-u_2)}f(u_1+\dots+4\eta(M-2))f(u_2+\dots+4\eta(M-1)),
\end{multline}
and the total sum of shifts of $u_1$ and $u_2$ did not change. Thus, the double periodicity of the integrand of \eqref{eq:orth} is verified. 

The next step is to prove that the left-hand side of \eqref{eq:orth} vanishes if $\mu\ne \nu$. This is a nontrivial combinatorial argument that is, however, completely analogous to the proof of \cite[Lemma 3.5]{BCPS14}, see \cite[Lemma 7.1]{BP} for a more generic statement. The periodicity of the integrand allows us to move $u$-contours from `surrounding $p_i$'s' to `surrounding $q_j$'s' position, because the total integral along the boundary of any fundamental parallelogram is 0. Keeping in mind that $f(z)$ has a single simple zero in each fundamental parallelogram of $\C/(\Z+\tau \Z)$, we literally repeat the arguments of the above cited lemmas to reach the desired conclusion.

The final step is the computation of the `squared norm' $c_\mu$. 
A part of the proof of \cite[Lemma 3.5]{BCPS14} shows that for $\mu=\nu$, the integral \eqref{eq:orth} splits into a product of similar integrals corresponding to the clusters (=groups of equal coordinates) of $\mu$. To see how the computation proceeds inside each cluster, let us first assume that $\mu_1=\dots=\mu_M=x$. The sum over $S_M$ in the expression \eqref{eq:B-symm} for $B_\mu$ is then computed via Lemma \ref{lm:symm}:
\begin{multline*}
\sum_{\sigma\in S_M}\sigma\left(\prod_{i<j} 
\frac{f(u_i-u_j-2\eta)}{f(u_i-u_j)}\cdot \prod_{i=1}^M 
f(\la+u_i-q_{x}+2\eta+4\eta(M-i)-2\eta\La_{[0,x)})\right)\\=
\prod_{i=1}^M\frac{f(2\eta i)f(\la+u_i-q_x+2\eta(M-\La_{[0,x)}))}{f(2\eta)};
\end{multline*}
this is very similar to the application of Lemma \ref{lm:symm} right after its statement. Hence, the left-hand side of \eqref{eq:orth} now takes the form (we take into account the prefactor of the sum in \eqref{eq:B-symm})
\begin{multline}\label{eq:orth-aux}
\frac{(-1)^M(f(2\eta))^{M}}{
\prod_{i=0}^{M-1} f(\la+2\eta i)}\oint_{\ga_1}\cdots \oint_{\ga_M}
\prod_{i<j} 
\frac{f(u_i-u_j-2\eta)}{f(u_i-u_j)}\\ \times\prod_{i=1}^M\frac{f(\la+u_i-q_x+2\eta(M-\La_{[0,x)}))f(\la-u_i+p_x+2\eta+4\eta(M-i)-2\eta\La_{[0,x)}))}{f(u_i-p_x)f(u_i-q_x)}\,\frac{du_i}{2\pi \i}.
\end{multline}
We can now sequentially evaluate the integrals by computing the residues, starting with the inner most one. Computing $\Res_{u_M=p_x}$ gives, writing only the terms that depended on $u_M$,
$$
\frac{1}{f'(0)}\prod_{i=1}^{M-1} \frac{f(u_i-p_x)}{f(u_i-p_x-2\eta)}\cdot\frac{f(\la+2\eta(M-\La_{[0,x]}))f(\la+2\eta-2\eta\La_{[0,x)})}{f(-2\eta\La_x)}\,.
$$
Since $\prod_{i=1}^{M-1}\frac{f(u_i-p_x)}{f(u_i-p_x-2\eta)}\prod_{i=1}^{M-1}{(f(u_i-p_x))^{-1}}=\prod_{i=1}^{M-1}{(f(u_i-p_x-2\eta))^{-1}}$, the poles of the integrand are now at $u_i=p_x+2\eta$, and the next residue that we take is $\Res_{u_{M-1}=p_x+2\eta}$. The $u_{M-1}$-dependent factors give
$$
\frac{1}{f'(0)}\prod_{i=1}^{M-2} \frac{f(u_i-p_x-2\eta)}{f(u_i-p_x-4\eta)}\cdot\frac{f(\la+2\eta(M+1-\La_{[0,x]}))f(\la+4\eta-2\eta\La_{[0,x)})}{f(2\eta-2\eta\La_x)}\,.
$$
Next, we compute residues at ${u_{M-2}=p_x+4\eta},\dots,{u_1=p_x+2(M-1)\eta}$ (in this order) to conclude that the integral in \eqref{eq:orth-aux} equals
$$
\frac {(-1)^M}{(f'(0))^M}\prod_{j=0}^{M-1} \frac{f(\la+2\eta(j+M-\La_{[0,x]}))f(\la+2\eta(j+1-\La_{[0,x)}))}{f(2\eta(\La_x-j))}\,.
$$
Going back to the case of the general $\mu=0^{m_0}1^{m_1}2^{m_2}\cdots$, we observe that the computation for the $j$th cluster is exactly the same with $M$ replaced by $m_j$ (the size of the $j$th cluster), and $\la$ shifted by $4\eta m_{<j}$, because $(M-i)$ in \eqref{eq:B-symm} takes values $m_{<j}+\{0,1,\dots,m_j-1\}$ when $\mu_i$ belongs to the $j$th cluster. This completes the proof of Theorem \ref{th:orth}. 
\end{proof}

As was mentioned in Comment 2 after Theorem \ref{th:orth}, \eqref{eq:orth} can be used to extract the coefficients of $B_\mu$ from their linear combinations. Let us check how this works for the Cauchy identity \eqref{eq:cauchy}. The coefficients of the $B_{\mu}$'s are essentially the $D_\mu$'s; hence, this should produce an integral representation for latter. Rather than justifying this type of an argument (which is not difficult but requires analytic control on the convergence of the Cauchy identity), we will directly verify that the resulting integral representation is indeed valid by showing that it is equivalent to the part (ii) of Theorem \ref{th:symm}. 

\begin{proposition} Fix $N\ge 1$, and assume our parameters are admissible in the sense of Definition \ref{df:adm} with $M=N$. Take $n\ge 1$, and let $v_1,\dots, v_n$ be complex numbers that lie inside the fundamental parallelogram used in Definition \ref{df:adm} but outside the contour $\ga_1$. Then for any $\nu=(\nu_1\ge \dots\ge\nu_N\ge 0)=0^{n_0}1^{n_1}2^{n_2}\cdots$ we have
\begin{multline}\label{eq:D-int}
D_\nu(\la-2\eta n;v_1,\dots,v_n)=\frac{(-1)^N(f(2\eta))^N}{\prod_{i=-n}^{N-1}f(\la+2\eta i)\cdot c_{\nu}(\la)}\,\oint_{\ga_1}\dots\oint_{\ga_N} \prod_{i<j}\frac{f(u_i-u_j)}{f(u_i-u_j-2\eta)}\\ \times \prod_{i=1}^N \Biggl(\psi_{\nu_i}(u_i) f(\la-u_i+p_{\nu_i}+2\eta+4\eta(N-i)-2\eta\La_{[0,\nu_i)})\\ \times \frac{f(\la+u_i-q_0+2\eta(N-n))}{f(u_i-q_0)}\prod_{j=1}^n \frac{f(u_i-v_j+2\eta)}{f(u_i-v_j)}
\Biggr)\frac{du_i}{2\pi \i}\,,
\end{multline}
where $c_\nu(\la)$ is as in \eqref{eq:squared-norm}. 
\end{proposition}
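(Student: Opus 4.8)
The plan is to prove \eqref{eq:D-int} by directly evaluating the $N$-fold contour integral on its right-hand side and matching the outcome to the symmetrization formula \eqref{eq:D-symm} for $D_\nu$ under the substitution $\la\mapsto\la-2\eta n$ (so that $\widetilde\la=\la$). Before committing to this route it is worth recording where \eqref{eq:D-int} comes from, since this both predicts the answer and pins down the normalization. Using $f(v_j-u_i-2\eta)/f(v_j-u_i)=f(u_i-v_j+2\eta)/f(u_i-v_j)$, one recognizes the product $\prod_{i}\prod_{j}f(u_i-v_j+2\eta)/f(u_i-v_j)$ in the integrand as exactly the Cauchy kernel on the right-hand side of \eqref{eq:cauchy}; the remaining single-variable factors $f(\la+u_i-q_0+2\eta(N-n))/f(u_i-q_0)$ account for the $\Bnorm$-versus-$B$ normalization and the shift of $\la$. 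Thus the $u$-integrand of \eqref{eq:D-int} is, up to constants, the generating function whose expansion in the family $\{B_\mu\}$ has the $D_\mu$'s as coefficients, and integrating it against the test function $\prod_i\psi_{\nu_i}(u_i)f(\dots)$ with weight $\prod_{i<j}f(u_i-u_j)/f(u_i-u_j-2\eta)$ selects, by the orthogonality relation \eqref{eq:orth}, precisely the $\mu=\nu$ coefficient, which is $D_\nu$ up to the factor $c_\nu(\la)$ and the displayed prefactor. Making this rigorous would require controlling the convergence of the infinite $B$-expansion under the integral sign, and it is to sidestep this that I would instead verify the identity by residues.

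For the residue computation I would follow the same steps as the evaluation of the squared norm $c_\mu$ in the proof of Theorem \ref{th:orth}. First I would check that the integrand of \eqref{eq:D-int}, as a function of each $u_i$, is elliptic with periods $1$ and $\tau$: the period-$1$ invariance is immediate since numerator and denominator carry equally many $f(u_i+\ast)$ factors, and the period-$\tau$ invariance follows by balancing the shifts $\ast$ exactly as at the opening of the proof of Theorem \ref{th:orth} (the new factors $f(u_i-v_j+2\eta)/f(u_i-v_j)$ and $f(\la+u_i-q_0+\cdots)/f(u_i-q_0)$ are each separately balanced). Next I would locate the poles enclosed by the nested contours $\ga_1,\dots,\ga_N$: the factors $f(u_i-v_j)^{-1}$ and $f(u_i-q_0)^{-1}$ contribute only poles at the $v_j$'s and at $q_0$, all of which lie outside $\ga_1$, so the only relevant poles are those of $\psi_{\nu_i}(u_i)$ at $p_0,p_1,\dots$ together with the cross-poles $u_i=u_j+2\eta$ controlled by the nesting. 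I would then evaluate the integrals sequentially from the innermost contour outward, cluster by cluster in the equal parts of $\nu$, applying Lemma \ref{lm:symm} within each cluster to carry out the symmetrization exactly as in \eqref{eq:orth-aux} and its sequel. The crucial new point is that the two extra $v$-dependent factors are symmetric in the $u_i$'s and simply ride along: evaluated at the pinned residue locations $p_k+2\eta(\cdot)$ they telescope, converting $\prod_j f(u_i-v_j+2\eta)/f(u_i-v_j)$ into the factors $f(v_j-p_0-2\eta n_0)/f(v_j-p_0)$ and (from the $q_0$-factor) $f(\la+v_i-q_0+2\eta N)/f(v_i-q_0)$ of \eqref{eq:D-symm}, as is readily checked in the base case $\nu=0^N$.

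The hard part will be reconciling the form produced by the residues with the precise combinatorial shape of \eqref{eq:D-symm}. The residue evaluation naturally anchors everything at the $p$-poles, whereas \eqref{eq:D-symm} presents $D_\nu$ as a double sum, over subsets $I\subset\{1,\dots,n\}$ of size $N-n_0$ and over bijections onto $I$, with the $v$'s entering through $\psi_{\nu_i}(v_i)$ and the cross-terms $f(v_j-v_i-2\eta)/f(v_j-v_i)$. The subset sum must emerge from the freedom in which of the poles $p_0,\dots,p_{\nu_i}$ the variables in each positive-value cluster select (the standard lattice-path bookkeeping behind $D_{\nu/\mu}\neq 0$ only for $\nu\succ\mu$), and matching the resulting $v$-factors to $\psi_{\nu_i}(v_i)$ requires a genuine elliptic identity rather than mere telescoping. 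My plan is to organize this matching cluster-by-cluster and to reduce the per-cluster identity to another application of Lemma \ref{lm:symm}, mirroring how part (ii) of Theorem \ref{th:symm} was itself assembled through Lemma \ref{lm:C-symm}; the bookkeeping of the prefactor, in particular the appearance of $\prod_{i=-n}^{N-1}f(\la+2\eta i)$ and the reciprocal of $c_\nu(\la)$, I would track alongside the residue constants $1/f'(0)$ accumulated at each step, exactly as in the $c_\mu$ computation.
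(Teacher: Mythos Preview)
Your overall strategy---evaluate the $N$-fold integral by residues and match the result to the symmetrization formula \eqref{eq:D-symm} with $\la\mapsto\la-2\eta n$---is exactly what the paper does, and your treatment of the zero cluster (shrink $\ga_N,\ga_{N-1},\dots,\ga_{N-n_0+1}$ inward, take residues at $p_0,p_0+2\eta,\dots$, watch the $v$-factors telescope into $\prod_j f(v_j-p_0-2\eta n_0)/f(v_j-p_0)$) coincides with the paper's first step.

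The divergence, and the gap in your plan, is in how you handle the positive clusters. You propose to continue shrinking inward ``cluster by cluster, applying Lemma~\ref{lm:symm} within each cluster \dots\ exactly as in \eqref{eq:orth-aux}.'' But the analogy to the $c_\mu$ computation breaks down here: in \eqref{eq:orth-aux} the integrand carried the $S_M$-symmetrized $B_\mu$, and Lemma~\ref{lm:symm} was used to collapse that sum before any residues were taken; the present integrand has no such sum to symmetrize. Moreover, for $\nu_i\ge 1$ the factor $\psi_{\nu_i}(u_i)$ has poles at \emph{all} of $p_0,\dots,p_{\nu_i}$, so shrinking inward produces a sum over many $p$-residues, and the $v$-dependence of the answer would sit in products of the shape $\prod_j f(p_k+2\eta m-v_j+2\eta)/f(p_k+2\eta m-v_j)$, not in the $\psi_{\nu_i}(v_i)$ factors that \eqref{eq:D-symm} demands. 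You correctly flag this as ``a genuine elliptic identity rather than mere telescoping,'' but your proposed cure---another pass of Lemma~\ref{lm:symm}---does not obviously apply, and you would in effect be re-proving \eqref{eq:D-symm} from scratch.

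The paper's move avoids all of this. The point you are missing is that for $\nu_i\ge 1$ the factor $\psi_{\nu_i}(u_i)$ contains $f(u_i-q_0)$ in its numerator, which cancels the $1/f(u_i-q_0)$ in the integrand; hence the integrand has \emph{no poles outside} $\ga_1$ in the fundamental domain except at the $v_j$'s. So instead of shrinking the remaining $N-n_0$ contours inward, the paper \emph{expands} them outward one by one (starting with $\ga_1$), the boundary integral vanishes by periodicity, and one picks up exactly the residues at $u_1=v_{t_1}$, then $u_2=v_{t_2}$ with $t_2\ne t_1$, and so on. This directly manufactures the subset $I=\{t_1,\dots,t_{N-n_0}\}\subset\{1,\dots,n\}$ and the bijection $\sigma$ of \eqref{eq:D-symm}, with the cross-ratios $f(v_j-v_i-2\eta)/f(v_j-v_i)$ and the $\psi_{\nu_i}(v_i)$ factors falling out of the residue calculus with no further identities needed.
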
 
\begin{proof} We will establish the equality of the right-hand side of \eqref{eq:D-int} and that of \eqref{eq:D-symm}. The contours $\ga_j$ with $j=N,N-1,\dots,N-n_0+1$ will be shrunk to $p_0$ (in that order); they correspond to the $0$th cluster of $\nu$. The other $\ga_j$'s will be expanded to surround the $q_j$'s. As the integrand will have no poles there, we would only need to take into account the poles of the form $u_i=v_j$ that we encounter along the way. In order to do manipulations of this sort, we first need to check the double periodicity of the integrand viewed as a function in each of the $u_i$'s. 

As in the periodicity check in the proof of Theorem \ref{th:orth}, for a given $u_i$ we write all the $u_i$-dependent factors in the form $f(u_i+*)$. We then need to verify that the sum of $*$'s in the numerator is exactly the same as the similar sum in the denominator, which is immediate (one needs to use \eqref{eq:pq} along the way). 

Let us now start taking the residues. Looking at the residue at $u_N=p_0$ and writing out the factors that depend on $u_N$ we obtain 
$$
\prod_{i=1}^{N-1} \frac{f(u_i-p_0)}{f(u_i-p_0-2\eta)}\cdot \frac{f(\la+2\eta)}{f'(0)}\cdot
\frac{f(\la-2\eta\La_0+2\eta(N-n))}{f(-2\eta\La_0)}\cdot\prod_{j=1}^n \frac{f(v_j-p_0-2\eta)}{f(v_j-p_0)}\,.
$$
The next relevant residue is at $u_{N-1}=p_0+2\eta$ (assuming that $n_0\ge 2$), and the $u_{N-1}$-dependent terms give
$$
\prod_{i=1}^{N-2} \frac{f(u_i-p_0-2\eta)}{f(u_i-p_0-4\eta)}\cdot \frac{f(\la+4\eta)}{f'(0)}\cdot
\frac{f(\la-2\eta\La_0+2\eta(N-n+1))}{f(2\eta-2\eta\La_0)}\cdot\prod_{j=1}^n \frac{f(v_j-p_0-4\eta)}{f(v_j-p_0-2\eta)}\,.
$$
After taking the total of $n_0$ residues at $u_{N-i}=p_0+2\eta i$, $1\le i\le n_0$, we see that the right-hand side of \eqref{eq:D-int} is equal to  
\begin{multline}\label{eq:D-int-aux}
\prod_{i=1}^{n_0-1}\frac{f(\la+2\eta(i+1))f(\la-2\eta\La_0+2\eta(N-n+i))}{f'(0)f(2\eta(\La_0-i))}
\frac{(-1)^{N-n_0}(f(2\eta))^N}{\prod_{i=-n}^{N-1}f(\la+2\eta i) c_{\nu}(\la)}\prod_{j=1}^n \frac{f(v_j-p_0-2\eta n_0)}{f(v_j-p_0)}\\ \times\oint_{\ga_1}\dots\oint_{\ga_{N-n_0}} \prod_{i<j}\frac{f(u_i-u_j)}{f(u_i-u_j-2\eta)} \prod_{i=1}^{N-n_0} \Biggl(\psi_{\nu_i}(u_i) f(\la-u_i+p_{\nu_i}+2\eta+4\eta(N-i)-2\eta\La_{[0,\nu_i)})\\ \times \frac{f(u_i-p_0)}{f(u_i-p_0-2\eta n_0)}\cdot \frac{f(\la+u_i-q_0+2\eta(N-n))}{f(u_i-q_0)}\prod_{j=1}^n \frac{f(u_i-v_j+2\eta)}{f(u_i-v_j)}
\Biggr)\frac{du_i}{2\pi \i}\,.
\end{multline}

The $\nu_i$'s that remain in the integral are all $\ge 1$, thus the integrand has no poles at $u_i=q_0$, $1\le i\le N-n_0$. Let us deform the outer-most contour $\ga_1$ to the boundary of the fundamental parallelogram of Definition \ref{df:adm}. The integral along that boundary vanishes (because of the double periodicity of the integrand), and hence the result is the negative sum of the residues at $u_1=v_t$, $1\le t\le n$. Looking at $-\Res_{u_1=v_t}$, let us record the $u_1$-dependent terms (including the $(-1)$ in front of the residue):
\begin{multline*}
(-1)\prod_{j=2}^{N-n_0}\frac{f(v_t-u_j)}{f(v_t-u_j-2\eta)}\cdot\psi_{\nu_1}(v_t)f(\la-v_t+p_{\nu_1}+2\eta+4\eta(N-1)-2\eta\La_{[0,\nu_1)})\\ \times \frac{f(v_t-p_0)}{f(v_t-p_0-2\eta n_0)}\cdot \frac{f(\la+v_t-q_0+2\eta(N-n))}{f(v_t-q_0)}\cdot
\frac{f(2\eta)}{f'(0)}\prod_{\substack{1\le j\le n\\ j\ne t}} \frac{f(v_t-v_j+2\eta)}{f(v_t-v_j)}\,.
\end{multline*}

Substituting this expression into \eqref{eq:D-int-aux} we see that all factors of the form $f(u_i-v_t)$ for $2\le i\le (N-n_0)$ cancel out, and the remaining integrand is of the same kind as the original one, but with $u_1$ and $v_t$ removed. Repeating the same computation the total of $(N-n_0)$ times, we obtain that the integral in \eqref{eq:D-int-aux} together with the last product in the first line equals (uniting all $t$'s that we meet when computing residues at $u_i=v_t$ into a set $I\subset\{1,\dots,n\}$)
\begin{multline*}
\left(-\frac{f(2\eta)}{f'(0)}\right)^{N-n_0}\sum_{\substack{I\subset\{1,\dots,n\}\\ |I|=N-n_0}}\prod_{j\notin 
I}\frac{f(v_j-p_0-2\eta n_0)}{f(v_j-p_0)} \prod_{i\in I}
\frac{f(\la+v_i-q_0+2\eta (N-n))}{f(v_i-q_0)}\prod_{\substack{i\in I\\j\notin 
I}}\frac{f(v_j-v_i-2\eta)}{f(v_j-v_i)}\\
\times \sum_{\substack{\sigma:\{1,\dots,N-n_0\}\to I\\ \sigma\ \text{is a 
bijection}}}\sigma\left(\prod_{i<j}\frac{f(v_i-v_j+2\eta)}{f(v_i-v_j)}
\prod_{i=1}^{N-n_0}\psi_{\nu_i}(v_i)f(\la 
-v_i+p_{\nu_i}+2\eta+4\eta(N-i)-2\eta\La_{[0,\nu_i)})\right).
\end{multline*}
Substituting this into \eqref{eq:D-int-aux} and using \eqref{eq:squared-norm} yields \eqref{eq:D-symm} after elementary cancellations (note the additional shift of $\la$ in the left-hand side of \eqref{eq:D-int}). 
\end{proof}

\section{Graphical representation: an IRF model}\label{sc:IRF}

In the context of the (higher spin) six vertex model, see, e.g., \cite[Sections 2-3]{BP} for a detailed description, the action of the traditional algebraic Bethe ansatz $A,B,C,D$ operators (the matrix elements of the monodromy matrix) on the basis vectors in an evaluation Verma module are commonly pictured by vertices of the square lattice together with a collection of arrows entering and exiting the vertex, see Figure \ref{fig:vertex}.  
\begin{figure}
\includegraphics[scale=1]{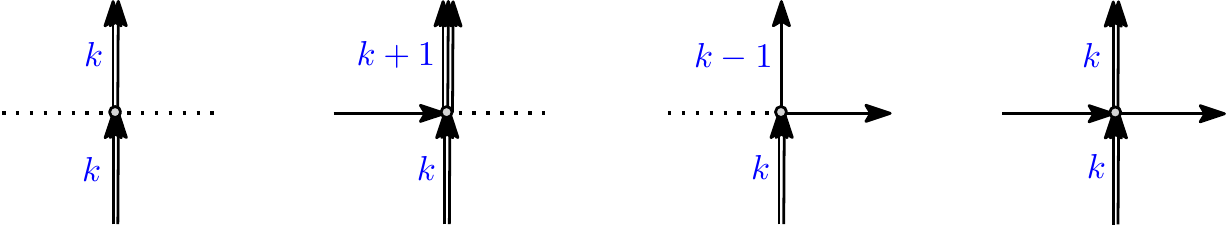}
\caption{Vertex representation of the higher spin six vertex $A,B,C,D$ operators.}\label{fig:vertex}
\end{figure}
This action is similar to (and is a degeneration of) our \eqref{eq:abcd}. For each vertex, the number of incoming arrows from the bottom corresponds to the index of the basis vector $e_k$ that an operator is being applied to, the number of outgoing arrows at the top corresponds to the index of the resulting basis vector $e_{l}$ with $l\in\{k-1,k,k+1\}$, and the weight associated to the vertex is equal to the coefficient of the $e_{l}$. The left edge may be occupied by a single incoming arrow, the right edge may be occupied by a single outgoing arrow, and their joint configuration determines which of the four operators $A,B,C,D$ is acting. Note that the total number of the incoming arrows is always equal to the total number of the outgoing ones. 

There is also a spectral parameter (similar to $w$ in \eqref{eq:abcd}) that is associated to the row in which the vertex is located, and also a spin (or a highest weight) parameter and an inhomogeneity parameter associated to its column (similar to $\Lambda$ and $z$ in \eqref{eq:abcd}). If the spin parameter is such that the corresponding Verma module has an irreducible finite-dimensional quotient, one speaks about a finite spin model and restricts the multiplicity of the vertical arrows to the indices of the basis vectors in this finite-dimensional quotient (the action of $A,B,C,D$ is also considered in this quotient). The case of dimension 2 quotient, often referred to as the \emph{spin $\frac 12$} situation, leads to six possible vertex types, i.e., to the six vertex model. 

Taking tensor products of highest weight modules corresponds to stacking vertices horizontally, while taking compositions of the $A,B,C,D$ operators corresponds to stacking vertices or strips of those vertically. For example, recalling the notation for basis vectors in tensor products of evaluation Verma modules from Section \ref{sc:inf-volume}, the coefficient of $E_\nu$ in a composition of four $B$-operators applied to $E_\mu$ can be represented as the sum of products of all vertex weights over the configurations pictured in Figure \ref{fig:paths_B}, where the boundary conditions, i.e., the arrows at the edges of the rectangle, are fixed, while the types of the inside vertices may vary. The spectral parameters of the $B$-operators correspond to the rows of the rectangle (their order is immaterial because these operators commute), while the spin and inhomogeneity parameters correspond to the columns and describe the modules in whose tensor product the action is happening. Note that the empty vertices can be 
ignored if they have weight 1, i.e. the $A$-operators are normalized so that $Ae_0=e_0$. 

\begin{figure}
\includegraphics[scale=1.2]{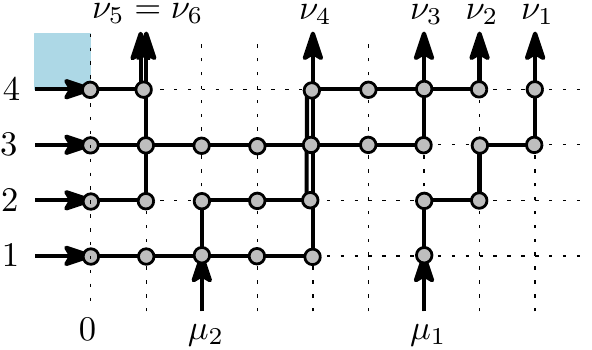}
\caption{Graphical representation of the action of four $B$-operators.}\label{fig:paths_B}
\end{figure}

We would like to extend this convenient graphical interpretation to the action of the operators $a,b,c,d$ of \eqref{eq:abcd}. The basic difficulty lies in the presence of the additional parameter $\la$. It can be taken into account with the help of an \emph{Interaction-Round-a-Face model} (IRF model, for short), see e.g. \cite[Chapter 13]{Bax} for a general discussion of such models. The name \emph{Solid-on-Solid} or \emph{SOS model}  is also used. 

\begin{figure}
\includegraphics[scale=0.75]{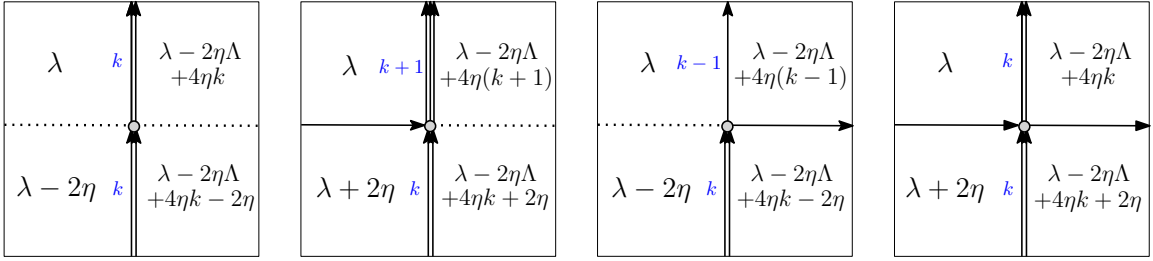}
\caption{The IRF plaquettes corresponding to the vertices in Figure \ref{fig:vertex}.}
\label{fig:vertices-IRF}
\end{figure}

As a first step, we replace the elementary vertices of Figure \ref{fig:vertex} by $2\times 2$ plaquettes that surround the vertex, see Figure \ref{fig:vertices-IRF}. We place a parameter in each of the four unit squares; we shall call it the \emph{dynamic parameter} or the \emph{filling} of a unit square. 

The top left filling in the plaquettes of Figure \ref{fig:vertices-IRF} is always chosen as $\lambda$; it corresponds to the $\la$ used in the definition of the action in  \eqref{eq:abcd}. The other three fillings are determined by the following rules: (a) crossing a $k$-fold vertical arrow left-to-right adds $4\eta k -2\eta \La$ to the dynamic parameter, where $\La$ is the parameter of the evaluation Verma module associated to the column in which the arrow lives; (b) crossing a $\delta$-fold arrow top-to-bottom ($\delta$ is either 0 or 1) adds $(2\delta-1)2\eta$ to the dynamic parameter. Figure \ref{fig:vertices-IRF} shows that these rules are consistent. As before, we also have a spectral parameter $w$ associated to the row of the vertex, as well as the highest weight parameter $\La$ and an inhomogeneity parameter $z$ associated to the column of the vertex. They are all used in evaluating the weight of the plaquette, which, by definition, is the coefficient of $e_*$ in the right-hand side of \eqref{eq:abcd}
. 

Observe that once we know the fillings of the unit squares, the arrows can be removed from the picture because they can be reconstructed uniquely. Further, one can assign these fillings to the vertices of the dual square lattice, and the weight of the plaquette (which becomes an elementary square of the dual lattice) is determined by the fillings of the four vertices surrounding it. This explains the `Interaction-Round-A-Face' term. However, it will be more convenient for us to work with the original lattice and the arrows on its edges. 

The case of $\Lambda=1$ deserves a special attention --- it corresponds to the dimension 2 irreducible quotient of $V_\La(z)$, and thus extends the six vertex model. The six possibilities for the plaquettes in this case are pictured in Figure \ref{fig:vertices-6v}. Note that the fillings of neighboring unit squares always differ by $\pm 2\eta$. 

\begin{figure}
\includegraphics[scale=1]{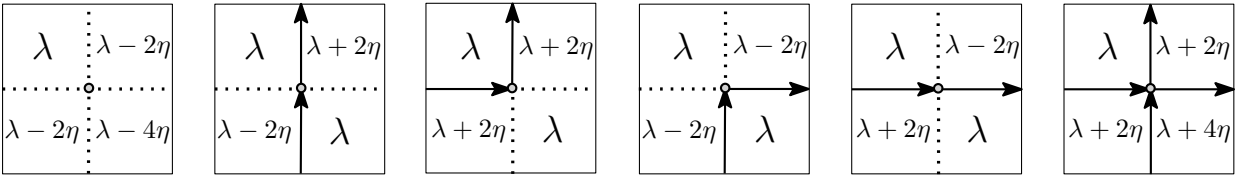}
\caption{The spin $\frac 12$ IRF vertices.}
\label{fig:vertices-6v}
\end{figure}

It is immediate to verify that tensor products \eqref{eq:tensor} are modeled by stacking the plaquettes horizontally, and compositions of operators \eqref{eq:tilde-abcd} correspond to stacking the plaquettes vertically, thus directly extending the graphical interpretation for the higher spin six vertex model discussed above. One only needs to make sure that the top-left unit square of the resulting rectangle (painted blue in Figure \ref{fig:paths_B}) is filled by $\la$. As an example, we have the following statement. 

\begin{proposition}\label{pr:B-picture} For any nonnegative signatures $\nu$ and $\mu$, the coefficient $B_{\nu/\mu}(\la;w_1,\dots,w_n)$ is equal to the sum of products of weights of all plaquettes over all possible vertex configurations of the type pictured in Figure \ref{fig:paths_B} with the prescribed boundary conditions, $n$ rows, row spectral parameters $w_1,\dots,w_n$, and the (painted) top left unit square filled by $\la$ (the dynamic parameters of the other boxes is then uniquely reconstructed from the vertex configuration). 
\end{proposition}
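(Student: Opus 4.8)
The plan is to unfold the algebraic definition of $B_{\nu/\mu}$ from \eqref{eq:B} into a sum over vertex configurations and to match each ingredient with the graphical data of Figure \ref{fig:paths_B}. First I would recall that, by \eqref{eq:B}, the quantity $B_{\nu/\mu}(\la;w_1,\dots,w_n)$ is the coefficient of $E_\nu$ in
\[
b(\la,w_1)\,b(\la+2\eta,w_2)\cdots b(\la+2\eta(n-1),w_n)\,E_\mu,
\]
evaluated in a finite tensor product $V_{\La_0}(z_0)\otimes\cdots\otimes V_{\La_N}(z_N)$ with $N>\max(\mu_1,\nu_1)$. This truncation is legitimate since $a(\la,w)e_0\equiv e_0$ makes the far columns contribute trivial, weight-$1$ empty vertices, so the answer is independent of $N$ and the resulting configuration sum is finite.

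Next I would insert a complete set of basis vectors between each pair of consecutive operators $b(\,\cdot\,,w_r)$, and then expand each single operator application in the tensor product via the matrix product \eqref{eq:tensor}, inserting a basis vector of each tensor factor. A complete choice of all such intermediate vectors is exactly a choice of all internal (vertical and horizontal) arrows in the rectangle of Figure \ref{fig:paths_B}: the vertical arrow multiplicities between rows record the indices of the intermediate basis vectors, and the horizontal arrow occupations within a row record which of $a,b,c,d$ acts in each column. Each such full history contributes the product, over all vertices, of the corresponding coefficients from \eqref{eq:abcd}, which by the definition of a plaquette weight in Section \ref{sc:IRF} is precisely the product of plaquette weights for that configuration. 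Summing over histories yields the sum over configurations in the statement.

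The substantive point to verify is that the dynamic parameters (fillings) produced by the local IRF rules (a) and (b) agree everywhere with the $\la$-arguments actually fed into \eqref{eq:abcd} by the algebra. I would check this in two steps. Horizontally: passing from one column (tensor factor) to the next, formula \eqref{eq:tensor} replaces $\la$ by $\la-2\eta h^{(1)}$, where $h^{(1)}=\La-2k$ is the weight of the output $e_k$ of the column just crossed; since $-2\eta(\La-2k)=4\eta k-2\eta\La$, this is exactly rule (a) for crossing a $k$-fold vertical arrow left-to-right. Vertically: passing from one row to the one below, \eqref{eq:tilde-abcd} shifts $\la$ by $+2\eta$ for $\b$ (and $\d$) and by $-2\eta$ for $\a$ (and $\c$); since a $B$- (or $D$-) vertex carries a left horizontal arrow ($\delta=1$) while an $A$- (or $C$-) vertex does not ($\delta=0$), the expression $(2\delta-1)\cdot 2\eta$ reproduces both signs, which is rule (b). Together with the normalization of \eqref{eq:B} placing $\la$ at the top-left unit square (the top row being $b(\la,w_1)$ and the leftmost column being $V_{\La_0}(z_0)$), this shows every filling equals the correct $\la$.

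Finally I would match the boundary data: the bottom vertical arrows encode $E_\mu$, the top vertical arrows encode $E_\nu$, and each row, being an application of a single $\b$, has one incoming horizontal arrow on the left and no outgoing arrow on the right --- precisely the $B$-type boundary of Figure \ref{fig:paths_B}. I expect the only real care to lie in the bookkeeping of the previous paragraph, namely in confirming that the local filling rules are globally consistent (path-independent) and correctly aligned with the operator-ordering conventions of \eqref{eq:B}; this consistency was already noted when the rules were introduced around Figure \ref{fig:vertices-IRF}, so no genuine obstacle remains beyond careful convention-matching.
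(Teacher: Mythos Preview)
Your proposal is correct and follows exactly the approach the paper has in mind: the paper states just before the proposition that ``it is immediate to verify that tensor products \eqref{eq:tensor} are modeled by stacking the plaquettes horizontally, and compositions of operators \eqref{eq:tilde-abcd} correspond to stacking the plaquettes vertically,'' and then records Proposition~\ref{pr:B-picture} without further proof. Your write-up simply makes this immediate verification explicit, checking that the shifts $\la\mapsto\la-2\eta h^{(1)}$ from \eqref{eq:tensor} and $\la\mapsto\la\pm2\eta$ from \eqref{eq:tilde-abcd} reproduce the filling rules (a) and (b), and that the operator ordering in \eqref{eq:B} places $\la$ at the top-left square.
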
  

Note that in the finite spin case, $\La_j\equiv I+(m+l\tau)/2\eta$ for $I\ge 0,m,l$ all integers, if $\nu$ in the above proposition has no parts of multiplicity $\ge (I+1)$, then no path configuration with a vertical edge of multiplicity $\ge(I+1)$ can give to a nonzero contribution to $B_{\nu/\mu}(\la;w_1,\dots,w_n)$. This is due to the vanishing of the right-hand side of the third relation in \eqref{eq:abcd} for $k=I+1$ because of the $f(2(\La+1-k)\eta)$ factor. 

\section{A pre-stochastic IRF model}\label{sc:stoch-IRF} In what follows we shall call a parameter-dependent matrix \emph{pre-stochastic} if it becomes stochastic when the parameters are such that all its matrix elements are nonnegative. In other words, a pre-stochastic matrix is a matrix all of whose row-sums are equal to one. Similarly, we shall use the term `pre-stochastic' for other objects that become stochastic or probabilistic under the condition of nonnegativity of some relevant quantities. 

The relation \eqref{eq:pieri2} states that the matrix with rows parameterized by signatures $\nu$ of length $\ell(\nu)=N\ge 1$, columns parameterized by signatures $\ka$ of length $\ell(\ka)=N+1$, and matrix elements 
\begin{equation}\label{eq:B-conj}
const(\lambda,u,v_1,\dots,v_l)\cdot \frac{D_{\ka}(\la-2\eta l;v_1,\dots,v_l)}{D_{\nu}(\la+2\eta-2\eta l ;v_1,\dots,v_l)}\,B_{\ka/\nu}(\la;u) 
\end{equation}
is pre-stochastic. Our current goal is to create a \emph{local} (pre-)Markov chain out of it, where the locality means that the matrix elements depend only on the interaction of neighboring parts of the interlacing signatures $\ka\succ\nu$. The skew-function $B_{\ka/\nu}(\la+2\eta l;u)$ is a local quantity --- it can be obtained by multiplying weights of the plaquettes stacked horizontally in a single row, see the previous section. On the other hand, $D_{\nu}(*;v_1,\dots,v_l)$ and $D_\ka(*;v_1,\dots,v_l)$ in the above expression are in general highly nonlocal, cf. Theorem \ref{th:symm}(ii). We will thus look for a way of simplifying these quantities. 

What we do next mimics, to a large extent, the remarkable `specialization $\rho$' introduced in \cite[Sections 6.3]{BP, BP-lect} in the context of the higher spin six vertex model. 

Observe that the way that the $v_j$'s enter the integrand of the integral representation \eqref{eq:D-int} is through the product of factors of the form $f(u_i-v_j+2\eta)/f(u_i-v_j)$ (which, in view of the orthogonality Theorem \ref{th:orth}, come from the right-hand side of the Cauchy identity \eqref{eq:cauchy}). We can simplify this product by the following substitution:
\begin{equation}
\label{eq:simplification}
\left(\prod_i\prod_{j=1}^n \frac{f(u_i-v_j+2\eta)}{f(u_i-v_j)}\right)\Biggl |_{(v_1,\dots,v_n)=(v,v-2\eta,\dots,v-2\eta(n-1))}=\prod_i \frac{f(u_i-v+2\eta n)}{f(u_i-v)}.
\end{equation}
Next, we choose $v=p_0+2\eta n$ so that $f(u_i-v+2\eta)=f(u_i-p_0)=\psi_0(v)$, which simplifies the integrand of \eqref{eq:D-int} further if the signature $\nu$ has zero parts. The last step is to get rid of the $n$-dependence in the integrand, which is contained in the expression (the denominator is from \eqref{eq:simplification})
\begin{equation}\label{eq:n-dep}
\prod_i\frac{f(\la+u_i-q_0+2\eta(N-n))}{f(u_i-p_0-2\eta n)}\,.
\end{equation}

One possibility is to assume that $2\eta$ is an element of $\C/(\Z+\tau \Z)$ of finite order, i.e., $f(z+2\eta m)\equiv f(z)$ for any $x\in \C$ and some $m\in\Z$. This is certainly very interesting, and we hope to return to this case in a future work, but this is not what we consider below. 

Instead, we will now restrict ourselves to the trigonometric case arising as $\tau\to+\i \infty$, i.e., from now on we assume that $f(x)=\sin \pi x$, cf. Section \ref{sc:prelim}. Then, assuming $\eta\notin\R$, we can actually take a limit of \eqref{eq:n-dep} as $n\to\infty$ and obtain a constant. Since the value of this constant is irrelevant to us, we will simply replace \eqref{eq:n-dep} by 1. There is also an $n$-dependent prefactor of the integral in the right-hand side of \eqref{eq:D-int}, which can be removed by passing from $D_\nu$'s to $\Dnorm_\nu$'s of \eqref{eq:BD-norm}. 
This leads the following definition of the `specialization $\rho$'. 

\begin{definition}\label{df:rho}
In the trigonometric case $f(z)= \sin\pi z$, and assuming the admissibility of the parameters in the sense of Definition \ref{df:adm}, we define, for any $\nu=(\nu_1\ge\dots\ge\nu_N\ge 0)$, $\Dnorm_\nu(\la;\rho)$ via the following integral, cf. \eqref{eq:BD-norm}, \eqref{eq:D-int}:
\begin{multline}
\label{eq:D-rho-int}
\Dnorm_\nu(\la;\rho)=
\frac{(-1)^N(f(2\eta))^N}{\prod_{i=0}^{N-1}f(\la+2\eta i)\cdot c_{\nu}(\la)}\,\oint_{\ga_1}\dots\oint_{\ga_N} \prod_{i<j}\frac{f(u_i-u_j)}{f(u_i-u_j-2\eta)}\\ \times \prod_{i=1}^N \Biggl(\psi_{\nu_i}(u_i) f(\la-u_i+p_{\nu_i}+2\eta+4\eta(N-i)-2\eta\La_{[0,\nu_i)})\cdot  \frac{f(u_i-p_0)}{f(u_i-q_0)}
\Biggr)\frac{du_i}{2\pi \i}\,,
\end{multline}
where $c_\nu$ is given by \eqref{eq:squared-norm}, and the integration contours are as in Definition \ref{df:adm}. We also agree that $\Dnorm_\varnothing(\la;\rho)=1$. 
\end{definition}
Note that following our logic, by comparison with \eqref{eq:D-int} $\Dnorm_\nu(\la;\rho)$ should have been $\Dnorm_\nu(\la-2\eta n;\rho)$, but since we took $n$ to infinity, we also removed it from the notation. 

Let us now check that the result of the above simplification indeed leads to something reasonably simple. 

\begin{proposition}\label{pr:D-rho-expl} In the notation and assumptions of Definition \ref{df:rho}, for any $\nu=(\nu_1\ge\dots\ge\nu_N\ge 0)$ we have
\begin{equation} \label{eq:D-rho-expl}
\Dnorm_\nu(\la;\rho)=\mathbf{1}_{\nu_N\ge 1}\cdot \frac{(-1)^N (f(2\eta))^N}{\pi^N\cdot c_{\nu}(\la)}\cdot\prod_{i=0}^{N-1} \frac{f(\la-2\eta\La_0+2\eta(i+1))}{f(\la+2\eta i)}\,,
\end{equation}
where $c_\nu$ is given by \eqref{eq:squared-norm}. 

\end{proposition}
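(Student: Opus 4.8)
The plan is to evaluate the integral in \eqref{eq:D-rho-int} directly by residues, closely following the squared-norm computation in the proof of Theorem~\ref{th:orth} and the residue cascade in the proof of \eqref{eq:D-int}. Throughout we are in the trigonometric case $f(z)=\sin\pi z$, so $f'(0)=\pi$, and (exactly as in those proofs) the integrand is meromorphic and $1$-periodic in each $u_i$, which legitimizes the contour manipulations. The first step is to simplify the column-$0$ factors. For each index $i$ with $\nu_i\ge 1$ the three factors carrying the index $0$ telescope,
$$
\psi_{\nu_i}(u_i)\,\frac{f(u_i-p_0)}{f(u_i-q_0)}=\frac{1}{f(u_i-p_{\nu_i})}\prod_{j=1}^{\nu_i-1}\frac{f(u_i-q_j)}{f(u_i-p_j)},
$$
so that the pole at $p_0$ is cancelled, while for $\nu_i=0$ one gets simply $1/f(u_i-q_0)$.

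The indicator $\mathbf{1}_{\nu_N\ge1}$ comes out of the innermost integration. By Definition~\ref{df:adm} the contour $\ga_N$ is the innermost, it encircles none of the $q_j$, and (by the nesting of the $\ga_i$) the cross-factor poles $u_N=u_i-2\eta$ with $i<N$ lie outside $\ga_N$. Hence if $\nu_N=0$ the $u_N$-integrand equals $1/f(u_N-q_0)$ times factors holomorphic inside $\ga_N$, its only candidate singularity $q_0$ is excluded, and $\oint_{\ga_N}$ vanishes; so the whole integral vanishes unless $\nu_N\ge1$.

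For $\nu_N\ge1$ (every part of $\nu$ is $\ge1$) I would compute the integral by taking residues from the innermost contour outward. The mechanism, as in the cited proofs, is that after a residue is taken at some $u_j=p_\ast$, the cross-factor numerators $f(u_i-u_j)$ cancel the matching poles of the remaining variables and reintroduce them shifted by $2\eta$, generating a $2\eta$-string cascade in which each residue contributes a factor $1/f'(0)=1/\pi$. In the base case $\nu=1^N$ every $\tilde\psi_{\nu_i}$ has the single pole $p_1$, the cascade pins $u_{N-k}$ at $p_1+2\eta k$, and evaluating the $N$ residues gives directly $\frac{1}{\pi^N}\prod_{i=0}^{N-1}f(\la-2\eta\La_0+2\eta(i+1))$; multiplying by the prefactor of \eqref{eq:D-rho-int} yields exactly \eqref{eq:D-rho-expl} in this case.

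For general $\nu$ the functions $\tilde\psi_{\nu_i}$ carry extra poles $p_1,\dots,p_{\nu_i}$ and zeros $q_1,\dots,q_{\nu_i-1}$, so the cascade becomes a finite sum over residue assignments rather than a single string. The heart of the argument, and the step I expect to be the main obstacle, is to show that this sum collapses to the base-case value, i.e.\ that the integral is independent of $\nu$ (given $\nu_N\ge1$). I would establish this by repeated use of the three-term (addition) identity for $f$: summing the residues produced by a pair of neighbouring $p$'s recombines, via $f(P)f(Q+R)+f(Q-P)f(R)=f(Q)f(P+R)$, into the contribution of a single string position, thereby removing the dependence on the corresponding $\La_j$ and $q_j$ (I have checked this collapse in the smallest nontrivial case). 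This is the analogue here of the cluster-factorization combinatorics of \cite[Lemma~3.5]{BCPS14} and \cite[Lemma~7.1]{BP}, now powered by the trigonometric three-term relation; an alternative would be to specialize the explicit formula \eqref{eq:D-symm} and take the $n\to\infty$ limit, but the same collapse of the $I$- and $\sigma$-sums reappears there. Once $\nu$-independence is secured, the base case $\nu=1^N$ finishes the proof.
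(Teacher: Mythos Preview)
Your handling of the indicator $\mathbf{1}_{\nu_N\ge1}$ and the base case $\nu=1^N$ is correct, but the general case has a genuine gap: the ``collapse'' of the residue sum over pole assignments is only asserted and checked in one small instance, not proved. Establishing $\nu$-independence of the inward residue expansion via the three-term identity would require a nontrivial combinatorial induction that you have not carried out, and it is not at all clear that the pairwise recombination you describe organizes itself cleanly for arbitrary cluster structures of $\nu$.

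The paper avoids this difficulty entirely by moving the contours in the opposite direction. Once $\nu_N\ge1$, your own telescoping observation shows that the integrand has no pole at $u_i=q_0$, hence no poles at all outside $\gamma_1$ within the fundamental strip. One then expands $\gamma_1,\gamma_2,\dots,\gamma_N$ (in that order) out to the boundary of the strip: the vertical sides cancel by $1$-periodicity, and on the horizontal sides one sends $\Im u_i\to\pm\infty$ and uses $f(u-a)/f(u-b)\to e^{\pm i\pi(a-b)}$. A short bookkeeping of the shifts $(a-b)$ shows that the $u_1$-dependent part of the integrand tends to $-e^{\pm i\pi(\lambda+2\eta N-2\eta\Lambda_0)}$, so the $u_1$-integral contributes $\pi^{-1}f(\lambda+2\eta N-2\eta\Lambda_0)$; the computation for $u_2,\dots,u_N$ is identical with $N$ replaced by $N-1,\dots,1$. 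The crucial point is that the $\nu$-dependent shifts coming from $\prod_{j=1}^{\nu_i-1}\frac{f(u_i-q_j)}{f(u_i-p_j)}$ contribute $2\eta\Lambda_{[1,\nu_i)}$ while the explicit $f$-factor contributes $-2\eta\Lambda_{[0,\nu_i)}$, and these cancel up to the fixed $-2\eta\Lambda_0$. Thus the $\nu$-independence you were trying to force by residue combinatorics drops out automatically from the asymptotics at infinity, and no collapse argument is needed.
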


\begin{proof} We need to evaluate the right-hand side of \eqref{eq:D-rho-int}. Observe that if $\nu_N=0$, then $\psi_{\nu_N}(u_N)=1/(f(u_N-p_0))$, and the $u_N$ integration contour $\ga_N$ has no singularities inside; hence, the integral vanishes. From now on assume that all $\nu_i$ are $\ge 1$. Then, using the definition \eqref{eq:phi-psi} of $\psi_{\nu_i}(u_i)$, we see that the integrand has no poles at $u_i=q_0$, and thus it has no poles outside $\gamma_1$  and inside the fundamental strip. Let us now deform the contours to the `boundary' of the fundamental strip one by one starting with $\gamma_1$, which is the outermost one. 

We deform $\gamma_1$ to the (positively oriented) boundary of the rectangle with the left and right sides going along the boundaries of the fundamental strip of Definition \ref{df:adm} (it has width 1), and the top and bottom sides joining the strip boundaries at heights $\Im u_1=-M$ and $\Im u_1=M$. The integrals along the boundaries of the strip cancel each other because of the 1-periodicity of the integrand, cf. the beginning of the proof of Theorem \ref{th:orth}. On the other hand, to evaluate the integrals along the top and bottom sides of the rectangle we send $M\to\infty$ and use (recall that we are in the trigonometric case with $f(z)= \sin \pi z$)
\begin{equation}\label{eq:lim-sin}
\frac{f(x+\i y-a)}{f(x+\i y-b)}\longrightarrow\begin{cases}e^{\i\pi(a-b)},&y\to+\infty, \\
e^{-\i\pi(a-b)}, &y\to -\infty.
\end{cases}
\end{equation}
Hence, the $u_1$-dependent part of the integrand tends, as $u_1\to\pm \i\infty$ along a vertical line $\Re u_1=const$, to
$$
-\exp(\pm\i\pi(\la+2\eta N-2\eta\La_0)). 
$$
Since this expression is independent of $\Re u_1$, its integral over the top and bottom sides of the rectangle is simply the negative difference of the corresponding limiting values, which after the normalization by $2\pi\i$ gives 
\begin{equation}\label{eq:sine}
\frac{\sin\pi(\la+2\eta N-2\eta\La_0)}{\pi}\,.
\end{equation} 
Repeating the argument for integration over $u_2,\dots,u_N$ (in that order) yields \eqref{eq:D-rho-expl}.
\end{proof} 

Specializing into $\rho$ also simplifies the Cauchy identity \eqref{eq:cauchy}.

\begin{proposition}\label{pr:cauchy-rho} In the trigonometric case $f(z)= \sin\pi z$, assuming the admissibility of the parameters in the sense of Definition \ref{df:adm}, and with $\Dnorm_\nu(\la;\rho)$ as in \eqref{eq:D-rho-int} or \eqref{eq:D-rho-expl}, for any $u_j$'s close enough to the $\{p_j\}_{j\ge 0}$ we have
\begin{equation}\label{eq:cauchy-rho}
\sum_{\ka} \Dnorm_\ka(\la;\rho)\Bnorm_\ka(\la;u_1,\dots,u_N)=(-f(2\eta))^{N}\prod_{i=1}^N \frac{f(u_i-p_0)}{f(u_i-q_0)}\,.
\end{equation}
\end{proposition}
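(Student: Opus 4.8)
The plan is to obtain \eqref{eq:cauchy-rho} as the $n\to\infty$ limit of the full Cauchy identity \eqref{eq:cauchy} under the $\rho$-specialization of its auxiliary variables, which is exactly the heuristic that motivated Definition \ref{df:rho}. First I would apply \eqref{eq:cauchy} with $k=N$ external variables $u_1,\dots,u_N$ and $l=n$ auxiliary variables, writing its shared parameter as $\la-2\eta n$ so that the shift in the second factor produces precisely $\Bnorm_\ka(\la;u_1,\dots,u_N)$ (which is $n$-independent). I would then specialize $v_j=p_0+2\eta(n-j+1)$, i.e. $(v_1,\dots,v_n)=(p_0+2\eta n,p_0+2\eta n-2\eta,\dots,p_0+2\eta)$, which is the specialization behind \eqref{eq:simplification} with $v=p_0+2\eta n$.

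On the right-hand side of \eqref{eq:cauchy} the double product telescopes by \eqref{eq:simplification} to $\prod_i f(u_i-p_0)/f(u_i-p_0-2\eta n)$, while the boundary factor is rewritten using $f(z_0-u_i+(\La_0+1)\eta)=-f(u_i-q_0)$ from \eqref{eq:pq}. Collecting these gives $(-f(2\eta))^N\prod_i\bigl(f(u_i-p_0)/f(u_i-q_0)\bigr)$ times the $n$-dependent factor $\prod_i f\bigl(u_i+\la-z_0+(-\La_0+2N-1)\eta-2\eta n\bigr)/f(u_i-p_0-2\eta n)$. Since we are in the trigonometric case with $\eta\notin\R$, the quantity $-2\eta n$ escapes to infinity transversally to $\R$, so \eqref{eq:lim-sin} shows each ratio tends to the constant $C:=\exp\bigl(\mp\i\pi(\la-2\eta\La_0+2\eta N)\bigr)$, the sign being fixed by $\mathrm{sign}(\Im\eta)$; hence this whole factor converges to $C^N$.

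The matching analysis of the left-hand side rests on \eqref{eq:D-int} and \eqref{eq:BD-norm}. Passing from $D_\ka(\la-2\eta n;v)$ to $\Dnorm_\ka(\la-2\eta n;v)$, the explicit prefactor $1/\prod_{i=-n}^{N-1}f(\la+2\eta i)$ telescopes against $\prod_{j=0}^{n-1}f(\la-2\eta n+2\eta j)=\prod_{i=-n}^{-1}f(\la+2\eta i)$ and collapses to exactly $1/\prod_{i=0}^{N-1}f(\la+2\eta i)$, the prefactor of \eqref{eq:D-rho-int}. The only surviving $n$-dependence is then the factor \eqref{eq:n-dep} inside the $u$-integral; on the fixed contours $\ga_1,\dots,\ga_N$ the integration variables stay bounded while $-2\eta n\to\infty$, so by \eqref{eq:lim-sin} the factor \eqref{eq:n-dep} converges \emph{uniformly} to a constant and may be pulled out of the integral. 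Crucially, the offset between numerator and denominator in \eqref{eq:n-dep} equals $\la-2\eta\La_0+2\eta N$ (using $q_0-p_0=2\eta\La_0$), which is identical to the offset produced on the right-hand side, so the constant is the same $C^N$. This gives $\lim_{n\to\infty}\Dnorm_\ka(\la-2\eta n;v_\rho)=C^N\,\Dnorm_\ka(\la;\rho)$ with $\Dnorm_\ka(\la;\rho)$ as in Definition \ref{df:rho}, and after cancelling the common $C^N$ from the two sides one is left with precisely \eqref{eq:cauchy-rho}.

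The one genuinely delicate step — and the one I expect to be the main obstacle — is interchanging the limit $n\to\infty$ with the infinite sum over $\ka$ on the left-hand side, which ranges over all signatures of length $N$ and is infinite even in the finite-spin case (and infinite even after the constraint $\ka_N\ge 1$ forced by Proposition \ref{pr:D-rho-expl}). I would justify this by a dominated/uniform-convergence estimate valid exactly in the regime where the $u_j$ are close to the $p_j$, which is the hypothesis of the proposition and is the ``analytic control on the convergence of the Cauchy identity'' alluded to earlier in the text. Everything else is bookkeeping: the telescoping \eqref{eq:simplification}, the prefactor collapse via \eqref{eq:BD-norm}, and the careful matching of the two limiting constants through \eqref{eq:lim-sin}. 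As an independent consistency check one could also verify \eqref{eq:cauchy-rho} a posteriori by pairing both sides against the orthogonality functional of Theorem \ref{th:orth}, whereupon \eqref{eq:D-rho-int} reproduces $\Dnorm_\nu(\la;\rho)$ on the nose; this route, however, additionally presupposes a completeness statement for the $\Bnorm_\ka$ that is not established here, so I would keep the limiting argument as the primary proof.
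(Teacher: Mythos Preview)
Your approach is essentially the same as the paper's: specialize the auxiliary variables in the Cauchy identity \eqref{eq:cauchy} via \eqref{eq:simplification} with $v=p_0+2\eta n$, use the integral representation \eqref{eq:D-int} on the left, and send $n\to\infty$, with the interchange of limit and sum justified by uniform convergence when the $u_j$ are close to the $p_j$. Your explicit tracking of the constant $C^N$ and its cancellation between the two sides is a more careful version of what the paper compresses into ``the same specialization and limit transition happens in the right-hand side''; the only small point you omit is the final observation that the auxiliary hypothesis $\Im\eta\ne 0$ can be dropped afterwards by analyticity of both sides in $\eta$.
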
 
\begin{proof} We just need to take the limit of \eqref{eq:cauchy}. The assumptions make sure that the summation converges uniformly throughout the limit transition. Indeed, one easily sees from Theorem \ref{th:symm} that $B_\nu$ and $D_\nu$ can be both bounded by $const^{\sum \nu_i}$ as $\nu$ gets large, and by taking $u_i$'s close enough to the $p_j$'s (that are sufficiently close to each other by the assumptions of Definition \ref{df:adm}) one makes sure that $B_\nu$'s decay faster than $D_\nu$'s may grow. 

As we discussed just before Definition \ref{df:rho}, one uses the integral representation \eqref{eq:D-int} for the $D_\nu$'s, specializes $v_j$'s as in $\eqref{eq:simplification}$, and takes $n\to\infty$ in \eqref{eq:n-dep}. The same specialization and limit transition happens in the right-hand side of \eqref{eq:cauchy}. The integral representation \eqref{eq:D-int} readily implies that the series remains uniformly convergent throughout the limit. The condition $\Im\eta\ne 0$ required for the $n\to\infty$ limit of \eqref{eq:n-dep} to exist, may be dropped in the end as both sides of \eqref{eq:cauchy-rho} are analytic in $\eta$. 
\end{proof}

Let us now proceed towards the goal declared in the beginning of the section, i.e., let us investigate what happens to the pre-stochastic matrix \eqref{eq:B-conj} when we specialize the $v_j$'s into $\rho$. 

\begin{definition}\label{df:B-stoch} In the trigonometric case $f(z)\equiv \sin\pi z$, define 
the (pre-)stochastic skew $B$-functions as
\begin{equation}\label{eq:B-stoch}
\Bstoch_{\ka/\nu}(\la;u_1,\dots,u_k)=\frac{(-1)^k}{(f(2\eta))^k}\prod_{i=1}^k \frac{f(u_k-q_0)}{f(u_k-p_0)}\cdot \frac{\Dnorm_\ka(\la;\rho)}{\Dnorm_\nu(\la+2\eta k;\rho)}\,\Bnorm_{\ka/\nu}(\la;u_1,\dots,u_k)
\end{equation}
for any signatures $\ka$ and $\nu$ all of whose parts are $\ge 1$, and with $\Dnorm_*(\la;\rho)$ as in \eqref{eq:D-rho-expl}.  
\end{definition}

One readily sees that these functions satisfy the same branching relation \eqref{eq:B-branching} as the usual skew $B$-functions. Also, since $\Dnorm_\varnothing(\la,\rho)\equiv 1$ (which is natural given that $\Dnorm_{\varnothing}(\la,v_1\dots,v_n)\equiv 1$, cf. \eqref{eq:D-empty}, \eqref{eq:BD-norm}), we see that \eqref{eq:cauchy-rho} can be restated as $\sum_\ka \Bstoch_\ka(\la;\rho)=1$. This is a special case of the following more general statement that also justifies the superscript in the notation for the $\Bstoch_{\ka/\nu}$'s. 

\begin{proposition}\label{pr:B-stoch-sum}
In the trigonometric case $f(z)= \sin\pi z$, and assuming the admissibility of the parameters in the sense of Definition \ref{df:adm}, for any $k\ge 1$ and any signature $\nu$ all of whose parts are $\ge 1$, we have
\begin{equation}\label{eq:B-stoch-sum}
\sum_\ka \Bstoch_{\ka/\nu} (\lambda;u_1,\dots,u_k)=1. 
\end{equation}
\end{proposition}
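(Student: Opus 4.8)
The plan is to reduce the claim to a skew generalization of the specialized Cauchy identity \eqref{eq:cauchy-rho}, namely
\[
\sum_\ka \Dnorm_\ka(\la;\rho)\,\Bnorm_{\ka/\nu}(\la;u_1,\dots,u_k)=(-f(2\eta))^{k}\prod_{i=1}^k\frac{f(u_i-p_0)}{f(u_i-q_0)}\;\Dnorm_\nu(\la+2\eta k;\rho),
\]
valid for every signature $\nu$ all of whose parts are $\ge 1$. Granting this, the Proposition is immediate: substituting Definition \ref{df:B-stoch} into $\sum_\ka \Bstoch_{\ka/\nu}$ and pulling the $\ka$-independent prefactors out of the sum gives
\[
\sum_\ka \Bstoch_{\ka/\nu}(\la;u_1,\dots,u_k)=\frac{(-1)^k}{(f(2\eta))^k}\prod_{i=1}^k\frac{f(u_i-q_0)}{f(u_i-p_0)}\cdot\frac{1}{\Dnorm_\nu(\la+2\eta k;\rho)}\sum_\ka \Dnorm_\ka(\la;\rho)\,\Bnorm_{\ka/\nu}(\la;u),
\]
and the displayed skew identity makes every factor cancel, leaving $1$. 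Here one uses that $\Dnorm_\nu(\la+2\eta k;\rho)\ne 0$ precisely because $\nu$ has no zero parts, by Proposition \ref{pr:D-rho-expl}; this is what makes the division legitimate.

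To prove the skew identity I would start from the general skew-Cauchy identity of Corollary \ref{cr:skew-cauchy}, in the $\Dnorm,\Bnorm$ normalization, specialized at $\mu=0^{N+k}$ with $N=\ell(\nu)$. This is the only length for which the summand is nonzero, since $\Bnorm_{\ka/\nu}$ forces $\ell(\ka)=N+k$ and $\Dnorm_{\ka/\mu}$ forces $\ell(\mu)=\ell(\ka)$. With $\mu=0^{N+k}$ the right-hand side of \eqref{eq:general-skew-cauchy} collapses to the single term $\rho=0^N$, because $B_{0^{N+k}/\rho}$ vanishes unless $\rho=0^N$; this leaves $D_\ka=D_{\ka/0^{N+k}}$ on the left and $D_\nu=D_{\nu/0^N}$ on the right. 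I would then specialize the $v$-variables into $\rho$, i.e. set $(v_1,\dots,v_l)=(v,v-2\eta,\dots,v-2\eta(l-1))$ with $v=p_0+2\eta l$ and let $l\to\infty$, exactly as in the proof of Proposition \ref{pr:cauchy-rho}: the product $\prod_{i,j}f(v_j-u_i-2\eta)/f(v_j-u_i)$ telescopes through \eqref{eq:simplification}, the $l$-dependent normalization is absorbed into the passage from $D$ to $\Dnorm$ and into the relabeling of $\la$ built into Definition \ref{df:rho}, and the left side converges to $\sum_\ka \Dnorm_\ka(\la;\rho)\Bnorm_{\ka/\nu}(\la;u)$. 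The key point making the constant come out correctly is that $\ell(\ka)=N+k$ is fixed throughout the sum, so the $l$-dependent relabeling factors form a single overall constant that pulls out and combines exactly as in the $\nu=\varnothing$ case \eqref{eq:cauchy-rho}; the extra telescoping then produces the factor $\Dnorm_\nu(\la+2\eta k;\rho)$.

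An alternative, which avoids re-running the limit in full generality, is to prove the skew identity by induction on $k$ using the branching rule \eqref{eq:B-branching} for $\Bnorm$. The base case $k=0$ is trivial since $\Bnorm_{\ka/\nu}(\la)=\mathbf{1}_{\ka=\nu}$, and the inductive step splits off the last variable: evaluating $\sum_\ka\Dnorm_\ka(\la;\rho)\Bnorm_{\ka/\theta}(\la;u_1,\dots,u_{k-1})$ by the inductive hypothesis and then applying the single-variable case reduces everything to
\[
\sum_\theta \Dnorm_\theta(\la;\rho)\,\Bnorm_{\theta/\nu}(\la;u)=-f(2\eta)\,\frac{f(u-p_0)}{f(u-q_0)}\,\Dnorm_\nu(\la+2\eta;\rho),
\]
which is obtained by $\rho$-specializing the single-variable Pieri rule \eqref{eq:pieri2}. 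In this route the signatures $\theta$ with a zero part cause no trouble: $\Dnorm_\theta(\la;\rho)=0$ for such $\theta$ by Proposition \ref{pr:D-rho-expl}, so those terms simply drop out. Working with $\Dnorm,\Bnorm$ throughout is essential here, as it avoids the $0/0$ ambiguity that would arise if one tried to branch the $\Bstoch$'s directly (their definition involves $\Dnorm_\theta$ in a denominator).

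The main obstacle is the constant-tracking through the $l\to\infty$ limit: the $\la$-shifts in \eqref{eq:general-skew-cauchy} and \eqref{eq:pieri2} and the $l$-dependent normalization in \eqref{eq:D-int} must be reconciled with the relabeling in Definition \ref{df:rho}, and one must check uniform convergence of the $\ka$-sum, bounding $B_\ka,D_\ka$ by $\mathrm{const}^{\,|\ka|}$ as in the proof of Proposition \ref{pr:cauchy-rho}. Because $\ell(\ka)$ is constant over the sum, these factors reduce to a single overall constant that is pinned down by consistency with the already-proven case $\nu=\varnothing$, so no genuinely new estimate is needed beyond those already used for Proposition \ref{pr:cauchy-rho}.
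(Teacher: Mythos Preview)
Your proposal is correct, and your Route 2 (induction on $k$ via the branching rule, with the single-variable case coming from the $\rho$-specialization of the Pieri identity \eqref{eq:pieri2}) is exactly the paper's approach. Your reformulation in the $\Dnorm,\Bnorm$ language, together with the observation that $\Dnorm_\theta(\la;\rho)=0$ kills any intermediate $\theta$ with a zero part, is a clean way of handling the $0/0$ ambiguity that the paper's two-sentence proof leaves implicit.
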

\begin{proof} Very similarly to the proof of Proposition \ref{pr:cauchy-rho} above, the result is obtained by taking the same limit transition of the Pieri type identity \eqref{eq:pieri2}. This leads to the special case of \eqref{eq:B-stoch-sum} with $k=1$, and the general case follows from the branching rule \eqref{eq:B-branching}. 
\end{proof} 

The skew $B$-functions can be defined via products of (local) plaquette weights in an IRF model, as stated in Proposition \ref{pr:B-picture}. The main result of the present section is a similar representation for the $\Bstoch_{\ka/\nu}$'s given below.

\begin{theorem}\label{th:B-stoch} Define the (pre-)stochastic weights of the four types of plaquettes in Figure \ref{fig:vertices-IRF} as the following four expressions, respectively (cf. \eqref{eq:abcd}):
\begin{equation}\label{eq:stoch-weights}
\begin{aligned}
\astoch_k(\la;w)&=\frac{f(z-w+(\La+1-2k)\eta)}{f(z-w+(\La+1)\eta)}\frac{f(-\la+2(\La+1-k)\eta)}{f(-\la+2(\La+1-2k)\eta)}\,,\\
\bstoch_k(\la;w)&=\frac{f(-\la+z-w+(\La-1-2k)\eta)}{f(z-w+(\La+1)\eta)}\frac{f(2(k-\La)\eta)}{f(\la-2(\La-1-2k)\eta)}\,,\\
\cstoch_k(\la;w)&=\frac{f(-\la-z+w+(\La+1-2k)\eta)}{f(z-w+(\La+1)\eta)}\frac{f(2k\eta)}{f(-\la+2(\La+1-2k)\eta)}\,,\\
\dstoch_k(\la;w)&=\frac{f(z-w+(-\La+1+2k)\eta)}{f(z-w+(\La+1)\eta)}\frac{f(\la+2(k+1)\eta)}{f(\la-2(\La-1-2k)\eta)}\,,
\end{aligned}
\end{equation}
where the spectral parameter $w$ corresponds to the row in which the center of the plaquette is located, and the highest weight $\La$ and the inhomogeneity parameter $z$ correspond to the column in which that center is located.

Then $\Bstoch_{\ka/\nu}(\la;w_1,\dots,w_k)$ is equal to the sum of products of the stochastic plaquette weights as described by Proposition \ref{pr:B-picture}, with the only other difference being that the left-most column must have coordinate 1 rather than 0, and the top left unit square is filled with $\la-2\eta\La_0$, see Figure \ref{fig:paths_B-stoch} for an illustration. 
\end{theorem}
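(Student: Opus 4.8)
The plan is to reduce the statement to the single-variable case $k=1$ and then propagate it by the branching rule \eqref{eq:B-branching}. Both sides of the claimed identity obey this branching relation: for $\Bstoch_{\ka/\nu}$ it was noted right after Definition \ref{df:B-stoch}, while for the plaquette-product side it is the standard fact that stacking IRF rows vertically composes the corresponding transfer operators, with the dynamic parameter of the lower strip shifted by $2\eta$ times the number of rows above it, exactly as in \eqref{eq:B-branching}. Hence it suffices to treat one spectral parameter $u_1$, i.e. $\ka\succ\nu$ with $\ell(\ka)=\ell(\nu)+1$ and all parts $\ge 1$, where the path configuration of Proposition \ref{pr:B-picture} is unique. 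The key geometric observation is that, because all parts are $\ge 1$, column $0$ never carries a vertical arrow, so in the original picture every column-$0$ plaquette is a $d$-plaquette of index $0$; moreover crossing the empty column $0$ changes the filling by exactly $4\eta\cdot 0-2\eta\La_0=-2\eta\La_0$ (rule (a) of Section \ref{sc:IRF}). Thus the original filling at the top-left corner of column $1$ equals $\la-2\eta\La_0$, which is precisely the stochastic top-left filling. Consequently the original and stochastic fillings coincide throughout columns $\ge 1$, and the only structural difference between the two pictures is the deletion of the column-$0$ factors $\prod_{\mathrm{rows}}d_0$.

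Next I would remove all $u_1$-dependence. Comparing \eqref{eq:stoch-weights} with \eqref{eq:abcd} type by type, the stochastic and original weights share identical spectral (i.e. $w$-dependent) factors $f(z-w+\cdots)/f(z-w+(\La+1)\eta)$, up to an overall sign that is present for the $b$- and $c$-plaquettes. Therefore each per-plaquette ratio $\astoch/a$, $\bstoch/b$, $\cstoch/c$, $\dstoch/d$ is $u_1$-independent, equal to a sign times a ratio of $\la$-factors. The deleted column-$0$ plaquette $d_0$ has spectral factor $f(u_1-p_0)/f(u_1-q_0)$, using \eqref{eq:pq} and $f(-x)=-f(x)$; its reciprocal is exactly the extra factor $f(u_1-q_0)/f(u_1-p_0)$ in \eqref{eq:B-stoch}, so the two cancel and no spectral dependence survives. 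The signs assemble correctly: one unit of horizontal flux enters per row and is absorbed, so the number of $b$- and $c$-plaquettes is $\equiv k \pmod 2$, producing $(-1)^k$ in agreement with the prefactor of \eqref{eq:B-stoch}.

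What remains is a purely filling-theoretic identity with no spectral parameters. Here I would substitute the closed form \eqref{eq:D-rho-expl} together with \eqref{eq:squared-norm} for the quotient $\Dnorm_\ka(\la;\rho)/\Dnorm_\nu(\la+2\eta;\rho)$ and verify that, after incorporating the $f(\la)$ coming from $\Bnorm=f(\la)\,B$ via \eqref{eq:BD-norm} and the $1/f(2\eta)$ of the prefactor, it equals the product over columns $\ge 1$ of the $\la$-parts of the ratios $\astoch/a,\bstoch/b,\cstoch/c,\dstoch/d$ divided by the $\la$-parts of the column-$0$ $d_0$-factors. Two consistency checks guide the computation: at the turn column the $\la$-independent portion of $\bstoch_{\nu_{=c}}/b^{\mathrm{orig}}_{\nu_{=c}}$ is precisely $-f(\la)/f(2\eta)$, matching the $(-1)$, the $1/f(2\eta)$, and the $f(\la)$ of the prefactor; and the residual $\la$-factor $f(\la_0-2\La_0\eta)/f(\la_0)$ of the removed $d_0$ matches the boundary contribution produced by the $\Lambda_0$-dependent product in \eqref{eq:D-rho-expl}.

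The main obstacle is exactly this last bookkeeping step: the squared norms $c_\mu$ in \eqref{eq:squared-norm} are written as products over the clusters of $\mu$, whereas the weight ratios are organized column by column, so one must regroup $c_\nu/c_\ka$ into local, per-column factors and carefully align the arguments of the many $f$'s with those in \eqref{eq:stoch-weights}, while keeping track of the $\La_0$-shift and the signs. This is a finite, direct verification once the cluster products are expanded. With the case $k=1$ established, the branching rule \eqref{eq:B-branching}, satisfied by both sides, upgrades the identity to arbitrary $k$ and completes the proof.
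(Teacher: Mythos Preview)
Your reduction to $k=1$ via branching, the observation that column $0$ contributes only $d_0$-plaquettes whose spectral factor cancels the $f(u-q_0)/f(u-p_0)$ in \eqref{eq:B-stoch}, and the $w$-independence of the ratios $\astoch/a,\dots,\dstoch/d$ are all correct; the paper records exactly these ratios in Remark \ref{rm:stoch} as $\widehat a_k(\la),\dots,\widehat d_k(\la)$. So your skeleton matches the paper's.

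The difference is in how the remaining $\la$-only identity is actually proved. You propose to expand $c_\nu/c_\ka$ from \eqref{eq:squared-norm} and regroup it column by column to match $\prod \widehat{\,\cdot\,}$; you call this ``a finite, direct verification'' but do not carry it out. The paper instead organizes the same computation by \emph{induction on $\ell(\nu)$}: the base case $\nu=\varnothing$, $\ka=(K)$ is a short telescoping product, and the inductive step removes the right-most path connecting $\nu_1$ to $\ka_1$ and checks that both sides change by the same factor, i.e., verifies
\[
\frac{\Wstoch/W}{\tWstoch/\tW}=\frac{c_\nu(\la+2\eta)\,c_{\widetilde\ka}(\la)}{c_\ka(\la)\,c_{\widetilde\nu}(\la+2\eta)}.
\]
This splits into four combinatorial cases for the shape of that right-most path (vertical only; horizontal segment with no incoming arrow at the start; vertical only with an incoming arrow; horizontal segment with an incoming arrow), each of which is a half-page of explicit factor matching. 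Your direct regrouping would have to reproduce all of this simultaneously, and the cluster structure of $c_\mu$ does not localize to single columns in any obvious way (adjacent clusters interact through the $2m_{<i}$ shifts), so the ``finite'' verification is genuinely nontrivial. The inductive removal of one path at a time is what makes it tractable: only one or two clusters of each of $\ka,\nu,\widetilde\ka,\widetilde\nu$ differ, and the resulting ratios collapse to a handful of $f$-factors.

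In short: your outline is sound and your preliminary reductions coincide with the paper's, but the step you flag as the ``main obstacle'' is exactly where the work lies, and the paper's inductive path-removal with its four cases is the device that makes it manageable.
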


\begin{figure}
	\includegraphics[scale=1.2]{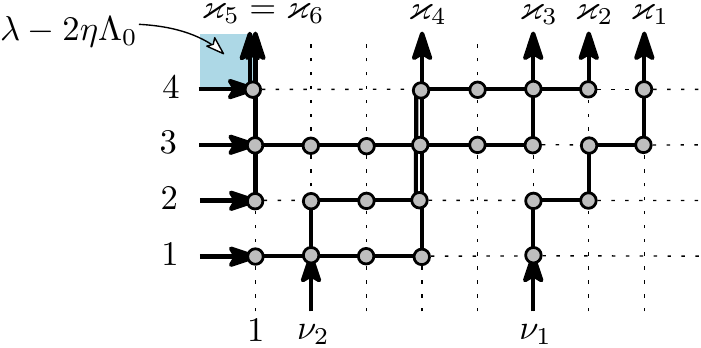}
	\caption{An IRF configuration for $\Bstoch_{\ka/\nu}(\la;w_1,\dots,w_k)$ with $k=4$.}\label{fig:paths_B-stoch}
\end{figure}

\begin{example}\label{ex:B-stoch-one-row} In the case $\ka=(K)$ for some $K\ge 1$, and  $\nu=\varnothing$, Theorem \ref{th:B-stoch} states that
\begin{multline}\label{eq:B-stoch-single-part}
\Bstoch_{(K)}(\la;u)=\Bstoch_{(K)/\varnothing}(\la;u)\\=\dstoch_0(\la-2\eta\La_0,u)\dstoch_0(\la-2\eta\La_{[0,1]},u)\cdots \dstoch_0(\la-2\eta\La_{[0,K-1)},u)\bstoch_0(\la-2\eta\La_{[0,K)},u),
\end{multline}
where the factors come from columns $1,2,\dots,K$, respectively (thus, one has to use the corresponding parameters $z_i,\La_i$, $1\le i\le K$, for the them when substituting \eqref{eq:stoch-weights}).
\end{example}

The reason we call the weights \eqref{eq:stoch-weights} `pre-stochastic' is the following: If one 
fixes the fillings of the three elementary squares in a $2\times 2$ plaquette that are located along the plaquette's bottom and left sides, then the fourth (top-right) filling has two possible values, cf. Figure \ref{fig:vertices-IRF}. The pre-stochasticity means that the two corresponding plaquette weights add up to one, which in terms of \eqref{eq:stoch-weights} means that
\begin{equation}\label{eq:weights-add-up}
\astoch_k(\la;w)+\cstoch_k(\la;w)\equiv 1,\qquad \bstoch_k(\la;w)+\dstoch_k(\la;w)\equiv 1,
\end{equation}
for all values of the parameters. These identities immediately follow from the following identity for $f(z)=\sin\pi z$:
\begin{equation}\label{eq:sine-identity}
f(B-C)f(w-A)=f(A-C)f(w-B)-f(A-B)f(w-C),\qquad A,B,C,w\in \C. 
\end{equation}

We call the IRF model with plaquette weights given by \eqref{eq:stoch-weights} \emph{pre-stochastic}. 

If these weights are actually nonnegative, and one considers the corresponding stochastic IRF model in a domain with prescribed boundary conditions along its left and bottom boundaries (that do not need to be straight), then the random state of the model can be constructed in a Markovian way by moving in the up-right direction from the boundary and sequentially deciding on the fillings of the new unit squares via independent Bernoulli trials corresponding to \eqref{eq:weights-add-up}. 

The same iterative procedure of moving away from the bottom-left boundary shows that for the pre-stochastic IRF and for any domain with fixed boundary conditions on its left and bottom boundaries, the partition function is always equal to 1. In particular, when the domain is a half-infinite strip of the form pictured in Figure \ref{fig:paths_B-stoch}, then, as Theorem \ref{th:B-stoch} shows, the equality \eqref{eq:B-stoch-sum} is exactly the fact that the partition function is equal to 1, and the assumptions on the parameters are necessary to make sure that the weights of the configurations with paths running infinitely far to the right are infinitesimally small. An important special case is when no arrows enter at the bottom boundary of the strip, and we dedicate a separate definition to it. 

\begin{definition}\label{df:quadrant-IRF} Assuming the weights \eqref{eq:stoch-weights} are nonnegative, we define the \emph{stochastic IRF model in the quadrant} as a probability measure on plaquette configurations in the quadrant $\R_{\ge 0}^2$ (the centers of the plaquettes, which are the inner vertices for the corresponding paths, range over $\Z_{\ge 1}^2$) defined in the Markovian way as described above, with no paths entering from the bottom into the vertices of the bottom row $\Z_{\ge 1}\times \{1\}$, and a single path entering from the left into each vertex of the leftmost column $\{1\}\times \Z_{\ge 1}$. 

The filling of the bottom-left unit box $[0,1]^2$ must also be specified, and denoting it by $\la^{(0)}$ we read from the boundary conditions on paths that the fillings of the unit boxes $[k,k+1]\times[0,1]$ in the bottom row are $(\la^{(0)}-2\eta\La_{[1,k]})$, while the fillings of the unit boxes $[0,1]\times [k,k+1]$ in the leftmost column are $(\la^{(0)}-2\eta k)$ for any $k\ge 0$. 

Assuming, in addition, that for some $k\ge 1$ the probability of configurations with paths that have segments of infinite length in rows $1,\dots,k$ is zero, and using the notation $(i_1,k+\frac 12),(i_2,k+\frac 12),\dots (i_k,k+\frac 12)$ with $i_1\ge i_2\ge\dots i_k\ge 1$ for the $k$ (random) intersection points of the paths with the horizontal line of coordinate $k+\frac 12$, we conclude from Theorem \ref{th:B-stoch} that
\begin{equation}\label{eq:prob=B-stoch}
\Prob\{\ka=(i_1,\dots,i_k)\}=\Bstoch_\ka(\la^{(0)}-2\eta(k-\La_0);w_1,\dots,w_k),
\end{equation}
where the first argument of the $\Bstoch_\ka$ in the right-hand side is chosen so that the filling of the unit box $[0,1]\times[k,k+1]$ is $(\la-2\eta\La_0)$, as required by Theorem \ref{th:B-stoch}.
\end{definition}

\begin{remark}\label{rm:finite-spin} In the finite spin case, $\La_j\equiv I+m/\eta$ for any $j\ge 1$ and integral $I\ge 0,m$, no plaquette configuration with a vertical edge occupied by $\ge (I+1)$ paths can give a nonzero contribution to $\Bstoch_{\ka/\nu}$ because of the vanishing of the right-hand side of \eqref{eq:stoch-weights} at $k=I$. 

Similarly, in that case the stochastic IRF model in the quadrant of Definition \ref{df:quadrant-IRF} has no vertical edges occupied my more than $I$ paths almost surely. In particular, for $I=1$ (the spin $\frac 12$ case), no more than one path can pass through any edge, vertical or horizontal, and the allowed plaquettes have the form pictured in Figure \ref{fig:vertices-6v}. 
\end{remark}

\begin{remark}\label{rm:stoch} The proof of Theorem \ref{th:B-stoch} that we give below does not clarify where the exact expressions for the weights \eqref{eq:stoch-weights} came from; let us try to offer an explanation. Denote by $a_k(\la,w),b_k(\la,w),c_k(\la,w),d_k(\la,w)$ the coefficients of $e_{*}$ in the right-hand sides of \eqref{eq:abcd} (these are the plaquette weights in the IRF model of Section \ref{sc:IRF}), and denote by $\widehat a_k(\la),\widehat b_k(\la),\widehat c_k(\la),\widehat d_k(\la)$ the respective ratios of \eqref{eq:stoch-weights} and the coefficients of \eqref{eq:abcd}, i.e., $\astoch_k(\la,w)=\widehat a_k(\la) a_k(\la,w)$, etc. Observe that we dropped the dependence on the spectral parameter $w$ in those ratios; indeed, a direct inspection shows that they are $w$-independent:

\begin{equation}\label{eq:hat-abcd}
\begin{aligned}
\widehat a_k(\la)&=\frac{f(\la)}{f(\la-2(\La+1-2k)\eta)}\frac{f(\la-2(\La+1-k)\eta)}{f(\la+2 k\eta)}\,,\\
\widehat b_k(\la)&=\frac{f(\la)}{f(\la-2 (\La-1-2k)\eta)}\frac{f(2(\La-k)\eta)}{f(2\eta)}\,,\\ 
\widehat c_k(\la)&=\frac{f(\la)}{f(\la-2 (\La+1-2k)\eta)}\frac{f(2\eta)}{f(2(\La+1-k)\eta)}\,,\\
\widehat d_k(\la)&=\frac{f(\la)}{f(\la-2 (\La-1-2k)\eta)}\frac{f(\la+2(k+1)\eta)}{f(\la-2(\La-k)\eta)}\,.
\end{aligned}
\end{equation}

If one imagines that a pre-stochastic IRF satisfying Theorem \ref{th:B-stoch} exists, then it is natural to assume the existence of such correction coefficients $\widehat a_k(\la),\widehat b_k(\la),\widehat c_k(\la),\widehat d_k(\la)$ from \eqref{eq:B-conj}; indeed, the $\Dnorm$ factors are independent of the spectral parameters, and the $p_0,q_0$-dependent factors there are simply responsible for removing the 0th column for passing from from Figure \ref{fig:paths_B} to Figure \ref{fig:paths_B-stoch}. Once we assume this existence, the desired stochasticity relations \eqref{eq:weights-add-up} read
$$
\widehat a_k(\la) a_k(\la;w)+\widehat c_k(\la) c_k(\la;w)\equiv 1,\qquad \widehat b_k(\la) b_k(\la;w)+\widehat d_k(\la) d_k(\la;w)\equiv 1,
$$
and the fact that they must hold for arbitrary $w\in\C$ uniquely determines possible candidates for $\widehat a_k(\la),\widehat b_k(\la),\widehat c_k(\la),\widehat d_k(\la)$. One still needs to verify that Theorem \ref{th:B-stoch} holds though. 
\end{remark}
\begin{proof}[Proof of Theorem \ref{th:B-stoch}] We first note that it suffices to consider the single variable case $k=1$; indeed, the case of several variables readily follows by applying the branching rule for the $\Bstoch$'s that is identical to the branching rule \eqref{eq:B-branching} for the usual skew $B$-functions, and observing that it agrees with the IRF prescription for computing the $\Bstoch$'s as described in Theorem \ref{th:B-stoch}. 

From now one use assume that we are in the single variable case $k=1$.  

We use the induction on $N:=\ell(\nu)$. Note that since $k=1$, we must have $\ell(\ka)=N+1$. 

The base of the induction is $N=0$, i.e., $\nu=\varnothing$ and $\ka=(K)$ for some $K\ge 1$. The value of $\Bstoch_{(K)}(\la;u)$ according to Theorem \ref{th:B-stoch} was computed in \eqref{eq:B-stoch-single-part}, and similarly we have 
$$
B_{(K)}(\la;u)=d_0(\la,u)d_0(\la-2\eta\La_{0},u)\cdots d_0(\la-2\eta\La_{[0,K-1)},u)b_0(\la-2\eta\La_{[0,K)},u),
$$ 
where we used the notation introduced in Remark \ref{rm:stoch}, and the factors in the right-hand side correspond to the IRF vertices in columns $0,1,\dots,K$, respectively. 
Hence, Theorem \ref{th:B-stoch} claims that
\begin{multline*}
\frac{\Bstoch_{(K)}(\la;u)}{B_{(K)}(\la;u)}=\frac{1}{d_0(\la,u)}\,\widehat d_0(\la-2\eta\La_{0},u)\cdots \widehat d_0(\la-2\eta\La_{[0,K-1)},u)\widehat b_0(\la-2\eta\La_{[0,K)},u)\\
=\frac{f(u-q_0)}{f(u-p_0)}\frac{f(\la)}{f(2\eta)}\frac{f(\la+2\eta-2\eta\La_0)f(2\eta\La_k)}{f(\la+2\eta-2\eta\La_{[0,k)})f(\la+2\eta-2\eta\La_{[0,k]})}\,,
\end{multline*}
where the factors in the first line correspond to columns $0,1,\dots,k$, and we used \eqref{eq:hat-abcd}. 

According to Definition \ref{df:B-stoch}, this needs to be compared to 
$$
\frac{-f(\la)}{f(2\eta)}\frac{f(u-q_0)}{f(u-p_0)}\frac{\Dnorm_{(K)}(\la;\rho)}{\Dnorm_{\varnothing}(\la+2\eta;\rho)}
$$
(note the appearance of $f(\la)=\Bnorm_{(K)}(\la;u)/B_{(K)}(\la;u)$, cf. \eqref{eq:BD-norm}), which is immediate as 
$$
\Dnorm_{(K)}(\la;\rho)=\left(\frac{f(\la+2\eta-2\eta\La_{[0,K)})f(\la+2\eta-2\eta\La_{[0,k]})}{-f(2\eta\La_k)}\right)^{-1} \cdot f(\la+2\eta-2\eta\La_0)
$$
by \eqref{eq:D-rho-expl} and \eqref{eq:squared-norm} (we use $f'(0)=\pi$), and $\Dnorm_\varnothing(\la+2\eta;\rho)=1$ by Definition \ref{df:rho}. 

Let us proceed to the induction step $\ell(\nu)\leadsto \ell(\nu)+1$. Our strategy is the following.
We aim to prove that for the IRF representations of $\Bstoch_{\ka/\nu}(\la;u)$ and $B_{\ka/\nu}(\la;u)$, the contributions coming from equivalent path configurations match. More exactly, we think of the path configurations pictured in Figures \ref{fig:paths_B} and \ref{fig:paths_B-stoch} as `equivalent' if they are only different by the 0th column than contains only vertices with one incoming arrow on the left and one outgoing arrow on the right (so that the configurations pictured there are indeed equivalent). The contribution to $\Bstoch_{\ka/\nu}(\la;u)$ is the product of the stochastic IRF weights \eqref{eq:stoch-weights} corresponding to the configuration in Figure \ref{fig:paths_B-stoch}. Similarly, the contribution to $B_{\la/\nu}(\la;u)$ is the product of the IRF weights of Section \ref{sc:IRF} corresponding to the configuration in Figure \ref{fig:paths_B}, but it needs to be re-normalized according to the right-hand side of \eqref{eq:B-stoch}. 

In the base case $\nu=\varnothing$ considered above, there was only one possible path configuration, and our computation above proved the desired match. 

In order to increase $\ell(\nu)$, we will examine the effect of removing the right-most path that joins $\nu_1$ and $\ka_1$ in two equivalent path configurations. The new path configurations correspond to a term in the IRF representations for $\Bstoch_{\widetilde\ka/\widetilde\nu}$ and 
$B_{\widetilde\ka/\widetilde\nu}$, where $\widetilde\ka$ and $\widetilde\nu$ are obtained from $\ka$ and $\nu$ by removing the largest coordinates $\ka_1$ and $\nu_1$, respectively. Note that $\ell(\widetilde\nu)=\ell(\nu)-1$. 

Our goal is to verify that as a result of such a removal, the contributions of the two path configurations get modified by the same correcting factor when we compare $\ka/\nu$ to $\widetilde\ka/\widetilde\nu$. Clearly, this will imply our induction step. 

It is convenient to introduce some notation that we will use in our computations. 

Denote the product of the stochastic plaquette weights that correspond to a path configuration for $\Bstoch_{\ka/\nu}$ by $\Wstoch$, the product of the non-stochastic plaquette weights that correspond to the equivalent path configuration for $B_{\ka/\nu}$ by $W$, and denote similar products that correspond to path configurations with removed right-most paths for $\Bstoch_{\widetilde\ka/\widetilde\nu}$ and $B_{\widetilde\ka/\widetilde \nu}(\la;u)$ by $\tWstoch$ and $\tW$, respectively. Furthermore, set
$$
C=\frac{-f(u-q_0)f(\la)}{f(u-p_0)f(2\eta)}\cdot \frac{\Dnorm_\ka(\la;\rho)}{\Dnorm_\nu(\la+2\eta;\rho)},\qquad \widetilde C=\frac{-f(u-q_0)f(\la)}{f(u-p_0)f(2\eta)}\cdot \frac{\Dnorm_{\widetilde\ka}(\la;\rho)}{\Dnorm_{\widetilde\nu}(\la+2\eta;\rho)};
$$   
these are the conjugating factors from the right-hand side of \eqref{eq:B-stoch}. We want to show that $\Wstoch/\tWstoch=(W/\tW)\cdot (C/\widetilde C)$ or, equivalently, 
\begin{equation}\label{eq:correction-factors}
\frac{\Wstoch/W}{\tWstoch/\tW}=\frac{C}{\widetilde C}=\frac{\Dnorm_\ka(\la;\rho)\Dnorm_{\widetilde\nu}(\la+2\eta;\rho)}{\Dnorm_\nu(\la+2\eta;\rho)\Dnorm_{\widetilde\ka}(\la;\rho)}=\frac{c_\nu(\la+2\eta) c_{\widetilde\ka}(\la)}{c_\ka(\la) c_{\widetilde\nu}(\la+2\eta)}\,,
\end{equation}
where we use \eqref{eq:D-rho-expl} and \eqref{eq:squared-norm}.

We will always denote by $\widetilde\la$ the filling of the top left unit square in the  plaquette centered at the left-most vertex of the path that is being removed, and we will also denote by $x$ the horizontal coordinate of the same vertex. For the computation of the right-hand side of \eqref{eq:correction-factors} it will be useful to observe that, with the notation $\ka=1^{k_1}2^{k_2}\cdots$, $\nu=1^{n_1}2^{n_2}\cdots$, $\widetilde\ka=1^{\tilde k_1}2^{\tilde k_2}\cdots$, $\widetilde\nu=1^{\tilde n_1}2^{\tilde n_2}\cdots$, if no arrow enters vertex $x$ from the left then 
\begin{multline*}
\la+2\eta(2k_{<x}-\La_{[0,x)})=\la+2\eta(2\tilde k_{<x}-\La_{[0,x)})\\
=\la+2\eta (2(n_{<x}+1)-\La_{[0,x)})=\la+2\eta(2(\tilde n_{<x}+1)-\La_{[0,x)})=\widetilde \la,
\end{multline*}
which readily follows from the rule (a) in the IRF model description in Section \ref{sc:IRF} and the fact that $n_{<x}=\tilde n_{<x}=k_{<x}-1=\tilde k_{<x}-1$. On the other hand, if there is an arrow that enters vertex $x$ from the left then similar arguments show that
\begin{multline*}
\la+2\eta(2k_{<x}-\La_{[0,x)})=\la+2\eta(2\tilde k_{<x}-\La_{[0,x)})\\
=\la+2\eta (2 n_{<x}-\La_{[0,x)})=\la+2\eta(2 \tilde n_{<x}-\La_{[0,x)})=\widetilde \la,
\end{multline*}
because $n_{<x}=\tilde n_{<x}=k_{<x}=\tilde k_{<x}$.

There are four distinct combinatorial possibilities for the right-most path pictured in Figure \ref{fig:right-path} (we do not draw the complete path configuration there, just the plaquettes that surround the vertices occupied by the right-most paths). These four cases require different computations, and we will consider them one by one. 

\begin{figure}
	\includegraphics[scale=0.5]{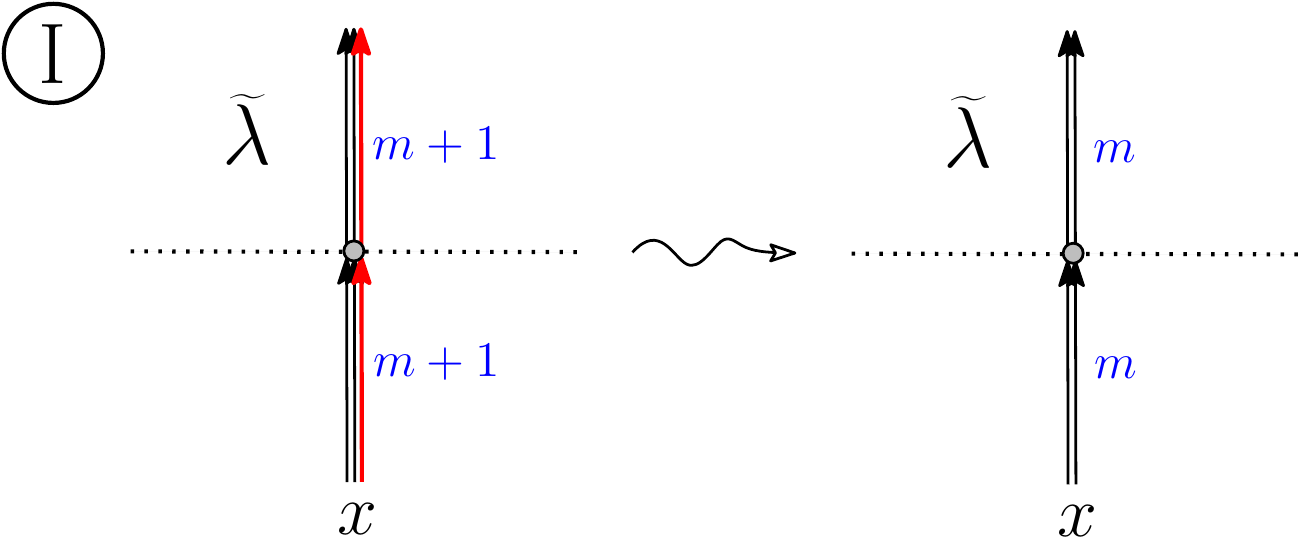}\qquad \includegraphics[scale=0.5]{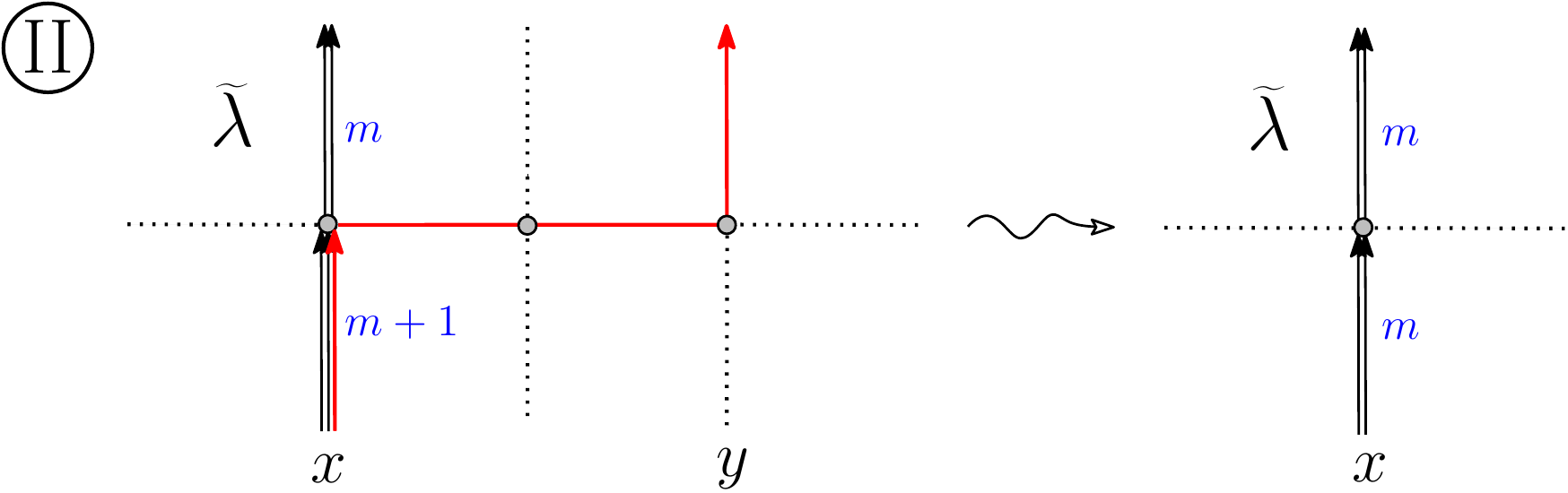}
	\vskip 0.2cm
	\includegraphics[scale=0.5]{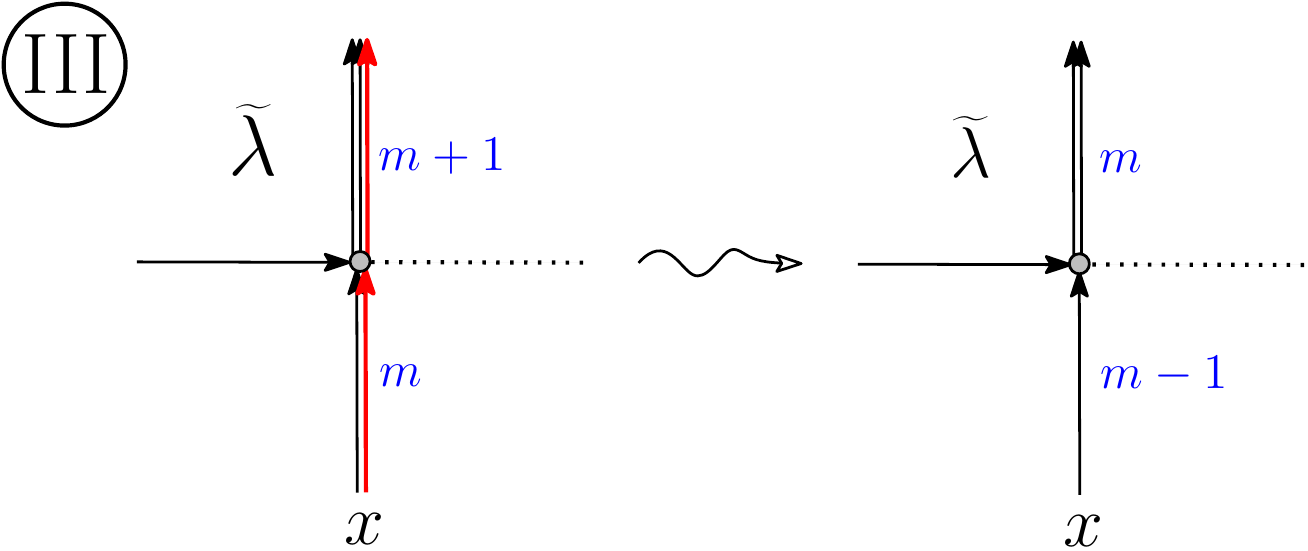}\qquad \includegraphics[scale=0.5]{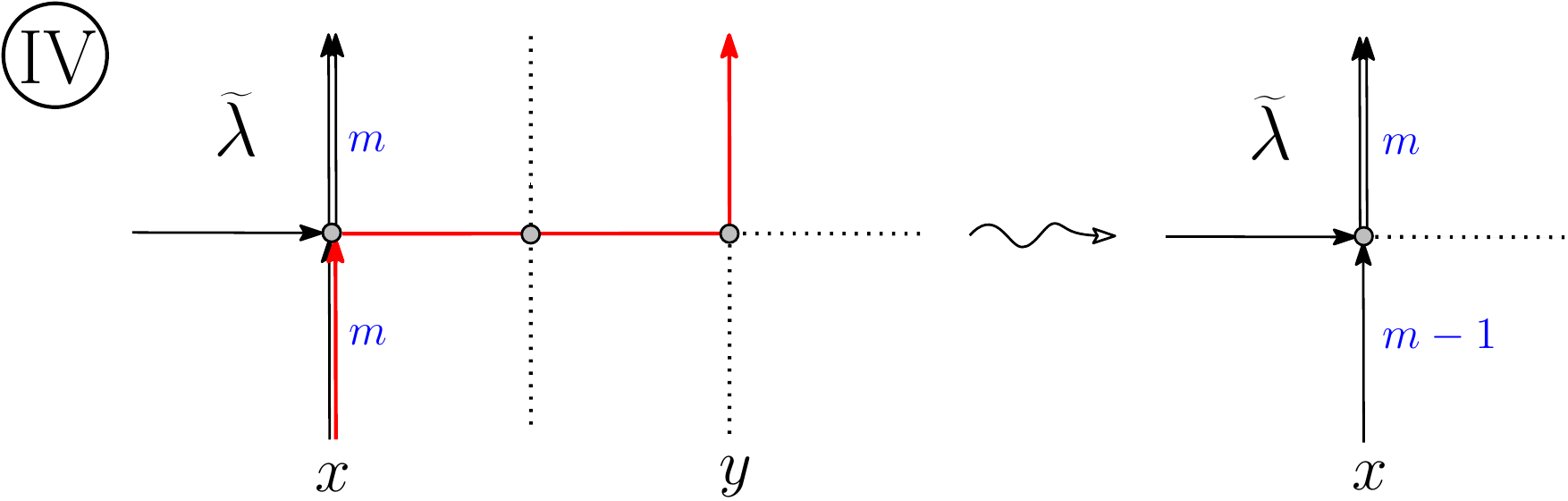}

	\caption{Four possibilities for the right-most path (in red) that is being removed.}\label{fig:right-path}
\end{figure}

\smallskip

\emph{Case I} (see Figure \ref{fig:right-path}). This is the case when the right-most path does not move to the right, and both vertical edges occupied by it are shared by the total of $m+1$ paths for some $m\ge 0$. 

The left-hand side of \eqref{eq:correction-factors} gives
\begin{equation}\label{eq:corr1}
\frac{\widehat a_{m+1}(\widetilde\la)}{\widehat a_{m}(\widetilde\la)}=\frac{f(\widetilde\la-2\eta(\La_x+1-2m))}{f(\widetilde\la-2\eta(\La_x-1-2m))}\frac{f(\widetilde\la-2\eta(\La_x-m))}{f(\widetilde\la-2\eta(\La_x+1-m))}\frac{f(\widetilde\la+2\eta m)}{f(\widetilde\la+2\eta(m+1))}\,.
\end{equation}

On the other hand, the right-hand side of \eqref{eq:correction-factors} gives, cf. \eqref{eq:squared-norm},
\begin{multline*}
\left(\prod_{j=0}^{m}\frac{f(\widetilde\la +2\eta(m+j-\La_x))f(\widetilde\la+2\eta j)}{f(2\eta(\La_x-j))}\right)
\left(\prod_{j=0}^{m}\frac{f(\widetilde\la+2\eta(m+1+j-\La_x))f(\widetilde\la+2\eta(j+1))}{f(2\eta(\La_x-j))}\right)^{-1}\\ \times
\left(\prod_{j=0}^{m-1}\frac{f(\widetilde\la +2\eta(m+j-\La_x))f(\widetilde\la+2\eta(j+1))}{f(2\eta(\La_x-j))}\right)
\left(\prod_{j=0}^{m-1}\frac{f(\widetilde\la+2\eta(m-1+j-\La_x))f(\widetilde\la+2\eta j)}{f(2\eta(\La_x-j))}\right)^{-1}
\end{multline*}
with the four expressions coming from $c_\nu(\la+2\eta),c_\ka^{-1}(\la),c_{\widetilde\ka}(\la),c_{\widetilde\nu}^{-1}(\la+2\eta)$, respectively. Canceling coinciding factors leads to \eqref{eq:corr1}. 

\smallskip

\emph{Case II} (see Figure \ref{fig:right-path}). This is the case of the right-most path starting with a vertical edge (at location $x$), moving $(y-x)\ge 1$ steps to the right, and finishing with another vertical edge (at location $y$), with the condition that no arrow enters the vertex at $x$ from the left. We denote the number of the arrows along the initial edge of the right-most path by $(m+1)\ge 1$. 

The left-hand side of \eqref{eq:correction-factors} gives
\begin{equation}\label{eq:corr2}
\frac{\widehat c_{m+1}(\widetilde \la)}{\widehat a_m(\widetilde\la)}\cdot
\widehat d_0(\widetilde\la+2\eta (2m-\La_x))\cdots \widehat d_0(\widetilde\la+2\eta( 2m-\La_{[x,y-1)}))\cdot \widehat b_0(\widetilde \la+2\eta (2m-\La_{[x,y)}))
\end{equation}
with the factors coming from columns $x,x+1,\dots,y$. The column $x$ contribution is (using \eqref{eq:hat-abcd})
\begin{equation}\label{eq:corr2a}
\frac{\widehat c_{m+1}(\widetilde \la)}{\widehat a_m(\widetilde\la)}=\frac{f(\widetilde\la)f(2\eta)}{f(\widetilde\la-2\eta(\La_x-1-2m))f(2\eta(\La_x-m))}\cdot \frac{f(\widetilde\la-2\eta(\La_x+1-2m))f(\widetilde\la+2\eta m)}{f(\widetilde\la)f(\widetilde\la-2\eta(\La_x+1-m))}\,,
\end{equation}
while the contribution of columns from $x+1$ to $y$ is (similarly to Example \ref{ex:B-stoch-one-row})
$$
\frac{f(\widetilde\la+2\eta(2m+1-\La_x))f(\widetilde\la+2\eta(2m-\La_x))f(2\eta\La_y)}{f(\widetilde\la+2\eta(2m+1-\La_{[x,y)}))f(\widetilde\la+2\eta(2m+1-\La_{[x,y]}))f(2\eta)}\,.
$$
The product of these two expressions is \eqref{eq:corr2}, and it needs to be compared to the right-hand side of \eqref{eq:correction-factors}, which is (using \eqref{eq:squared-norm})  
\begin{multline*}
\left(\prod_{j=0}^{m}\frac{f(\widetilde\la +2\eta(m+j-\La_x))f(\widetilde\la+2\eta j)}{f(2\eta(\La_x-j))}\right) \Biggl(\prod_{j=0}^{m-1}\frac{f(\widetilde\la +2\eta(m+j-\La_x))f(\widetilde\la+2\eta(j+1))}{f(2\eta(\La_x-j))}\\ \times
\frac{f(\widetilde\la +2\eta(2m+1-\La_{[x,y]}))f(\widetilde\la+2\eta(2m+1-\La_{[x,y)}))}{f(2\eta\La_y)} \Biggr)^{-1}\\ \\\times \left(\prod_{j=0}^{m-1}\frac{f(\widetilde\la +2\eta(m+j-\La_x))f(\widetilde\la+2\eta(j+1))}{f(2\eta(\La_x-j))}\right)
\left(\prod_{j=0}^{m-1}\frac{f(\widetilde\la+2\eta(m-1+j-\La_x))f(\widetilde\la+2\eta j)}{f(2\eta(\La_x-j))}\right)^{-1}
\end{multline*}
with the four expressions coming from $c_\nu(\la+2\eta),c_\ka^{-1}(\la),c_{\widetilde\ka}(\la),c_{\widetilde\nu}^{-1}(\la+2\eta)$, respectively. Direct comparison yields the desired equality. 

\smallskip

\emph{Case III} (see Figure \ref{fig:right-path}). This is the case of the right-most path not moving horizontally at all but, unlike Case I, the multiplicities of the two vertical edges it occupies are now $m$ and $m+1$ (hence, there is a horizontal arrow entering the vertex at $x$ from the left). 

The left-hand side of \eqref{eq:correction-factors} is (using \eqref{eq:hat-abcd})
\begin{equation*}
\frac{\widehat b_{m}(\widetilde\la)}{\widehat b_{m-1}(\widetilde\la)}=\frac{f(2\eta(\La_x-m))}{f(2\eta(\La_x-m+1))}\frac{f(\widetilde\la-2\eta(\La_x+1-2m))}{f(\widetilde\la-2\eta(\La_x-1-2m))}\,.
\end{equation*}

On the other hand, the right-hand side of \eqref{eq:correction-factors} is (using \eqref{eq:squared-norm})
\begin{multline*}
\left(\prod_{j=0}^{m-1}\frac{f(\widetilde\la +2\eta(m+1+j-\La_x))f(\widetilde\la+2\eta (j+2))}{f(2\eta(\La_x-j))}\right)\\ \times\left(\prod_{j=0}^{m}\frac{f(\widetilde\la+2\eta(m+1+j-\La_x))f(\widetilde\la+2\eta(j+1))}{f(2\eta(\La_x-j))}\right)^{-1}\\ \times
\left(\prod_{j=0}^{m-1}\frac{f(\widetilde\la +2\eta(m+j-\La_x))f(\widetilde\la+2\eta(j+1))}{f(2\eta(\La_x-j))}\right)
\left(\prod_{j=0}^{m-2}\frac{f(\widetilde\la+2\eta(m+j-\La_x))f(\widetilde\la+2\eta (j+2))}{f(2\eta(\La_x-j))}\right)^{-1}
\end{multline*}
with the four expressions coming from $c_\nu(\la+2\eta),c_\ka^{-1}(\la),c_{\widetilde\ka}(\la),c_{\widetilde\nu}^{-1}(\la+2\eta)$, respectively. 

Once again, the two expressions are readily seen to coincide. 

\smallskip

\emph{Case IV} (see Figure \ref{fig:right-path}). This is the case of the right-most path starting with a vertical edge (at location $x$), moving $(y-x)\ge 1$ steps to the right, and finishing with another vertical edge (at location $y$), with the condition that there is an arrow that enters the vertex at $x$ from the left. We denote the number of arrows along the initial edge of the right-most path by $m\ge 1$. 

The left-hand side of \eqref{eq:correction-factors} gives, cf. \eqref{eq:corr2},
\begin{equation}\label{eq:corr4}
\frac{\widehat d_{m}(\widetilde \la)}{\widehat b_{m-1}(\widetilde\la)}\cdot
\widehat d_0(\widetilde\la+2\eta (2m-\La_x))\cdots \widehat d_0(\widetilde\la+2\eta( 2m-\La_{[x,y-1)}))\cdot \widehat b_0(\widetilde \la+2\eta (2m-\La_{[x,y)}))
\end{equation}
with the factors coming from columns $x,x+1,\dots,y$. The column $x$ factor is (using \eqref{eq:hat-abcd})
$$
\frac{\widehat d_{m}(\widetilde \la)}{\widehat b_{m-1}(\widetilde\la)}=\frac{f(\widetilde \la)f(\widetilde\la+2\eta(m+1))}{f(\widetilde\la-2\eta(\La_x-1-2m))f(\widetilde\la-2\eta(\La_x-m))}\cdot\frac{f(\widetilde\la-2\eta(\La_x+1-2m))f(2\eta)}{f(\widetilde\la)f(2\eta(\La_x-m+1))}\,.
$$
The contribution of the vertices with coordinates $x+1,\dots,y$ is exactly the same as in Case II, which is \eqref{eq:corr2a}. The product of these two expressions needs to be compared to the right-hand side of \eqref{eq:correction-factors}, which is (using \eqref{eq:squared-norm})
\begin{multline*}
\left(\prod_{j=0}^{m-1}\frac{f(\widetilde\la +2\eta(m+1+j-\La_x))f(\widetilde\la+2\eta (j+2))}{f(2\eta(\La_x-j))}\right)\\ \times \Biggl(\prod_{j=0}^{m-1}\frac{f(\widetilde\la +2\eta(m+j-\La_x))f(\widetilde\la+2\eta(j+1))}{f(2\eta(\La_x-j))} 
\frac{f(\widetilde\la +2\eta(2m+1-\La_{[x,y]}))f(\widetilde\la+2\eta(2m+1-\La_{[x,y)}))}{f(2\eta\La_y)} \Biggr)^{-1}\\ \\\times
\left(\prod_{j=0}^{m-1}\frac{f(\widetilde\la +2\eta(m+j-\La_x))f(\widetilde\la+2\eta(j+1))}{f(2\eta(\La_x-j))}\right)\left(\prod_{j=0}^{m-2}\frac{f(\widetilde\la+2\eta(m+j-\La_x))f(\widetilde\la+2\eta (j+2))}{f(2\eta(\La_x-j))}\right)^{-1}
\end{multline*}
with the four expressions coming from $c_\nu(\la+2\eta),c_\ka^{-1}(\la),c_{\widetilde\ka}(\la),c_{\widetilde\nu}^{-1}(\la+2\eta)$, respectively. Canceling common factors shows that the two expressions are once again the same. 

The finishes the proof of our induction step, and the proof of Theorem \ref{th:B-stoch} is now complete. 
\end{proof}

\section{Degenerations}\label{sc:degen}

\subsection{The higher spin six vertex model}\label{ss:hs6v} Consider the higher spin inhomogeneous six vertex model as described in \cite[Sections 2-4]{BP}. It assigns weights to configurations of up-right paths in the square grid of the type discussed in Section \ref{sc:IRF} above, and the weight of a configuration is the product of weights of its vertices. In their turn, the vertex weights are defined by the following explicit formulas: 
\begin{align}\label{eq:hs6v-weights}
	\begin{array}{rclrcl}
		w(k,0;k,0)&=&\dfrac{1-s q^{k}\xi u}{1-s \xi u},&\qquad \qquad
		\rule{0pt}{22pt}
		w(k,1;k+1,0)&=&\dfrac{1-q^{k+1}}{1-s\xi  u},\\
		\rule{0pt}{22pt}
		w(k,0;k-1,1)&=&\dfrac{(1-s^{2}q^{k-1})\xi u}{1-s \xi u},&\qquad \qquad
		\rule{0pt}{22pt}
		w(k,1;k,1)&=&\dfrac{\xi u-s q^{k}}{1-s\xi u},
	\end{array}
\end{align}
the notation $w(i_1,j_1;i_2,j_2)$ stands for the weight of the vertex with $i_1$ arrows entering from below, $j_1$ arrows entering from the left, $i_2$ arrows exiting at the top, $j_2$ arrows exiting to the right, $u$ is the (spectral) parameter that depends on the row in which the vertex resides, and $s, \xi$ are the spin and the inhomogeneity parameters that depend on the column in which the vertex resides. The four equations in \eqref{eq:hs6v-weights} correspond to the four vertices in Figure \ref{fig:vertices-IRF} above. 

These vertex weights also have (pre-)stochastic versions given by 
\begin{align}\label{eq:hs6v-stoch-weights}
	\begin{array}{rclrcl}
		L(k,0;k,0)&=&\dfrac{1-s q^{k}\xi u}{1-s \xi u},&\qquad 
		\rule{0pt}{22pt}
		L(k,1;k+1,0)&=&\dfrac{1-s^{2}q^{k}}{1-s\xi u},\\
		\rule{0pt}{22pt}
		L(k,0;k-1,1)&=&\dfrac{-s\xi u+s q^{k} \xi u}{1-s\xi u},&\qquad 
		\rule{0pt}{22pt}
		L(k,1;k,1)&=&\dfrac{-s\xi u+s ^{2}q^{k}}{1-s\xi u},
	\end{array}
\end{align}
The stochasticity is expressed in the readily visible relation $\sum_{i_2,j_2}L(i_1,j_1;i_2,j_2)\equiv 1$. 

The above vertex weights can be obtained from our plaquette weights $a_k(\la,w),b_k(\la,w)$, $c_k(\la,w),d_k(\la,w)$ and $\astoch_k(\la,w),\bstoch_k(\la,w),\cstoch_k(\la,w),\dstoch(\la,w)$ (see Section \ref{sc:stoch-IRF} for their definitions) by the following limit transition. 

\begin{proposition} In the trigonometric case $f(x)= \sin \pi x$, with the following identification of the IRF and the higher spin six vertex models' parameters:
\begin{equation}\label{eq:parameters}
e^{2 \pi\i\eta\La}=s,\quad e^{-4\pi \i\eta}=q, \quad e^{2\pi \i z}=\xi, \quad e^{2\pi \i(\eta-w)}=u,
\end{equation}
we have
\begin{equation}\label{eq:lim-to-hs6v}
\lim_{\la\to -\i\infty}\left[\begin{matrix} a_k(\la,w)& b_k(\la,w)\\ c_k(\la,w)& d_k(\la,w)\end{matrix}\right]=
\left[\begin{matrix} q^{-k} w(k,0;k,0)&  \dfrac{q-1}{q^{k/2+1}(1-q^{k+1})}\, w(k,1;k+1,0)\\ 
\dfrac{1-q^k}{s q^{3(k-1)/2}(1-q)}\, w(k,0;k-1,1) 
&-s^{-1}q^{-k} w(k,1;k,1)\end{matrix}\right]
\end{equation}
and 
\begin{equation}\label{eq:lim-to-stoch-hs6v}
\lim_{\la\to -\i\infty}\left[\begin{matrix} \astoch_k(\la,w)& \bstoch_k(\la,w)\\ \cstoch_k(\la,w)& \dstoch_k(\la,w)\end{matrix}\right]=
\left[\begin{matrix} L(k,0;k,0)& L(k,1;k+1,0)\\ L(k,0;k-1,1) 
& L(k,1;k,1)\end{matrix}\right].
\end{equation}
\end{proposition}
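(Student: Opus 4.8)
The plan is to prove both matrix identities entry by entry through a direct asymptotic analysis as $\la\to-\i\infty$, exploiting the fact that in the trigonometric case each weight in \eqref{eq:abcd} splits as a product of a $\la$-independent (``spectral'') factor and a $\la$-dependent factor, so that only the latter feels the limit.

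I would first record the elementary asymptotics of $f(x)=\sin\pi x$ that underlie everything. Writing $\sin\pi x=\tfrac1{2\i}e^{\i\pi x}(1-e^{-2\pi\i x})$ shows that $e^{\i\pi x}$ dominates as $\Im x\to-\infty$ while $e^{-\i\pi x}$ dominates as $\Im x\to+\infty$; this is exactly \eqref{eq:lim-sin}. From it I extract the two limits used repeatedly: as $\la\to-\i\infty$,
\[
\frac{f(\la+a)}{f(\la+b)}\longrightarrow e^{\i\pi(a-b)},
\qquad
\frac{f(-\la+c)}{f(\la)}\longrightarrow -e^{-\i\pi c}.
\]
The second limit is the delicate one: both $f(-\la+c)$ and $f(\la)$ blow up (their arguments tend to $+\i\infty$ and $-\i\infty$ respectively), so one must isolate the correct dominant exponential in each before forming the ratio. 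It is precisely this ratio that keeps the limits of $b_k$ and $c_k$ finite, since those two weights contain a factor of the shape $f(-\la+\cdots)/f(\la)$.

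Next I would pass to the six-vertex variables via \eqref{eq:parameters}. Applying $f(x)=\tfrac1{2\i}e^{\i\pi x}(1-e^{-2\pi\i x})$ to the spectral factors, the pivotal computation is that the common denominator satisfies $e^{-2\pi\i(z-w+(\La+1)\eta)}=(s\xi u)^{-1}$ and $e^{\i\pi(z-w+(\La+1)\eta)}=(s\xi u)^{1/2}$, so that $f(z-w+(\La+1)\eta)=-\tfrac1{2\i}(1-s\xi u)(s\xi u)^{-1/2}$; the numerators of $a_k$ and $d_k$ are handled identically and produce $1-sq^{k}\xi u$ and $\xi u-sq^{k}$, while the $\la$-free factors $f(2\eta),f(2k\eta),f(2(\La+1-k)\eta),\dots$ together with the $\la$-dependent limits supply the leftover monomials. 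Combining each spectral limit with the corresponding $\la$-dependent limit gives, entry by entry, $\lim a_k=q^{-k}w(k,0;k,0)$, $\lim b_k=\frac{q-1}{q^{k/2+1}(1-q^{k+1})}w(k,1;k+1,0)$, and likewise for $c_k$ and $d_k$, which is \eqref{eq:lim-to-hs6v}. The work here is entirely bookkeeping: tracking the half-integer powers of $q,s,\xi,u$ produced by the $\tfrac12$'s and the $\eta$'s in the arguments.

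Finally, for \eqref{eq:lim-to-stoch-hs6v} I would not recompute from scratch but invoke Remark \ref{rm:stoch}, which records $\astoch_k=\widehat a_k\,a_k$, $\bstoch_k=\widehat b_k\,b_k$, and so on, with the $w$-independent correction factors of \eqref{eq:hat-abcd}. Since these depend on $\la$ alone, their limits follow immediately from the first displayed asymptotic (together with the values of the $\la$-independent pieces in $\widehat b_k$ and $\widehat c_k$); for instance $\widehat a_k\to q^{k}$, $\widehat d_k\to s^{2}q^{k}$, and $\widehat b_k\to q^{k/2+1}(1-s^{2}q^{k})/(q-1)$. Multiplying these by the limits of $a_k,\dots,d_k$ already found, every power of $q$ and every factor of $s,\xi,u$ cancels against the prefactors in \eqref{eq:lim-to-hs6v}, leaving exactly the stochastic weights \eqref{eq:hs6v-stoch-weights}: e.g.\ $\lim\astoch_k=q^{k}\cdot q^{-k}w(k,0;k,0)=L(k,0;k,0)$ and $\lim\dstoch_k=s^{2}q^{k}\cdot(-s^{-1}q^{-k})\,w(k,1;k,1)=-s\,w(k,1;k,1)=L(k,1;k,1)$. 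The only genuine obstacle throughout is the sign-and-exponent bookkeeping in the $f(-\la+\cdots)/f(\la)$ limit and the attendant half-powers; once those are handled consistently, the whole statement reduces to a routine verification.
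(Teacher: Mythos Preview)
Your proposal is correct and follows essentially the same approach as the paper: the paper's proof is the single sentence ``Straightforward computations using the definitions of the IRF weights, see \eqref{eq:abcd}, \eqref{eq:stoch-weights}, and the second line of \eqref{eq:lim-sin},'' and what you describe is precisely a fleshed-out version of this, using the asymptotics of $\sin\pi x$ ratios (your two displayed limits are exactly the content of \eqref{eq:lim-sin}) together with the parameter dictionary \eqref{eq:parameters}. Your organizational choice of routing the stochastic case through the factorization $\astoch_k=\widehat a_k\,a_k$ of Remark \ref{rm:stoch} is a minor but convenient variation on computing directly from \eqref{eq:stoch-weights}; the arithmetic is the same.
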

\begin{proof} Straightforward computations using the definitions of the IRF weights, see \eqref{eq:abcd}, \eqref{eq:stoch-weights}, and the second line of \eqref{eq:lim-sin}. 
\end{proof}

The second group of limiting relations \eqref{eq:lim-to-stoch-hs6v} says that such a limit of our pre-stochastic IRF model of Section \ref{sc:stoch-IRF} with the row and column parameter matching as in \eqref{eq:parameters} is exactly the stochastic higher spin six vertex model from \cite{BP}. 

The relations \eqref{eq:lim-to-hs6v} also have a meaning. Observe that the four expressions in the right-hand side of \eqref{eq:lim-to-hs6v} can be written in a unified way as 
$$
\frac{[i_1]_q!}{q^{i_2}[i_2]_q!} \,s^{-\mathbf{1}_{j_2=1}} (-1)^{\mathbf{1}_{j_1=1}}\cdot w(i_1,j_1;i_2,j_2),
$$
where $(i_1,j_1;i_2,j_2)$ take all allowed values, and we use the notation $q^{\pm 1/2}=e^{\mp2\pi\i \eta}$ and 
$$
[m]_q=\frac{q^{m/2}-q^{-m/2}}{q^{1/2}-q^{-1/2}},\qquad [m]_q!=[1]_q[2]_q\cdots [m]_q,\qquad m=1,2,\dots.
$$
Via Proposition \ref{pr:B-picture} and the corresponding \cite[Definition 4.4]{BP}, this immediately implies the following limiting relation: For any nonnegative signatures $\mu=0^{m_0}1^{m_1}2^{m_2}\cdots$ and $\nu=0^{n_0}1^{n_1}2^{n_2}\cdots$ with $\ell(\nu)=N$, $\ell(\mu)=M$, $N-M=k\ge 1$, as long as the row and column parameters of the IRF model of Section \ref{sc:IRF} and the higher spin six vertex model of \cite{BP} match as in \eqref{eq:parameters}, we have
\begin{equation}\label{eq:B-to-F}
\lim_{\la\to -\i\infty} B_{\nu/\mu}(\la;w_1,\dots,w_k)=(-1)^k q^{\frac{k(N+M+1)}2}(-s)^{\sum_i\nu_i-\sum_i\mu_i}
\prod_{j\ge 0}\frac{[n_j]_q!}{[m_j]_q!}\cdot F_{\nu/\mu}(u_1,\dots,u_k),
\end{equation}
where $F_{\nu/\mu}$'s are the analogs of the $B_{\nu/\mu}$'s in the context of the higher spin six vertex model, see \cite[Section 4]{BP} for details. The correction factors in the right-hand side of \eqref{eq:B-to-F} were the reason for us not to use $F_{*/*}$ notation for our $B$-functions. 

It is also not difficult to obtain similar limits for all the related objects (like $D_{\nu/\mu}$'s or the coefficients in the left-hand side of \eqref{eq:C-symm}), but we will refrain from doing that as we have no immediate applications for such formulas. 
 
\subsection{The dynamic stochastic six vertex model}\label{ss:6v-IRF} As was mentioned in Remark \ref{rm:finite-spin}, we can set $\La_j\equiv I+m/\eta$ for all $j\ge 1$ and integral $I\ge 0$ and $m$, with the result being that all vertical edges in the pre-stochastic IRF model in the quadrant of Definition \ref{df:quadrant-IRF} will be occupied by no more than $I$ paths. Let us set $I=1$; this is the spin $\frac 12$ case with all possible plaquettes being of the form pictured in Figure \ref{fig:vertices-6v}. We will use the symbols $\nul$, $\uu$, $\ru$, $\ur$, $\rr$, $\all$ to denote the corresponding plaquette types (the order is the same as in Figure \ref{fig:vertices-6v}), and we will use $\w_{\lambda,w}(\,\cdot\,)$ to denote the corresponding weights. 

Substituting $\La=1$ into \eqref{eq:stoch-weights}, and also using \eqref{eq:parameters} to rename the parameters, leads to
\begin{equation}\label{eq:dyn-stoch-6v}
\begin{gathered}
\w_{\la,w}(\nul)=\astoch_0(\la,w)=1,\qquad
\w_{\la,w}(\all)=\dstoch_1(\la,w)=1\\
\w_{\la,w}(\uu)=\astoch_1(\la,w)=\frac{f(z-w)}{f(z-w+2\eta)}\frac{f(\la-2\eta)}{f(\la)}=\frac{1-q^{\frac 12}\xi u}{1-q^{-\frac 12}\xi u}\cdot \frac{q^{-1}-e^{2\pi \i\la}}{1-e^{2\pi \i\la}}\,,\\
\w_{\la,w}(\ru)=\bstoch_0(\la,w)=\frac{f(-\la+z-w)}{f(z-w+2\eta)}\frac{f(-2\eta)}{f(\la)}=\frac{1-q^{-1}}{1-q^{-\frac 12}\xi u}\cdot \frac{q^{\frac 12}\xi u-e^{2\pi \i\la}}{1-e^{2\pi \i\la}}\,,\\
\w_{\la,w}(\ur)=\cstoch_1(\la,w)=\frac{f(\la+z-w)}{f(z-w+2\eta)}\frac{f(2\eta)}{f(\la)}=\frac{\bigl(q^{\frac 12}-q^{-\frac 12}\bigr)\xi u}{1-q^{-\frac 12}\xi u}\cdot \frac{(q^{\frac 12}\xi u)^{-1}-e^{2\pi \i\la}}{1-e^{2\pi \i\la}}\,,\\
\w_{\la,w}(\rr)=\dstoch_0(\la,w)=\frac{f(z-w)}{f(z-w+2\eta)}\frac{f(\la+2\eta)}{f(\la)}=\frac{q^{-1}-q^{-\frac 12}\xi u}{1-q^{-\frac 12}\xi u}\cdot \frac{q-e^{2\pi \i\la}}{1-e^{2\pi \i\la}}\,,
\end{gathered}
\end{equation}
where we use the notation $q^{\pm 1/2}=e^{\mp 2\pi \i\eta}$. 

In the limit $\la\to-\i\infty$ considered in the previous subsection, the very last factors in the right-hand sides of \eqref{eq:dyn-stoch-6v} tend to 1, and one obtains the weights of the stochastic six vertex model, cf. \cite{GS}, \cite{BCG}, \cite[Section 6.5]{BP-lect}. We add the word `dynamic' to the name of the model in order to reflect the presence of the dynamic parameter $\la$ that changes between different vertices (according to the plaquettes of Figure \ref{fig:vertices-6v}). Another suitable name for the same (pre-)stochastic system would be `spin $\frac12$ stochastic IRF model'. 

\subsection{The rational pre-stochastic IRF model}\label{ss:rat-IRF} All the objects that were considered in Sections \ref{sc:prelim} through \ref{sc:stoch-IRF} have well-defined limits as the participating parameters (except for $\tau$ and $\La_i$'s) tend to 0 at the same rate. One can then use the linear approximation $f(x)\sim \mathrm{const}\cdot x$ as $x\to 0$ and rewrite all the formulas in terms of rational functions; thus, this limit is called \emph{rational}. For the stochastic plaquette weights \eqref{eq:stoch-weights}, this amounts to simply removing the letter `$f$' from the formulas.

Let us write out the weights of the corresponding rational dynamic stochastic six vertex model. Slightly abusing the notation, we replace 
$2\eta \to\epsilon, \  \la\to\epsilon \la,\  z\to\epsilon z,\ w\to \epsilon w$, multiply fillings of the unit boxes by $\epsilon^{-1}$, take $\epsilon\to 0$, and denote the limiting plaquette weights by $\rw_{\la,w}(\,\cdot\,)$, similarly to the previous subsection. The plaquettes in this limit have the same form as those in Figure \ref{fig:vertices-6v} but with $2\eta=1$, and their weights are
\begin{equation}\label{eq:rat-dyn-stoch-weights}
\begin{gathered}
\rw_{\la,w}(\nul)=1,\qquad
\rw_{\la,w}(\all)=1,\\
\rw_{\la,w}(\uu)=\frac{(\la-1)(z-w)}{\la(z-w+1)}\,,\qquad
\rw_{\la,w}(\ru)=\frac{\la-z+w}{\la(z-w+1)}\,,\\ 
\rw_{\la,w}(\ur)=\frac{\la+z-w}{\la(z-w+1)}\,,\qquad
\rw_{\la,w}(\rr)=\frac{(\la+1)(z-w)}{\la(z-w+1)}\,.
\end{gathered}
\end{equation}
Note that if all the parameters are real, $(z-w)>0$, and $|\la|$ is sufficiently large, all these weights are positive, and the model is actually stochastic. 

\subsection{The dynamic simple exclusion processes}\label{ss:exclusion} Consider the dynamic six vertex model of Section \ref{ss:6v-IRF} in the quadrant (cf. Definition \ref{df:quadrant-IRF}), choose all the inhomogeneity parameters to be equal: $\xi_i\equiv \xi$, all the spectral 
parameters to be equal: $u_i\equiv u$, and set
$
\xi u=q^{-\frac 12}(1+(1-q)\epsilon)
$
with $0<\epsilon\ll 1$. Then, introducing the notation $\al=-e^{-2\pi \i\la}$ and redenoting $\w_{\la,w}$ by $\w_{\al,\epsilon}$, from \eqref{eq:dyn-stoch-6v} we obtain 
\begin{equation}\label{eq:weights-al-ep}
\begin{gathered}
\w_{\al,\epsilon}(\nul)=\w_{\alpha,\epsilon}(\all)=1, \qquad \w_{\al,\epsilon}(\ru)=1+O(\epsilon),\quad \w_{\al,\epsilon}(\ur)=1+O(\epsilon),\\
\w_{\al,\epsilon}(\uu)=\frac{q+\al}{1+\al}\,\epsilon+O(\epsilon^2),\qquad \w_{\al,\epsilon}(\rr)=\frac{1+\al q}{1+\al}\,\epsilon+O(\epsilon^2).
\end{gathered}
\end{equation}
The new dynamic parameter is $\al$, and it changes according to the plaquettes of six types pictured in Figure \ref{fig:vertices-6v-alpha} (this is simply Figure \ref{fig:vertices-6v} redrawn with the new notations $\al=-e^{-2\pi \i\la}$ and $q=e^{-4\pi \i\eta}$).
\begin{figure}
\includegraphics[scale=1]{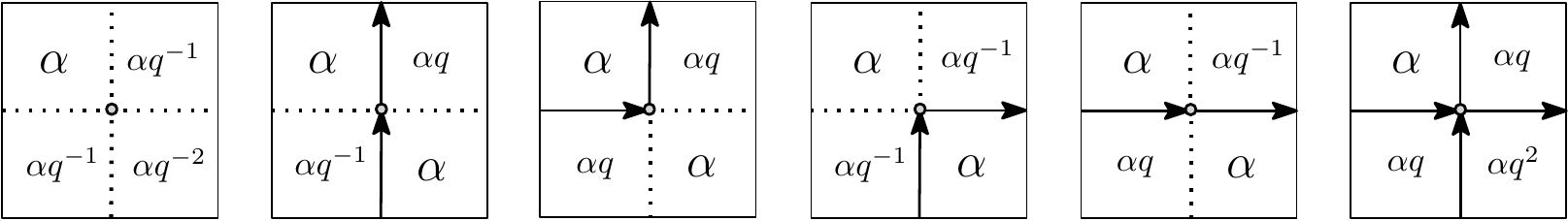}
\caption{Spin $\frac 12$ plaquettes in $(\alpha,q)$-notation.}
\label{fig:vertices-6v-alpha}
\end{figure}

Assuming that the parameters $q$ and $\al$ are such that the weights \eqref{eq:weights-al-ep} are always nonnegative (for example, $\al,q>0$), we see that for such a stochastic model in the quadrant, and for a small enough $\epsilon$, all paths will start with large deterministic pieces because of the smallness of 
$\w_{\al,\epsilon}(\uu)$ and $\w_{\al,\epsilon}(\rr)$; those will consist of strictly alternating vertices of types $\ru$ and $\ur$. An illustration of the finite neighborhood of the origin can be found in Figure \ref{fig:time-0}; there we use $\al_0$ to denote the dynamic $\al$-parameter  of the corner unit box $[0,1]\times [0,1]$. 

\begin{figure}
\includegraphics[scale=0.9]{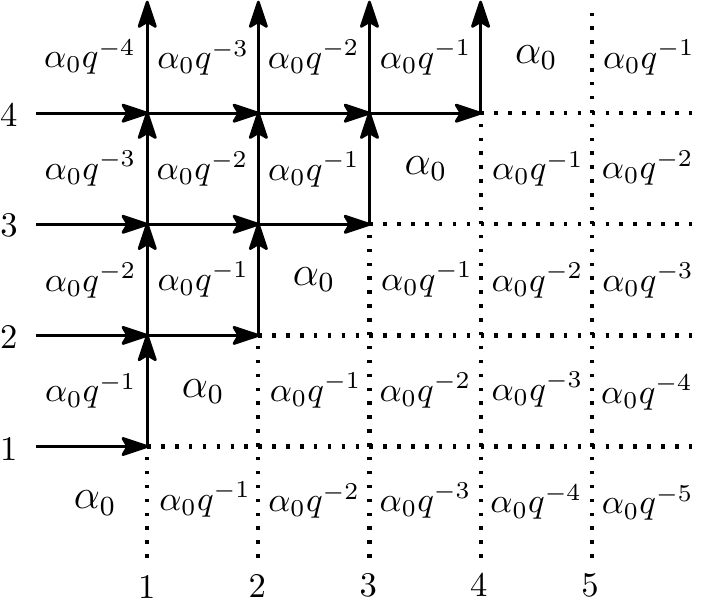}
\caption{The spin $\frac 12$ stochastic IRF configuration at $\epsilon=0$.}
\label{fig:time-0}
\end{figure}

However, as we consider the part of the quadrant of the form $[0,M]\times[0,M]$ with $M\sim \epsilon^{-1}$, finitely many vertices of types $\uu$ and $\rr$ will appear, with all of them being at finite distance from the diagonal of the quadrant. This is very similar to the limit from the stochastic six vertex model to the one-dimensional asymmetric simple exclusion process (ASEP, for short) that was considered in detail in \cite{Agg-ASEP}, where this limit transition was rigorously proven for arbitrary boundary/initial conditions (earlier mentions of this limit can be found in \cite{GS,BCG,BP,BP-lect}). It is possible to extend the proof of \cite{Agg-ASEP} to the case of the dynamic stochastic six vertex model (at least for the model in the quadrant), and we call the new limiting object \emph{the dynamic ASEP}. Let us describe it. We will only consider the initial condition of the dynamic ASEP that arises from the dynamic six vertex model in the quadrant. 

The state space of the dynamic ASEP originates from the fillings of the unit boxes along a horizontal row of the form $S_k=\R_{\ge 0}\times [k,k+1]$, $k\ge 0$, shifted by $k$ units to the left. In other words, we identify the $S_k$'s with (growing in $k$) subsets of a canonical strip $S=\R\times[0,1]$ by shifts chosen so that the diagonal square $[k,k+1]\times [k,k+1]$ of $S_k$ corresponds to the (independent of $k$) unit square $[0,1]\times[0,1]$ of $S$. 

The time $t$ state will come from $S_k$ with $k=t\epsilon^{-1}$ for any $t>0$, and the initial condition $t=0$ corresponds to $k=o(\epsilon)$. Figure \ref{fig:time-0} then shows that at time $t=0$, the box $[x,x+1]\times[0,1]$ of $S$ is filled by $\alpha_0 q^{-|x|}$. Clearly, at any time $t$ all the fillings will be from $\alpha_0q^\Z$, and we will encode a filling of $S$ by a sequence of integers $\{s_x\}_{x\in\Z}\subset \Z$ such that the filling of the box $[x,x+1]\times[0,1]$ is exactly $\alpha_0 q^{-s_x}$. In this notation, the initial condition at $t=0$ is the sequence $\{s_x=|x|\}_{x\in\Z}$. From the form of the plaquettes in Figure \ref{fig:vertices-6v-alpha} it is obvious that we always have $s_{x+1}-s_{x}\in\{-1,1\}$. 
The function $x\mapsto s_x$ can be viewed as a height function of the model, which is, however, slightly different from the height function that we use in Sections \ref{sc:observ} and \ref{sc:SSEP} below. 

The increments $\{s_{x+1}-s_{x}\}_{x\in\Z}$ can be encoded by a particle configuration in $\Z+\frac 12$, where we say that there is a particle that resides at $x+\frac 12$ if and only if 
$s_{x+1}-s_{x}=-1$. Our initial condition $\{s_x=|x|\}_{x\in\Z}$ then corresponds to particles filling up the negative semi-axis $\{-\frac 12,-\frac 32,-\frac 52, \dots\}$, which is the so-called step initial condition in the theory of exclusion processes.  

The time evolution corresponds to looking at higher and higher rows $S_k$ in the quadrant, and it is a continuous time Markov process on sequences $\{s_x\}_{x\in \Z}$. Its jumps correspond to the (rare) appearances of vertices of the types $\uu$ and $\rr$. The appearance of a $\uu$ vertex corresponds to a modification of the form $(s_x=i+1\mapsto s_x=i-1)$ or the jump of a particle from $x+\frac 12$ to $x-\frac 12$; the appearance of a $\rr$ vertex corresponds to a modification of the form $(s_x=i-1\mapsto s_x=i+1)$ or the jump of a particle from $x-\frac 12$ to $x+\frac 12$, see the second and fifth plaquettes in Figure \ref{fig:vertices-6v-alpha}. The rates of those jumps are the coefficients of $\epsilon$ in the second line of \eqref{eq:weights-al-ep} with $\al=\al_0q^{-s_x}$. 

Summarizing the above leads to the following definition. 

\begin{definition}\label{df:dynamic-ASEP} The dynamic ASEP is a continuous time Markov chain on the state space of integer-valued sequences $\{s_x\}_{x\in \Z}$ subject to the condition $s_{x+1}-s_x\in\{-1,1\}$ for any $x\in\Z$, that depends on parameters $q,\al\in\R$. Its elementary jumps are independent and have exponential waiting times with variable rates (of course, a jump is allowed only if it does not lead outside of the state space). These jumps can be of two kinds, and their form and the corresponding rates are

\smallskip

$s_x\mapsto (s_x-2)$\quad with rate\quad $\dfrac{q(1+\alpha q^{-s_x})}{1+\alpha q^{-s_x+1}}$
\qquad and \qquad
$s_x\mapsto (s_x+2)$ \quad with rate\quad $\dfrac{1+\al q^{-s_x}}{1+\al  q^{-s_x-1}}\,$,
\smallskip

\noindent where $x\in\Z$ is arbitrary. The parameters $q$ and $\al$ are assumed to be such that the above rates are always nonnegative, which is the case, for example, for $q,\al > 0$ and an arbitrary initial condition, or for $q>1$ and $\alpha>-q^{-c}$ and initial conditions satisfying $s_x(0)\ge |x|+c$ for any $x\in\Z$ and some constant $c\in\Z$. 
\end{definition}

Observe that setting $\al=0$ turns this Markov chain into the usual ASEP with the left jump rate equal to $q$ and the right jump rate equal to 1. Also, the multiplicative $q$-shift of the $\al$-parameter $\al\mapsto q\al$ is equivalent to the global shift $s_x\mapsto s_x-1$ for all $x\in \Z$.

The dynamic ASEP has a slightly simpler description in terms of the shifted sequences $\{\widetilde s_x:=s_z-\log _q|\alpha|\}$ that still satisfy $\widetilde s_{x+1}-\widetilde s_x\in\{-1,1\}$ but are no longer integer-valued. Then the elementary transitions and their rates are, taking the upper signs for $\al>0$ and the lower signs for $\al<0$,

\smallskip
$\widetilde s_x\mapsto (\widetilde s_x-2)$\quad with rate\quad $\dfrac{q(1\pm q^{-\widetilde s_x})}{1\pm q^{-\widetilde s_x+1}}$
\qquad and \qquad
$\widetilde s_x\mapsto (\widetilde s_x+2)$ \quad with rate\quad $\dfrac{1\pm q^{-\widetilde s_x}}{1\pm q^{-\widetilde s_x-1}}\,$.
\smallskip

\noindent This way the parameter $\al$ gets hidden in the choice of the initial condition and of $\pm$'s.

Let us now formally state the convergence of the dynamic stochastic six vertex model to the dynamic ASEP. 

\begin{proposition}\label{pr:6v-ASEP} Consider the dynamic stochastic six vertex model of Section \ref{ss:6v-IRF} in the quadrant, with the bottom-left unit box $[0,1]\times [0,1]$ filled by $\la$ (denote $\al=-e^{-2\pi\i\la}$), the inhomogeneity parameters $\xi_i\equiv\xi$, and the spectral parameters $u_i\equiv u$. Assume that $q,\al>0$, or $q>1$ and $\al>-1$.
Assume that $\xi u=q^{-\frac 12}(1+(1-q)\epsilon)$ for an $\epsilon>0$. Then for any $t_1,\dots,t_n\ge 0$ and $x_1,\dots,x_n\in \Z$, the random $n$-dimensional vector of fillings of the unit boxes
$$
[T_i+x_i,T_i+x_i+1]\times [T_i, T_i+1],\qquad\quad  T_i=[t_i\epsilon^{-1}],\quad 1\le i\le n, 
$$ 
converges in distribution and with all moments as $\epsilon\to +0$ to the random vector $\bigl(\al q^{-s_{x_i}(t_i)}\bigr)_{i=1}^n$, where $\{s_x(t)\}_{x\in\Z,t\ge 0}$ is the time $t$ state of the dynamic ASEP with parameters $(q,\al)$ started from $s_x(0)\equiv |x|$. 
\end{proposition}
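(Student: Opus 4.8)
The plan is to realize both the discrete model and its limit as Markov processes on particle configurations and to compare their generators. Following the identification explained just before Definition~\ref{df:dynamic-ASEP}, the sequential left-to-right update of the dynamic stochastic six vertex model across one horizontal row (valid because, in the quadrant of Definition~\ref{df:quadrant-IRF}, every box whose left and bottom neighbors are filled can be sampled via the Bernoulli split \eqref{eq:weights-add-up}), composed with the left shift by one unit that aligns successive strips $S_{k+1}$ and $S_k$, defines a discrete-time Markov kernel $P_\epsilon$ on the space of height sequences $\{s_x\}_{x\in\Z}$ with $s_{x+1}-s_x\in\{-1,1\}$ (equivalently, on particle configurations on $\Z+\tfrac12$). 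With the step initial condition $s_x(0)\equiv|x|$, the time-$t$ state of Proposition~\ref{pr:6v-ASEP} is $P_\epsilon^{[t\epsilon^{-1}]}$ applied to this configuration, so the goal is to prove that $P_\epsilon^{[t\epsilon^{-1}]}$ converges, as $\epsilon\to+0$, to the semigroup $e^{t\mathcal L}$ of the dynamic ASEP, where $\mathcal L$ is the generator read off from Definition~\ref{df:dynamic-ASEP}.

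The heart of the argument is a one-step expansion of $P_\epsilon$ using the explicit weights \eqref{eq:weights-al-ep}. Since $\w_{\al,\epsilon}(\uu)$ and $\w_{\al,\epsilon}(\rr)$ are $O(\epsilon)$ while all other plaquette weights are $1+O(\epsilon)$, within any fixed finite window a single row produces, with probability $1-O(\epsilon)$, no $\uu$ or $\rr$ plaquette and hence (in the shifted frame) no change of the configuration, exactly one such plaquette with probability $O(\epsilon)$, and two or more with probability $O(\epsilon^2)$. A single $\uu$ plaquette at horizontal coordinate $x$ implements the jump $s_x\mapsto s_x-2$ at rate equal to the coefficient of $\epsilon$ in $\w_{\al,\epsilon}(\uu)$, and a single $\rr$ plaquette implements $s_x\mapsto s_x+2$; reading the local dynamic parameter off the relevant filling as $\al=\al_0q^{-s_x+1}$ (the unit shift coming from which box's filling actually enters a given plaquette weight) matches these coefficients \emph{exactly} with the two jump rates of Definition~\ref{df:dynamic-ASEP}, e.g.\ $\tfrac{q+\al}{1+\al}=\tfrac{q(1+\al_0q^{-s_x})}{1+\al_0q^{-s_x+1}}$. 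Consequently, for any cylinder function $f$ one obtains $\epsilon^{-1}\bigl(P_\epsilon f-f\bigr)\to\mathcal L f$ pointwise, uniformly over configurations whose fillings lie in a fixed bounded range.

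To upgrade this generator convergence to convergence of the processes (as finite-dimensional distributions, as required for the joint vector at times $t_1,\dots,t_n$), I would follow the route of \cite{Agg-ASEP}, whose two ingredients are a localization estimate and a standard Markov approximation theorem. For localization, the step initial condition forces the configuration to be the deterministic alternating $\ru/\ur$ backbone outside a window around the diagonal; since the nontrivial plaquettes are rare, a coupling/comparison argument shows that up to any fixed time $t$ the active region stays inside a window that is tight as $\epsilon\to0$, so the infinite system is effectively finite and the one-step expansion applies with uniform error. One then invokes a Trotter--Kurtz type approximation theorem (or, equivalently, Aggarwal's direct Poisson-clock coupling) to conclude $P_\epsilon^{[t\epsilon^{-1}]}\to e^{t\mathcal L}$ on local observables, giving convergence in distribution of the fillings. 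Convergence of all moments then follows from uniform integrability, which holds because on the relevant windows the fillings $\al_0q^{-s_x}$ are bounded: under the step initial condition and the sign assumptions on $(q,\al)$, lower bounds of the form $s_x\ge|x|$ persist and control $q^{-s_x}$.

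The main obstacle I anticipate is precisely the adaptation of Aggarwal's localization and coupling to the \emph{dynamic} setting. In the constant-rate six vertex model the jump rates are fixed, whereas here every rate depends on the evolving local height through $\al_0q^{-s_x}$, so a priori the rates are unbounded and state-dependent; one must verify non-explosion of the limiting process and, more delicately, that the coupling used to establish tightness of the active region respects the dynamic weights. The split \eqref{eq:weights-add-up} still makes the update Markovian, but the monotonicity (attractiveness) properties exploited in the constant-rate case must be re-examined, since the bias now varies from box to box. Establishing the requisite uniform tail bounds on path excursions in the presence of these state-dependent rates is where the real work lies; the remainder is a routine, if lengthy, adaptation of the arguments of \cite{Agg-ASEP}.
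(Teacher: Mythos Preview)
Your approach is exactly what the paper has in mind: the paper's ``proof'' consists of the single sentence ``The proof is similar to that of \cite[Theorem~3]{Agg-ASEP} and we omit it,'' and your proposal is a faithful outline of precisely that route (generator expansion from the one-row update, localization/tightness near the diagonal, and a Trotter--Kurtz/coupling passage to the limiting semigroup), including an honest identification of the main technical point---adapting Aggarwal's localization and coupling to the state-dependent dynamic rates.
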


The proof is similar to that of \cite[Theorem 3]{Agg-ASEP} and we omit it. 

Much of the same story applies to the rational stochastic IRF of Section \ref{ss:rat-IRF} converging to a dynamic version of the symmetric simple exclusion process (SSEP, for short), and we conclude this section by giving the analogs of Definition \ref{df:dynamic-ASEP} and Proposition \ref{pr:6v-ASEP}.

\begin{definition}\label{df:dynamic-SSEP} The dynamic SSEP is a continuous time Markov chain on the state space of integer-valued sequences $\{s_x\}_{x\in \Z}$ subject to the condition $s_{x+1}-s_x\in\{-1,1\}$ for any $x\in\Z$, that depends on a parameter $\la\in\R$. Its elementary jumps are independent and have exponential waiting times with variable rates. These jumps can be of two kinds, and their form and the corresponding rates are

\smallskip

$s_x\mapsto (s_x-2)$\quad with rate\quad $\dfrac{s_x-\la}{s_x-1-\la}$
\qquad and \qquad
$s_x\mapsto (s_x+2)$ \quad with rate\quad $\dfrac{s_x-\la}{s_x+1-\la}\,$,
\smallskip

\noindent where $x\in\Z$ is arbitrary. The parameter $\la$ is assumed to be such that the above rates are always nonnegative, which is the case, for example, for $\la<c$ for initial conditions satisfying $s_x(0)\ge |x|+c$ for any $x\in\Z$ and some constant $c\in\Z$. 
\end{definition} 

In terms of the shifted sequences $\{\widetilde s_x:=s_z-\lambda\}$ that still satisfy $\widetilde s_{x+1}-\widetilde s_x\in\{-1,1\}$ but are no longer integer-valued, the elementary jumps of the dynamic SSEP and their rates are

\smallskip
$\widetilde s_x\mapsto (\widetilde s_x-2)$\quad with rate\quad $\dfrac{\widetilde s_x}{\widetilde s_x-1}$
\qquad and \qquad
$\widetilde s_x\mapsto (\widetilde s_x+2)$ \quad with rate\quad $\dfrac{\widetilde s_x}{\widetilde s_x+1}\,$.
\smallskip

\begin{proposition}\label{pr:6v-SSEP} Consider the rational stochastic IRF model of Section \ref{ss:rat-IRF} in the quadrant, with the bottom-left unit box $[0,1]\times [0,1]$ filled by $\la<0$, the inhomogeneity parameters $z_i\equiv z$, and the spectral parameters $w_i\equiv w$.
Assume that $z-w=\epsilon$ for an $\epsilon>0$. Then for any $t_1,\dots,t_n\ge 0$ and $x_1,\dots,x_n\in \Z$, the random $n$-dimensional vector of fillings of the unit boxes
$$
[T_i+x_i,T_i+x_i+1]\times [T_i, T_i+1],\qquad\quad  T_i=[t_i\epsilon^{-1}],\quad 1\le i\le n, 
$$ 
converges in distribution and with all moments as $\epsilon\to +0$ to the random vector $(\la-s_{x_i}(t_i))_{i=1}^n$, where $\{s_x(t)\}_{x\in\Z,t\ge 0}$ is the time $t$ state of the dynamic SSEP with parameter $\la$ started from $s_x(0)\equiv |x|$. 
\end{proposition}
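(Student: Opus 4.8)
The plan is to realize the dynamic SSEP as a scaling limit of the sequential Markovian construction of the rational stochastic IRF model in the quadrant, following verbatim the six-vertex-to-ASEP limit of \cite{Agg-ASEP}; this is the exact analog of Proposition \ref{pr:6v-ASEP}, whose proof was omitted for the same reason. As in Section \ref{ss:exclusion}, I would identify the row $S_k=\R_{\ge 0}\times[k,k+1]$ with the canonical strip $S=\R\times[0,1]$ by the shift sending the diagonal box to $[0,1]^2$, encode the filling of the box $[x,x+1]\times[0,1]$ by an integer $s_x$ through the rule ``filling $=\la-s_x$'', and set the macroscopic time to $t=k\epsilon$, i.e. $k=[t\epsilon^{-1}]$. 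The boundary conditions of the quadrant then force the $\epsilon\to 0$ frozen profile $s_x(0)=|x|$, which is exactly the claimed step initial condition.

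First I would expand the rational weights \eqref{eq:rat-dyn-stoch-weights} at $z-w=\epsilon$: the corner plaquettes satisfy $\rw_{\la,w}(\ru)=1+O(\epsilon)$ and $\rw_{\la,w}(\ur)=1+O(\epsilon)$, while the through-plaquettes are rare,
\[
\rw_{\la,w}(\uu)=\frac{\la-1}{\la}\,\epsilon+O(\epsilon^2),\qquad \rw_{\la,w}(\rr)=\frac{\la+1}{\la}\,\epsilon+O(\epsilon^2).
\]
Hence in any finite window near the diagonal the configuration stays within $O(\epsilon)$ of the deterministic strictly-alternating pattern of $\ru$'s and $\ur$'s, and the rare $\uu$ and $\rr$ plaquettes produce the elementary moves $s_x\mapsto s_x-2$ and $s_x\mapsto s_x+2$. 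Over a block of $\sim\epsilon^{-1}$ rows these rare defects Poissonize, and the geometric waiting times between successive defects at a fixed site converge to exponential waiting times whose rates are the coefficients of $\epsilon$ above, evaluated at the \emph{local} dynamic parameter. Identifying that local parameter with the filling $\la-s_x$ at the transition, up to the deterministic one-box offset dictated by rules (a),(b) of Section \ref{sc:IRF} (precisely as in the ASEP bookkeeping leading to \eqref{eq:weights-al-ep}, where the offset turns $\la^{\mathrm{loc}}=\la-s_x\pm1$), reproduces exactly the two rates of Definition \ref{df:dynamic-SSEP}.

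To upgrade this heuristic to convergence in distribution with all moments, I would run the generator-comparison argument of \cite{Agg-ASEP}, the only genuinely new feature being the state dependence of the rates. For $\la<0$ and the step initial condition one has $s_x(t)\ge|x|$, so $s_x-\la>0$ and both rates $\frac{s_x-\la}{s_x-1-\la}$, $\frac{s_x-\la}{s_x+1-\la}$ stay bounded (the first slightly above and the second slightly below $1$, both tending to $1$ as $s_x\to\infty$); this a priori bound prevents accumulation of jumps, yields tightness, and supplies the uniform integrability needed to pass from convergence in distribution to convergence of all moments of $(\la-s_{x_i}(t_i))_{i=1}^n$.

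The step I expect to be the main obstacle is the rigorous control of the sequential up-right update \emph{inside} a single row together with the non-accumulation of rare events in this state-dependent-rate setting --- precisely the technical core of \cite{Agg-ASEP}, which must be re-run rather than merely quoted because here the jump rates depend on the local height through $\la-s_x$. As an independent cross-check, the statement is consistent with Proposition \ref{pr:6v-ASEP} under the degeneration $q\to1$ with $\al=-q^{\la}$ that defines the rational model of Section \ref{ss:rat-IRF}: in the shifted variables $\widetilde s_x=s_x-\la$ the dynamic-ASEP rates $\tfrac{q(1-q^{-\widetilde s_x})}{1-q^{-\widetilde s_x+1}}$ and $\tfrac{1-q^{-\widetilde s_x}}{1-q^{-\widetilde s_x-1}}$ converge, using $1-q^{-\widetilde s_x}\sim(\ln q)\,\widetilde s_x$, to the dynamic-SSEP rates $\tfrac{\widetilde s_x}{\widetilde s_x-1}$ and $\tfrac{\widetilde s_x}{\widetilde s_x+1}$ of Definition \ref{df:dynamic-SSEP}.
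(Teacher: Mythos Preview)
Your proposal is correct and follows exactly the approach the paper indicates: the paper gives no proof of this proposition at all, stating only ``As for Proposition \ref{pr:6v-ASEP}, the proof is similar to that of \cite[Theorem 3]{Agg-ASEP} and we omit it.'' Your sketch---expanding the rational plaquette weights \eqref{eq:rat-dyn-stoch-weights} at $z-w=\epsilon$, identifying the rare $\uu$ and $\rr$ events as Poissonized jumps with the rates of Definition \ref{df:dynamic-SSEP}, and appealing to the generator-comparison machinery of \cite{Agg-ASEP}---is precisely the intended argument, and your bookkeeping of the one-box offset $\la^{\mathrm{loc}}=\la-s_x\pm 1$ and the cross-check via the $q\to 1$ degeneration from Proposition \ref{pr:6v-ASEP} are both apt.
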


As for Proposition \ref{pr:6v-ASEP}, the proof is similar to that of \cite[Theorem 3]{Agg-ASEP} and we omit it. 

One can also view Proposition \ref{pr:6v-SSEP} as a limiting version of Proposition \ref{pr:6v-ASEP} as $\al$ approaches $-1$ and $q$ approaches $1$ at the same rate.  

\section{Observables}\label{sc:observ}

Let us now return to the (pre-)stochastic IRF model in the quadrant, as described in Definition \ref{df:quadrant-IRF}, and let us use $\la$ to denote the filling of the corner unit box $[0,1]\times [0,1]$. We aim at defining a family of observables for this model and computing their averages. 

For any $x,N\in \{1,2,\dots\}$ and an IRF configuration in the quadrant, define $\HT(x,N)$ as the number of paths that pass through or below the vertex with coordinates $(x,N)$. Clearly, the quadrant boundary conditions imply $0\le\HT(x,N)\le N$, and it suffices to know the types of the finitely many vertices in the rectangle $[1,x-1]\times[1,N]$ to know what $\HT(x,N)$ is. We call $\HT(x,N)$ the \emph{height function} for our IRF model.]\footnote{This height function is different from the function $x\mapsto s_x$ of Section \ref{ss:exclusion}. Their relationship is explained around \eqref{eq:h-and-s} below.}

Let us assume for a moment that our IRF plaquette weights \eqref{eq:stoch-weights} are actually nonnegative. Then computing the expected value of any observable that is expressed through values of the height function at finitely many locations is a finite procedure. Indeed, all one needs to do is find the outcomes of finitely many Bernoulli trials with biases given by plaquettes weights to construct the portions of the IRF paths that have a chance of passing through or below the observation locations. The resulting expression is a finite linear combination of plaquette weights, and it can certainly be analytically continued beyond the domain of their nonnegativity. In fact, the nonnegativity plays no role in such a computation, we used it just to draw the analogy with the familiar probabilistic Bernoulli trials. In what follows we will use the term `expectation' and the symbol $\E$ to denote the result of such computations, irrespectively of the nonnegativity of the weights. 

Let us introduce another observable at $(x,N)\in\Z_{\ge 1}^2$ that is closely related to the height function and is defined by 
\begin{equation}\label{eq:o-obs}
\o(x,N)=\exp\bigl({2\pi \i (\la-2\eta \HT(x,N))}\bigr)+\exp\bigl({4\pi\i\eta(\HT(x,N)-N+\La_{[1,x)})}\bigr).
\end{equation}
A known $\o(x,N)$ yields two possible values of $\exp(4\pi \i\eta\HT(x,N))$ via solving a quadratic equation. In most cases this allows to reconstruct $\HT(x,N)$ uniquely as one of the roots would not satisfy certain inequalities imposed by the model. 

Let us also rewrite $\o(x,N)$ in the higher spin six vertex model parameters, cf Section \ref{sc:degen}. Using \eqref{eq:parameters} and $\al=-\exp(-2\pi\i\la)$, we have
\begin{equation}\label{eq:o-obs-6v}
\o(x,N)=-\al^{-1} q^{\HT(x,N)}+(s_1s_2\cdots s_{x-1})^{2} q^{N-\HT(x,N)}.
\end{equation}

Observe that the following linear in $\o(x,N)$ expressions split into linear in $q^{\HT(x,N)}$ factors: 
\begin{equation}\label{eq:obs-factor}
q^{N-\La_{[1,x)}}-\al^{-1}q^{2k} -q^{k}\cdot\o(x,N)=q^{N-\La_{[1,x)}}\left(1-q^{k-\HT(x,N)}\right)\left(1+\al^{-1} q^{k+\HT(x,N)-N+\La_{[1,x)}}\right).
\end{equation}
The product of such expressions over $k=0,1,\dots,n-1$ is a degree $n$ polynomial in $\o(x,N)$ on one hand, and on the other hand it is a well-factorized expression 
$$
q^{n(N-\La_{[1,x)})}\left(q^{-\HT(x,N)};q\right)_n \left(-\al^{-1}q^{\HT(x,N)-N+\La_{[1,x)}};q\right)_n
$$
with the standard $q$-Pochhammer symbol notation $(x;q)_m=(1-x)(1-qx)\cdots (1-q^{m-1}x)$. 

Here is the main result of the present section.

\begin{theorem}\label{th:obs} Consider the (pre-)stochastic IRF model in the quadrant of Definition \ref{df:quadrant-IRF} with the corner box $[0,1]\times [0,1]$ filled by $\la$. Then, for any $n,N\ge 1$ and $x_1\ge x_2\ge\dots\ge x_n\ge 1$,
with the notation $q=\exp(-4\pi \i \eta)$, the expression
\begin{equation}\label{eq:expect}
E_N(x_1,\dots,x_n):=\frac{1}{(e^{2\pi\i\la};q)_n}\E \left[\prod_{k=0}^{n-1} 
\left(q^{N-\La_{[1,x_{k+1})}}+e^{2\pi\i\la}q^{2k} -q^{k}\cdot\o(x_{k+1},N)\right)\right]
\end{equation}
is independent of the dynamic parameter $\la$. In particular, taking $\la\to -\i\infty$ shows that, with the identification of parameters \eqref{eq:parameters},
\begin{equation}\label{eq:expect-hs6v}
E_N(x_1,\dots,x_n)=\E_{\mathrm{h.s.6\,v.m.}} \left[\prod_{k=0}^{n-1} \left(q^{\HT(x_{k+1},N)}-q^{k}\right)\right],
\end{equation}
where the expectation in the right-hand side is with respect to the stochastic higher spin six vertex model of Section \ref{ss:hs6v} above.
\end{theorem}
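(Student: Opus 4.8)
The plan is to compute $E_N(x_1,\dots,x_n)$ directly as a weighted sum over the configuration of the $N$ paths at the line $y=N+\frac12$, and then to reduce that sum — using the Cauchy identity for the $\rho$-specialization — to a contour integral that is visibly free of $\la$. The $\la$-independence is then immediate, and the higher spin six vertex model identity follows by a short asymptotic computation in $\la$.

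First I would record that $\HT(x,N)$ is a deterministic function of the crossing positions $\ka=(\ka_1\ge\dots\ge\ka_N\ge1)$ of the paths with the line $y=N+\frac12$ (it counts the $\ka_j$ that are $\le x$), so that $\o(x,N)$, and hence the whole product in \eqref{eq:expect}, depends on the random state only through $\ka$. By Definition \ref{df:quadrant-IRF} and \eqref{eq:prob=B-stoch} the law of $\ka$ is $\Bstoch_\ka(\la-2\eta(N-\La_0);w_1,\dots,w_N)$, and by \eqref{eq:B-stoch} with $\nu=\varnothing$ this equals a $\la$-dependent prefactor times $\Dnorm_\ka(\cdot;\rho)\,\Bnorm_\ka(\cdot;w_1,\dots,w_N)$. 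Thus
\[
(e^{2\pi\i\la};q)_n\, E_N=\sum_\ka \Bstoch_\ka(\cdot;w)\prod_{k=0}^{n-1}\Bigl(q^{N-\La_{[1,x_{k+1})}}+e^{2\pi\i\la}q^{2k}-q^{k}\o(x_{k+1},N)\Bigr),
\]
and I would rewrite each factor by the factorization \eqref{eq:obs-factor}, turning it into $q^{N-\La_{[1,x_{k+1})}}(1-q^{k-\HT(x_{k+1},N)})(1-e^{2\pi\i\la}q^{k+\HT(x_{k+1},N)-N+\La_{[1,x_{k+1})}})$.

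The core step is to generate this observable product from the integrable structure. Following the philosophy of \cite[Section 9]{BP}, I would express the $q^{\pm\HT(x_k,N)}$ factors through an insertion of $n$ auxiliary spectral variables $v_1,\dots,v_n$ tied to the columns $x_1,\dots,x_n$, so that the product over $k$ is reproduced by augmenting $\Bnorm_\ka$ (equivalently, by deforming the specialization $\rho$ with the extra variables), and the sum over $\ka$ then collapses through the Cauchy identity \eqref{eq:cauchy-rho} and its skew form \eqref{eq:general-skew-cauchy}. Concretely, each $v_i$-contour surrounds $\{w_k\}$, and taking residues at $v_i=w_k$ realizes the column/row telescoping $\prod_{j<x_i}\frac{f(v_i-p_j)}{f(v_i-q_j)}\prod_{k\le N}\frac{f(v_i-w_k-2\eta)}{f(v_i-w_k)}$ that encodes $q^{\HT(x_i,N)}$, while the two monomials of $\o$ correspond to the two allowed boundary behaviours of the integrand. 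Resumming $\sum_\ka \Dnorm_\ka(\cdot;\rho)\Bnorm_\ka(\cdot;w,v)$ via \eqref{eq:cauchy-rho} and \eqref{eq:D-rho-expl} leaves a product of elementary $f$-ratios times the $n$-fold integral, and all surviving $\la$-factors — those from the prefactor of \eqref{eq:B-stoch}, from $\Dnorm_\ka(\cdot;\rho)$, and from the $(1-e^{2\pi\i\la}q^{\cdots})$ pieces of the observable — cancel against the normalization $(e^{2\pi\i\la};q)_n$. The outcome is exactly the right-hand side of \eqref{eq:intro4}, which contains no $\la$, and this proves the first assertion.

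I expect the main obstacle to be precisely this core step: pinning down the auxiliary-variable insertion that reproduces $\prod_k(\cdots)$ and verifying that the dynamic shifts $\la\mapsto\la+2\eta(\cdots)$ coming from the branching rules \eqref{eq:B-branching}, \eqref{eq:D-branching} organize the $\la$-dependence into a single prefactor matching $(e^{2\pi\i\la};q)_n$. This is where the dynamic parameter makes the argument strictly harder than in \cite{BP}, and careful bookkeeping using the theta-function identity \eqref{eq:sine-identity} behind \eqref{eq:obs-factor} is required. Finally, for the higher spin six vertex limit I would invoke the $\la$-independence just established together with \eqref{eq:lim-to-stoch-hs6v}, which guarantees that the stochastic IRF measure converges, with all moments, to the higher spin six vertex measure as $\la\to-\i\infty$; a direct computation gives $(e^{2\pi\i\la};q)_n\sim(-1)^n e^{2\pi\i n\la}q^{\binom n2}$ and, by \eqref{eq:obs-factor}, each factor of the product is asymptotic to $-e^{2\pi\i\la}q^{k}(q^{\HT(x_{k+1},N)}-q^{k})$, so the powers of $e^{2\pi\i\la}$ and $q$ cancel against the normalization and leave $\E_{\mathrm{h.s.6\,v.m.}}\bigl[\prod_{k=0}^{n-1}(q^{\HT(x_{k+1},N)}-q^{k})\bigr]$, which is \eqref{eq:expect-hs6v}.
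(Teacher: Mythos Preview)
Your overall strategy --- write the expectation as a $\ka$-sum with weights $\Bstoch_\ka$, recognize $\Bstoch_\ka$ via $\Dnorm_\ka(\cdot;\rho)\Bnorm_\ka(\cdot;w)$, and aim for the $\la$-free integral of \eqref{eq:intro4} --- matches the paper's. But the ``core step'' you flag is where your plan and the paper's proof genuinely diverge, and as written your plan has a gap there.

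You propose to generate the observable by \emph{augmenting} $\Bnorm_\ka$ with extra spectral variables $v_1,\dots,v_n$ and then collapsing $\sum_\ka \Dnorm_\ka(\cdot;\rho)\Bnorm_\ka(\cdot;w,v)$ via the Cauchy identity \eqref{eq:cauchy-rho}. This does not quite make sense: if $\Bnorm_\ka$ carries $N+n$ variables then $\ell(\ka)=N+n$, not $N$, so you are no longer summing over the path configurations you started with; and the Cauchy identity produces a product over \emph{all} pairs $(w_k,v_i)$, with no mechanism for localizing the $v_i$-dependence to the columns $x_i$ so as to produce $\HT_\ka(x_i)$. The heuristic ``taking residues at $v_i=w_k$ realizes the telescoping that encodes $q^{\HT(x_i,N)}$'' is not an argument --- one needs to explain why the residue sum over the $w_k$'s reorganizes into exactly the factored form \eqref{eq:obs-factor} with the correct $\la$-shifts, and the Cauchy identity by itself does not do this.

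The paper's route at this point is different and more concrete. It runs the argument in the \emph{opposite} direction: start from the $\la$-free integral \eqref{eq:expect-int}, show (via the Cauchy identity, as you suggest) that it admits \emph{some} expansion in $B_\nu(\la-2\eta(N-\La_0);w)$, and then compute each coefficient \emph{explicitly} using the orthogonality relations of Theorem~\ref{th:orth}. That coefficient is itself an $(N+n)$-fold contour integral in $w$'s and $v$'s; one first shrinks the $v$-contours onto the $w_k$'s (this is where the restriction $x_1\ge\dots\ge x_n$ enters, forcing $\nu_{t_j}\ge x_j$), and then expands the $w$-contours to $\pm i\infty$ as in the proof of Proposition~\ref{pr:D-rho-expl}. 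The output is a combinatorial sum over ordered $n$-tuples $(t_1,\dots,t_n)$ with an ``inversion'' shift, which is evaluated by a separate lemma (Lemma~\ref{lm:sum}, a generalization of \cite[Lemma~9.14]{BP}); only after this lemma and a telescoping sine identity does one see the product $\prod_k f(2\eta(\HT_\nu(x_{k+1})-k))\,f(-\la+2\eta(\HT_\nu(x_{k+1})+k-N+\La_{[1,x_{k+1})}))$, matching the observable. The $\la$-cancellation is then a triviality, since the integral one started from never contained $\la$. Your final paragraph on the $\la\to-\i\infty$ limit is fine.
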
 
\begin{remark}\label{rm:6v-expect} The expectation in the right-hand side of \eqref{eq:expect-hs6v} was computed in \cite[Lemma 9.11]{BP}; that result implies the following integral representation for $E_n$:
$$
E_N(x_1,\dots,x_n)=\frac{q^{\frac{n(n-1)}2}}{(2\pi\i)^n}
		\oint
		\dots
		\oint
		\prod_{1\le 1\le i<j\le n}\frac{y_i-y_j}{y_i-qy_j}
		\prod_{i=1}^{n}\bigg(
		\prod_{j=1}^{x_i-1}
		\frac{\xi_j-s_jy_i}{\xi_j-s_j^{-1}y_i}
		\prod_{k=1}^{N}\frac{1-qu_ky_i}{1-u_ky_i}\bigg) \frac{d y_i}{y_i}\,,
$$
where the integration contours are positively oriented loops around $\{u_k^{-1}\}_{k=1}^N$ (one can think of this integral as of the sum of the residues of the integrand taken at all possible poles of the form $w_i=u_k$). In the process of proving Theorem \ref{th:obs} we will uncover an equivalent form of this integral in the IRF notations:
\begin{multline}\label{eq:expect-int}
E_N(x_1,\dots,x_n)=\exp\left(-2\pi\i\eta\left(\frac{n(n-1)}2+nN-\sum_i\La_{[1,x_i)}\right)\right)\\ \times\oint\dots\oint \prod_{1\le i<j\le n} \frac{f(v_i-v_j)}{f(v_i-v_j+2\eta)}\prod_{i=1}^n\bigg(\prod_{j=1}^{x_i-1}\frac{f(v_i-p_j)}{f(v_i-q_j)}\prod_{k=1}^N \frac{f(v_i-w_k-2\eta)}{f(v_i-w_k)}\bigg)\,{dv_i},
\end{multline}
where the integration contours are positively oriented loops around $\{w_k\}_{k=1}^N$ (again, one can treat the integral as the sum of residues at $v_i=w_k$). 
\end{remark}

\begin{remark}\label{rm:unique} If all the plaquette weights are nonnegative, and the parameters $\al,q,\{s_j^2\}_{j\ge 1}$ are real, the averages \eqref{eq:expect} uniquely determine the joint distribution of the observables $\o(x_1,N)$, $\dots$, $\o(x_n,N)$ for any $N,x_1,\dots,x_n\ge 1$. Indeed, these are bounded real-valued random variables (cf.~\eqref{eq:o-obs-6v}), and all their joint moments can be extracted by taking linear combinations of various instances of \eqref{eq:expect}.
\end{remark}

Let us also record separately the special case of Theorem \ref{th:obs} with all $x_j$'s being equal.

\begin{corollary}\label{cr:moments} In the notations of Theorem \ref{th:obs}, for any $n,x,N\ge 1$ we have
\begin{equation}\label{eq:moment}
\frac{q^{n(N-\La_{[1,x)})}}{(-\al^{-1};q)_n}\E \left[\left(q^{-\HT(x,N)};q\right)_n \left(-\al^{-1}q^{\HT(x,N)-N+\La_{[1,x)}};q\right)_n\right]=\E_{\mathrm{h.s.6\,v.m.}} \left[\prod_{k=0}^{n-1} \left(q^{\HT(x,N)}-q^{k}\right)\right],
\end{equation}
which can be further evaluated via the integrals of Remark \ref{rm:6v-expect} with all $x_j$'s equal to $x$. 
\end{corollary}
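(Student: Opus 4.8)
The plan is to obtain Corollary \ref{cr:moments} as the equal-arguments specialization of Theorem \ref{th:obs}, so essentially no new work is required beyond careful bookkeeping. First I would set $x_1=x_2=\dots=x_n=x$ in the definition \eqref{eq:expect}. Then the product inside the expectation becomes $\prod_{k=0}^{n-1}\bigl(q^{N-\La_{[1,x)}}+e^{2\pi\i\la}q^{2k}-q^{k}\cdot\o(x,N)\bigr)$, where every factor now involves the \emph{same} observable $\o(x,N)$. The key observation is that, since $\al=-e^{-2\pi\i\la}$ gives $e^{2\pi\i\la}=-\al^{-1}$, each such factor is precisely the left-hand side of the factorization identity \eqref{eq:obs-factor}.

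Next I would invoke \eqref{eq:obs-factor} to rewrite each factor as $q^{N-\La_{[1,x)}}\bigl(1-q^{k-\HT(x,N)}\bigr)\bigl(1+\al^{-1}q^{k+\HT(x,N)-N+\La_{[1,x)}}\bigr)$, and take the product over $k=0,1,\dots,n-1$. The powers of $q^{N-\La_{[1,x)}}$ collect into $q^{n(N-\La_{[1,x)})}$, while the two remaining products assemble, by the very definition of the $q$-Pochhammer symbol $(y;q)_n=(1-y)(1-qy)\cdots(1-q^{n-1}y)$, into $\bigl(q^{-\HT(x,N)};q\bigr)_n$ and $\bigl(-\al^{-1}q^{\HT(x,N)-N+\La_{[1,x)}};q\bigr)_n$ respectively. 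This is exactly the well-factorized expression already recorded in the text immediately after \eqref{eq:obs-factor}, so the identity is merely being read off rather than re-derived.

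Finally, the prefactor $1/(e^{2\pi\i\la};q)_n$ in \eqref{eq:expect} equals $1/(-\al^{-1};q)_n$ under the same substitution $e^{2\pi\i\la}=-\al^{-1}$. Combining the three steps shows that $E_N(x,\dots,x)$ coincides verbatim with the left-hand side of \eqref{eq:moment}. By Theorem \ref{th:obs}, $E_N$ is independent of $\la$ and its $\la\to-\i\infty$ limit equals $\E_{\mathrm{h.s.6\,v.m.}}\bigl[\prod_{k=0}^{n-1}(q^{\HT(x,N)}-q^{k})\bigr]$ under the identification \eqref{eq:parameters}, which supplies the right-hand side of \eqref{eq:moment}. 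The further integral evaluation is then immediate: one simply sets $x_i=x$ for all $i$ in either integral representation of Remark \ref{rm:6v-expect}, so that $\prod_{j=1}^{x_i-1}$ and the factors depending on $x_i$ become identical across all integration variables.

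There is no genuine obstacle here, as all the analytic and combinatorial content is carried by Theorem \ref{th:obs} and the algebraic identity \eqref{eq:obs-factor}; the only point demanding care is the elementary bookkeeping of the powers of $q$ in the product over $k$ together with the consistent use of the identification $e^{2\pi\i\la}=-\al^{-1}$, which must be tracked in the prefactor and in both $q$-Pochhammer factors simultaneously.
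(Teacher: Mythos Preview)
Your proposal is correct and matches the paper's approach exactly: the paper gives no separate proof of the corollary, since the factorization you cite from \eqref{eq:obs-factor} and the $q$-Pochhammer identity displayed immediately after it were recorded precisely so that specializing Theorem~\ref{th:obs} to $x_1=\dots=x_n=x$ yields \eqref{eq:moment} on sight. Your bookkeeping of the prefactor via $e^{2\pi\i\la}=-\al^{-1}$ and of the $q$-powers is accurate.
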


\begin{proof}[Proof of Theorem \ref{th:obs}] As was discussed in the beginning of this section, the desired claim is a statement about a finite linear combination of plaquette weights, which is a rational function in the parameters $q^{1/2}=e^{-2\pi\i\eta}, \al=-e^{-2\pi\i\la}, \{s_i,\xi_i\}_{1\le i<x_1}, \{u_j\}_{1\le j\le N}$, where we use the notation \eqref{eq:parameters}. Hence, it is sufficient to prove for these parameters varying over any open set. 

Let us assume that the parameters satisfy the admissibility assumption of Definition \ref{df:adm}, i.e., we assume that $\eta$ is small enough, $p_i=z_i+(1-\La_i)\eta$ are sufficiently close together, $q_j=z_j+(1+\La_j)\eta$ are sufficiently close together, and these two groups of points are far apart (in the same fundamental strip of $\C/\Z$). 

The proof will proceed as follows: We will start with the integral in the right-hand side of \eqref{eq:expect-int}, divide it by $(2\pi i)^n$, and show that the result is equal to 
\begin{multline}\label{eq:expect-sum}
\frac{1}{\pi^n \prod_{i=0}^{n-1}f(\la-2\eta i)}
\sum_{\nu=(\nu_1\ge\dots\ge\nu_N\ge 1)} \Bstoch_\nu\bigl(\la-2\eta(N-\La_0);w_1,\dots,w_N\bigr) \\ \times \prod_{k=0}^{n-1}f\Bigl(2\eta\bigl (\HT_\nu(x_{k+1})-k\bigr)\Bigr)f\Bigl(-\la+2\eta\bigl(\HT_\nu(x_{k+1})+k-N+\La_{[1,x_{k+1})}\bigr)\Bigr)\\
=\frac{1}{\pi^n \prod_{i=0}^{n-1}f(\la-2\eta i)} \E\left[
 \prod_{k=0}^{n-1}f\Bigl(2\eta\bigl (\HT(x_{k+1},N)-k\bigr)\Bigr)f\Bigl(-\la+2\eta\bigl(\HT(x_{k+1},N)+k-N+\La_{[1,x_{k+1})}\bigr)\Bigr)\right],
\end{multline}
where the equality in \eqref{eq:expect-sum} is due to \eqref{eq:prob=B-stoch} (see also Theorem \ref{th:B-stoch}), and we used the notation 
\begin{equation}\label{eq:height-nu}
\HT_\nu(x)=\#\{i\ge 1: \nu_i\ge x\}.
\end{equation}
Since the initial integral from \eqref{eq:expect-int} is independent of $\la$, this would imply the claim of the theorem, modulo elementary manipulations with $f(z)= \sin\pi z$. 

If we want to see the integral from \eqref{eq:expect-int} divided by $(2\pi\i)^n$ as a linear combination of $\Bstoch_\nu(\la-2\eta(N-\La_0);w_1,\dots,w_N)$, we can alternatively seek the decomposition of
\begin{multline}\label{eq:to-decompose}
\frac{\pi^N c_{\nu}(\la-2\eta(N-\La_0))}{\prod_{i=0}^{N-1}{f(\la-2\eta i)}}
\prod_{k=1}^N \frac{f(w_k-p_0)}{f(w_k-q_0)}
\\\times \oint\dots\oint \prod_{1\le i<j\le n} \frac{f(v_i-v_j)}{f(v_i-v_j+2\eta)}\prod_{i=1}^n\bigg(\prod_{j=1}^{x_i-1}\frac{f(v_i-p_j)}{f(v_i-q_j)}\prod_{k=1}^N \frac{f(v_i-w_k-2\eta)}{f(v_i-w_k)}\bigg)\,\frac{dv_i}{2\pi\i}
\end{multline}
as a linear combination of $B_\nu(\la-2\eta(N-\La_0);w_1,\dots,w_N)$; by virtue of \eqref{eq:B-stoch} and \eqref{eq:D-rho-expl}, the resulting coefficients will be the same (given that only $\nu$'s with $\nu_N\ge 1$ enter the latter decomposition).

Let us first argue that such a decomposition exists. The dependence of the last expression on $\{w_k\}_{k=1}^N$ is through the product
$$
\prod_{k=1}^N \frac{f(w_k-p_0)} {f(w_k-q_0)}\prod_{i=1}^n \frac{f(v_i-w_k-2\eta)}{f(v_i-w_k)}\,.
$$
We proceed similarly to the proof of Proposition \ref{pr:cauchy-rho}, see also the beginning of Section \ref{sc:stoch-IRF} around \eqref{eq:simplification}, \eqref{eq:n-dep}. More exactly, we start with the Cauchy identity \eqref{eq:cauchy}, use for $\{v_j\}$ a finite sequence of indeterminates $v_1,\dots,v_n$ and a string of the form $(v,v-2\eta,\dots,v-2\eta(m-1))$, choose $v=p_0+2\eta m$, and send $m\to\infty$ assuming $\Im\eta\ne 0$.
The Cauchy summations remain uniformly convergent throughout the limit transition as long as $w_i$'s are close enough to $p_i$'s and $v_1,\dots,v_n$ do not approach those, and the limiting summation is what we want. Further, using the same uniform convergence, we can integrate the resulting summation over $v_1,\dots,v_n$ as in \eqref{eq:to-decompose}, with the conclusion that \eqref{eq:to-decompose} has an expansion in $B_\nu(\la-2\eta(N-\La_0);w_1,\dots,w_N)$ as long as the $v$-contours include both $\{p_i\}$ and $\{w_j\}$ that are close enough together. 
 
Our next goal is to actually compute the coefficient of $B_\nu(\la-2\eta(N-\La_0);w_1,\dots,w_N)$ in \eqref{eq:to-decompose} using the orthogonality relations proved in Theorem \ref{th:orth}. Relation \eqref{eq:orth} implies that this coefficient equals
\begin{multline}\label{eq:coefficient}
\frac{\pi^N}{\prod_{i=0}^{N-1}{f(\la-2\eta i)}}\oint_{\ga_1}\frac{dw_1}{2\pi\i}\cdots\oint_{\ga_N}\frac{dw_N}{2\pi\i}\oint\frac{dv_1}{2\pi\i}\cdots \oint\frac{dv_n}{2\pi\i} \prod_{1\le i<j\le N}\frac{f(w_i-w_j)}{f(w_i-w_j-2\eta)}\\ \times
\prod_{k=1}^N \Biggl(\frac{f(w_k-p_0)}{f(w_k-q_0)}\frac{1}{f(w_k-p_{\nu_k})}\prod_{j=1}^{\nu_k-1}\frac{f(w_k-q_j)}{f(w_k-p_j)}\cdot
f\bigl(\la-w_k+p_{\nu_k}+2\eta(N+1-2k-\La_{[1,\nu_k)})\bigr)\Biggr)
\\ \times\prod_{\substack{1\le k\le N\\ 1\le i\le n}} \frac{f(v_i-w_k-2\eta)}{f(v_i-w_k)} \prod_{1\le i<j\le n} \frac{f(v_i-v_j)}{f(v_i-v_j+2\eta)}\prod_{i=1}^n \prod_{j=1}^{x_i-1}\frac{f(v_i-p_j)}{f(v_i-q_j)}\,.
\end{multline}

We will compute this integral by evaluating residues, first shrinking the $v_1,\ldots,v_n$-contours and expanding the $w$-contours after that; the expansion is similar to the evaluation of $\Dnorm_\nu(\la;\rho)$ in Proposition \ref{pr:D-rho-expl} above. 

We have so far imposed no restrictions on positions of the $v$-contours with respect to each other, they are only required to surround the $w$-contours and leave the potential poles $\{q_j\}$ outside. In fact, the integral does not depend on their relative positions, one can show that taking any residue of the form $v_i=v_j+2\eta$ for $1\le i<j\le n$ leads to \eqref{eq:coefficient} vanishing. However, we do not really need this fact, and will simply assume that $v_1$-contour is well (within the $2\eta$-shift) inside the $v_2$-contour, which is well inside $v_3$-contour, etc. This way we will face no obstacles from the denominators $f(v_i-v_j+2\eta)$ when shrinking the $v$-contours and picking the residues at $v_i=w_k$. 

Let us now shrink the $v_1$-contour. Taking the residue at $v_1=w_{t_1}$, $1\le t_1\le N$, gives the $v_1$-dependent terms
$$
\frac{-f(2\eta)}{f'(0)}\prod_{k\ne t_1}\frac{f(w_{t_1}-w_k-2\eta)}{f(w_{t_1}-w_k)} \prod_{j=2}^n
\frac{f(w_{t_1}-v_j)}{f(w_{t_1}-v_j+2\eta)}\prod_{l=1}^{x_1-1} \frac{f(w_{t_1}-p_l)}{f(w_{t_1}-q_l)}\,.
$$
The middle product removes $w_{t_1}$ from $\prod_{i,k}{f(v_i-w_k-2\eta)}/{f(v_i-w_k)}$, which means that 
when we shrink the next contour (for $v_2$), we have to exclude $w_{t_1}$ from the list of possible poles. Further, the first product removes the poles at $w_{t_1}=w_k+2\eta$ for $k>t_1$ in the integrand of \eqref{eq:coefficient}.
If now $x_1>\nu_{t_1}$ then the third product removes the remaining poles at $p_1,\dots,p_{\nu_{t_1}}$ from the inside of $\ga_{t_1}$, and shrinking that contour yields zero. Hence, we can assume that $x_1\le \nu_{t_1}$. 

Let us now shrink the next, $v_2$-contour. Taking the residue at $v_2=w_{t_2}$ with $t_2\ne t_1$ gives the $v_2$-dependent terms
$$
\frac{-f(2\eta)}{f'(0)}\prod_{k\notin \{t_1,t_2\}}\frac{f(w_{t_2}-w_k-2\eta)}{f(w_{t_2}-w_k)} \prod_{j=3}^n
\frac{f(w_{t_2}-v_j)}{f(w_{t_2}-v_j+2\eta)}\prod_{l=1}^{x_2-1} \frac{f(w_{t_2}-p_l)}{f(w_{t_2}-q_l)}\,.
$$
Again we see the removal of $w_{t_2}$ from $\prod_{i,k}{f(v_i-w_k-2\eta)}/{f(v_i-w_k)}$, and we can again assume that $x_2\le \nu_{t_2}$. Indeed, otherwise if $t_2>t_1$ then the integral vanishes due to absence of singularities inside $\ga_{t_2}$, and if $t_2<t_1$ then $\nu_{t_2}\ge \nu_{t_1}\ge x_1\ge x_2$.  

Continuing in this fashion for all the $v$-contours, we conclude that for a nonzero result we have to choose pairwise distinct integers $t_1,\dots,t_n$ between $1$ and $N$ such that $\nu_{t_j}\ge x_j$ for all $1\le j\le n$, and the total contribution of the $v$-dependent terms in the integrand of \eqref{eq:coefficient} is  
\begin{equation}\label{eq:additional-factor}
\left(\frac{-f(2\eta)}{f'(0)}\right)^n\prod_{1\le i<j\le n}\frac{f(w_{t_i}-w_{t_j}-2\eta)}{f(w_{t_i}-w_{t_j})} 
\prod_{\substack{i\in\{t_1,\dots,t_n\}\\ j\notin\{t_1,\dots,t_n\}}}
\frac{f(w_{i}-w_j-2\eta)}{f(w_{i}-w_j)}\prod_{\substack{1\le i\le n\\ 1\le l\le x_i-1}} \frac{f(w_{t_i}-p_l)}{f(w_{t_i}-q_l)}\,.
\end{equation}

From this point our arguments are very reminiscent of the proof of Proposition \ref{pr:D-rho-expl}.
First, if $\nu_N=0$ then $\gamma_N$ (which is the integration contour for $u_N$) has no singularities inside and the integral vanishes. Hence, we will only have the signatures with $\nu_N\ge 1$, which was actually required to pass from $B_{\nu}$'s to $\Bstoch_\nu$'s above. 

Let us expand $\ga_1,\dots,\ga_N$ (in that order), as in the proof of Proposition \ref{pr:D-rho-expl}. There are no poles that we encounter along the way, so we just need to collect the sines arising from the asymptotics at $\pm \i\infty$. It is easier to compare with \eqref{eq:sine}; we have to take into account two discrepancies. First, $\la$ in \eqref{eq:sine} needs to be replaced by the shifted value of $\la-2\eta(N-\La_0)$. Second, we have the additional factor \eqref{eq:additional-factor} in the integrand. 

Denote $I=\{t_1,\dots,t_n\}$, $I^c=\{1,\dots,N\}\setminus \{t_1,\dots,t_n\}$.

For any $j\in I^c$, the shift of the argument in the sine of the type \eqref{eq:sine} coming from \eqref{eq:additional-factor} is $-2\eta n+2\eta \cdot\#\{i\in I\mid i<j\}$, where the second term is due to the fact that $\ga_j$ is expanded after all $\ga_i$ with $i<j$. Thus, the expansion of the contour that corresponds to the minimal element of $I^c$ gives $f(\la-2\eta n)/\pi$, the one for the second minimal element of $I^c$ gives $f(\la-2\eta (n+1))$, etc. In total, the expansion of $\ga_j$'s with $j\in I^c$ brings the factor of 
$$
\frac{f(\la-2\eta n)f(\la-2\eta(n+1))\,\cdots \,f(\la-2\eta(N-1))}{\pi^{N-n}}\,.
$$

Let us now look at the contribution of the expansion of $\ga_{t_i}$ for a $t_i\in I$. The extra shift of the sine argument due to \eqref{eq:additional-factor} is 
\begin{multline}
2\eta\, \bigl(-\#\{t_j\in I\mid t_j>t_i,j<i\}+\#\{t_j\in I\mid t_j>t_i,j>i\}+\#\{j\in I^c\mid j>t_i\}\bigr)+\sum_{l=1}^{x_i-1}(p_l-q_l)\\
=2\eta\,\bigl((N-t_i)-2\,\#\{t_j\in I\mid t_j>t_i,j<i\}-\La_{[1,x_i)}\bigr).
\end{multline}
Hence, the corresponding factor is $\pi^{-1}f\bigl(\la +2\eta\,((N-2t_i+1)-2\,\#\{t_j\in I\mid t_j>t_i,j<i\}-\La_{[1,x_i)})\bigr)$.
 
Gathering all the contributions, we conclude that (using $f'(0)=\pi$)
\begin{equation}\label{eq:coeff=sum}
(\ref{eq:coefficient}) =\frac{(-f(2\eta))^n}{\prod_{i=0}^{n-1}f(\la-2\eta i)}\sum_{\substack{1\le t_1\le T_1,\dots,1\le t_n\le T_n\\
t_i\ne t_j\ \text{for}\ i\ne j}} X_{t_1}^{(1)}X_{t_2+\inv_{\le 2}}^{(2)}\cdots X_{t_n+\inv_{\le n}}^{(n)},
\end{equation}
where for any $1\le j\le n$, $T_j$ is the largest positive integer such that $\nu_{T_j}\ge x_j$ (the sum is set to 0 if $\nu_1<x_1$, otherwise all $T_j$'s are well-defined with $T_1\le\dots\le T_n$, because $x_1\ge \dots\ge x_n$), and for all $i=1,\dots,n$,
$$
X_{t}^{(i)}=\frac{f\bigl(\la +2\eta\,((N-2t+1)-\La_{[1,x_i)})\bigr)}{\pi},\qquad \inv_{\le i}=\#\{t_j\in I\mid t_j>t_i,\, j<i\}.
$$
Note that $T_j$ is actually equal to $\HT_\nu(x_j)$, cf. \eqref{eq:height-nu}. 

The last step of the computation is the following lemma. 

\begin{lemma}\label{lm:sum}
Fix $n\ge 1$ and let $\bigl\{Y_i^{(j)}\bigr\}_{i\ge 1, 1\le j\le n}$ be indeterminates. For any $T_1,\ldots,T_n\in\Z_{\ge1}$ we have
	\begin{equation*}
		\sum_{\substack{1\le t_1 \le T_1,\ldots,1\le t_n\le T_n\\
		t_i\ne t_j\ \text{for}\ i\ne j}}
		Y_{t_1}^{(1)}Y_{t_2+\inv_{\le 2}}^{(2)}\cdots Y_{t_n+\inv_{\le n}}^{(n)}
		=\prod_{j=1}^n \bigl(Y_j^{(j)}+\ldots+Y_{T_j}^{(j)}\bigr),
	\end{equation*}
	where $\inv_{\le i}=\#\{t_j\in I\mid t_j>t_i,\, j<i\}$.
	Here the right-hand side is assumed to be 0 if one of the sums is empty.
\end{lemma}
\begin{proof} The argument is identical to the proof of \cite[Lemma 9.14]{BP}, which is a special case of the above statement when $Y_i^{(j)}$'s do not depend on the upper index $j$. 
\end{proof}

Applying Lemma \ref{lm:sum} to the right-hand side of \eqref{eq:coeff=sum} and using the readily checkable identity
$$
f(a)+f(a+2b)+\dots+f(a+2(m-1)b)=\frac{f(mb)}{f(b)}\,f(a+(m-1)b),\qquad a,b\in\C,\quad m=1,2,\dots, 
$$
we obtain 
\begin{multline*}
(\ref{eq:coefficient})=\frac{(-f(2\eta))^n}{\prod_{i=0}^{n-1}f(\la-2\eta i)}
\prod_{j=1}^n \frac{f(2\eta(T_j-j+1))}{f(2\eta)}\, \frac{f\bigl(\la +2\eta\,((N-T_j-j+1)-\La_{[1,x_j)})\bigr)}{\pi}\\
=\prod_{i=0}^{n-1} 
\frac{(-1)f(2\eta(\HT_{\nu}(x_{i+1})-i))f\bigl(\la +2\eta\,((N-\HT_\nu(x_{i+1})-i)-\La_{[1,x_{i+1})})\bigr)}
{\pi \cdot f(\la-2\eta i)}
\,.
\end{multline*}

As the last expression coincides with the coefficient of $\Bstoch_\nu(\la-2\eta(N-\La_0);w_1,\dots,w_N)$ in \eqref{eq:expect-sum}, the proof of Theorem \ref{th:obs} is complete.  
\end{proof}

Let us now investigate what Theorem \ref{th:obs} means for the limiting cases of our stochastic IRF model that were discussed in Section \ref{sc:degen}. 

The degeneration to the dynamic stochastic six vertex model is very simple --- one just needs to replace all $\La_j$ by $1$, hence $\La_{[1,x)}\equiv x-1$ and $s_j^2\equiv q$; everything else remains the same. Degenerating further to the dynamic ASEP of Definition \ref{df:dynamic-ASEP} requires a little bit of work to match the notations. 

As was mentioned in Section \ref{ss:exclusion}, for a sequence $\{s_x\}_{x\in\Z}$ with $s_{x+1}-s_x\in \{-1,1\}$ one can associate a particle configuration in $\Z+\frac 12$ by placing a particle at $x+\frac 12$ if an only if $s_{x+1}-s_x=-1$. The $\al\to 0$ limit of the dynamic ASEP is then the usual ASEP on $\Z+\frac 12$ with particles jumping left with rate $q$ and jumping right with rate 1. For ASEP configurations with finitely many particles to the right of the origin, define the height function $\HT_{ASEP}:\Z\times \R_{\ge 0}\to\Z_{\ge 0}$ as
$$
\HT_{ASEP}(x,t)=\text{the number of ASEP particles at time $t$ to the right of } x.
$$ 
We will sometimes drop `$t$' from this notation when time is not relevant. 

For the sequences $\{s_x\}_{x\in\Z}$ with $s_x\equiv x$ for $x\gg 1$, one has
\begin{equation}\label{eq:h-and-s}
\HT_{ASEP}(x)=\frac{s_x-x}{2}\qquad \text{or}\qquad s_x=2\HT_{ASEP}(x)+x. 
\end{equation}
Indeed, the two sides are obviously equal for $x\gg 1$, and it is also clear that the differences of the two sides at $x+1$ and $x$ match. 

It is not hard to see that in the limiting procedure of Proposition \ref{pr:6v-ASEP}, the height function of the six vertex model (which counts the number of paths through or below a given vertex) converges to that of the limiting ASEP:
$$
\lim_{\epsilon\to 0} \HT([t\epsilon^{-1}]+x+1,[t\epsilon^{-1}])= \HT_{ASEP}(x,t).
$$
This leads, with the help of Proposition \ref{pr:6v-ASEP}, to the following version of Theorem \ref{th:obs} and Remark \ref{rm:6v-expect}.
\begin{corollary}\label{cr:obs-ASEP} Consider the dynamic ASEP of Definition \ref{df:dynamic-ASEP} with the initial condition $s_x(0)\equiv |x|$, and introduce its observables
\begin{equation}\label{eq:obs-ASEP}
\o_{ASEP}(x,t)=-\alpha^{-1} q^{\HT_{ASEP}(x,t)}+q^{-x-\HT_{ASEP}(x,t)}=-\alpha^{-1} q^{\frac{s_x(t)-x}2}+q^{\frac{-s_x(t)-x}2}.
\end{equation}
Then, for any $t\ge 0$, $n\ge 1$ and $x_1\ge x_2\ge\dots\ge x_n$, the expression
\begin{equation}\label{eq:expect-ASEP}
E^{ASEP}_t(x_1,\dots,x_n):=\frac{1}{(-\al^{-1};q)_n}\E_{\mathrm{dynamic\, ASEP\,at\,time\,}t} \left[\prod_{k=0}^{n-1} 
\left(q^{-x_{k+1}}-\al^{-1} q^{2k} -q^{k}\cdot\o_{ASEP}(x_{k+1},t)\right)\right]
\end{equation}
is independent of the parameter $\al$. In particular, taking $\al\to 0$ shows that
\begin{equation}\label{eq:expect-ASEP-usual}
E_t^{ASEP}(x_1,\dots,x_n)=\E_{\mathrm{usual\,ASEP\,at\,time\,}t} \left[\prod_{k=0}^{n-1} \left(q^{\HT_{ASEP}(x_{k+1},t)}-q^{k}\right)\right],
\end{equation}
where the expectation in the right-hand side is with respect to the usual ASEP on $\Z+\frac 12$ with left jump rate $q$, right jump rate $1$, and particles at $t=0$ occupying all the negative locations. The latter expectation is given explicitly by 
\begin{multline}\label{eq:expect-ASEP-integral}
		E_t^{ASEP}(x_1,\dots,x_n)
		=
		\frac{q^{\frac{n(n-1)}2}}{(2\pi\i)^n}
		\oint
		\ldots
		\oint
		\prod_{1\le i<j\le n}\frac{y_i-y_j}{y_i-qy_j}
		\\\times
		\prod_{i=1}^{n}\bigg(
		\bigg(\frac{1-y_i}{1-q y_i}\bigg)^{x_i}
		\exp\bigg\{\frac{(1-q)^{2}y_i}{(1-y_i)(1-q y_i)}\,t\bigg\}
		\bigg)\frac{d y_i}{y_i},
\end{multline}
where the integration contours are small positively oriented loops around $1$.
\end{corollary}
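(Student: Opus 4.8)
The plan is to derive Corollary \ref{cr:obs-ASEP} as the $\epsilon\to0$ degeneration of Theorem \ref{th:obs}, specialized to the spin $\frac12$ case, transported to the dynamic ASEP through Proposition \ref{pr:6v-ASEP}. All three assertions of the corollary --- the $\al$-independence of $E^{ASEP}_t$, its reduction \eqref{eq:expect-ASEP-usual} to the usual ASEP, and the explicit integral \eqref{eq:expect-ASEP-integral} --- should follow from the corresponding statements of Theorem \ref{th:obs} and Remark \ref{rm:6v-expect} by one and the same limit transition.

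First I would set $\La_j\equiv1$ for all $j\ge1$, turning the model of Definition \ref{df:quadrant-IRF} into the dynamic stochastic six vertex model of Section \ref{ss:6v-IRF}; then $\La_{[1,x)}=x-1$ and \eqref{eq:o-obs} specializes to $\o(x,N)=-\al^{-1}q^{\HT(x,N)}+q^{\,N-\HT(x,N)-x+1}$. Next I would impose the scaling of Proposition \ref{pr:6v-ASEP}: equal inhomogeneities $\xi_i\equiv\xi$ and spectral parameters $u_i\equiv u$, the identification $s=q^{-1/2}$ (so that $s^2=q^{-1}$ as dictated by \eqref{eq:parameters} at $\La=1$) and $\xi u=q^{-1/2}(1+(1-q)\epsilon)$, together with $N=[t\epsilon^{-1}]$ and the coordinate shift $x_i\mapsto [t\epsilon^{-1}]+x_i+1$. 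Under this shift the leading term becomes $q^{N-\La_{[1,\cdot)}}=q^{-x_i}$, matching the first summand in the bracket of \eqref{eq:expect-ASEP}, while the height-function convergence $\HT([t\epsilon^{-1}]+x+1,[t\epsilon^{-1}])\to\HT_{ASEP}(x,t)$ identifies $\o(\cdot,N)$ with $\o_{ASEP}$ of \eqref{eq:obs-ASEP}. Since Proposition \ref{pr:6v-ASEP} provides convergence in distribution and with all moments, the polynomial-in-observables average \eqref{eq:expect} converges to \eqref{eq:expect-ASEP}; and because every prelimit quantity is $\la$-independent (hence $\al$-independent, $\al=-e^{-2\pi\i\la}$) by Theorem \ref{th:obs}, so is the limit $E^{ASEP}_t$.

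For the reduction to the usual ASEP I would use this $\al$-independence to evaluate $E^{ASEP}_t$ in the limit $\al\to0$, in which the dynamic ASEP becomes the usual ASEP (Definition \ref{df:dynamic-ASEP}). Rewriting the bracket through the factorization \eqref{eq:obs-factor} in the ASEP variables gives, for each $k$, the product $q^{-x_{k+1}}(1-q^{\,k-\HT_{ASEP}})(1+\al^{-1}q^{\,k+\HT_{ASEP}+x_{k+1}})$; matching the leading order of the numerator against the leading order $\al^{-n}q^{n(n-1)/2}$ of the Pochhammer prefactor $(-\al^{-1};q)_n$ collapses the product to $\prod_{k=0}^{n-1}(q^{\HT_{ASEP}(x_{k+1},t)}-q^{k})$, which is exactly \eqref{eq:expect-ASEP-usual}.

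Finally, the explicit integral \eqref{eq:expect-ASEP-integral} I would obtain by taking the same $\epsilon\to0$ limit inside the six vertex integral of Remark \ref{rm:6v-expect}. With $s=q^{-1/2}$ the column and row products become $\bigl(\frac{\xi-q^{-1/2}y}{\xi-q^{1/2}y}\bigr)^{x_i}\bigl[\frac{(\xi-q^{-1/2}y)(1-quy)}{(\xi-q^{1/2}y)(1-uy)}\bigr]^{[t\epsilon^{-1}]}$ after separating the coordinate shift; the change of variables $Y=uy$ turns the finite power into $\bigl(\frac{1-Y}{1-qY}\bigr)^{x_i}$ and leaves the cross term $\prod_{i<j}\frac{y_i-y_j}{y_i-qy_j}$ and the measure $\frac{dy_i}{y_i}$ invariant, while the bracket raised to the power $[t\epsilon^{-1}]\sim t/\epsilon$ tends to $1$ at $\epsilon=0$ and, upon expanding $\xi u=q^{-1/2}(1+(1-q)\epsilon)$ to first order, exponentiates to $\exp\{\frac{(1-q)^2 Y}{(1-Y)(1-qY)}t\}$; the integration loops around $\{u^{-1}\}$ become loops around $1$, and one recovers \eqref{eq:expect-ASEP-integral}. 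The main obstacle is precisely this last limit transition: one must justify interchanging the $\epsilon\to0$ limit with the $n$-fold contour integration, keep the contours away from the poles $y=u^{-1}$ and $y=(qu)^{-1}$ which collide as $\epsilon\to0$, and control the exponentiation of the large power uniformly on the contours. Alternatively this step can be bypassed entirely by invoking \eqref{eq:expect-ASEP-usual} together with the classical $q$-moment formula for step-initial-condition ASEP, for which \eqref{eq:expect-ASEP-integral} is the standard contour-integral expression.
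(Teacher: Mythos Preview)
Your proposal is correct and follows essentially the same approach as the paper: specialize Theorem \ref{th:obs} to the spin $\frac12$ case, transport it to the dynamic ASEP via Proposition \ref{pr:6v-ASEP}, and then send $\al\to0$ to reach the usual ASEP. For the integral formula \eqref{eq:expect-ASEP-integral}, the paper does not carry out the contour-integral limit you sketch either; it simply cites \cite[Theorem~4.20]{BCS14} for equal $x_i$'s and \cite[Corollary~10.2]{BP} for the general case, which is exactly the alternative route you suggest at the end.
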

For $x_1=\dots=x_n$ the integral representation \eqref{eq:expect-ASEP} for the (usual) ASEP goes back to \cite[Theorem 4.20]{BCS14}, and for different $x_i$'s it follows from \cite[Corollary 10.2]{BP}. 

The next degeneration we consider is the case of the rational (pre-)stochastic IRF model from Section \ref{ss:rat-IRF}. 
This involves a simple limit transition in \eqref{eq:expect} and leads to the following statement. 
\begin{corollary}\label{cr:expect-rat} Consider the rational (pre-)stochastic IRF in the quadrant from Section \ref{ss:rat-IRF}, and introduce its observables
\begin{equation}\label{eq:obs-rat-IRF}
\o_{rational}(x,N)=\HT(x,N)(\HT(x,N)-\la-N+x-1).
\end{equation}
Then, for any $n,N\ge 1$ and $x_1\ge x_2\ge\dots\ge x_n\ge 1$, the expression
\begin{equation}\label{eq:expect-rat-IRF}
E^{rational}_N(x_1,\dots,x_n):=\frac{1}{(-\la)_n}\E_{\mathrm{rational\, stoch.\, IRF}} \left[\prod_{k=0}^{n-1} 
\left(k^2-k(\la+N-x_{k+1}+1)-\o_{rational}(x_{k+1},N)\right)\right]
\end{equation}
is independent of the parameter $\la$. Its explicit form is as follows:
\begin{equation}\label{eq:expect-int-rat}
E^{rational}_N(x_1,\dots,x_n)=\oint\dots\oint \prod_{1\le i<j\le n} \frac{v_i-v_j}{v_i-v_j+1}\prod_{i=1}^n\bigg(\prod_{j=1}^{x_i-1}\frac{v_i-z_j}{v_i-z_j-1}\prod_{k=1}^N \frac{v_i-w_k-1}{v_i-w_k}\bigg)\,\frac{dv_i}{2\pi\i}\,,
\end{equation}
where the integration contours are small positively oriented loops around $\{w_k\}_{k=1}^N$. 
\end{corollary}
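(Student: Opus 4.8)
The plan is to obtain the statement by degenerating the exact identity of Theorem~\ref{th:obs} (namely that \eqref{eq:expect} equals the $\la$-free integral \eqref{eq:expect-int}), specialized to the spin $\frac12$ case $\La_j\equiv 1$, so that $\La_{[1,x)}=x-1$ and $p_j,q_j$ acquire the simple form needed below. Concretely I would apply the rational scaling of Section~\ref{ss:rat-IRF}: substitute $2\eta\to\epsilon$, $\la\to\epsilon\la$, $z_j\to\epsilon z_j$, $w_k\to\epsilon w_k$, rescale the fillings by $\epsilon^{-1}$, use $f(x)=\sin\pi x\sim\pi x$, and track the leading order as $\epsilon\to 0$. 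Writing $\delta:=2\pi\i\epsilon$, under this scaling $q=e^{-4\pi\i\eta}=e^{-\delta}$ and $-\al^{-1}=e^{2\pi\i\la}=e^{\delta\la}$, so I would analyze both sides of the Theorem~\ref{th:obs} identity at their leading power of $\delta$.

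First I would treat the left-hand side \eqref{eq:expect} through the factored form \eqref{eq:obs-factor}. With $\La_{[1,x)}=x-1$, abbreviate $A:=N-x+1$; the $k$-th factor equals $q^{A}(1-q^{k-H})(1+\al^{-1}q^{k+H-A})$ with $H:=\HT(x,N)$, and under the scaling $q^A\to 1$, $1-q^{k-H}\sim -\delta(H-k)$, $1+\al^{-1}q^{k+H-A}\sim -\delta(\la+A-k-H)$. Hence each factor is asymptotic to $\delta^2(H-k)(\la+A-k-H)$, and the elementary identity
\begin{equation*}
(H-k)(\la+A-k-H)=k^2-k(\la+A)-H(H-\la-A),
\end{equation*}
together with $H(H-\la-A)=\o(x,N)$ in the rational normalization (recall $\o_{rational}(x,N)=\HT(\HT-\la-N+x-1)$ and $-N+x-1=-A$), shows that each factor is $\delta^2$ times exactly the quantity $k^2-k(\la+N-x+1)-\o_{rational}(x,N)$ of \eqref{eq:expect-rat-IRF}. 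Since $(-\al^{-1};q)_n=(e^{\delta\la};e^{-\delta})_n\sim\delta^n(-\la)_n$, the whole left-hand side is asymptotic to $\delta^n E^{rational}_N(x_1,\dots,x_n)$; the subleading terms in each $1-e^{\delta\,\cdot}$ contribute only at higher order and drop out.

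Next I would degenerate the right-hand side \eqref{eq:expect-int}, rescaling $v_i\to\epsilon v_i$ so that the contours (which surround the shrinking points $\{\epsilon w_k\}$) stay of order one. By the ``remove the letter $f$'' rule of Section~\ref{ss:rat-IRF}, using $p_j/\epsilon\to z_j$, $q_j/\epsilon\to z_j+1$ for $\La_j\equiv 1$,
\begin{equation*}
\frac{f(v_i-v_j)}{f(v_i-v_j+2\eta)}\to\frac{v_i-v_j}{v_i-v_j+1},\quad \frac{f(v_i-p_j)}{f(v_i-q_j)}\to\frac{v_i-z_j}{v_i-z_j-1},\quad \frac{f(v_i-w_k-2\eta)}{f(v_i-w_k)}\to\frac{v_i-w_k-1}{v_i-w_k}.
\end{equation*}
The prefactor $\exp(-2\pi\i\eta(\cdots))\to 1$, while $\prod dv_i$ contributes a factor $\epsilon^n$, so the right-hand side is asymptotic to $\epsilon^n$ times the rational contour integral of \eqref{eq:expect-int-rat} (taken without the $1/2\pi\i$ factors). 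Matching the two leading expressions, the Pochhammer normalization $\delta^n=(2\pi\i)^n\epsilon^n$ exactly supplies the $(2\pi\i)^{-n}$ that converts $\prod dv_i$ into $\prod\frac{dv_i}{2\pi\i}$, yielding \eqref{eq:expect-int-rat}. Because that rational integral contains no $\la$, the asserted independence of $E^{rational}_N$ on $\la$ is then immediate.

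The only real care needed is bookkeeping: confirming that the powers of $\delta$ (equivalently of $2\pi\i$ and $\epsilon$) and the signs agree on both sides, and that the limit may be taken term by term. The latter is harmless, since $\E[\cdots]$ is a finite linear combination of plaquette weights converging to the rational weights \eqref{eq:rat-dyn-stoch-weights}, and the contour integral is a finite sum of residues at $v_i=w_k$ whose limits are computed directly; no uniformity estimate beyond the asymptotics of $\sin\pi x$ is required. The main point is thus algebraic rather than analytic: the verification, displayed above, that the $q$-shifted factored observable \eqref{eq:obs-factor} collapses precisely onto the quadratic rational observable $\o_{rational}$.
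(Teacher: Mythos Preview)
Your proof is correct and follows essentially the same approach as the paper: both apply the rational rescaling $(2\eta,\la,z_j,w_k)\to(\epsilon,\epsilon\la,\epsilon z_j,\epsilon w_k)$ to the identity of Theorem~\ref{th:obs}, use the factorization \eqref{eq:obs-factor} with $\La_{[1,x)}=x-1$ to extract the leading $\epsilon$-behavior of each factor, and match it against the straightforward limit of the integral \eqref{eq:expect-int}. Your bookkeeping with $\delta=2\pi\i\epsilon$ is a slightly more explicit packaging of the same computation the paper carries out when it writes each ratio as $2\pi\i\epsilon\cdot(k-\HT)(k+\HT-N+x-1-\la)/(k-\la)$.
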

\begin{proof} We need to replace $(2\eta,\la,z_i,w_j)$ by $(\epsilon, \epsilon\la,\epsilon z_i, \epsilon w_j)$ and look at the $\epsilon\to 0$ asymptotics of the right-hand side of \eqref{eq:expect}. Using \eqref{eq:obs-factor} (and remembering that $\La_{[1,x)}=x-1$) gives
$$
\frac{q^{N-\La_{[1,x)}}+e^{2\pi\i\la}q^{2k} -q^{k}\cdot\o(x,N)}{1-e^{2\pi\i\la}q^k}\sim 2\pi\i\epsilon\cdot \frac{(k-\HT(x,N))(k+\HT(x,N)-N+x-1-\la)}{k-\la}\,.
$$
Taking the product over $k=0,1,\dots,n-1$ and substituting $x_{k+1}$ for $x$ leads to the right-hand side of \eqref{eq:expect-rat-IRF} times a factor of $(2\pi\i\epsilon)^n$. Comparing with the similar asymptotics of \eqref{eq:expect-int} gives the result. 
\end{proof}

The final limit transition that we consider is from the rational stochastic IRF model in the quadrant to the dynamic SSEP of Definition \ref{df:dynamic-SSEP}. It is very similar to the case of the dynamic ASEP from Corollary \ref{cr:obs-ASEP}, and we just give the result. The SSEP height function $\HT_{SSEP}$ that we use below is defined in exactly the same way as $\HT_{ASEP}$ above. 

\begin{corollary}\label{cr:obs-SSEP} Consider the dynamic SSEP of Definition \ref{df:dynamic-SSEP} with the initial condition $s_x(0)\equiv |x|$, and introduce its observables
\begin{equation}\label{eq:obs-SSEP}
\o_{SSEP}(x,t)=\HT_{SSEP}(x,t)(\HT_{SSEP}(x,t)+x-\la)={\frac{s_x(t)-x}2}\left(\frac{s_x(t)+x}2-\la\right).
\end{equation}
Then, for any $t\ge 0$, $n\ge 1$ and $x_1\ge x_2\ge\dots\ge x_n$, the expression
\begin{equation}\label{eq:expect-SSEP}
E^{SSEP}_t(x_1,\dots,x_n):=\frac{1}{(-\la)_n}\E_{\mathrm{dynamic\, SSEP\,at\,time\,}t} \left[\prod_{k=0}^{n-1} 
\left(k^2-k(\la-x_{k+1})-\o_{SSEP}(x_{k+1},t)\right)\right]
\end{equation}
is independent of the parameter $\la$. In particular, taking $\la\to \infty$ shows that
\begin{equation}\label{eq:expect-SSEP-usual}
E_t^{SSEP}(x_1,\dots,x_n)=\E_{\mathrm{usual\,SSEP\,at\,time\,}t} \left[\prod_{k=0}^{n-1} \left(k-{\HT_{SSEP}(x_{k+1},t)}\right)\right],
\end{equation}
where the expectation in the right-hand side is with respect to the usual SSEP on $\Z+\frac 12$ with both left and right jump rate equal to $1$, and particles at $t=0$ occupying all the negative locations. The latter expectation is given explicitly by 
\begin{equation}\label{eq:expect-SSEP-int}
		E_t^{SSEP}(x_1,\dots,x_n)
		=
		\oint
		\ldots
		\oint
		\prod_{1\le i<j\le n}\frac{v_i-v_j}{v_i-v_j+1}
		\prod_{i=1}^{n}\bigg(
		\bigg(\frac{v_i}{v_i-1}\bigg)^{x_i}
		\exp\bigg\{\frac{t}{v_i(v_i-1)}\bigg\}
		\bigg)\frac{d v_i}{2\pi\i}\,,
\end{equation}
where the integration contours are small positively oriented loops around $0$.
\end{corollary}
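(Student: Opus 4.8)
The plan is to obtain Corollary \ref{cr:obs-SSEP} as the $\epsilon\to0$ degeneration of Corollary \ref{cr:expect-rat} for the rational (pre-)stochastic IRF model, exactly parallel to the way Corollary \ref{cr:obs-ASEP} was extracted from the dynamic six vertex version of Theorem \ref{th:obs}. First I would invoke Proposition \ref{pr:6v-SSEP}: set all inhomogeneities $z_i\equiv z$ and spectral parameters $w_i\equiv w$ with $z-w=\epsilon$, keep the corner filling $\la<0$, and track the fillings of the boxes $[T+x+1,T+x+2]\times[T,T+1]$ with $T=[t\epsilon^{-1}]$. By that proposition these converge, in distribution and with all moments, to $\la-s_x(t)$, where $s_\cdot(t)$ is the dynamic SSEP started from $s_x(0)\equiv|x|$; moreover the IRF height function converges to the SSEP height function,
\[
\lim_{\epsilon\to0}\HT\bigl([t\epsilon^{-1}]+x+1,[t\epsilon^{-1}]\bigr)=\HT_{SSEP}(x,t),
\]
just as in the discussion around \eqref{eq:h-and-s}.

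The crucial bookkeeping is the column shift $x\mapsto x'=T+x+1$. Under it the quantity $N-x'+1$ that appears in Corollary \ref{cr:expect-rat} becomes exactly $-x$ (with $N=T$), so the IRF observable \eqref{eq:obs-rat-IRF} reads $\o_{rational}(x',N)=\HT(x',N)\bigl(\HT(x',N)+x-\la\bigr)$, which by the height-function convergence tends to $\HT_{SSEP}(x,t)\bigl(\HT_{SSEP}(x,t)+x-\la\bigr)=\o_{SSEP}(x,t)$. For the same reason the prefactor $k^2-k(\la+N-x'_{k+1}+1)$ in \eqref{eq:expect-rat-IRF} becomes $k^2-k(\la-x_{k+1})$, while the normalization $(-\la)_n$ is untouched. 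Hence each factor in the product defining $E^{rational}_N$ converges to the corresponding factor of $E^{SSEP}_t$, and the ``all moments'' part of Proposition \ref{pr:6v-SSEP} lets me pass the expectation through the limit, yielding the first claim together with the $\la$-independence, which is inherited verbatim from Corollary \ref{cr:expect-rat} (valid for every admissible $\la$).

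It then remains to degenerate the contour integral \eqref{eq:expect-int-rat}. Here I would substitute $v_i=w+u_i$, so the contours around $\{w_k\}=\{w\}$ become loops around $0$ and the cross term $\prod_{i<j}\frac{v_i-v_j}{v_i-v_j+1}$ is unchanged. With $x'_i-1=N+x_i$ factors $\frac{v_i-z}{v_i-z-1}=\frac{u_i-\epsilon}{u_i-\epsilon-1}$ and $N$ factors $\frac{v_i-w-1}{v_i-w}=\frac{u_i-1}{u_i}$, the $x_i$ ``extra'' factors tend to $\bigl(u_i/(u_i-1)\bigr)^{x_i}$, while the $2N$ remaining factors combine into $\bigl(1+\tfrac{\epsilon}{u_i(u_i-1)}+O(\epsilon^2)\bigr)^{N}$ and, with $N=[t\epsilon^{-1}]$, converge to $\exp\{t/(u_i(u_i-1))\}$. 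This reproduces \eqref{eq:expect-SSEP-int}. Finally, sending $\la\to\infty$ inside the ($\la$-independent) expression and using the factorization
\[
k^2-k(\la-x_{k+1})-\o_{SSEP}(x_{k+1},t)=\bigl(k-\HT_{SSEP}(x_{k+1},t)\bigr)\bigl(-\la+k+\HT_{SSEP}(x_{k+1},t)+x_{k+1}\bigr),
\]
together with $(-\la)_n\sim(-\la)^n$, extracts the usual-SSEP identity \eqref{eq:expect-SSEP-usual}.

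I expect the main obstacle to be a matter of care rather than of new ideas: making sure the column shift, the sign of $N-x'+1$, and the change of variables $v_i=w+u_i$ are tracked consistently so that the IRF observable, the polynomial prefactor, and the integrand all degenerate to their stated SSEP forms simultaneously. The only genuinely analytic input — uniform convergence of moments through the degeneration and the convergence of the IRF height function to $\HT_{SSEP}$ — is supplied by Proposition \ref{pr:6v-SSEP}, which I am assuming; everything else reduces to elementary asymptotics of rational factors.
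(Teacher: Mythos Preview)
Your proposal is correct and is precisely the route the paper intends: the paper omits the proof, saying only that it ``is very similar to the case of the dynamic ASEP from Corollary \ref{cr:obs-ASEP},'' i.e.\ one degenerates Corollary \ref{cr:expect-rat} via Proposition \ref{pr:6v-SSEP} just as Corollary \ref{cr:obs-ASEP} was obtained from Theorem \ref{th:obs} via Proposition \ref{pr:6v-ASEP}. Your tracking of the column shift $x'=T+x+1$ (so that $N-x'+1=-x$), the substitution $v_i=w+u_i$ in the integral, and the $\la\to\infty$ factorization are all correct and fill in exactly the details the paper leaves to the reader.
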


\section{One-point asymptotics of the dynamic SSEP}\label{sc:SSEP} In this section we consider the dynamic symmetric simple exclusion process described in Definition \ref{df:dynamic-SSEP} above. In the particle interpretation, this is a system of particles in $\Z+\frac 12$, no more than one particle per site, that jump left and right by one unit independently with exponential  
waiting times of certain rates.\footnote{A particle can never jump into a spot occupied by another particle though; thus `exclusion' in the name.} The rates vary in time and also between particles, and they depend on the height function $\HT:\Z_{\ge 0}\times\R_{\ge 0}\to \Z_{\ge 0}$ defined by 
$$
\HT(x,t)=\HT_{SSEP}(x,t)=\text{number of particles to the right of $x$ at time $t$}. 
$$
We only consider configurations with $\HT(x,t)\equiv 0$ for $x\gg 1$, equivalently, the number of particles to the right of the origin is always finite; thus, the height function always takes finite values. 

The rate of a left jump of a particle from $(x+\frac 12)$ to $(x-\frac 12)$ is defined as $(s_x+\bar{\la})/(s_x+\bar{\la}-1)$, where $s_x=2\HT(x)+x$ (we are muting the dependence on $t$ here), and $\bar{\la}$ is a parameter of the process\footnote{We use $\bar{\la}$ as replacement for $(-\la)$ of Section \ref{ss:exclusion} as it is more convenient to deal with a positive parameter.}, while the rate of a right jump from $(x-\frac 12)$ to $(x+\frac 12)$ is $(s_x+\bar{\la})/(s_x+\bar{\la}+1)$. The limit $\bar{\la}\to\infty$ leads to the usual SSEP with both left and right rates identically equal to 1. 

We will only consider the \emph{step} or \emph{packed} initial condition with particles occupying all negative locations at $t=0$; equivalently, $\HT(x,0)=\mathbf{1}_{x<0}\cdot |x|$ or $s_x(t=0)=|x|$. The parameter $\bar{\la}$ is assumed to be positive; in that case all the jump rates are clearly positive as well (as we will always have $\HT(x,t)\ge \HT(x,0)$ or $s_x(t)\ge s_x(0)$). 

Our goal is to study the first order behavior of $\HT(x,t)$ as time gets large. Let us first give a corresponding result for the usual SSEP that is well known. The following functions on $\R\times \R_{>0}$ will be useful:
\begin{equation}\label{eq:SSEP-limit-shape}
H(\chi,\tau)=\sqrt{\frac{\tau}{\pi}}\,\exp\left(-\frac{\chi^2}{4\tau}\right)-\frac{\chi}{2}\,\erfc\left(\frac{\chi}{2\sqrt{\tau}}\right),\qquad \chi\in\R,\quad  \tau\in\R_{>0},
\end{equation}
where $\erfc(\,\cdot\,)$ is the complementary error function. 
Observe that $H(\chi,\tau)$ solves the (1+1)d heat equation:
$$
\frac{\partial H(\chi,\tau)}{\partial\tau} =\frac {\partial^2H(\chi,\tau)}{\partial\chi^2}  
$$
with the initial condition $\lim_{\tau\to 0} H(\chi,\tau)= \mathbf{1}_{x<0}\cdot |x|$ (which also happens to be the initial condition $\HT(x,0)$ for our height function).

\begin{proposition}\label{pr:SSEP} Consider the usual SSEP with the step initial condition $\HT(x,0)=\mathbf{1}_{x<0}\cdot |x|$. Then for any $\chi\in\R$, $\tau\in\R_{>0}$ we have the following convergence in moments and in probability:
\begin{equation}\label{eq:SSEP-lim}
\lim_{L\to\infty} L^{-\frac 12}\cdot \HT(L^{\frac 12}\chi, L\tau )= H(\chi,\tau)
\end{equation}
with $H(\chi,\tau)$ given by \eqref{eq:SSEP-limit-shape}. The statement remains valid if $\chi$ in $\HT(L^{\frac 12}\chi, L\tau )$ is $L$-dependent but has a limit: $\chi=\chi(L)$ with $\lim_{L\to\infty}\chi(L)=\chi$. 
\end{proposition}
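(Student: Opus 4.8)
The strategy is to extract the limiting behavior of $\HT(x,t)$ for the usual SSEP directly from the explicit integral formula \eqref{eq:expect-SSEP-int} for its one-point observables, specialized to the single-point case $n=1$, $x_1=x$, and then to upgrade the resulting moment convergence to convergence in probability. Since for the usual SSEP the jump rates are constant and equal to $1$, the process is a collection of independent simple random walks subject to exclusion, and the height function is a linear statistic whose expectation satisfies the discrete heat equation. Thus I would first identify $\E[\HT(x,t)]$ as the solution of the heat equation with the given step initial condition, recovering \eqref{eq:SSEP-limit-shape} in the scaling limit via a standard local central limit theorem estimate.

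\textbf{First steps.} Concretely, I would set $n=1$ in Corollary \ref{cr:obs-SSEP}. Then \eqref{eq:expect-SSEP-usual} reads $\E_{\mathrm{usual\,SSEP}}[-\HT_{SSEP}(x,t)]=E_t^{SSEP}(x)$, and \eqref{eq:expect-SSEP-int} becomes the single contour integral
\begin{equation*}
E_t^{SSEP}(x)=\oint \left(\frac{v}{v-1}\right)^{x}\exp\left\{\frac{t}{v(v-1)}\right\}\frac{dv}{2\pi\i},
\end{equation*}
with the contour a small positively oriented loop around $0$. From this one reads $\E[\HT(x,t)]=-E_t^{SSEP}(x)$. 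I would substitute $x=L^{1/2}\chi$ and $t=L\tau$, and perform a saddle-point (steepest-descent) analysis of the integral as $L\to\infty$. Writing $v=1+w/L^{1/2}$ near the relevant critical point, the exponent $x\log\frac{v}{v-1}+\frac{t}{v(v-1)}$ organizes into a scaling form of order $L^{1/2}$, and the leading asymptotics should reproduce exactly $-H(\chi,\tau)$ after the identification $L^{-1/2}\E[\HT(L^{1/2}\chi,L\tau)]\to H(\chi,\tau)$. The error-function and Gaussian terms in \eqref{eq:SSEP-limit-shape} are precisely what emerge from the Gaussian integral against a half-line contribution in this saddle analysis, which is the analytic fingerprint of the heat kernel.

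\textbf{From first moment to the full statement.} To get convergence in moments one must control higher moments $\E[\HT(x,t)^m]$, which by \eqref{eq:obs-SSEP}--\eqref{eq:expect-SSEP-int} at general $n$ are again expressible through the multiple contour integrals of Corollary \ref{cr:obs-SSEP}; here I would argue that the $n$-fold integral factorizes to leading order as $\bigl(H(\chi,\tau)\bigr)^n$ because the cross-terms $\frac{v_i-v_j}{v_i-v_j+1}$ contribute only lower-order corrections in the $L\to\infty$ rescaling (the saddle points all collapse to $v=1$ and the interaction factors tend to $1$). This gives that the rescaled height function has vanishing variance, i.e. $L^{-1}\mathrm{Var}\,\HT(L^{1/2}\chi,L\tau)\to 0$, so that convergence in moments and the law of large numbers (convergence in probability) both follow from the first-moment computation together with the second-moment concentration. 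The final sentence of the proposition, allowing $L$-dependent $\chi(L)\to\chi$, is handled by noting that the saddle-point asymptotics are uniform for $\chi$ in compact sets and that $H(\chi,\tau)$ is continuous in $\chi$.

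\textbf{Main obstacle.} The hard part will be the rigorous steepest-descent analysis producing \emph{exactly} the combination of $\exp(-\chi^2/4\tau)$ and $\erfc(\chi/2\sqrt\tau)$ in \eqref{eq:SSEP-limit-shape}: one must deform the small loop around $0$ to a steepest-descent contour through the critical point of the rescaled phase, track the contribution of the endpoint/pinching near $v=1$ (which is the source of the complementary error function rather than a pure Gaussian), and verify that the heat-equation initial data $\mathbf 1_{x<0}|x|$ matches. A cleaner and more robust alternative, which I would present in parallel, is to bypass the integral entirely for the first moment: since the usual SSEP height function has expectation solving the discrete heat equation with the step initial condition, $\E[\HT(x,t)]=\sum_{y<0}|y|\,p_t(x-y)$ type representations combined with the local CLT give $L^{-1/2}\E[\HT(L^{1/2}\chi,L\tau)]\to H(\chi,\tau)$ directly, and one can check by inspection that this limit satisfies the stated heat equation with the prescribed initial condition, uniquely pinning it down as \eqref{eq:SSEP-limit-shape}. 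The variance bound still requires the $n=2$ integral estimate, so the concentration step remains the genuine technical content regardless of which route is taken for the mean.
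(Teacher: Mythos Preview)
Your proposal is essentially the same approach as the paper's: use the single integral \eqref{eq:expect-SSEP-int} with $n=1$ for the first moment, argue that the $n$-fold integrals factorize to leading order because the cross-terms $\frac{v_i-v_j}{v_i-v_j+1}$ tend to $1$ under the rescaling, and conclude moment convergence and hence convergence in probability. The one place where the paper is slicker is the change of variable $u=v/(v-1)$ before rescaling, which turns the exponent into the symmetric form $t(u+u^{-1}-2)$ and makes the saddle at $u=1$ transparent; your direct substitution $v=1+w/L^{1/2}$ would produce a spurious $\log L$ in $x\log\tfrac{v}{v-1}$, so you would want to adopt the $u$-variable (or equivalently rescale $v$ near $0$, not near $1$) to make the computation clean.
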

\begin{proof} This is a standard statement from the hydrodynamic theory of SSEP, cf., e.g., 
\cite[Chapter 4]{KL}, \cite[Chapter 8]{S}, where much more general statements are available.

In our concrete situation we can reach the one-point asymptotics above by taking the asymptotics of moments of $\HT(x,t)$ given by equating the right-hand sides of \eqref{eq:expect-SSEP-usual} and \eqref{eq:expect-SSEP-int}. For example, for the first moment we have 
$$
\E \HT(x,t)=-\oint_{\text{around }0} \left(\frac{v}{v-1}\right)^x \exp\left\{\frac t{v(v-1)}\right\}\frac{dv}{2\pi\i}\,,
$$
which under the change of variable $u=v/(v-1)$ turns into
$$
\E \HT(x,t)=\oint_{\text{around }0} u^x \exp\left\{t(u+u^{-1}-2)\right\}\frac{du}{2\pi\i(u-1)^2}\,.
$$ 
Turning the contour into a large straight piece parallel to the imaginary axis slightly to the left of $\Re u=1$ and a large arc closing the contour in the left half-plane, we can further change the variable as $u=1+L^{-\frac 12}z$ and obtain 
$$
\lim_{L\to\infty}  L^{-\frac 12}\cdot \HT(L^{\frac 12}\chi, L\tau )=\frac{1}{2\pi\i}\int_{\i\R-0}e^{
\chi z+\tau z^2}\,\frac{dz}{z^2}\,,
$$
which is exactly $H(\chi,\tau)$. 

The asymptotics of the right-hand side of \eqref{eq:expect-SSEP-int} for an arbitrary $n\ge 1$ is obtained in exactly the same way by repeating the above contour deformations for each of the  integration variables, with the result being the $n$th power of that for the single integral (the cross-terms from $\prod_{i<j}$ play no role). 
Comparing with \eqref{eq:expect-SSEP-usual} shows that $\lim_{L\to\infty}  L^{-\frac n2}\cdot (\HT(L^{\frac 12}\chi, L\tau ))^n=(H(\chi,\tau))^n$ for any $n\ge 1$, which implies the result. 
\end{proof}

In order to state the result for the dynamic SSEP, let us recall the \emph{gamma distribution}, which is a two parameter family $\Gamma(a,b)$ of absolutely continuous probability distributions on $\R_{> 0}$ with densities
$$
p_{a,b}(x)=\frac{b^a x^{a-1} e^{-bx}}{\Gamma(a)},\qquad x> 0,\quad a,b>0.
$$
They are uniquely determined by their moments 
\begin{equation}\label{eq:gamma-moments}
\int_0^\infty x^m p_{a,b}(x)dx = {b^m}\,\frac{\Gamma(a+m)}{\Gamma(a)}=b^m (a)_m, \qquad m=0,1,2\dots,
\end{equation}
and have the scaling property that if $X\sim\Gamma(a,b)$ then $cX\sim \Gamma(a,bc)$. 

\begin{theorem}\label{th:SSEP} Consider the dynamic SSEP with the step initial condition $\HT(x,0)=\mathbf{1}_{x<0}\cdot |x|$ and with the dynamic parameter $\bar{\la}>0$ that may depend on the large parameter $L$: $\bar{\la}=\bar{\la}(L)$. In what follows $\chi$ and $\tau$ take arbitrary fixed values in $\R$ and $\R_{>0}$, respectively. 

\smallskip

\noindent \textbf{(i)} If $\bar{\la}(L)\cdot L^{-\frac 12}\to\infty$ as $L\to\infty$, then $\lim_{L\to\infty} L^{-\frac 12}\cdot \HT(L^{\frac 12}\chi, L\tau )= H(\chi,\tau)$ in probability with $H(\chi,\tau)$ from \eqref{eq:SSEP-limit-shape}, exactly as for the usual SSEP, cf. Proposition \ref{pr:SSEP}. 

\smallskip

\noindent \textbf{(ii)} If $\lim_{L\to\infty} \bar{\la}(L)\cdot L^{-\frac 12}= l\in (0,+\infty)$, then we have the convergence in probability
$$
\lim_{L\to\infty} L^{-\frac 12}\cdot \HT(L^{\frac 12}\chi, L\tau)=\sqrt{l\cdot H(\chi,\tau)+\left(\frac{\chi+l}{2}\right)^2}-\frac{\chi+l}{2}\,.
$$

\smallskip

\noindent \textbf{(iii)} If $\bar{\la}(L)\to\infty$ as $L\to\infty$ and $\lim_{L\to\infty} \bar{\la}(L)\cdot L^{-\frac 12}=0$, then we have the convergence in probability
$$
\lim_{L\to\infty} \frac{\HT(\sqrt{\bar{\la}(L)}L^{\frac 14}\cdot\chi, L\tau)}{\sqrt{\bar{\la}(L)}L^{\frac 14}}=\sqrt{\sqrt{\frac\tau\pi}+\left(\frac\chi 2\right)^2}-\frac{\chi}{2}\,.
$$

\smallskip

\noindent \textbf{(iv)} Assume that $\bar{\la}$ does not depend on $L$. For any fixed $\chi$ and $\tau$, let $Y_{\chi,\tau}$ be a gamma distributed random variable: $Y_{\chi,\tau}\sim \Gamma(a,b)$ with $a=\bar{\la}$, $b=\sqrt{\tau/\pi}$.
Then we have the following limit in distribution:
$$
\lim_{L\to \infty} L^{-\frac 14}\cdot \HT(L^{-\frac 14}\chi,L\tau)=\sqrt{Y_{\chi,\tau}+\left(\frac\chi 2\right)^2}-\frac{\chi}{2}\,.
$$ 
\end{theorem}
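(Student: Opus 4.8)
The plan is to compute all joint moments of the rescaled height function at a single point and match them with the moments of the claimed limiting random variable. The starting point is the moment formula \eqref{eq:expect-SSEP} together with its integral representation. The key algebraic observation is that the left-hand side of \eqref{eq:expect-SSEP} is a degree-$n$ polynomial in the single observable $\o_{SSEP}(x,t)$, whose factored form \eqref{eq:obs-factor} (taken in the rational limit) reads $\prod_{k=0}^{n-1}\bigl(k-\HT_{SSEP}\bigr)\bigl(k+\HT_{SSEP}+x-\bar\la\bigr)$ up to normalization. In regime (iv) we set $x=L^{-1/4}\chi$, $t=L\tau$, keep $\bar\la$ fixed, and anticipate that $\HT=\HT_{SSEP}(L^{-1/4}\chi,L\tau)$ is of order $L^{1/4}$. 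First I would substitute the conjectured scaling $\HT\approx L^{1/4}\,\mathcal{H}$ into this product: each factor $(k-\HT)$ contributes $-L^{1/4}\mathcal{H}$ to leading order, while each factor $(k+\HT+x-\bar\la)$ contributes $L^{1/4}(\mathcal{H}+\tfrac{\chi}{2}\,)$ after absorbing the $x/\,$-term, since $x=L^{-1/4}\chi$ combines with $\HT$ at the same order. This predicts that the left-hand side of \eqref{eq:expect-SSEP} behaves like $(-1)^n L^{n/2}\,\mathcal{H}^n(\mathcal{H}+\chi/2)^n$, so the natural object to control is $\mathcal{H}(\mathcal{H}+\chi/2)=L^{-1/2}\HT(\HT+x)=L^{-1/2}\,\o_{SSEP}$ shifted appropriately.

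The next step is to extract the matching asymptotics from the contour-integral side \eqref{eq:expect-SSEP-int}. With all $x_i=L^{-1/4}\chi$ and $t=L\tau$, I would change variables $v_i=1+L^{-1/4}z_i$ (note the $L^{1/4}$ rescaling, different from the $L^{1/2}$ scaling used in the proof of Proposition \ref{pr:SSEP}), which is forced by requiring the factor $(v_i/(v_i-1))^{x_i}=\exp\{x_i\log(1+1/(v_i-1))\}$ and the exponential $\exp\{t/(v_i(v_i-1))\}$ to balance: one checks $x_i\log(v_i/(v_i-1))\to -\chi/z_i$ and $t/(v_i(v_i-1))\to \sqrt{L}\,\tau/z_i$, the latter being the dominant $L^{1/2}$ contribution. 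Thus the saddle-point/steepest-descent analysis is governed by $\exp\{\sqrt{L}\,\tau/z_i\}$, and after factoring out the common $L$-powers the integral should converge to a product of single integrals $\prod_i \oint e^{-\chi/z_i}e^{\sqrt{L}\tau/z_i}\,dz_i/(2\pi\i\, z_i^2)$, with the Vandermonde-type cross terms $\prod_{i<j}(v_i-v_j)/(v_i-v_j+1)\to 1$ playing no role at leading order exactly as in Proposition \ref{pr:SSEP}. The crucial point is that the $L^{1/2}$-scale exponential with the $\sqrt\tau/\sqrt\pi$-type coefficient is precisely what generates the gamma-distribution moments \eqref{eq:gamma-moments}: the single integral, evaluated by residues or by matching to \eqref{eq:gamma-moments}, should reproduce $b^n (a)_n$ with $a=\bar\la$ and $b=\sqrt{\tau/\pi}$.

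Assembling the two sides, the moment identity becomes $\E\bigl[\mathcal{H}^n(\mathcal{H}+\chi/2)^n\bigr]\to b^n(a)_n=\E\bigl[Y_{\chi,\tau}^n\bigr]$, which identifies the limiting distribution of the quadratic quantity $\mathcal{H}(\mathcal{H}+\chi/2)$ with $Y_{\chi,\tau}\sim\Gamma(\bar\la,\sqrt{\tau/\pi})$. Solving the quadratic $\mathcal{H}^2+\tfrac{\chi}{2}\mathcal{H}-Y_{\chi,\tau}=0$ for the nonnegative root yields $\mathcal{H}=\sqrt{Y_{\chi,\tau}+(\chi/2)^2}-\chi/2$, which is exactly the claimed limit. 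To promote convergence of moments to convergence in distribution I would invoke the fact that the gamma distribution is moment-determinate (stated after \eqref{eq:gamma-moments}), and that the continuous map $Y\mapsto\sqrt{Y+(\chi/2)^2}-\chi/2$ transports this determinacy to the limit of $\mathcal{H}$. \textbf{The main obstacle} I anticipate is the rigorous justification of the steepest-descent estimate at the $L^{1/4}$ scale: one must verify uniformly that the contours can be deformed so that the $\exp\{\sqrt L\,\tau/z\}$ term dominates, that the subleading corrections in the expansions of $(v/(v-1))^{x}$ and of $t/(v(v-1))$ genuinely vanish in the limit, and that the cross terms $\prod_{i<j}(v_i-v_j)/(v_i-v_j+1)$ do not interfere — this is the same type of analysis as in Proposition \ref{pr:SSEP} but with the more delicate double scaling $x\sim L^{-1/4}$, $\bar\la$ fixed, where the polynomial prefactor and the contour geometry must be controlled simultaneously.
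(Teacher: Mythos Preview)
Your high-level strategy---compute moments of $\o_{SSEP}$, identify them as gamma moments, then solve the quadratic relating $\o_{SSEP}$ to $\HT$---is exactly what the paper does. But your execution of the middle step contains a genuine gap, and it obscures the simple mechanism that actually produces the gamma distribution.

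The paper never performs a new steepest-descent analysis in regime (iv). It uses the identity obtained by equating \eqref{eq:expect-SSEP} and \eqref{eq:expect-SSEP-usual} at $x_1=\dots=x_n=x$, namely
\[
\frac{1}{(\bar\la)_n}\,\E\Bigl[\prod_{k=0}^{n-1}\bigl(\o(x,t)-k(\bar\la+x)-k^2\bigr)\Bigr]
=\E_{\mathrm{usual\ SSEP}}\Bigl[\prod_{k=0}^{n-1}(\HT(x,t)-k)\Bigr],
\]
whose right-hand side is handled by Proposition~\ref{pr:SSEP}: since $x/L^{1/2}\to 0$ in regime (iv), one gets directly $L^{n/2}H(0,\tau)^n=L^{n/2}(\tau/\pi)^{n/2}$. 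On the left, $\o\sim L^{1/2}$ dominates the shifts $k(\bar\la+x)+k^2=O(L^{1/4})$, so an induction on $n$ yields $\E[\o^n]\sim(\bar\la)_n L^{n/2}(\tau/\pi)^{n/2}$. The Pochhammer factor $(\bar\la)_n$---which is what makes the limit gamma-distributed---comes \emph{entirely} from the prefactor $1/(\bar\la)_n$ in the identity, not from the integral. Your claim that the single integral ``should reproduce $b^n(a)_n$ with $a=\bar\la$'' cannot be right: the integral \eqref{eq:expect-SSEP-int} does not involve $\bar\la$ at all. Separately, your proposed change $v_i=1+L^{-1/4}z_i$ is off target (in Proposition~\ref{pr:SSEP} one passes to $u=v/(v-1)$ and expands near $u=1$, i.e.\ $|v|\to\infty$, not $v\to 1$), and your computed exponent $t/(v(v-1))\to\sqrt{L}\,\tau/z$ is wrong by a power of $L$. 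Finally, the quadratic you should solve is $\mathcal H(\mathcal H+\chi)=Y$, not $\mathcal H(\mathcal H+\chi/2)=Y$; the formula $\sqrt{Y+(\chi/2)^2}-\chi/2$ solves the former.
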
 

\noindent\emph{Comments} 1. In all the four cases (i)-(iv), the limiting functions tend to 0 when $\chi\to+\infty$ and behave as $|x|$ when $\chi\to-\infty$, showing that there are almost no particles or almost no holes ($=$empty sites) in the corresponding regions (with `almost' holding on the chosen scale). This shows that the above statement captures the full nontrivial limiting profiles of the height function in the corresponding regimes. 

\smallskip

\noindent 2. The formulas for the limits look slightly nicer for the $s_x(t)=2\HT(x,t)+x$. With the same scaling of $x$ and $t$, the limit of $s_x(t)$ in case (i) is $S(\chi,\tau):=2\sqrt{\tau/\pi}\cdot e^{-\chi^2/4\tau}+\chi\cdot \erf(\chi/(2\sqrt{\tau}))$, where $\erf(\,\cdot\,)$ is the error function; in case (ii) the limit is $\sqrt{2l S(\chi,\tau)+\chi^2+l^2}-l$; in case (iii) it is 
$\sqrt{4\sqrt{\tau/\pi}+\chi^2}$; and in case (iv) it is $\sqrt{4 Y_{\chi,\tau}+\chi^2}$, where $4Y_{\chi,\tau}$ is a $\Gamma(a,b)$-distributed random variable itself with parameters $a=\bar{\la}$ and $b=4\sqrt{\tau/\pi}$. 

\smallskip

\noindent 3. The four regimes of Theorem \ref{th:SSEP} can be seen as descriptions of what is happening as time progresses to the dynamic SSEP started with the step initial condition and with a small but fixed dynamic parameter $\bar{\la}$. At first one sees no difference with the usual SSEP, with height function growing as $\sqrt{t}$ and the limit shape being exactly as for the usual SSEP. When $t$ becomes comparable to $\bar{\la}^{-2}$, the height function is still of size $\sqrt{t}$ but the limit shape is no longer the same as for the usual SSEP. As $t$ become significantly larger than $\bar{\la}^{-2}$, the growth of the height function slows down, until it finally reaches $t^{1/4}$, and at that point the profile in the first approximation becomes visibly random and the limit shape phenomenon disappears. 

\smallskip

\noindent 4. The tools we use for the proof of Theorem \ref{th:SSEP}, which are the moment method and Corollary \ref{cr:obs-SSEP}, can also be used to investigate the multi-point asymptotics of the dynamic SSEP as well as its fluctuations. For example, for the part (iv) the same arguments as in the proof below show that the dependence of the random variable $Y_{\chi,\tau}$ on $\chi$ and $\tau$ may be removed, and thus the whole height function profile at a given time is governed by a single gamma-distributed random variable. 
Such investigations are no doubt very interesting, but they go beyond the goals of the present paper, and we hope to return to them in a future one. 

\begin{proof}[Proof of Theorem \ref{th:SSEP}] Our starting point is Corollary \eqref{eq:obs-SSEP}, which we will use in conjunction with Proposition \ref{pr:SSEP} to obtain the asymptotics of the moments $\E (\o(x,t))^n$ in all the limit regimes. Once the asymptotics of $\o(x,t)$ is established, we will use \eqref{eq:obs-SSEP} to obtain the asymptotics of the height function. 

Observe that relations \eqref{eq:expect-SSEP} and \eqref{eq:expect-SSEP-usual} imply (setting $x_1=\dots=x_n=x$)
\begin{equation}\label{eq:SSEP-obs-equal}
\frac{1}{(\bar{\la})_n}\E\left[\prod_{k=0}^{n-1} 
\left(\o(x,t)-k(\bar{\la}+x)-k^2\right)\right]=\E_{\mathrm{usual\ SSEP}} \left[\prod_{k=0}^{n-1} \left({\HT(x,t)}-k\right)\right].
\end{equation}
This equality can be used to express $\E [(\o(x,t))^n]$ in terms of lower moments $\E (\o(x,t))^k$, $1\le k\le (n-1)$, and the moments $\E_{\mathrm{usual\ SSEP}} [(\HT(x,t))^m]$, $1\le m\le n$. The asymptotic behavior of the latter is provided by Proposition \ref{pr:SSEP}, namely, as $L\to\infty$, for any $m\ge 1$ we have
$$
\E_{\mathrm{usual\ SSEP}} [(\HT(x,t))^m]\sim \begin{cases} L^{m/2} (H(\chi,\tau))^m,& \mathrm{cases\ (i)\ and\  (ii)},\\ L^{m/2}(H(0,\tau))^m=L^{m/2}(\tau/\pi)^{m/2},& \mathrm{cases\ (iii)\ and\  (iv).}
\end{cases}
$$ 

An easy inductive (in $n$) argument then shows that, depending on the limit regime, we have

\smallskip

\noindent (i) $\E [(\o(x,t))^n]\sim (\bar{\la}(L))^{n}L^{n/2}(H(\chi,\tau))^n$, hence, $\lim_{L\to\infty} (\bar{\la}(L))^{-1} L^{-1/2}\o(x,t)=H(\chi,\tau)$;

\smallskip

\noindent (ii) $\E [(\o(x,t))^n]\sim L^{n}(l H(\chi,\tau))^n$, hence, $\lim_{L\to\infty} L^{-1}\o(x,t)=l H(\chi,\tau)$;

\smallskip

\noindent (iii) $\E [(\o(x,t))^n]\sim  (\bar{\la}(L))^{n}L^{n/2}(\tau/\pi)^{n/2}$, hence, $\lim_{L\to\infty} (\bar{\la}(L))^{-1} L^{-1/2}\o(x,t)=\sqrt{\tau/\pi}$;

\smallskip

\noindent (iv) $\E [(\o(x,t))^n]\sim L^{n/2}(\bar{\la})_n(\tau/\pi)^{n/2}$, hence, $\lim_{L\to\infty} L^{-1/2}\o(x,t)=Y$, where $Y$ is a $\Gamma_{a,b}$-distributed random variable with $a=\bar{\la}$, $b=\sqrt{\tau/\pi}$, cf. \eqref{eq:gamma-moments}.

\smallskip

It remains to solve $\o(x,t)=\HT(x,t)(\HT(x,t)+x+\bar{\la})$, cf. \eqref{eq:obs-SSEP}, for $\HT(x,t)$. Since $\o(x,t)>0$, only one root of this quadratic equation is positive, which gives
\begin{equation}\label{eq:height-via-obs}
\HT(x,t)=\sqrt{\bar{\la}\cdot \o(x,t)+\left(\frac{x+\bar{\la}}{2}\right)^2}-\frac{x+\bar{\la}}{2}=\frac{x+\bar{\la}}2\left(\left(1+\frac{4\o(x,t)}{(x+\bar{\la})^2}\right)^{\frac 12}-1\right)\,.
\end{equation}

Again, we proceed case by case. 

\smallskip

\noindent (i) As $\bar{\la}\gg x$ and $\bar{\la}^2\gg \o(x,t)$ with high probability, Taylor expanding the square root in the last expression of \eqref{eq:height-via-obs} gives $\HT(x,t)\sim (\lambda(L))^{-1}\o(x,t)$, which is the desired result. 

\smallskip

\noindent (ii) In this regime, $\bar{\la}^2$, $x^2$, and $\o(x,t)$ are all of order $L$, and normalizing the middle expression of \eqref{eq:height-via-obs} by $L^{1/2}$ gives the result. 

\smallskip

\noindent (iii) Now $\o(x,t)$ and $x^2$ are of the same order $\bar\la(L)L^{1/2}$, which is much larger than $\la(L)$; dividing the middle expression of \eqref{eq:height-via-obs} by $(\bar\la(L))^{1/2}L^{1/4}$ leads to the desired limiting behavior. 

\smallskip

\noindent (iv) Finally, in this case $\o(x,t)$ and $x^2$ are both of order $L^{1/2}$ while $\bar{\la}$ is finite, and dividing \eqref{eq:height-via-obs} by $L^{1/4}$ gives the result. 

The proof of Theorem \ref{th:SSEP} is complete. 
\end{proof}

\end{document}